\newcolumntype{R}[1]{>{\raggedleft\arraybackslash}p{#1}}
\newcolumntype{C}[1]{>{\centering\arraybackslash}p{#1}}
\newcolumntype{L}[1]{>{\raggedleft\arraybackslash}p{#1}}
\newcommand{\phantomsubfloat}[1]{
    {%
        \captionsetup[subfigure]{labelformat=empty}
        \subfloat[][]{#1}
    }%
}
\crefname{equation}{Eq.}{Eqs.}
\crefname{section}{Sec.}{Secs.}
\crefname{subsection}{Sec.}{Secs.}
\crefname{appendix}{App.}{Apps.}
\crefname{figure}{Fig.}{Figs.}
\Crefname{figure}{Figure}{Figures}
\crefname{table}{Table}{Tables}
\crefname{result}{Result}{Results}
\crefname{algorithm}{Algorithm}{Algorithms}
\crefname{proposition}{Proposition}{Propositions}
\crefname{definition}{Definition}{Definitions}
\crefname{remark}{Remark}{Remarks}
\crefname{theorem}{Theorem}{Theorems}
\crefname{lemma}{Lemma}{Lemmas}
\crefname{corollary}{Corollary}{Corollaries}
\newtheorem{definition}{Definition}
\declaretheorem[name=Theorem]{theorem}
\declaretheorem[name=Lemma]{lemma}
\declaretheorem[name=Proposition]{proposition}
\declaretheorem[name=Remark]{remark}
\DeclareMathOperator*{\diag}{diag}
\DeclareMathOperator*{\supp}{supp}
\DeclareMathOperator*{\spn}{span}
\DeclareMathOperator*{\wt}{wt}
\DeclareMathOperator*{\colspan}{colspan}
\DeclareMathOperator*{\rs}{rs}
\newcommand{\trans}[1]{#1^{\mathsf{T}}}
\newcommand{\invtrans}[1]{#1^{-\mathsf{T}}}
\DeclarePairedDelimiter\db{\llbracket}{\rrbracket}
\newcommand{\lxor}{\mathbin{\vcenter{\hbox{$\veebar$}}}}    
\newcommand{\biglor}{\bigvee}
\definecolor{coral}{RGB}{254,125,106}
\newcommand{\RM}{\mathrm{RM}}
\newcommand{\sddots}{\raisebox{0pt}{$\scalebox{.5}{$\ddots$}$}}
\newcommand{\bl}{\mathbf{l}}
\newcommand{\bz}{\mathbf{z}}
\newcommand{\ra}{\rangle}
\newcommand{\la}{\langle}
\newcommand{\rmx}{\mathrm{x}}
\newcommand{\rmz}{\mathrm{z}}
\newcommand{\enc}{\text{enc}}
\newcommand{\F}{\mathbb{F}}
\newcommand{\GF}{\mathrm{GF}}
\renewcommand{\S}{\mathrm{S}}
\newcommand{\GL}{\mathrm{GL}}
\newcommand{\GA}{\mathrm{GA}}
\newcommand{\Sp}{\mathrm{Sp}}
\newcommand{\nocontentsline}[3]{}
\newcommand{\tocless}[2]{\vspace{4ex}\bgroup\let\addcontentsline=\nocontentsline#1{#2}\egroup}
\renewenvironment{acknowledgments}{%
  \begingroup
  \let\orig@addcontentsline\addcontentsline
  \renewcommand{\addcontentsline}[3]{}%
  \acknowledgments@sw{%
    \expandafter\section\expandafter*\expandafter{\acknowledgmentsname}%
  }{%
    \par
  }%
}{%
  \par
  \endgroup
}
\begin{document}
\title{Entangling logical qubits without physical operations}
\author{Jin Ming Koh}
\affiliation{Department of Physics, Harvard University, Cambridge, Massachusetts 02138, USA}
\author{Anqi Gong}
\affiliation{Institute for Theoretical Physics, ETH Z\"urich, 8093 Z\"urich, Switzerland}
\author{Andrei~C.~Diaconu}
\affiliation{Department of Physics, Harvard University, Cambridge, Massachusetts 02138, USA}
\author{Daniel~Bochen~Tan}
\affiliation{Department of Physics, Harvard University, Cambridge, Massachusetts 02138, USA}
\author{Alexandra~A.~Geim}
\affiliation{Department of Physics, Harvard University, Cambridge, Massachusetts 02138, USA}
\author{Michael~J.~Gullans}
\affiliation{Joint Center for Quantum Information and Computer Science, University of Maryland, College Park, Maryland 20742, USA}
\affiliation{National Institute of Standards and Technology, Gaithersburg, MD 20899, USA}
\author{Norman~Y.~Yao}
\affiliation{Department of Physics, Harvard University, Cambridge, Massachusetts 02138, USA}
\author{Mikhail~D.~Lukin}
\affiliation{Department of Physics, Harvard University, Cambridge, Massachusetts 02138, USA}
\author{Shayan Majidy}
\email{smajidy@fas.harvard.edu}
\affiliation{Department of Physics, Harvard University, Cambridge, Massachusetts 02138, USA}

\begin{abstract}
Fault-tolerant logical entangling gates are essential for scalable quantum computing, but are limited by the error rates and overheads of physical two-qubit gates and measurements. To address this limitation we introduce \emph{phantom codes}---quantum error-correcting codes that realize entangling gates between all logical qubits in a codeblock purely through relabelling of physical qubits during compilation, yielding perfect fidelity with no spatial or temporal overhead. We present a systematic study of such codes. First, we identify phantom codes using complementary numerical and analytical approaches. We exhaustively enumerate all $2.71\times10^{10}$ inequivalent CSS codes up to $n=14$ and identify additional instances up to $n=21$ via SAT-based methods. We then construct higher-distance phantom-code families using quantum Reed--Muller codes and the binarization of qudit codes. Across all identified codes, we characterize other supported fault-tolerant logical Clifford and non-Clifford operations. Second, through end-to-end noisy simulations with state preparation, full QEC cycles, and realistic physical error rates, we demonstrate scalable advantages of phantom codes over the surface code across multiple tasks. We observe one–to–two–order-of-magnitude reduction in logical infidelity at comparable qubit overhead for GHZ-state preparation and Trotterized many-body simulation tasks, given a modest preselection acceptance rate. Our work establishes phantom codes as a viable architectural route to fault-tolerant quantum computation with scalable benefits for workloads with dense local entangling structure, and introduces general tools for systematically exploring the broader landscape of quantum error-correcting codes.
\end{abstract}

\maketitle

\tocless\section{Introduction
\label{sec:introduction}}

\begin{figure*}
    \centering
    \includegraphics[width=\linewidth]{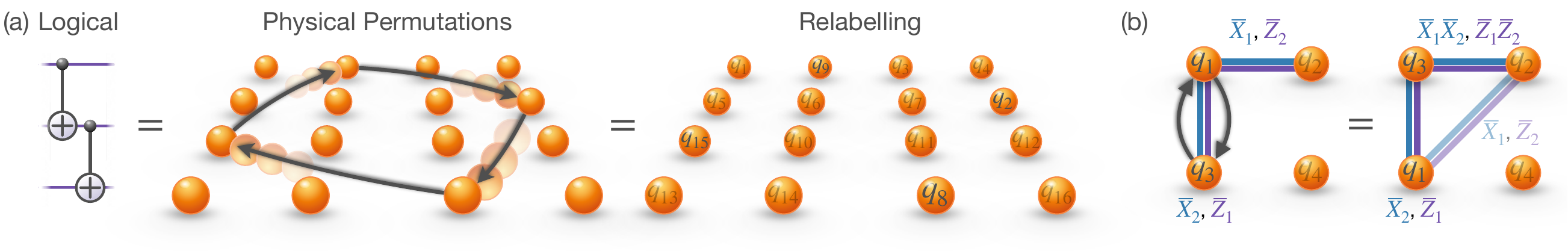}
    \phantomsubfloat{\label{fig:phantomgates/simple}}
    \phantomsubfloat{\label{fig:phantomgates/422}}
    \vspace{-1cm}
    \caption{
    \textbf{Illustration of logical entanglement via relabelling.} 
    \textbf{(a)} In phantom codes, logical entangling gates are realized by physical qubit permutations that are absorbed at the compilation stage as a relabelling, without performing any physical qubit permutation.
    \textbf{(b)} The $\db{4,2,2}$ code is the smallest phantom code. It has logical operators $\overline{X}_1 = XXII$, $\overline{X}_2 = XIXI$, $\overline{Z}_1 = ZIZI$, $\overline{Z}_2 = ZZII$. Permuting qubits 1 and 3 (1 and 2) implements $\overline{\mathrm{CNOT}}_{12}$ ($\overline{\mathrm{CNOT}}_{21}$, not shown).
    }
    \label{fig:phantomgates}
\end{figure*}

To realize their potential, quantum computers must run deep circuits reliably in the presence of noise~\cite{majidy2024building}. Achieving this at large scales requires the use of quantum error correction (QEC)~\cite{gottesman2024surviving}. The last several years have seen remarkable progress in QEC across a broad landscape of platforms ranging from neutral-atom and superconducting qubits to trapped-ions and bosonic systems~\cite{bluvstein2025fault, lacroix2025scaling, bluvstein2024logical, reichardt2024demonstration, paetznick2024demonstration, google2025quantum, brock2025quantum}. Further progress hinges on reducing QEC overheads and enabling efficient computation with suppressed logical error rates.

Recent efforts to reduce QEC overhead focus on encoding as many logical qubits as possible within a codeblock while enforcing sparsity structure, namely in high-rate quantum low-density parity-check (qLDPC) codes~\cite{breuckmann2021quantum}. While this design choice yields compact quantum memories, they do not directly constrain the cost of computation. In practice, addressable logical operations dominate space–time overheads and error budgets in such codes~\cite{cohen2022low, yoder2025tour, bravyi2024high, xu2025fast, xu2024constant}. These observations suggest that reducing overhead requires jointly optimizing storage and computation, rather than treating logical gates as a downstream constraint. In this work, we pursue a complementary approach that places logical operations at the center of the design. We begin with a set of logical operations—in this case, targeted logical entangling gates—and ask which codes support them with minimal overhead. In conventional codes, such gates typically require repeated measurement rounds, dense layers of two-qubit gates, or ancillary codeblocks~\cite{fowler2012surface, bravyi2024high, xu2024constant, xu2025fast}. As observed in recent state-of-the-art experiments, logical entangling gates account for a substantial fraction of the total error budget~\cite{bluvstein2025fault, lacroix2025scaling}. This raises the question of how fault-tolerant quantum computation can be optimized as a whole, encompassing both logical operations and QEC components.

As a step towards this optimization, we explore the fundamental lower bound on the cost of a logical entangling gate. Interestingly, for certain codes, logical entangling gates can be realized solely via the permutation of physical qubits. This permutation generates no new physical entanglement, but instead redistributes existing correlations among subsystems. Crucially, such an approach can incur zero overhead and exhibits perfect fidelity if the permutations are directly absorbed into circuit compilation rather than executed on hardware (\cref{fig:phantomgates}).

Motivated by this observation, we designate a new class of QEC codes termed \emph{phantom codes}: stabilizer codes in which logical entangling gates between every ordered pair of logical qubits can be implemented solely via physical qubit permutations. In such codes, all in-block logical entangling gates vanish from the physical circuit after classical circuit compilation (\cref{fig:phantomgates/detailed})---hence the name ``phantom''. Of the existing QEC codes, only two satisfy this definition: the $\db{12,2,4}$ Carbon code~\cite{paetznick2024demonstration} and $\db{2^D,D,2}$ hypercube codes~\cite{vasmer2022morphing, hangleiter2025fault}. The former enabled repeated rounds of QEC in trapped-ion experiments~\cite{paetznick2024demonstration}, while the latter underpinned recent neutral-atom experiments achieving logical-over-physical performance gains in quantum circuit sampling~\cite{bluvstein2024logical}. Despite intriguing properties and recent experimental success, phantom codes remain largely unexplored. For example, it is unknown whether additional classes of phantom codes exist, what structural constraints they obey, and whether their architectural advantages persist for scalable applications under realistic noise conditions.

We present a systematic and extensive investigation of phantom codes, exploring both their key features and limitations. Our main results are two fold. First, via a combination of numerics and analytic construction, we substantially expand the phantom-code landscape, from a single known error-correcting code and a single error-detecting code family, to over a hundred thousand new phantom code instances and multiple error-correcting families. Second, we demonstrate the practical advantages of phantom codes for  applications in the presence of realistic noise. In particular, we benchmark a representative $\db{64,4,8}$ phantom code against surface-code baselines through end-to-end noisy simulations including both state preparation and full QEC cycles. Using realistic physical error rates, we investigate the logical fidelity of entangled state preparation (i.e.~of a GHZ state) and Trotterized many-body simulation; in both scenarios, we demonstrate one–to–two–order-of-magnitude improvements in the logical fidelity using a nearly identical number of physical qubits (and a preselection acceptance rate ${\sim} 24\%$), over a range of $8$ to $64$ logical qubits. Moreover, in addition to the logical entangling gates generated via recompilation, for all of the phantom codes discovered, we also identify fault-tolerant logical Clifford and non-Clifford gates. Finally, we develop practical tools for implementing non-LDPC phantom codes that also apply to other non-LDPC codes, including improved decoding and fault-tolerant state preparation.

\begin{figure*}
    \includegraphics[width=\linewidth]{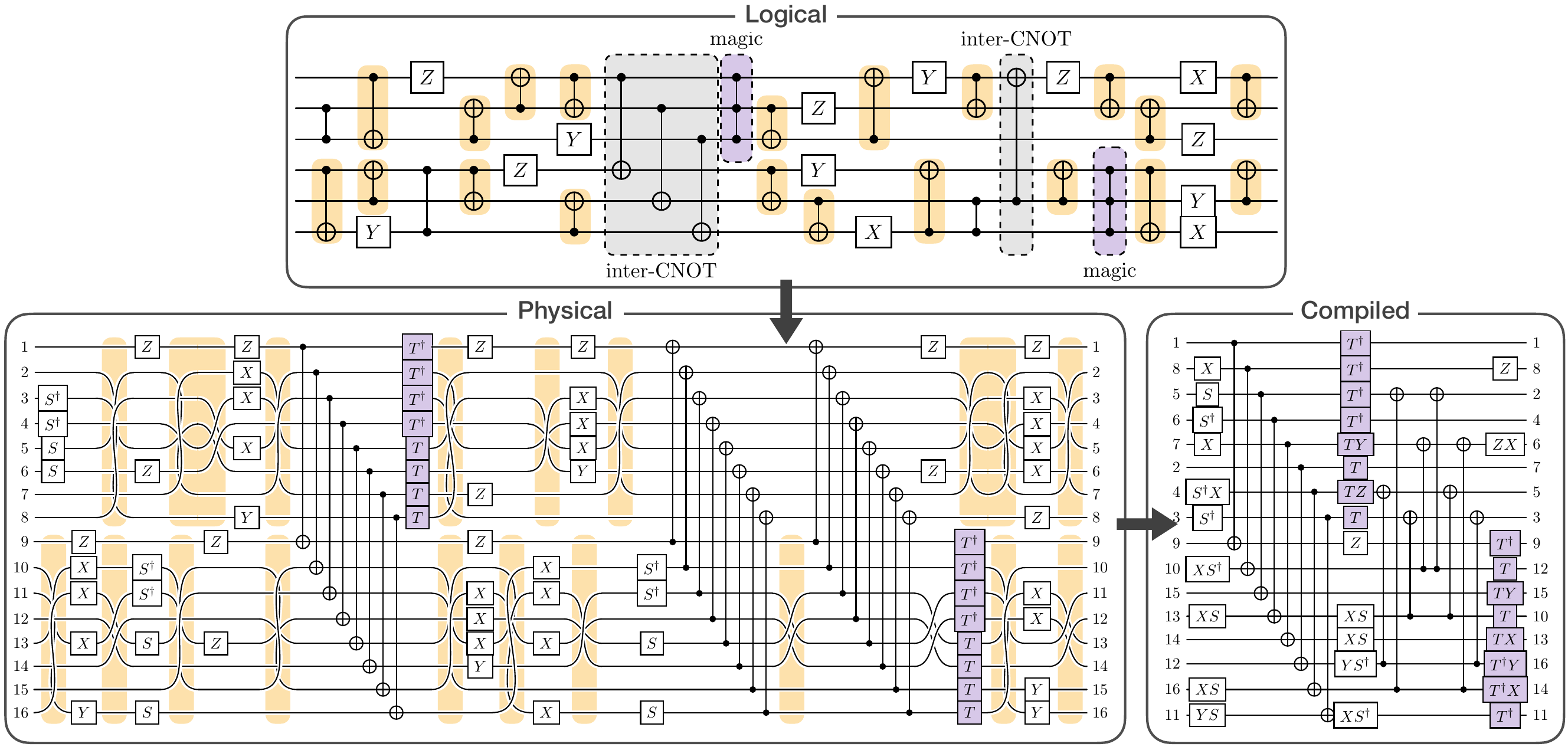}
    \vspace{-0.5cm}
    \caption{
    \textbf{Phantom codes in circuits with interblock and magic gates.} 
    The benefit of phantom codes is most evident when $\overline{\mathrm{CNOT}}$s appear alongside interblock and magic gates. Qubit permutations can be pulled through these operation without introducing operator spread, yielding substantial reductions in physical circuit size. In this example, the physical permutation required to implement the 20 in-block $\overline{\mathrm{CNOT}}$s (highlighted in orange) vanish from the final compiled circuit.
    }
    \label{fig:phantomgates/detailed}
\end{figure*}

\tocless\section{Phantom Codes\label{sec:phantom_codes}}

We begin by formally defining phantom codes and establishing their key structural properties. We then summarize our main results, explaining the implications of phantom codes for fault-tolerant design and of our methods for systematic QEC code discovery.
\medskip

\tocless\subsection{Definitions and key properties}

We focus on CSS stabilizer codes because scalable fault-tolerant architectures require a reliable interblock entangling mechanism, and the CSS structure guarantees access to a transversal interblock $\overline{\mathrm{CNOT}}$ gate. On CSS codes, we assume $X$- ($Z$-) logical operators consist only of $X$- ($Z$-) physical operators. In this setting, the only nontrivial in-block logical entangling gate realizable via qubit permutation alone is a $\overline{\mathrm{CNOT}}$.

\medskip
\begin{definition}[CSS phantom codes]
\label{def:phantom_code}
    An $\db{n,k,d}$ CSS code is phantom if the $\overline{\mathrm{CNOT}}_{ab}$ gate for every ordered pair of logical qubits $(a, b) \in [k]^2$ can be implemented via qubit permutations, for some choice of logical basis.
\end{definition}

\Cref{def:phantom_code} introduces a new class of codes. To help contextualize this notion, we clarify how phantom codes differ from three related ideas. First, although many stabilizer codes admit permutation symmetries, phantom codes satisfy a substantially stronger requirement. For example, the class excludes codes such as the $\db{2^{2r},\binom{2r}{r},2^{r}}$ qRM codes for $r > 1$ and bivariate bicycle codes~\cite{gong2024computation, bravyi2024high}, which support only \emph{products} of logical $\overline{\mathrm{CNOT}}$ gates via permutations. Because these products cannot be composed to produce individually addressable $\overline{\mathrm{CNOT}}$s, such codes do not qualify as phantom codes. Phantom codes are also distinct from permutation-invariant (PI) codes, whose codewords are invariant under all qubit permutations~\cite{pollatsek2004permutationally, ouyang2014permutation}.\footnote{Most PI codes encode a single logical qubit ($k = 1$) and are non-additive; no known PI code satisfies the phantom criterion, and none of the phantom codes identified here are PI.}

Second, the operation of phantom codes cannot  simply be understood as a form of logical Pauli frame tracking. While logical Clifford circuits permit all $\overline{\mathrm{CNOT}}$s to be absorbed into classical tracking of the Pauli frame, phantom codes allow entangling operations to be interleaved with non-Clifford (magic) gates (\cref{fig:phantomgates/detailed}) while still eliminating all in-block $\overline{\mathrm{CNOT}}$s from the compiled circuit\footnote{
Even for Clifford circuits, absorbing $\overline{\mathrm{CNOT}}$s by logical Pauli frame tracking induces logical operator spread, turning single-block measurements into joint measurements that require higher-overhead techniques (e.g.~lattice surgery).}, thereby yielding zero-cost, all-to-all entangling connectivity within each codeblock.

Third, phantom codes exclude constructions that rely on permutations in combination with local physical Clifford gates~\cite{sayginel2025fault, malcolm2025computing}. The reason is that zero-overhead, perfect-fidelity entangling gates are unique to \emph{pure} permutations, which are the only operations that commute through arbitrary circuits without operator spread. Single-qubit Cliffords break this property.

Phantom codes exhibit several structural properties; we highlight three here and discuss others in \cref{app:basics}. First, phantom codes enable arbitrary $\overline{\mathrm{CNOT}}$ circuits across \emph{multiple codeblocks} to be efficiently executed. This is achieved by combining the complete, basis-resolved set of zero-depth in-block logical $\overline{\mathrm{CNOT}}_{ab}$ gates with transversal interblock $\overline{\mathrm{CNOT}}$s. This capability is formalized in the following theorem (see \cref{app:basics/addressable_cnots} for the proof and \cref{fig:phantomgates/detailed} for an illustration).

\begin{restatable}[Efficient arbitrary $\overline{\mathrm{CNOT}}$ circuits across CSS phantom codeblocks]{theorem}{MainPropositionInterblockCNOT}
\label{thm:phantom_css_interblock_cnot_circuits}
    Any logical circuit of $\overline{\mathrm{CNOT}}$ gates acting on $2^a$ codeblocks, where $a \in \mathbb{N}$, of a CSS phantom code can be implemented in physical depth at most $4 (2^a - 1)$, up to a residual permutation of logical qubits. This depth reduces to at most $2 (2^a - 1)$ while maintaining the ordering of logical qubits when the $\overline{\mathrm{CNOT}}$ gates are unidirectional.
\end{restatable}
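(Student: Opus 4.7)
The plan is to reduce the theorem to a pairwise scheduling problem on the complete graph whose vertices are the codeblocks, exploiting the phantom property to make all in-block cost vanish. First I would identify the three available primitives: arbitrary in-block $\mathrm{GL}_k(\mathbb{F}_2)$ transformations at zero physical depth (sequences of phantom $\overline{\mathrm{CNOT}}$s absorbed as compile-time relabelling), a depth-one transversal $\overline{\mathrm{CNOT}}$ between any pair of codeblocks (from the CSS structure), and parallel execution of transversal $\overline{\mathrm{CNOT}}$s on disjoint block pairs within a single physical layer. The logical $\overline{\mathrm{CNOT}}$ circuit then corresponds to an element $A\in\mathrm{GL}_{2^{a}k}(\mathbb{F}_2)$, block-structured by codeblock, and the task is to decompose $A$ efficiently using these primitives.

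The first technical step is a pairwise lemma: for any $Q\in M_k(\mathbb{F}_2)$, the unidirectional block update $x_{b_2}\mapsto x_{b_2}+Q\,x_{b_1}$ between two codeblocks is realizable in at most two transversal $\overline{\mathrm{CNOT}}$ layers. I would write $Q=Q_1+Q_2$ with $Q_1,Q_2\in\mathrm{GL}_k(\mathbb{F}_2)$ (a classical fact for $k\ge 2$; the $k=1$ case is handled trivially) and realize each invertible summand by conjugating a single transversal $\overline{\mathrm{CNOT}}$ by free in-block basis changes $M_i,M_i^{-1}$. Next, invoking 1-factorization of $K_{2^{a}}$, I partition the $\binom{2^{a}}{2}$ ordered pairs of codeblocks into $2^{a}-1$ perfect matchings and process these matchings sequentially, applying the pairwise lemma on all matched pairs within a round in parallel. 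For a block-unitriangular target $A$ (the unidirectional case), this yields physical depth $2(2^{a}-1)$. The non-trivial point is that pairwise updates in different rounds do not commute; however, tracking the cumulative effect shows that the final entry $A_{ij}$ equals the directly applied coefficient $P_{ij}$ plus a sum of monomials in $P_{i'j'}$'s associated to paths through strictly shorter block-distances $|i'-j'|<|i-j|$. Induction on $|i-j|$ therefore lets one solve for all $P_{ij}$ by forward substitution, for any choice of 1-factorization.

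For a general $A\in\mathrm{GL}_{2^{a}k}(\mathbb{F}_2)$, I would invoke a block LDU factorization $A=LDU$: the block-diagonal factor $D$ is a direct product of in-block $\mathrm{GL}_k$'s and is absorbed freely, while $L$ and $U$ are block-unitriangular, hence unidirectional, and each contributes $2(2^{a}-1)$ depth by the previous step, giving $4(2^{a}-1)$ in total. When block LDU does not exist directly (singular leading principal submatrices), a pivoting permutation of logical qubits is required; this is precisely the ``residual permutation of logical qubits'' in the statement, and it is free because it is absorbed as a final relabelling. In the unidirectional case, the block-triangular structure is built-in, so no pivoting is needed and the ordering of logical qubits is preserved automatically.

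The main obstacle is verifying the scheduling argument in step two, namely that the non-commutativity of pairwise updates across rounds still allows the target $A$ to be matched exactly. Making this rigorous requires carefully enumerating the monotone-round paths in the chosen 1-factorization that contribute to each final entry and showing that the coefficient of the direct term $P_{ij}$ is always invertible (in fact the identity), so that the resulting triangular system is uniquely solvable. A secondary technicality is the sum-of-invertibles decomposition in the pairwise lemma, which must be verified uniformly over $\mathbb{F}_2$ for all $k\ge 2$ and patched for $k=1$; and one must also check that the residual pivoting permutation in the general case genuinely decouples from physical depth rather than merely shifting cost.
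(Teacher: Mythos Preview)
Your approach is correct and takes a genuinely different route from the paper. The paper proceeds by \emph{recursive} block-PLDU: at each recursion level it splits the $2^a$ blocks into two halves, writes $X = P\begin{psmallmatrix}\mathbb{I}&0\\L&\mathbb{I}\end{psmallmatrix}\begin{psmallmatrix}C_1&0\\0&C_2\end{psmallmatrix}\begin{psmallmatrix}\mathbb{I}&U\\0&\mathbb{I}\end{psmallmatrix}$, and then recurses into $C_1,C_2$. Each off-diagonal block (e.g.\ $U_0$ of size $2^{a-1}k\times 2^{a-1}k$) connects two \emph{disjoint} halves, so the associated pairwise updates are purely bipartite and \emph{commute additively}; a cyclic bipartite scheduling then handles $U_0$ in $2^{a-1}$ rounds, the next level in $2^{a-2}$ rounds, and so on, summing to $2^a-1$. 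By contrast, you perform a single flat block-LDU at the finest scale and schedule all $\binom{2^a}{2}$ pairwise updates of $L$ (or $U$) by a $1$-factorization of $K_{2^a}$ into $2^a-1$ matchings. Your updates do \emph{not} commute across rounds, and you absorb this by solving for the $P_{ji}$'s via forward substitution on the block-distance $|j-i|$; this works because any strictly increasing multi-step path from $i$ to $j$ necessarily has every step of distance strictly less than $j-i$, so the system is triangular with identity on the leading coefficients. The trade-off: the paper's recursion sidesteps the non-commutativity bookkeeping entirely (each level is additive), while your argument avoids recursion and nested bookkeeping of permutation factors but must justify the forward-substitution step you flag as the main obstacle---which, as argued above, is indeed sound for any $1$-factorization. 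Both identify the same pairwise lemma (sum-of-two-invertibles over $\mathbb{F}_2$) and the same residual-permutation mechanism via pivoting.
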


\begin{restatable}{remark}{MainRemarkInterblockCNOT}
\label{remark:phantom_css_interblock_cnot_circuits}
    The residual permutations of logical qubits in \cref{thm:phantom_css_interblock_cnot_circuits} incur zero cost when compiled away. In cases where they are necessary to be performed physically, they require depth at most $8k+8$ for any number of codeblocks of an $\db{n, k, d}$ CSS phantom code.
\end{restatable}

Second, the phantom property of CSS phantom codes is independent of the logical basis. This simplifies analysis and accelerates numerical searches.

Third, the weight distribution of physical Pauli operators within a logical Pauli equivalence class (e.g.~$\overline{X}_1$, $\overline{Z}_2$, etc.), generated by stabilizer multiplication, are identical across all $\overline{X}$ equivalence classes and likewise for all $\overline{Z}$ classes. This observation, for example, leads to the following parameter bound, which is proved in \cref{app:basics/weight}.

\begin{restatable}[Hamming bound for CSS phantom codes]{theorem}{PhantomCSSHammingBound}
\label{thm:phantom_css_hamming_bound}
    An $\db{n,k,d}$ CSS phantom code with $d = d_\mu$ for the sector $\mu \in \{X, Z\}$ satisfies
    \begin{equation}
        \eta (2^k - 1) \le B(n, d),
    \end{equation}
    where $\eta \geq 1$ is the number of weight-$d$ $\mu$-type logical operators of the same logical equivalence class of the code, and $B(n, d) \leq \binom{n}{d}$ is the maximum number of weight-$d$ binary strings of length $n$ such that the addition of any two strings is of weight at least $d$.
\end{restatable}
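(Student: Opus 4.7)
My plan is to exhibit an explicit set of $\eta(2^k-1)$ weight-$d$ binary strings of length $n$ whose pairwise sums all have weight at least $d$; the bound $\eta(2^k-1) \le B(n,d)$ then follows immediately from the definition of $B(n,d)$.

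First, I would invoke the third structural property recalled just above the theorem. For a CSS phantom code, every non-trivial $\mu$-type logical equivalence class contains exactly the same number $\eta$ of weight-$d$ representatives, where $\eta \ge 1$ since $d = d_\mu$. Collecting all such representatives across the $2^k - 1$ non-trivial $\mu$-classes into a single set $S$, I obtain $|S| = \eta(2^k-1)$, because distinct classes are disjoint cosets of the stabilizer group in the normalizer and within-class representatives are distinct by construction.

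Next, I would verify that any two distinct elements $\ell_1, \ell_2 \in S$ satisfy $|\ell_1 + \ell_2| \ge d$, where addition is taken in $\mathbb{F}_2^n$. If $\ell_1$ and $\ell_2$ lie in different non-trivial classes, the sum is a representative of the non-trivial product class and hence has weight at least $d_\mu = d$. If $\ell_1 \neq \ell_2$ lie in the same class, their sum is a non-trivial $\mu$-type stabilizer $s$ of weight $|s| = 2d - 2 |\ell_1 \cap \ell_2|$, and I must show $|s| \ge d$, equivalently $|\ell_1 \cap \ell_2| \le d/2$. I would establish this via a sublemma stating that any $\mu$-type stabilizer realized as the difference of two weight-$d$ representatives in a single non-trivial class has weight at least $d$. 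The argument combines the permutation structure of phantom codes, which fixes the orbit of any such stabilizer under the $\overline{\mathrm{CNOT}}$-implementing permutations, with the equal weight distribution property: a violating low-weight stabilizer would pair up weight-$d$ representatives inside a class in a way that, transported by the permutations to the other $2^k - 2$ non-trivial classes, forces an imbalance in the weight-$d$ representative counts that the structural property forbids.

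Once the pairwise-sum condition is secured, $S$ is a weight-$d$ constant-weight code of length $n$ with minimum pairwise distance at least $d$, so $|S| \le B(n, d)$, yielding the claimed bound. The main obstacle is the same-class sublemma: for the $\db{4,2,2}$ code and the hypercube phantom codes the weight-$d$ representatives within a single class are in fact pairwise disjoint, giving sums of weight $2d$, but a general proof must carefully tie the low-weight $\mu$-stabilizer structure of an arbitrary CSS phantom code to the qubit permutations implementing its logical $\overline{\mathrm{CNOT}}$s. This is where I would concentrate the most care, since the trivial bound $\eta(2^k-1) \le \binom{n}{d}$ follows with no phantom structure, and everything beyond that is driven by the permutation constraints peculiar to phantom codes.
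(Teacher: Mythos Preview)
Your overall strategy coincides with the paper's: assemble the set $S$ of all $\eta(2^k-1)$ weight-$d$ $\mu$-type representatives across the nontrivial classes and argue that pairwise sums have weight at least $d$. The paper does not separate the same-class and different-class cases; it dispatches both in a single line, writing that ``because the code has distance $d$, the product of any two such physical operators must have weight at least $d$.'' Taken literally, that assertion covers only the different-class case (where the product is a nontrivial logical); for same-class pairs the product is a nontrivial $\mu$-stabilizer, whose weight is not bounded below by $d$ in a degenerate code. So you are right to isolate the same-class case as the substantive step---you are in fact being more careful here than the paper's own write-up.

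Your proposed argument for the sublemma, however, does not go through. Suppose $\ell_1,\ell_2$ are weight-$d$ representatives of class $\overline{P}$ with $|\ell_1+\ell_2|<d$, and let $\pi$ be a qubit permutation implementing a logical $\overline{\mathrm{CNOT}}$ circuit carrying $\overline{P}$ to another class $\overline{P}'$. Then $\pi\ell_1,\pi\ell_2$ are weight-$d$ representatives of $\overline{P}'$ with $|\pi\ell_1+\pi\ell_2|=|\ell_1+\ell_2|<d$: the violating pair is transported intact to every other class. No ``imbalance'' in the weight-$d$ counts arises; the uniform weight distribution is exactly what this replication reflects, so it cannot be contradicted by it. Closing the same-class gap would require a genuinely new ingredient---for instance, a direct structural constraint on how low-weight $\mu$-stabilizers may overlap minimum-weight logicals in a phantom code---which neither your sketch nor the paper's one-line appeal to ``distance $d$'' supplies.
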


\Cref{thm:phantom_css_hamming_bound} constrains the possible $(n, k, d)$ parameters of CSS phantom codes. Certain phantom code families saturate this bound (see \cref{sec:code_discovery_construction/other}). We are not, however, aware of polynomial-time methods to compute $\eta$ and $B(n, d)$ in general, and using the $B(n, d) \leq \binom{n}{d}$ upper bound results in a loose Hamming bound for $d > 2$.

\tocless\subsection{Summary of results
\label{sec:phantom_codes/summary}}

To orient the reader, we briefly summarize the results of each section. We identify and construct phantom codes via four complementary approaches:
\medskip
\begin{itemize}[leftmargin=*]
    \item \textit{Exhaustive enumeration} (\cref{sec:code_discovery_construction/enumeration}). We enumerate all $2.71\times10^{10}$ CSS codes with block length $n\le14$ and identify every phantom code within this range.
    \item \textit{SAT-based search} (\cref{sec:code_discovery_construction/sat}). To reach $n>14$, we recast the search as a Boolean satisfiability problem, identify phantom codes up to $n=21$, and establish minimal-$n$ realizations across a range of $k$ and $d$.
    \item \textit{Quantum Reed--Muller (qRM) codes} (\cref{sec:code_discovery_construction/qrm}). To access higher $k$, we construct infinite families of phantom qRM codes, which generalize hypercube codes to higher distances, up to $d\le\sqrt{n}$.
    \item \textit{Binarization and concatenation scheme} (\cref{sec:code_discovery_construction/gf4}). To reach $d > \sqrt{n}$, we introduce a qudit-to-qubit construction 
    which yields higher-distance codes.
\end{itemize}

For each of the identified phantom codes, we identify additional Clifford and non-Clifford logical operations in \cref{sec:gates}. These operations include gates implemented via local Cliffords and qubit permutations, fold-type gates involving in-block two-qubit interactions, and magic gates arising from diagonal single-qubit rotations.

Finally, in \cref{sec:benchmarking}, we present end-to-end benchmarks comparing a representative phantom code with surface-code baselines under realistic noise ($p = 10^{-3}$). We focus on two representative primitives:
\begin{itemize}[leftmargin=*]
    \item \textit{GHZ state preparation} (\cref{fig:benchmark/ghz}). For $4$–$64$ logical qubits, the phantom code yields an approximately $56\times$ reduction in logical infidelity at a $24\%$ preselection acceptance rate, relative to surface codes with comparable physical-qubit counts.
    \item \textit{Trotterized many-body simulation} (\cref{fig:benchmarking/trotter}). For Hamiltonians with $8$-body terms (string operators that naturally arise in several simulation contexts~\cite{maskara2025fast,wietek2021stripes,jafari2019geometrically}) and $8$–$64$ logical qubits, we observe an approximately $94\times$ reduction in logical infidelity using nearly identical physical resources and the same $24\%$ acceptance rate.
\end{itemize}
\cref{sec:benchmarking} also introduces practical tools for implementing non-LDPC phantom codes, including fault-tolerant state-preparation strategies and decoders that track spatiotemporal error correlations; these tools are directly applicable to other non-LDPC codes.

\tocless\subsection{Implications}

The codes, benchmarks, methods and resources developed in this work carry several implications for fault-tolerant quantum computing.

First, phantom codes reduce logical error rates and overhead for circuits with dense local entanglement, bringing applications like fermionic simulation and correlated-phase preparation closer to practical realization. Together with recent works that lower the overhead of digital fermionic simulation~\cite{maskara2025fast, constantinides2025low}, these code- and algorithm-level advances markedly improve near-term experimental feasibility. This potential is accessible on hardware with long-range connectivity, including neutral-atom and trapped-ion platforms. Neutral-atom systems enable efficient parallel transversal logical entangling gates, whereas trapped-ion platforms lack a comparably parallel mechanism. Consequently, while phantom codes benefit both architectures, their impact is expected to be especially pronounced for trapped-ion processors.

Second, practical fault tolerance demands QEC codes that are co-designed for efficient storage and efficient logical computation. High rate and LDPC structure define one axis of optimization, supporting efficient state preparation, decoding, and low qubit overhead; an equally important axis is the availability of efficient logical gates, which distinguishes quantum memories from quantum computers. In this work, we target the zero-overhead extreme: although it appears to exclude high-rate qLDPC codes, the resulting codes are competitive in practical performance. This motivates developing QEC codes that optimize along both axes of this code design space.

The resources developed and compiled here enable systematic exploration of this design space. These include a complete database of $n \le 14$ CSS codes, SAT-based searches for specified logical gate structure, and an automated pipeline for extracting logical operations. Within this space, phantom codes occupy an extreme point corresponding to vanishing logical entangling cost. By relaxing the zero-cost constraint, the same framework probes nearby trade-offs among logical overhead, code distance, stabilizer weight, and logical gate sets, enabling the identification of codes tailored to other hardware connectivities and objectives, such as reducing the cost of logical magic.

Finally, our benchmarking results motivate revisiting which QEC codes are viable for specific applications. With appropriate decoding and state-preparation protocols, our work suggests that non-LDPC codes can outperform LDPC codes on tailored tasks despite the latter’s generic structural advantages.

\tocless\section{Code discovery \& construction
\label{sec:code_discovery_construction}}

We identify phantom codes via a combination of numerical search and analytic construction. The main ideas are discussed here and full technical details are given in the Appendices. \Cref{tab:gate_for_phantom_codes} summarizes representative phantom codes that outperform all previously known and newly generated codes in parameters, gate sets, or both.

Two major components of our numerical workflow employ SAT solvers \cite{biere2009handbook}. Accordingly, \cref{app:sat_preliminaries} provides a brief primer. In short, SAT solving formulates a problem as Boolean constraints and searches for a satisfying assignment or certifies that none exists.

\begin{table}
\centering
\begin{tabular}{l@{\hskip 6pt}c@{\hskip 6pt}c@{\hskip 6pt}c@{\hskip 6pt}c}
\toprule
Code        & Clifford & Fold & $d\!=\!2$ Magic & Found \\
\midrule
$\db{4,2,2}$   & $\mathrm{CZ}, H^{\otimes 2}$  & $SS$     & -- & \cite{vasmer2022morphing} \\
$\db{8,3,2}$   & $\mathrm{CZ}_{ij}$ & $S_i S_j$  & $\mathrm{C^2Z}$  & \cite{vasmer2022morphing} \\
$\db{16,4,2}$  & $\mathrm{CZ}_{ij}$  & $S_i S_j$ & $\mathrm{C^{3}Z}$ & \cite{vasmer2022morphing} \\
$\db{12,2,4}$  & $H^{\otimes 2}$   & $SS$, $\mathrm{CZ}$                 & -- & \cite{reichardt2024demonstration} \\
\midrule
$\db{7,3,2}$    & $\mathrm{CZ}_{ij}$    & $S_i S_j$   & --                     & {Enumeration} \\
$\db{12,2,2}$   & $\mathrm{CZ}, S_i$ & --  & $\mathrm{CS}$             & {Enumeration} \\
$\db{14,3,3}$    & --  & $S_i S_j$, $\mathrm{CZ}_{ij}$ & --                        & {Enumeration} \\
$\db{18,2,5}$     & $H^{\otimes 2}$  & $SS$, $\mathrm{CZ}$  & --               & {SAT search}\\
$\db{20,2,6}$   & $\mathrm{CZ}, H^{\otimes 2}$ & $SS$ & --      & {SAT search}\\
$\db{21,2,3}$   & {$\mathrm{CZ}, S_i$} & --   & --            & {SAT search}\\
$\db{16,3,4}$   & -- & $S_i S_j$, $\mathrm{CZ}_{ij}$  &  $\mathrm{C^2Z}$                       & {qRM} \\
$\db{32,4,4}$   & -- & $S_i S_j$, $\mathrm{CZ}_{ij}$ & $\mathrm{C^{3}Z}$                            & {qRM} \\
$\db{64,5,4}$   & --  & $S_i S_j$, $\mathrm{CZ}_{ij}$   & $\mathrm{C^{4}Z}$                     & {qRM} \\
$\db{64,4,8}$    & -- & $S_i S_j$, $\mathrm{CZ}_{ij}$ & $\mathrm{C^{3}Z}$                          & {qRM} \\
$\db{44,2,10}$   & $H^{\otimes 2}$ & $SS$, $\mathrm{CZ}$  &  --                       & {B\&C} \\
$\db{52,2,10}$   & $\mathrm{CZ},H^{\otimes 2}$ & ?  &  --                       & {B\&C} \\
$\db{68,2,14}$   & $\mathrm{CZ},H^{\otimes 2}$ & ?  &  --                       & {B\&C} \\
$\db{76,2,14}$   & $H^{\otimes 2}$ & $SS$, $\mathrm{CZ}$  &  --                       & {B\&C} \\
$\db{116,2,22}$   & $\mathrm{CZ},H^{\otimes 2}$ & ?  &  --                       & {B\&C} \\
\bottomrule
\end{tabular}
\caption{\textbf{Parameters and logical gate sets in representative phantom codes.}
Listed are phantom codes that outperform prior examples and all phantom codes found in this work in parameters, gate sets, or both. Columns list the code parameters, Clifford gates available via single-qubit Cliffords and qubit permutations (i.e.~code automorphisms), fold-diagonal Clifford gates, and magic gates. “--” denotes absence; “?” denotes unknown due to intractability. Fold-diagonal gates are non-exhaustive and omitted when already implementable by automorphisms. All magic gates are effectively distance-two. Any codes listed capable of implementing a $\mathrm{C}^{p}\mathrm{Z}$ gate can also implement $\mathrm{C}^{q}\mathrm{Z}$ gates for all $q<p$. Codes are obtained via exhaustive enumeration, SAT-based code discovery, quantum Reed--Muller (qRM) constructions, and binarized-qudit with concatenation (B\&C) constructions. The Hadamard-dual of these codes obtained via code deformation by $H^{\otimes n}$ are also phantom, supporting the corresponding logical gate sets conjugated by $H^{\otimes k}$.
}
\label{tab:gate_for_phantom_codes}
\end{table}

\tocless\subsection{Exhaustive code enumeration
\label{sec:code_discovery_construction/enumeration}}

Our first strategy is to construct a complete catalogue of inequivalent CSS codes to exhaustively identify all phantom codes up to some $n$. The dominant costs in this endeavor are (i) enumerating all admissible stabilizer groups defining candidate codes and (ii) identifying the equivalence class of each candidate code. The computational cost of both tasks scales superexponentially with $n$. We define equivalence up to qubit permutations and global Hadamards ($H^{\otimes n}$), as phantomness is preserved under these transformations. 

Our approach builds on the method of Ref.~\cite{cross2025small}, which enumerated stabilizer codes up to $n=9$. Starting from the trivial $\db{n,n,1}$ code, we iteratively reduce $k$ by appending linearly independent, commuting Pauli operators to the stabilizer group, with each admissible extension defining a new code. At each $k$, codes are grouped into equivalence classes based on the Tanner graphs that represent their stabilizer groups, and a single representative code is retained.

To reach larger $n$, we introduce CSS-specific simplifications that reduce both the branching factor of the search and the cost of equivalence testing. First, letting $r_{\mathrm{x}}$ and $r_{\mathrm{z}}$ respectively denote the ranks of the $X$- and $Z$-type stabilizer groups, code equivalence up to Hadamard duality allows us to restrict to $r_{\mathrm{x}} \le r_{\mathrm{z}}$, eliminating the remaining symmetric portion of the enumeration space. Second, CSS codes admit tripartite Tanner graphs~\cite{tillich2013quantum}, which simplifies canonical labelling and reduces memory requirement---an essential reduction given datasets spanning tens of~\si{\tera\byte}. Computing and then deduplicating canonical labels~\cite{mckay2014practical} of Tanner graphs reduces equivalence testing from $\mathcal{O}(I^2)$ pairwise isomorphism checks to $\mathcal{O}(I)$ canonical-labelling operations, where the number of candidate codes $I$ in an iteration exceeds $10^{10}$ at the scale we explored. 

Iterating this procedure yields all $2.71\times 10^{10}$ inequivalent CSS codes of block length $n \le 14$. By comparison, there are only $62\,156$ codes for $n \le 9$, roughly a millionth of this dataset. To our knowledge, this is the largest complete stabilizer code catalogue assembled to date.

We identify all phantom codes in this catalogue using Boolean constraint satisfaction. Given $X$- and $Z$-type stabilizer generator matrices $H_{\mathrm{x}} \in \mathbb{F}_2^{r_{\mathrm{x}} \times n}$ and $H_{\mathrm{z}} \in \mathbb{F}_2^{r_{\mathrm{z}} \times n}$ defining a CSS code, we compute $X$- and $Z$-type logical operators $L_{\mathrm{x}}, L_{\mathrm{z}} \in \mathbb{F}_2^{k \times n}$ by Gaussian elimination. We then introduce a permutation matrix $P^{ab}$ representing the qubit permutation implementing $\overline{\mathrm{CNOT}}_{ab}$ for each pair of distinct logical qubits $(a, b) \in [k]^2$ as a free variable. Constraints enforcing the preservation of the codespace and correctness of the induced logical actions are compactly included in the SAT problem instance, which we solve using the state-of-the-art solver \texttt{kissat}~\cite{biere2024cadical}. A satisfying assignment signifies that the code is phantom; unsatisfiability certifies that no implementation of the gates exist.

Applying this procedure, we identify $1.39\times 10^{5}$ CSS phantom codes up to $n = 14$---about one in every two hundred thousand CSS codes. Representative examples appear in \cref{tab:gate_for_phantom_codes}. Details of the enumeration algorithm, SAT formulation, and a breakdown of the results are given in \cref{app:enumeration}, along with an extension to non-CSS stabilizer codes.

\tocless\subsection{SAT-based code discovery
\label{sec:code_discovery_construction/sat}}

The exhaustive enumeration yielded a complete catalogue of CSS codes for $n \le 14$, including all phantom codes. For $n>14$, the number of CSS codes becomes prohibitive, so to identify phantom codes at larger $n$ and determine minimal block lengths at larger $k$ and $d$, we adopt a catalogue-free method.

Essentially, we convert the SAT check for phantomness into an automated code-discovery tool. Instead of fixing $H_{\mathrm{x}}$ and $H_{\mathrm{z}}$ as input, we treat them as free variables, allowing the SAT solver to construct codes that satisfy the required permutation gate-set constraints. We impose a standard form on $H_{\mathrm{x}}, H_{\mathrm{z}}$ so that a logical basis $L_{\mathrm{x}}, L_{\mathrm{z}}$ is explicit. Distance-$d$ constraints are enforced by demanding all Pauli errors of weight ${<} \, d$ (outside the stabilizers) to produce a nonzero syndrome. The full formulation, including non-CSS extensions, is provided in \cref{app:code_discovery}.

Our SAT search uncovers CSS phantom codes up to $n = 21$, including several with best-known parameters or gate sets (see  \cref{tab:gate_for_phantom_codes}). Unsatisfiability certificates allow us to determine minimal block lengths $n$ for a given $k$ and $d$ (see \cref{tab:min_n_phantom}). Compared with all CSS codes, phantom codes are near-identical in minimal block lengths for $k = 2$, exhibit small deviations for $k = 3$, and a larger deviation for the single $k = 4$ case, reflecting the increasing severity of the gate-set constraint on phantom codes.

\begin{table}
\centering
\vspace{4pt}
\setlength{\tabcolsep}{5.5pt}
\begin{tabular}{c|ccccc|ccc|c}
\toprule
 $k$  & \multicolumn{5}{c|}{$2$} & \multicolumn{3}{c|}{$3$} & $4$  \\
 $d$ & $2$ & $3$ & $4$ & $5$ & $6$ & $2$ & $3$ & $4$ & $2$\\
\midrule
Phantom   & 4 & 11 & 12 & 18 & 20 & 7 & 14 & 15 & 15\\
General   & 4 & 10 & 12 & 18 & 20 & 6 & 11 & 14 & 6 \\
\bottomrule
\end{tabular}
\caption{\textbf{Minimal lengths of CSS phantom and general CSS codes.}
Entries list the smallest block length $n$ for each logical-qubit count $k$ and distance $d$, for which a code exists.}
\label{tab:min_n_phantom}
\end{table}

\tocless\subsection{Phantom quantum Reed--Muller codes
\label{sec:code_discovery_construction/qrm}}

Our numerical methods are tractable only for $k \le 4$. To obtain phantom codes at larger $k$ and scalable code families beyond the reach of enumeration or SAT code discovery, we turn to an analytic construction based on quantum Reed–Muller (qRM) codes. qRM codes are CSS codes derived from pairs of classical Reed–Muller codes, with several families such as the $\db{2^D, D, 2}$ hypercube codes as special cases. Our full derivation is detailed in \cref{app:qrm} and uses the polynomial formalism \cite{macwilliams1977theory,gong2024computation}, which we review in \cref{app:qrm/polymonial}. Here we outline the construction and summarize the key properties.

Our construction begins with the qRM family $\db{2^m, {m\choose l}, \min(2^{m-l}, 2^l)}$, where $m$ sets the number of variables and $l$ the monomial degree defining the $X$-type logical generators~\cite{gong2024computation, reichardt2024demonstration}. These codes possess large automorphism groups and support products of $\overline{\mathrm{CNOT}}$ gates implemented via permutations, which cannot be composed to produce individually addressable $\overline{\mathrm{CNOT}}$s. Therefore these qRM codes are not generally phantom.

To obtain phantom codes, we fix selected logical qubits to $\ket*{\overline{0}}$ or $\ket*{\overline{+}}$, promoting the logical operators to $Z$- or $X$-type stabilizers. The resulting codes have parameters $\db{2^m, m-l+1, \min(2^{m-l}, 2^l)}$\footnote{Concatenating a phantom code with a $k=1$ inner quantum code preserves phantomness. While the code parameters can resemble concatenating a hypercube code with a repetition code~\cite{hangleiter2025fault}, our qRM construction yields more balanced codes (see~\cref{app:qrm/construction}).}. This choice is guided by the code's polynomial representation and admits a degree of flexibility to balance $X$- and $Z$-type stabilizer ranks, corresponding to different gauge-fixings of the parent qRM code. When $m = 2l$ and the $X$- and $Z$-distances are equal, such balancing can further improve logical error rates, a feature leveraged in our benchmarking (\cref{sec:benchmarking}).

A simple mnemonic for the phantom qRM code parameters is: starting from the hypercube parameters $\db{2^D, D, 2}$, each reduction of $k$ by one doubles the distance $d$ at fixed $n$, up to $d = \sqrt{n}$. Representative examples appear in \cref{tab:gate_for_phantom_codes}, and \cref{fig:phantomqRM} illustrates two permutation $\overline{\mathrm{CNOT}}$s for the smallest ($d = 4$) error-correcting qRM code.

\begin{figure}
    \centering
    \includegraphics[width=\columnwidth]{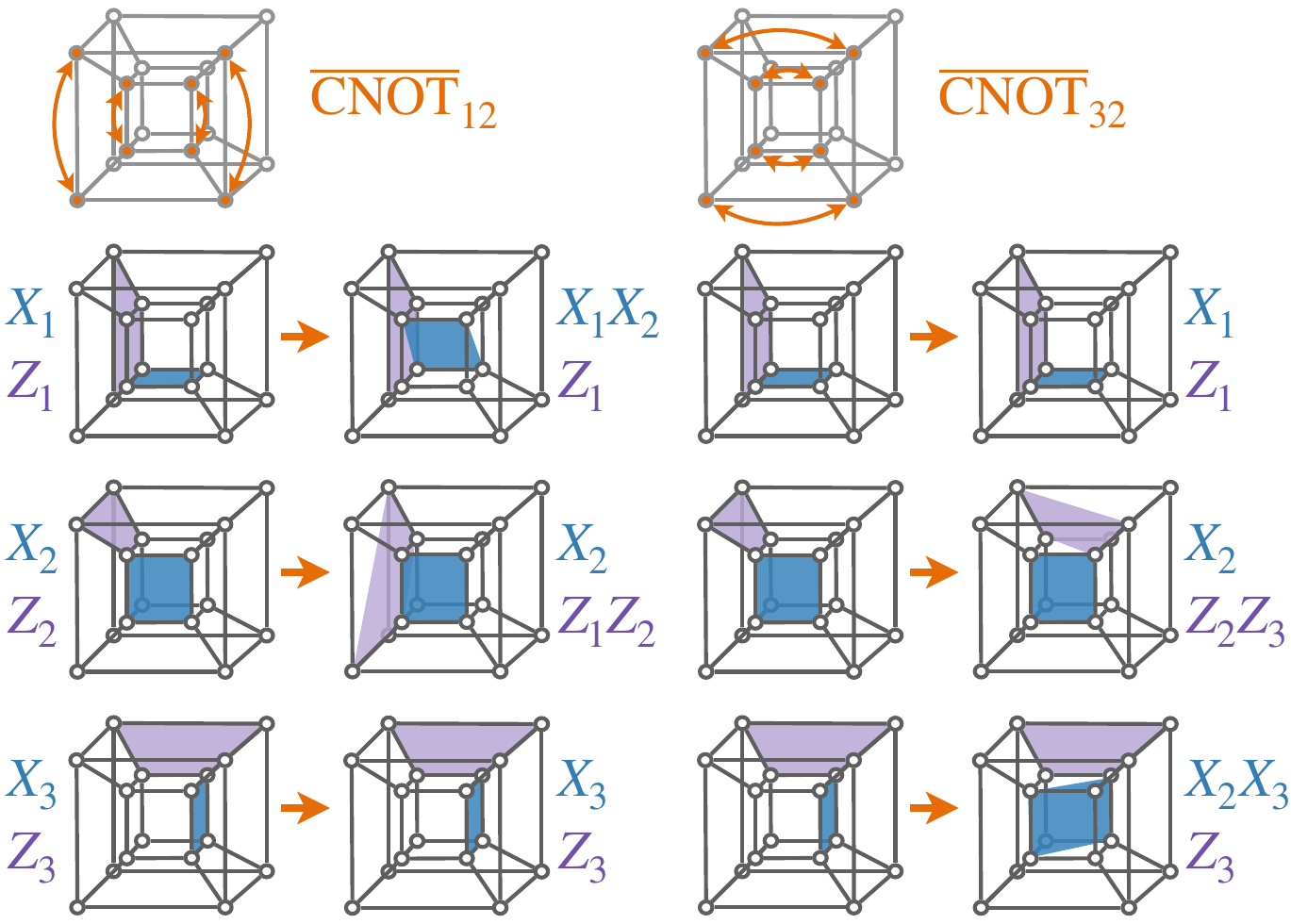}
    \caption{\textbf{The smallest error-correcting phantom qRM code.}
    Logical $\overline{\mathrm{CNOT}}$ gates for the $\db{16,3,4}$ phantom qRM code. The code is obtained by promoting three same-type logical operators ($X$ or $Z$) of the parent $\db{16,6,4}$ code to stabilizers. The resulting reduction in the number of logical qubits enables arbitrary $\overline{\mathrm{CNOT}}$ gates via qubit relabelling; examples of $\overline{\mathrm{CNOT}}_{12}$ and $\overline{\mathrm{CNOT}}_{32}$ are shown.}
    \label{fig:phantomqRM}
\end{figure}

Beyond their phantom property, these codes offer several additional architectural advantages. They are compatible with preselection-based fault-tolerant state preparation~\cite{gong2024computation}, enabling high-fidelity codeblock initialization despite their high-weight stabilizers. They support the full logical Clifford group via fold-$\overline{S}_i \overline{S}_j$ gates and teleported Hadamards from an all-$\ket{\overline{+}}$ state. Finally, they admit a $d = 2$ transversal magic-gate scheme: by temporarily projecting into the $\db{2^{m-l+1}, m-l+1, 2}$ hypercube codes, applying their transversal non-Clifford gate, and returning to the phantom qRM codespace (see \cref{app:qrm}).

\tocless\subsection{Binarization and concatenation scheme
\label{sec:code_discovery_construction/gf4}}

While the phantom qRM family supports arbitrary $k$, its distance is bounded by $d \le \sqrt{n}$. To surpass this bound at $k = 2$, we introduce an analytic construction based on high-distance self-dual $\GF(4)$ qudit codes encoding one logical qudit~\cite{macwilliams1978self}. The underlying $\GF(4)$ codes achieve extremal distances at small block lengths.

To convert such a qudit code into a phantom code, we first binarize it by representing each $\GF(4)$ symbol by a pair of qubits~\cite{gottesman2024surviving}. We then concatenate each qubit pair with the $\db{4,2,2}$ phantom code (\cref{fig:qudit}). Both steps of this binarize-and-concatenate (B\&C) scheme are essential: binarization alone does not yield a phantom code, as the $\db{4,2,2}$ layer supplies a subset of required substructure for phantomness.

These codes inherit key properties from the starting qudit codes. Self-duality of the qudit code guarantees a nontrivial logical Hadamard on the resulting qubit code; Hermiticity additionally yields a $\overline{\mathrm{CZ}}$ gate. As binarization does not decrease distance, their large qudit distances translate into high qubit distances. Additionally, concatenation with $\db{4,2,2}$ introduces many low-weight stabilizers, though completing the stabilizer group still requires some generators of weight at least the code distance. More generally, any $k = 2$ CSS phantom code can be used for concatenation, but the $\db{4,2,2}$ is favourable in encoding rate. Proofs and technical details of a phantom code family arising from quadratic-residue $\GF(4)$ codes are provided in \cref{app:binarize_concatenate}.

\begin{figure}
    \centering
    \includegraphics[width=0.9\columnwidth]{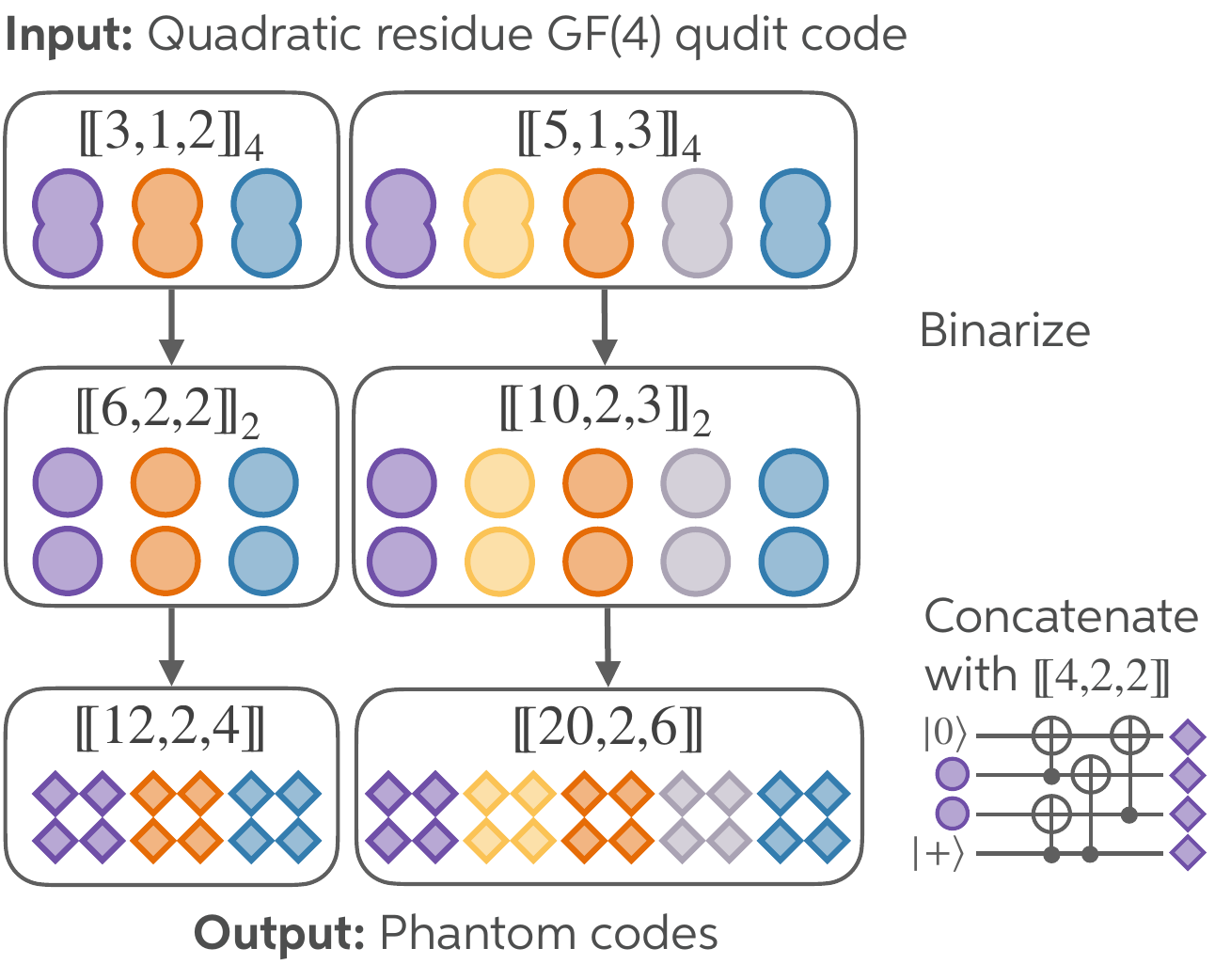}
    \caption{
    \textbf{Schematic construction of phantom codes from quadratic-residue qudit codes.}  
    Phantom codes can be constructed by starting from a quadratic-residue $\db{n,k,d}_4$ code over $\mathrm{GF}(4)$ and binarizing it to obtain a qubit code with parameters $\db{2n, 2k, d}_2$. This intermediate code is not itself phantom, but it serves as the outer code in a concatenation with the $\db{4,2,2}$ qubit code, and the resulting concatenated code is phantom. A detailed derivation is given in \cref{app:binarize_concatenate}.
    }
    \label{fig:qudit}
\end{figure}

\tocless\subsection{Other code constructions
\label{sec:code_discovery_construction/other}}

Finally, we discuss additional strategies for constructing phantom codes, highlighting a subset here and deferring full details to \cref{app:other_constructions}.

\medskip
\begin{itemize}[leftmargin=*]
    \item \textit{Simple concatenation}. Concatenating a phantom outer code $\db{n_{\mathrm{out}}, k, d_{\mathrm{out}}}$ with a high-distance inner code $\db{n_{\mathrm{in}}, 1, d_{\mathrm{in}}}$ (e.g., a surface code) yields a phantom code with parameters $\db{n_{\mathrm{in}} n_{\mathrm{out}}, k, d_{\mathrm{in}} d_{\mathrm{out}}}$. Logical $\overline{\mathrm{CNOT}}$s are performed by relabellings at the outer level.
    \item \textit{Hypergraph product (HGP) codes}. The HGP~\cite{tillich2013quantum} increases both $k$ and $d$ while preserving permutation symmetries of the input classical codes~\cite{berthusen2025automorphism}. Phantom codes can be constructed from, e.g., the HGP of a classical simplex and a repetition code (\cref{app:other_constructions/hypergraph}). The smallest members of this family, $\db{7,2,2}$ and $\db{49,3,4}$, have lower rates than the phantom qRM codes.
    \item \textit{Punctured hypercube codes.} Removing one qubit from the $\db{2^D, D, 2}$ hypercube codes ($D>2$) and truncating the affected stabilizers yields phantom codes with parameters $\db{2^D - 1, D, 2}$. Both punctured and unpunctured versions saturate the Hamming bound for CSS phantom codes (\cref{thm:phantom_css_hamming_bound}), though puncturing can reduce the gate set. For example, the $\db{15,4,2}$ code admits $\overline{\mathrm{CCZ}}$ but not $\overline{\mathrm{CCCZ}}$ (see \cref{app:other_constructions}).
\end{itemize}

\tocless\section{Logical gates beyond phantom
\label{sec:gates}}

Practical QEC codes must support versatile fault-tolerant logical gates. We therefore identify additional Clifford and non-Clifford logical operations supported by phantom codes beyond their defining permutation $\overline{\mathrm{CNOT}}$s. We begin with a fundamental constraint: 

\medskip
\begin{restatable}[No strictly transversal logical gates non-commuting with permutation logical gates]{theorem}{MainTheoremTransversalGates}
\label{thm:no_noncommuting_transversal_gates}
    A stabilizer code supporting a logical gate $\overline{U}$ via qubit permutations can admit no strictly transversal logical gate $\overline{V}$ acting on any number of codeblocks $b$ where $\comm{U^{\otimes b}}{V} \neq 0$. 
\end{restatable}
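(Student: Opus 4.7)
The plan is a short direct argument that lifts a physical commutation identity to the logical level. Let $\pi \in S_n$ be the qubit permutation on one codeblock that implements $\overline{U}$. Then the physical operator realizing $\overline{U}$ applied simultaneously to each of the $b$ codeblocks is $\pi^{\otimes b}$, which permutes the $n$ transversal columns (the $b$-tuples formed by corresponding qubits across the $b$ blocks) in the same way $\pi$ permutes qubit labels within a single block.

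The core of the argument is to exploit the uniformity built into the definition of strict transversality. I would take strictly transversal to mean that the physical implementation of $\overline{V}$ factorizes as $\tilde V = V_0^{\otimes n}$, where $V_0$ is a single fixed gate acting identically on each transversal column (a single-qubit gate when $b=1$, or a $b$-qubit gate on a column when $b>1$). Because $\pi^{\otimes b}$ only relabels the columns while each column carries the same $V_0$, conjugation leaves the operator fixed:
\begin{equation}
    \pi^{\otimes b}\, \tilde V\, (\pi^{\otimes b})^{-1} \;=\; \bigotimes_{i=1}^{n} V_0 \;=\; \tilde V.
\end{equation}
Hence $\pi^{\otimes b}$ and $\tilde V$ commute as physical operators.

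I would then pass to logical actions. Since $\pi^{\otimes b}$ implements $U^{\otimes b}$ and $\tilde V$ implements $V$, the operator $\pi^{\otimes b}\tilde V (\pi^{\otimes b})^{-1}\tilde V^{-1}$ implements $U^{\otimes b} V U^{-\otimes b} V^{-1}$; by the displayed identity this physical operator is the identity, which implements the identity logical operation. Therefore $[U^{\otimes b}, V] = 0$, contradicting the hypothesis. The contrapositive gives the claim.

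The main obstacle, and the place where care is needed, is the precise definition of \emph{strictly transversal}. The argument genuinely uses the uniformity $\tilde V = V_0^{\otimes n}$ and would fail for a heterogeneous tensor product with position-dependent factors: for example $\overline{X}_1 = XXII$ in the $\db{4,2,2}$ code is a tensor product of single-qubit operators but does not commute with the permutation that implements $\overline{\mathrm{CNOT}}_{12}$, while still being a perfectly good logical operator. I would therefore open the proof by fixing this definition explicitly, relating it to the paper's earlier distinction between pure permutations, permutation-plus-local-Clifford dressings, and fold gates, and noting the (harmless) caveat that phases acquired by $\tilde V$ under the strict equality $\pi^{\otimes b}\tilde V(\pi^{\otimes b})^{-1}=\tilde V$ are absorbed into the trivial logical action.
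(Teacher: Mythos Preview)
Your proposal is correct and follows essentially the same approach as the paper: both arguments observe that $\pi^{\otimes b}$ commutes with the uniform tensor product $\tilde V = W^{\otimes n}$ at the physical level (because $\pi^{\otimes b}$ merely relabels identical columns), and then lift this to a contradiction at the logical level. Your identification of the uniformity requirement in ``strictly transversal'' as the crux, together with the $\db{4,2,2}$ counterexample for heterogeneous factors, matches exactly the scope the paper intends.
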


On $b$ codeblocks of an $\db{n, k, d}$ code, the strictly transversal gates of \cref{thm:no_noncommuting_transversal_gates} have the form $\overline{V} = \prod_{i = 1}^n W_{i i \cdots i}$, with $W_{i i \cdots i}$ acting on the $i^\text{th}$ qubit(s) of the codeblock(s). For phantom codes, this result precludes, for example, strictly transversal $\overline{H}$, $\overline{S}$, $\overline{\mathrm{CZ}}$, and magic gates. Additional logical gates beyond the permutation $\overline{\mathrm{CNOT}}$s can arise from non-uniform single-qubit gates, qubit permutations, and in-block multi-qubit interactions. The three strategies described below address these mechanisms.

We first identify the logical gates arising from physical qubit permutations and local Cliffords, i.e.~the \textit{automorphism Cliffords}. Following Refs.~\cite{breuckmann2024fold, sayginel2025fault}, we analyze the permutation automorphisms of the extended symplectic stabilizer generator matrix to determine the set of logical gates. Details are given in \cref{app:gates/auto}; here we illustrate the approach using the $\db{4,2,2}$ phantom code, which intersects several well-studied families~\cite{criger2016noise, majidy2023unification, kubica2015unfolding}. Its stabilizers are generated by $\smash{X^{\otimes 4}}$ and $\smash{Z^{\otimes 4}}$, which in extended symplectic form are
\begin{align}
\bigg[
\underbrace{
\begin{array}{cccc}
    1&1&1&1 \\
    0&0&0&0
\end{array}
}_{H^{(\mathrm{x})}}
\Bigg|
\underbrace{
\begin{array}{cccc}
    0&0&0&0 \\
    1&1&1&1
\end{array}
}_{H^{(\mathrm{z})}}
\Bigg|
\underbrace{
\begin{array}{cccc}
    1&1&1&1 \\
    1&1&1&1
\end{array}
}_{H^{(\mathrm{x})} \oplus H^{(\mathrm{z})}}
\bigg]. 
\end{align}

Column permutations that preserve the row span\footnote{Technically, they must satisfy an additional condition (\cref{app:gates/auto}).} represent physical $\mathrm{SWAP}$, $H$, and $S$ combinations that act logically. For example, swapping columns $1$--$4$ with $5$--$8$ leaves the matrix invariant and corresponds to an $H^{\otimes 4}$, yielding the logical action $\smash{\overline{H}}^{\otimes 2}\,\overline{\mathrm{SWAP}}$. Thus, automorphism Cliffords follow directly from permutation symmetries of the extended stabilizer matrix.

Second, we identify logical diagonal magic gates obtained via non-uniform single-qubit physical $Z$ rotations~\cite{webster2022xp, webster2023transversal}. Fixing a level of the diagonal Clifford hierarchy (e.g.~$T$ or $\smash{\sqrt{T}}$), we solve for integer-power assignments of this gate on the physical qubits whose action preserves every logical computational-basis state up to a state-dependent phase. Each qubit may receive a distinct rotation angle. These yield logical diagonal gates at the chosen hierarchy level (see \cref{app:gates/diagonal}).

Finally, we extend the diagonal-rotation method to identify fold-diagonal gates~\cite{webster2023transversal, sayginel2025fault}. The key idea is to embed a $\db{n,k,d}$ code into a larger $\db{n+\binom{n}{2},k,d'}$ code and search for (non-uniform) single-qubit physical $Z$-basis rotations that act as logical operators on the enlarged code. Using the construction of \cref{app:gates/diagonal/fold}, any such operator induces a logical operation on the original code that is implemented by patterned one- and two-qubit diagonal interactions. Restricting to depth-one circuits yields fold gates, which in favourable cases preserve the $X$-sector code distance~\cite{malcolm2025computing}.

Together, these mechanisms generate a wide class of logical operations on phantom codes. For example, fold-$\overline{S}_i \overline{S}_j$ gates combined with automorphism operations implementing $\smash{\overline{H}}^{\otimes k}$ generate the full logical Clifford group, as realized by the $\db{20,2,6}$ phantom code.

\tocless\section{Numerical benchmarking
\label{sec:benchmarking}}

Lastly, we assess whether phantom codes can be practically useful in realistic fault-tolerant settings. Two structural properties make their practical usefulness non-obvious. First, the phantom codes identified in this study possess high-weight stabilizers and are non-LDPC, which generally precludes high-fidelity codeblock initialization and syndrome extraction using standard bare-ancilla techniques. Second, in generic logical circuits not all $\overline{\mathrm{CNOT}}$s can be in-block.

To this end, we benchmark a concrete phantom code against a surface-code baseline using end-to-end noisy simulations. We focus on the $\db{64,4,8}$ phantom qRM code, whose parameters are suitable for near-term experiments, and compare its performance to rotated surface codes of varying $d$ at both near-term ($10^{-3}$) and projected  ($5\times10^{-4}$) physical error rates. [Results at a higher $3 \times 10^{-3}$ error rate are provided in \cref{app:benchmarking/current}.] Our study is organized around three benchmarks of increasing complexity: (i) repeated in-block logical $\overline{\mathrm{CNOT}}$ circuits on a single codeblock; (ii) logical GHZ state preparation across multiple codeblocks; and (iii) Trotterized many-body quantum simulation. These benchmarks probe regimes successively dominated by in-block entanglement and by interblock operations, with circuit widths far exceeding the per-codeblock logical dimension $k$.

\tocless\subsection{Operation of codes and error correction
\label{sec:benchmarking/operation}}

We begin by specifying how the phantom and surface codes are operated, including state-preparation protocols, syndrome-extraction strategies, and decoding procedures, since this common framework underlies all our benchmarks.

Because all performance comparisons depend on the strength of the reference architecture, we use a fully optimized, state-of-the-art surface-code implementation as our baseline. The surface code is LDPC and has benefited from ${\sim} 25$ years of development: it supports low-overhead state preparation, efficient hook-error-free bare-ancilla syndrome extraction, a near-optimal matching decoder, and transversal interblock $\overline{\mathrm{CNOT}}$ gates.

Crucially, we use modern techniques of correlated decoding and algorithmic fault tolerance~\cite{cain2024correlated,zhou2025low} to ensure a strong surface-code baseline. These methods account for error propagation across codeblocks and allow the number of syndrome-extraction rounds per transversal $\overline{\mathrm{CNOT}}$ to be reduced from $\mathcal{O}(d)$ to $\mathcal{O}(1)$ while maintaining fault tolerance. Concretely, we employ a recent matching-based correlated decoder~\cite{cain2025fast} and, following prior results, use a single round of syndrome extraction per transversal $\overline{\mathrm{CNOT}}$. Technical details of the surface-code implementation are given in \cref{app:benchmarking/surface}.

We next describe how we operate the $\db{64,4,8}$ phantom code. Its non-LDPC structure necessitates syndrome-extraction, state-preparation, and decoding procedures compatible with high-weight stabilizers: \medskip
\begin{itemize}[leftmargin=*]
    \item Because bare-ancilla syndrome extraction is not viable, we employ Steane-style quantum error correction, in which errors are coherently copied onto ancilla codeblocks that are then measured transversally.
    \item Bare-ancilla codeblock initialization is likewise precluded, so we develop a short-depth, fault-tolerant state preparation protocol extending Ref.~\cite{gong2024computation}. Our scheme uses a compact four-to-one codeblock certification circuit, exploits qRM permutation automorphisms for fault tolerance, and operates as a state factory via error detection and preselection (see \cref{app:benchmarking/phantom/state_prep}). At two-qubit error rates of $10^{-3}$ and $5\times10^{-4}$, strict syndrome-free preselection yields acceptance rates of ${\sim}24\%$ and ${\sim}49\%$, respectively; and allowing weight-one residual errors increases these to ${\sim}40\%$ and ${\sim}64\%$. We focus on the stricter preselection protocol for comparison discussions in this section.
    \item To accurately decode large circuits, an efficient decoder that tracks spatiotemporal error correlations is needed. The $\db{64,4,8}$ code is non-matchable, and existing approaches such as code-capacity list decoding~\cite{gong2024computation,gong2024improved} do not extend directly to the correlated setting. We therefore develop spatiotemporal sliding-window correlated list and most-likely-error (MLE) decoders that account for error propagation through transversal $\overline{\mathrm{CNOT}}$ gates and Steane error-correction gadgets. Although MLE decoding is not polynomial-time in general, our protocol operates only on constant-sized windows\footnote{Constant-sized windows (around each transversal $\overline{\mathrm{CNOT}}$ gate in our implementation) suffice because verified-ancilla Steane QEC limits correlations to a constant-sized spacetime neighbourhood.},
    yielding decoding costs that scale linearly with logical circuit size on a fixed code. Details are given in \cref{app:benchmarking/phantom/decoding}.
\end{itemize}

Additionally, as described in \cref{sec:code_discovery_construction/qrm}, phantom qRM codes with identical $\db{n, k, d}$ parameters can be obtained by different choices of gauge-fixed logical operators, which can favour the preparation fidelity of either $\smash{\ket*{\overline{0}}}^{\otimes k}$ or $\smash{\ket*{\overline{+}}}^{\otimes k}$. For the numerics presented here, we adopt a balanced construction in which neither state is disadvantaged (see \cref{def:balanced_64-4-8}).

We employ a circuit-level noise model that includes single- and two-qubit gate errors, idle errors, and reset and measurement errors, with relative error rates calibrated to recent neutral-atom array experiments~\cite{bluvstein2025fault} (see also comparable error rates in Refs.~\cite{manetsch2025tweezer, muniz2025repeated, senoo2025high}). Full technical details of our benchmarking appear in \cref{app:benchmarking}.

\tocless\subsection{Single-codeblock repeated $\overline{\mathrm{CNOT}}$ circuits
\label{sec:benchmarking/cx}}

Our first benchmark isolates the most favourable regime for phantom codes, in which all $\overline{\mathrm{CNOT}}$s are in-block. While in-block $\overline{\mathrm{CNOT}}$s on phantom codes incur no cost after compilation, state preparation is generally more costly than on surface codes. This benchmark investigates when the reduction in gate cost outweighs the increased initialization overhead.

\begin{figure}
    \centering
    \includegraphics[width=\linewidth]{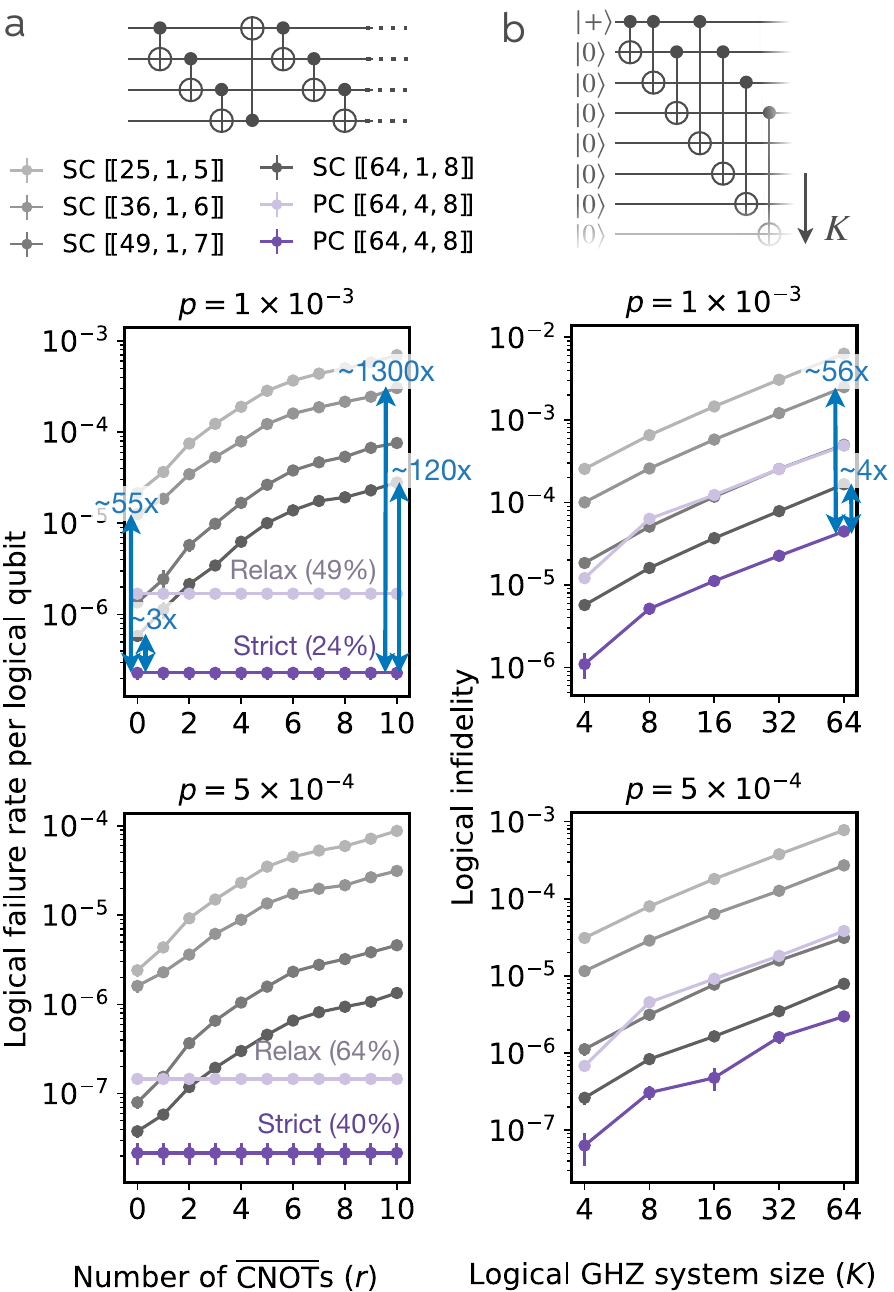}
    \phantomsubfloat{\label{fig:benchmark/cx}}
    \phantomsubfloat{\label{fig:benchmark/ghz}}
    \vspace{-0.75cm}
    \caption{\textbf{Benchmarking phantom and surface codes in $\bm{\overline{\mathrm{CNOT}}}$ circuits and logical GHZ state preparation.} 
    \textbf{(a)} Repeated in-block $\overline{\mathrm{CNOT}}$ circuits on four logical qubits hosted on a single $\db{64,4,8}$ phantom codeblock or four surface codeblocks ($d = 5$--$8$).
    The numerics include fault-tolerant state preparation and failure rates are averaged over $\ket*{\overline{0}}^{\otimes 4}$ and $\ket*{\overline{+}}^{\otimes 4}$ initial states.
    \textbf{(b)} Logical GHZ state preparation across multiple codeblocks, probing performance as the fraction of in-block permutation $\overline{\mathrm{CNOT}}$s decreases.
    Dark and pale purple denote strict and relaxed preselection, respectively. 
    Error bars are 98\% confidence intervals. 
    }
    \label{fig:benchmark}
\end{figure}

To probe this trade-off, we consider circuits consisting of fault-tolerant state preparation followed by repeated in-block $\overline{\mathrm{CNOT}}$s acting on four logical qubits hosted on a single $\db{64,4,8}$ codeblock or four surface codes (see \cref{fig:benchmark/cx}). We benchmark surface codes of distance $d = 5$--$8$. In all cases, results are averaged over $\smash{\ket*{\overline{0}}}^{\otimes 4}$ and $\smash{\ket*{\overline{+}}}^{\otimes 4}$ initial logical states to remove basis dependence.

We first assess logical state-preparation performance alone, corresponding to the $r = 0$ limit of \cref{fig:benchmark/cx}. The phantom code uses $64$ data qubits and $64 \times 3$ ancilla qubits for its state-preparation factory. Accounting for preselection overhead, it achieves a ${\sim} 55 \times$ reduction in logical failure rate over the $d=6$ surface code with a comparable spatial footprint ($144$ data and $140$ ancilla qubits). Even relative to the larger $d=8$ surface code, requiring roughly twice as many physical qubits, the phantom code achieves a ${\sim} 3 \times$ advantage.

As in-block $\overline{\mathrm{CNOT}}$s are added, the logical failure rate of the surface code grows linearly, whereas the phantom code's remains unchanged because its $\overline{\mathrm{CNOT}}$s require no physical operations. At near-term error rates, the resulting improvement reaches ${\sim} 1300\times$ (${\sim} 120 \times$) over the $d=6$ ($d = 8$) surface code on the deepest (depth-$10$) circuit studied. The logical gate time on the phantom code also remains effectively zero after state preparation, while it increases linearly with logical circuit depth on the surface code. The relative logical performance between phantom and surface codes remains similar at projected future physical error rates.

\tocless\subsection{Multiple-codeblock GHZ state preparation
\label{sec:benchmarking/ghz}}

Practical applications often require the use of multiple codeblocks, so our second benchmark studies logical performance as the fraction of in-block $\overline{\mathrm{CNOT}}$s implemented via permutations is reduced.

To control the ratio of in-block permutation and interblock transversal $\overline{\mathrm{CNOT}}$ gates in a simple and interpretable way, we benchmark logical GHZ state preparation across an increasing number of codeblocks. We consider logical system sizes $K = 4$--$64$ (see \cref{fig:benchmark/ghz}). For $K = 4$, all three entangling $\overline{\mathrm{CNOT}}$ gates are in-block and can be implemented as qubit permutations, giving a permutation-to-transversal gate ratio of $3\!:\!0$. As $K$ increases, entangling operations between codeblocks become necessary; for $K = 64$ this ratio decreases to $3\!:\!61$. We expect the phantom-code advantage to diminish as the contribution of transversal $\overline{\mathrm{CNOT}}$s increases.

Compared to the single-codeblock benchmark of \cref{sec:benchmarking/cx}, GHZ state preparation introduces two additional factors for the phantom code. First, interblock $\overline{\mathrm{CNOT}}$ gates necessitate active QEC, which we perform via Steane syndrome extraction as described in \cref{sec:benchmarking/operation}, incurring an overhead of two ancillary codeblocks per data codeblock. Second, the logical circuit requires codeblocks initialized in mixed computational- and Hadamard-basis logical states, namely the $\ket*{\overline{+000}}$ state. Fault-tolerant preparation of mixed-basis logical states on $k > 1$ codes are generally costly. To implement this efficiently we (i) prepare two phantom codeblocks in $\smash{\ket*{\overline{0}}}^{\otimes k}$ and $\smash{\ket*{\overline{+}}}^{\otimes k}$, (ii) addressably teleport logical qubits between the codeblocks using at most two transversal $\overline{\mathrm{CNOT}}$s, and (iii) measure one of the codeblocks transversally with preselection to suppress errors. We simulate the entire noisy process. Technical details are in \cref{app:benchmarking/ghz}.

We compare the phantom code against surface codes matched either in spatial footprint or in code distance. For $K=64$, a $d = 6$ surface code requires 64 codeblocks of $36$ data plus $35$ ancilla qubits each, for a total of $4544$ qubits. Each $\db{64,4,8}$ phantom codeblock requires two ancillary codeblocks for Steane QEC. To be conservative, we allocate one state-preparation factory per two data codeblocks, noting that this overhead could decrease on future hardware. At $K=64$, this amounts to a footprint of $4608$ qubits. At this scale, the phantom code achieves a ${\sim} 56 \times$ reduction in logical infidelity over the surface code. Compared to the $d = 8$ surface code, the phantom code retains ${\sim} 4 \times$ infidelity advantage while using ${\sim} 1.8 \times$ fewer physical qubits. Similar relative behaviour in logical fidelity persists at projected future physical error rates. Notably, the phantom-code advantage remains for all $K$, despite the increasing fraction of interblock $\overline{\mathrm{CNOT}}$s.

\tocless\subsection{Trotterized many-body quantum simulation
\label{sec:benchmarking/trotter}}

We conclude our benchmarking study with an application-motivated test of large-scale quantum many-body simulation, asking whether phantom codes can provide a sustained advantage as system size increases and physical error rates improve. This benchmark is motivated by the natural suitability of phantom codes for implementing $\overline{\mathrm{CNOT}}$ ladders, in which $\overline{\mathrm{CNOT}}$s act sequentially on neighbouring logical qubits. Such structures arise in Hamiltonian simulation, where terms of the form $\exp(-i\theta Z^{\otimes L})$ decompose into paired $\overline{\mathrm{CNOT}}$ ladders surrounding a single-qubit rotation. When these ladders span multiple codeblocks, a logarithmic-depth hypercube layout of $\overline{\mathrm{CNOT}}$ gates minimizes the number of required interblock transversal $\overline{\mathrm{CNOT}}$s, which we exploit.

\begin{figure}
    \includegraphics[width=\linewidth]{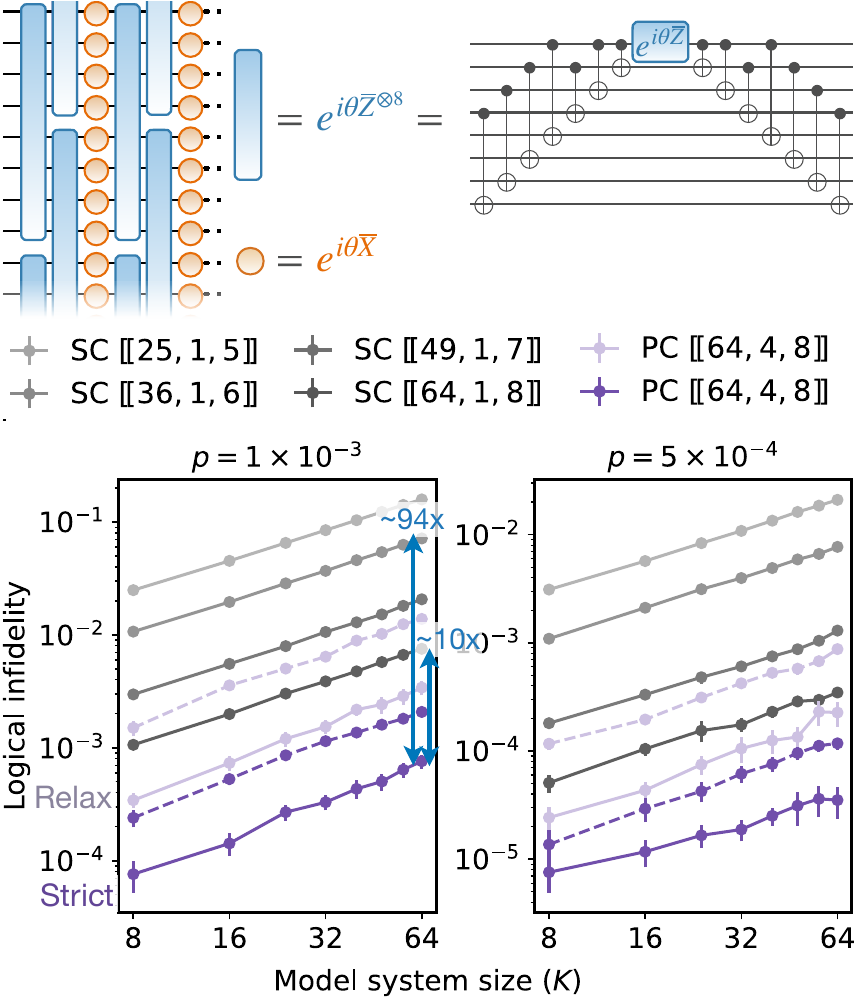}
    \caption{\textbf{Benchmarking phantom and surface codes in many-body quantum simulation.} 
    Logical circuits perform Trotterized time-evolution of a Hamiltonian with eight-body Ising interaction terms and a transverse field, on up to $64$ logical qubits over eight Trotter steps. 
    The largest system size uses $16$ codeblocks of the $\db{64,4,8}$ phantom code, or $64$ surface codeblocks of distance $d = 5$--$8$. 
    Dark and pale purple denote strict and relaxed preselection for state preparation on the phantom code, respectively; solid and dashed lines denote sliding-window MLE and correlated list decoding.
    Error bars are 98\% confidence intervals.
    }
    \label{fig:benchmarking/trotter}
\end{figure}

Concretely, we simulate Trotterized time-evolution generated by eight-body Ising Hamiltonian terms $\exp(-i\theta Z^{\otimes 8})$ in a brickwork pattern, interleaved with single-qubit transverse-field terms $\exp(-i\phi X)$ (see \cref{fig:benchmarking/trotter}). For the phantom code, each eight-body term use three in-block and four physical $\overline{\mathrm{CNOT}}$s. We study systems with $K = 8$--$64$ logical qubits over eight Trotter steps. The largest circuit contains ${>} 2400$ logical gates and have a two-qubit logical gate depth of $96$ at $K=64$.

At these widths, such quantum simulation circuits are generally classically intractable. For $K=64$, the $\db{64,4,8}$ phantom code again requires $4608$ qubits, inclusive of ancillary codeblocks for Steane QEC and state-preparation factories. This is to be compared with $4544$ ($8128$) qubits for the $d=6$ ($d=8$) surface code. To make simulations tractable, we fix $\theta = \phi = \pi/2$; in physical experiments, smaller angles would be used. Such rotations can be implemented using the STAR protocol~\cite{akahoshi2024partially}, whose application to the $\db{64,4,8}$ phantom code is discussed in \cref{app:qrm/STAR} (its use for surface codes was detailed in Ref.~\cite{akahoshi2024partially}). 

We use two polynomial-time spatiotemporal sliding-window decoders for the phantom code: a most-likely-error (MLE) decoder, and a correlated list decoder that is ${\sim} 100 \times$ faster but less accurate (see \cref{app:benchmarking/phantom/decoding}). The surface code continues to use the matching-based correlated decoder as described in \cref{sec:benchmarking/operation} and \cref{app:benchmarking/surface}. We use one round of syndrome extraction per transversal $\overline{\mathrm{CNOT}}$ for both codes.

As shown in \cref{fig:benchmarking/trotter}, the phantom code outperforms the surface code across near-term and projected future physical error rates, and this advantage is maintained as the system size increases. At near-term error rates, with sliding-window MLE decoding, the phantom codes achieves a ${\sim} 94 \times$ (${\sim} 10 \times$) reduction in logical infidelity compared against the $d = 6$ ($d = 8$) surface code, with similar relative improvements at future error rates. Even with relaxed preselection, the phantom code outperforms the $d = 8$ surface code. Although the sliding-window correlated list decoder sacrifices decoding accuracy, the phantom code still exceeds the $d = 8$ ($d = 7$) surface code in logical fidelity under strict (relaxed) preselection.

\tocless\section{Discussion and outlook
\label{sec:discussion}}

As emphasized, reducing QEC overhead requires jointly optimizing for storage and computation: what ultimately matters are the operations (including memory) that a QEC code enables. High rate and LDPC structure define one axis, enabling efficient state preparation, decoding, and low qubit overhead. An equally important axis is efficient logical gates, which distinguish quantum memories from quantum computers. Here, we identify a broad class of quantum error-correcting codes that realize logical entanglement without physical overhead or infidelity, and show that these codes can outperform leading fault-tolerant architectures for well-chosen tasks. These phantom codes bring workloads with dense local entangling structure closer to practical realization and enable new opportunities for logical algorithms in, for example, trapped-ion and neutral-atom platforms. The resources developed in this work broaden the scope of code discovery, enabling a systematic cataloguing of small-footprint CSS codes and direct solver-based search for codes with targeted properties.

Our results reveal that phantomness is a stringent structural condition that selects a highly symmetric corner of the space of codes and impose two practical consequences. First, all phantom codes discovered here are non-LDPC, necessitating Steane-style QEC, preselected logical-state factories, and the development of decoders presented in this work. Second, the phantom codes we identify encode $k = \mathcal{O}(\log n)$ logical qubits, limiting the fraction of phantom $\overline{\mathrm{CNOT}}$s in large-scale algorithms. As a result, scalable advantages arise in workloads where entangling operations occur in dense local patches and logical blocks are connected via transversal $\overline{\mathrm{CNOT}}$s.

These results open several new research directions, one of which is motivated by the observed limitations. The absence of LDPC phantom codes hint that permutation-implemented $\overline{\mathrm{CNOT}}$s may be incompatible with bounded-weight or geometrically local stabilizer generators, motivating either new constructions or formal no-go theorems. Likewise, the empirical bound $k = \mathcal{O}(\log n)$ raise the question of whether large automorphism groups can coexist with higher encoding rates. Finally, the fact that all transversal magic gates we found are effectively distance-two may reflect a deeper structural obstruction. Resolving which features are fundamental would clarify the asymptotic limits of permutation-implemented $\overline{\mathrm{CNOT}}$s.

A second direction concerns mapping the broader space of code design. Phantom codes minimize the temporal cost of logical entanglement, at the apparent expense of low encoding rates, whereas many prominent qLDPC codes exhibit the opposite trade-off. This contrast suggests an existence of a continuum between large-automorphism, low-rate codes and high-rate qLDPC constructions. To what extent this continuum exists, and how resource-optimal operating points are distributed along it, remains an open problem.

More broadly, the resources developed and assembled in this work—including a complete database of CSS codes up to $n = 14$, SAT-based search methods for identifying codes with specified logical gate structure, and an automated pipeline for extracting logical operations—provide the tools needed to explore this space systematically. These tools can be used, for example, to identify low-overhead codes tailored to specific hardware connectivities, such as on superconducting architectures, or to optimize alternative objectives, including reducing the cost of logical magic.

A third direction concerns software and decoding. The large automorphism groups of phantom codes suggest automorphism-aware compilers that treat permutations as free, pack $\overline{\mathrm{CNOT}}$-dense subroutines into phantom codeblocks, and schedule interblock gates to minimize noise. Many workloads beyond the Trotterized circuits benchmarked contain a high density of logical entangling gates. Identifying other algorithms that benefit from phantom codes would help focus these compiler developments.

A final direction concerns experimental implementation. Small-angle rotations in our Trotterized circuits can be realized using the STAR protocol~\cite{akahoshi2024partially}, opening the possibility of near-term fermionic simulations at realistic error rates. Beyond dynamics, phantom codes naturally support highly entangled logical states, which are central to measurement-based quantum computation~\cite{briegel2009measurement} and quantum communication tasks~\cite{majidy2025scalable}. Demonstrating such states with near-zero entangling-gate cost would validate the phantom-code paradigm.

\medskip
\medskip

\begin{acknowledgments}

We dedicate this work to the memory of Ray Laflamme (1960–2025), whose pioneering contributions and leadership have left a lasting legacy in quantum computing.

We would like to thank Gefen Baranes, Juan Pablo Bonilla Ataides, Madelyn Cain, Daniel Gottesmann, Milan Kornjača, Aleksander Kubica, Jonathan Lu, Nishad Maskara, Rohan Mehta, Chris Pattinson, John Preskill, Michael Vasmer, Adam Wills, Qian Xu, and Harry Zhou for useful discussions. 
We acknowledge financial support from
the IARPA and the Army Research Office, under the Entangled Logical Qubits program (Cooperative Agreement Number W911NF-23-2-0219);
the DARPA ONISQ MeasQuIT program (grant number HR0011-24-9-0359);
the U.S. Department of Energy (DOE Quantum Systems Accelerator Center, contract number 7568717);
the Center for Ultracold Atoms (an NSF Physics Frontier Center);
the National Science Foundation (grant numbers CCF-2313084, QLCI grant OMA-2120757 and OMA-2016245, and NQVL: Pilot:DLPQC grant 2410716 and Design:ORAQL grant 2533041);
Wellcome Leap Quantum for Bio program;
and QuEra Computing.
We acknowledge additional support from
the A*STAR Graduate Academy (J.M.K.),
a Harvard Quantum Initiative postdoctoral fellowship (D.B.T.),
a Simons Investigator award (N.Y.Y.),
and a Banting Postdoctoral Fellowship (S.M.).

\end{acknowledgments}

\clearpage
\pagebreak

\let\oldaddcontentsline\addcontentsline
\renewcommand{\addcontentsline}[3]{}
\bibliography{bib}

@article{bluvstein2025fault,
    title={A fault-tolerant neutral-atom architecture for universal quantum computation},
    author={Bluvstein, Dolev and Geim, Alexandra A. and Li, Sophie H. and Evered, Simon J. and Bonilla Ataides, J. Pablo and Baranes, Gefen and Gu, Andi and Manovitz, Tom and Xu, Muqing and Kalinowski, Marcin and Majidy, Shayan and Kokail, Christian and Maskara, Nishad and Trapp, Elias C. and Stewart, Luke M. and Hollerith, Simon and Zhou, Hengyun and Gullans, Michael J. and Yelin, Susanne F. and Greiner, Markus and Vuletić, Vladan and Cain, Madelyn and Lukin, Mikhail D.},
    journal={Nature},
    volume={649},
    pages={39–-46},
    year={2025},
    publisher={Nature Publishing Group UK London},
    doi={10.1038/s41586-025-09848-5}
}

@article{google2025quantum,
    title={Quantum error correction below the surface code threshold},
    author={{Google Quantum AI and Collaborators}},
    journal={Nature},
    volume={638},
    number={8052},
    pages={920--926},
    year={2025},
    publisher={Nature Publishing Group UK London},
    doi={10.1038/s41586-024-08449-y}
}

@misc{paetznick2024demonstration,
    title={Demonstration of logical qubits and repeated error correction with better-than-physical error rates}, 
    author={A. Paetznick and M. P. da Silva and C. Ryan-Anderson and J. M. Bello-Rivas and J. P. Campora III and A. Chernoguzov and J. M. Dreiling and C. Foltz and F. Frachon and J. P. Gaebler and T. M. Gatterman and L. Grans-Samuelsson and D. Gresh and D. Hayes and N. Hewitt and C. Holliman and C. V. Horst and J. Johansen and D. Lucchetti and Y. Matsuoka and M. Mills and S. A. Moses and B. Neyenhuis and A. Paz and J. Pino and P. Siegfried and A. Sundaram and D. Tom and S. J. Wernli and M. Zanner and R. P. Stutz and K. M. Svore},
    year={2024},
    eprint={2404.02280},
    archivePrefix={arXiv},
    primaryClass={quant-ph},
    url={https://arxiv.org/abs/2404.02280}, 
}

@misc{reichardt2024demonstration,
    title={Demonstration of quantum computation and error correction with a tesseract code}, 
    author={Ben W. Reichardt and David Aasen and Rui Chao and Alex Chernoguzov and Wim van Dam and John P. Gaebler and Dan Gresh and Dominic Lucchetti and Michael Mills and Steven A. Moses and Brian Neyenhuis and Adam Paetznick and Andres Paz and Peter E. Siegfried and Marcus P. da Silva and Krysta M. Svore and Zhenghan Wang and Matt Zanner},
    year={2024},
    eprint={2409.04628},
    archivePrefix={arXiv},
    primaryClass={quant-ph},
    url={https://arxiv.org/abs/2409.04628}, 
}

@article{bluvstein2024logical,
    title={Logical quantum processor based on reconfigurable atom arrays},
    author={Bluvstein, Dolev and Evered, Simon J. and Geim, Alexandra A. and Li, Sophie H. and Zhou, Hengyun and Manovitz, Tom and Ebadi, Sepehr and Cain, Madelyn and Kalinowski, Marcin and Hangleiter, Dominik and Bonilla Ataides, J. Pablo and Maskara, Nishad and Cong, Iris and Gao, Xun and Sales Rodriguez, Pedro and Karolyshyn, Thomas and Semeghini, Giulia and Gullans, Michael J. and Greiner, Markus and Vuletić, Vladan and Lukin, Mikhail D.},
    journal={Nature},
    volume={626},
    number={7997},
    pages={58--65},
    year={2024},
    publisher={Nature Publishing Group UK London},
    doi={10.1038/s41586-023-06927-3}
}

@article{brock2025quantum,
    title={Quantum error correction of qudits beyond break-even},
    author={Brock, Benjamin L and Singh, Shraddha and Eickbusch, Alec and Sivak, Volodymyr V and Ding, Andy Z and Frunzio, Luigi and Girvin, Steven M and Devoret, Michel H},
    journal={Nature},
    volume={641},
    number={8063},
    pages={612--618},
    year={2025},
    publisher={Nature Publishing Group UK London},
    doi={10.1038/s41586-025-08899-y}
}

@book{majidy2024building,
    title={Building quantum computers: a practical introduction},
    author={Majidy, Shayan and Wilson, Christopher and Laflamme, Raymond},
    year={2024},
    publisher={Cambridge University Press},
    doi={10.1017/9781009417020}
}

@article{gottesman2024surviving,
    title={Surviving as a quantum computer in a classical world},
    author={Gottesman, Daniel},
    journal={Textbook manuscript preprint},
    volume={8},
    number={8.1},
    pages={8--2},
    year={2024},
    url={https://www.cs.umd.edu/~dgottesm/QECCbook-2024.pdf}
}

@misc{choi2023fault,
    title={Fault Tolerant Non-{C}lifford State Preparation for Arbitrary Rotations}, 
    author={Hyeongrak Choi and Frederic T. Chong and Dirk Englund and Yongshan Ding},
    year={2023},
    eprint={2303.17380},
    archivePrefix={arXiv},
    primaryClass={quant-ph},
    url={https://arxiv.org/abs/2303.17380}, 
}

@article{ouyang2014permutation,
  title = {Permutation-invariant quantum codes},
  author = {Ouyang, Yingkai},
  journal = {Phys. Rev. A},
  volume = {90},
  issue = {6},
  pages = {062317},
  numpages = {11},
  year = {2014},
  month = {Dec},
  publisher = {American Physical Society},
  doi = {10.1103/PhysRevA.90.062317},
  url = {https://link.aps.org/doi/10.1103/PhysRevA.90.062317}
}

@article{pollatsek2004permutationally,
    title={Permutationally invariant codes for quantum error correction},
    author={Pollatsek, Harriet and Ruskai, Mary Beth},
    journal={Linear algebra and its applications},
    volume={392},
    pages={255--288},
    year={2004},
    publisher={Elsevier},
    doi={10.1016/j.laa.2004.06.014}
}

@misc{ekert1996error,
    title={Error Correction in Quantum Communication}, 
    author={A. Ekert and C. Macchiavello},
    year={1996},
    eprint={quant-ph/9602022},
    archivePrefix={arXiv},
    primaryClass={quant-ph},
    url={https://arxiv.org/abs/quant-ph/9602022}, 
}

@article{kovalev13ldpc,
  title = {Quantum {K}ronecker sum-product low-density parity-check codes with finite rate},
  author = {Kovalev, Alexey A. and Pryadko, Leonid P.},
  journal = {Phys. Rev. A},
  volume = {88},
  issue = {1},
  pages = {012311},
  numpages = {13},
  year = {2013},
  month = {Jul},
  publisher = {American Physical Society},
  doi = {10.1103/PhysRevA.88.012311},
  url = {https://link.aps.org/doi/10.1103/PhysRevA.88.012311}
}

@book{gottesman1997stabilizer,
  title={Stabilizer codes and quantum error correction},
  author={Gottesman, Daniel},
  year={1997},
  publisher={California Institute of Technology}
}

@article{hangleiter2025fault,
  title = {Fault-Tolerant Compiling of Classically Hard Instantaneous Quantum Polynomial Circuits on Hypercubes},
  author = {Hangleiter, Dominik and Kalinowski, Marcin and Bluvstein, Dolev and Cain, Madelyn and Maskara, Nishad and Gao, Xun and Kubica, Aleksander and Lukin, Mikhail D. and Gullans, Michael J.},
  journal = {PRX Quantum},
  volume = {6},
  issue = {2},
  pages = {020338},
  numpages = {50},
  year = {2025},
  month = {May},
  publisher = {American Physical Society},
  doi = {10.1103/PRXQuantum.6.020338},
  url = {https://link.aps.org/doi/10.1103/PRXQuantum.6.020338}
}

@inproceedings{grassl2013leveraging,
    title={Leveraging automorphisms of quantum codes for fault-tolerant quantum computation},
    author={Grassl, Markus and Roetteler, Martin},
    booktitle={2013 IEEE International Symposium on Information Theory},
    pages={534--538},
    year={2013},
    organization={IEEE},
    doi={10.1109/ISIT.2013.6620283}
}

@misc{paetznick2013golay,
    title={Fault-tolerant ancilla preparation and noise threshold lower bounds for the 23-qubit Golay code}, 
    author={Adam Paetznick and Ben W. Reichardt},
    year={2013},
    eprint={1106.2190},
    archivePrefix={arXiv},
    primaryClass={quant-ph},
    url={https://arxiv.org/abs/1106.2190}, 
}

@misc{gong2024computation,
    title={Computation with quantum Reed--Muller codes and their mapping onto {2D} atom arrays}, 
    author={Anqi Gong and Joseph M. Renes},
    year={2024},
    eprint={2410.23263},
    archivePrefix={arXiv},
    primaryClass={quant-ph},
    url={https://arxiv.org/abs/2410.23263}, 
}

@inproceedings{gong2024improved,
    title={Improved logical error rate via list decoding of quantum polar codes},
    author={Gong, Anqi and Renes, Joseph M},
    booktitle={2024 IEEE International Symposium on Information Theory (ISIT)},
    pages={2496--2501},
    year={2024},
    organization={IEEE},
    doi={10.1109/ISIT57864.2024.10619465}
}

@ARTICLE{arikan2008rm,
  author={Arikan, Erdal},
  journal={IEEE Communications Letters}, 
  title={A performance comparison of polar codes and Reed-Muller codes}, 
  year={2008},
  volume={12},
  number={6},
  pages={447-449},
  keywords={Decoding;Hamming weight;Channel coding;Forward error correction;Memoryless systems;Tensile stress;Linear code;Power generation;Councils;Block codes},
  doi={10.1109/LCOMM.2008.080017}}

@Article{Cassagne2019a,
    author   = {A. Cassagne and O. Hartmann and M. L\'eonardon and K. He and C. Leroux and R. Tajan and O. Aumage and D. Barthou and T. Tonnellier and V. Pignoly and B. {Le Gal} and C. J\'ego},
    title    = {AFF3CT: {A} {F}ast {F}orward {E}rror {C}orrection {T}oolbox!},
    journal  = {Elsevier SoftwareX},
    year     = {2019},
    volume   = {10},
    pages    = {100345},
    month    = {Oct},
    issn     = {2352-7110},
    doi={10.1016/j.softx.2019.100345}
}

@article{cain2024correlated,
    title = {Correlated Decoding of Logical Algorithms with Transversal Gates},
    author = {Cain, Madelyn and Zhao, Chen and Zhou, Hengyun and Meister, Nadine and Ataides, J. Pablo Bonilla and Jaffe, Arthur and Bluvstein, Dolev and Lukin, Mikhail D.},
    journal = {Phys. Rev. Lett.},
    volume = {133},
    issue = {24},
    pages = {240602},
    numpages = {7},
    year = {2024},
    month = {Dec},
    publisher = {American Physical Society},
    doi = {10.1103/PhysRevLett.133.240602},
    url = {https://link.aps.org/doi/10.1103/PhysRevLett.133.240602}
}

@article{zhou2025low,
    title={Low-overhead transversal fault tolerance for universal quantum computation},
    author={Zhou, Hengyun and Zhao, Chen and Cain, Madelyn and Bluvstein, Dolev and Maskara, Nishad and Duckering, Casey and Hu, Hong-Ye and Wang, Sheng-Tao and Kubica, Aleksander and Lukin, Mikhail D},
    journal={Nature},
    volume={646},
    pages={303–-308},
    year={2025},
    publisher={Nature Publishing Group UK London},
    doi={10.1038/s41586-025-09543-5}
}

@misc{cain2025fast,
    title={Fast correlated decoding of transversal logical algorithms}, 
    author={Madelyn Cain and Dolev Bluvstein and Chen Zhao and Shouzhen Gu and Nishad Maskara and Marcin Kalinowski and Alexandra A. Geim and Aleksander Kubica and Mikhail D. Lukin and Hengyun Zhou},
    year={2025},
    eprint={2505.13587},
    archivePrefix={arXiv},
    primaryClass={quant-ph},
    url={https://arxiv.org/abs/2505.13587}, 
}

@book{macwilliams1977theory,
  title={The theory of error-correcting codes},
  author={MacWilliams, Florence Jessie and Sloane, Neil James Alexander},
  volume={16},
  year={1977},
  publisher={Elsevier}
}

@misc{berthusen2025automorphism,
    title={Automorphism gadgets in homological product codes}, 
    author={Noah Berthusen and Michael J. Gullans and Yifan Hong and Maryam Mudassar and Shi Jie Samuel Tan},
    year={2025},
    eprint={2508.04794},
    archivePrefix={arXiv},
    primaryClass={quant-ph},
    url={https://arxiv.org/abs/2508.04794}, 
}

@misc{malcolm2025computing,
    title={Computing Efficiently in {QLDPC} Codes}, 
    author={Alexander J. Malcolm and Andrew N. Glaudell and Patricio Fuentes and Daryus Chandra and Alexis Schotte and Colby DeLisle and Rafael Haenel and Amir Ebrahimi and Joschka Roffe and Armanda O. Quintavalle and Stefanie J. Beale and Nicholas R. Lee-Hone and Stephanie Simmons},
    year={2025},
    eprint={2502.07150},
    archivePrefix={arXiv},
    primaryClass={quant-ph},
    url={https://arxiv.org/abs/2502.07150}, 
}

@article{gottesman1998theory,
    title = {Theory of fault-tolerant quantum computation},
    author = {Gottesman, Daniel},
    journal = {Phys. Rev. A},
    volume = {57},
    issue = {1},
    pages = {127--137},
    numpages = {0},
    year = {1998},
    month = {Jan},
    publisher = {American Physical Society},
    doi = {10.1103/PhysRevA.57.127},
    url = {https://link.aps.org/doi/10.1103/PhysRevA.57.127}
}

@article{amy2014polynomial,
    title={Polynomial-time $T$-depth optimization of {C}lifford+$T$ circuits via matroid partitioning},
    author={Amy, Matthew and Maslov, Dmitri and Mosca, Michele},
    journal={IEEE Transactions on Computer-Aided Design of Integrated Circuits and Systems},
    volume={33},
    number={10},
    pages={1476--1489},
    year={2014},
    publisher={IEEE},
    doi={10.1109/TCAD.2014.2341953}
}

@article{campbell2017unified,
  title = {Unified framework for magic state distillation and multiqubit gate synthesis with reduced resource cost},
  author = {Campbell, Earl T. and Howard, Mark},
  journal = {Phys. Rev. A},
  volume = {95},
  issue = {2},
  pages = {022316},
  numpages = {22},
  year = {2017},
  month = {Feb},
  publisher = {American Physical Society},
  doi = {10.1103/PhysRevA.95.022316},
  url = {https://link.aps.org/doi/10.1103/PhysRevA.95.022316}
}

@article{barg2024rm,
    title={Geometric structure and transversal logic of quantum {R}eed--{M}uller codes},
    author={Barg, Alexander and Coble, Nolan J and Hangleiter, Dominik and Kang, Christopher},
    journal={IEEE Transactions on Information Theory},
    year={2026},
    volume={72},
    number={1},
    pages={415--436},
    publisher={IEEE},
    doi={10.1109/TIT.2025.3592631}
}

@article{rengaswamy2020optimality,
    title={On optimality of {CSS} codes for transversal $T$},
    author={Rengaswamy, Narayanan and Calderbank, Robert and Newman, Michael and Pfister, Henry D},
    journal={IEEE Journal on Selected Areas in Information Theory},
    volume={1},
    number={2},
    pages={499--514},
    year={2020},
    publisher={IEEE},
    doi={10.1109/JSAIT.2020.3012914}
}

@article{sayginel2025fault,
  title = {Fault-Tolerant Logical Clifford Gates from Code Automorphisms},
  author = {Sayginel, Hasan and Koutsioumpas, Stergios and Webster, Mark and Rajput, Abhishek and Browne, Dan E.},
  journal = {PRX Quantum},
  volume = {6},
  issue = {3},
  pages = {030343},
  numpages = {24},
  year = {2025},
  month = {Sep},
  publisher = {American Physical Society},
  doi = {10.1103/vf7v-cpq9},
  url = {https://link.aps.org/doi/10.1103/vf7v-cpq9}
}

@article{breuckmann2024fold,
    title={Fold-Transversal Clifford Gates for Quantum Codes},
    volume={8},
    ISSN={2521-327X},
    url={http://dx.doi.org/10.22331/q-2024-06-13-1372},
    DOI={10.22331/q-2024-06-13-1372},
    journal={Quantum},
    publisher={Verein zur Forderung des Open Access Publizierens in den Quantenwissenschaften},
    author={Breuckmann, Nikolas P. and Burton, Simon},
    year={2024},
    month=jun, 
    pages={1372}
}

@article{webster2022xp,
    title={The {XP} Stabiliser Formalism: a Generalisation of the {P}auli Stabiliser Formalism with Arbitrary Phases},
    volume={6},
    ISSN={2521-327X},
    url={http://dx.doi.org/10.22331/q-2022-09-22-815},
    DOI={10.22331/q-2022-09-22-815},
    journal={Quantum},
    publisher={Verein zur Forderung des Open Access Publizierens in den Quantenwissenschaften},
    author={Webster, Mark A. and Brown, Benjamin J. and Bartlett, Stephen D.},
    year={2022},
    month=sep, 
    pages={815}
}

@article{webster2023transversal,
    title={Transversal diagonal logical operators for stabiliser codes},
    author={Webster, Mark A and Quintavalle, Armanda O and Bartlett, Stephen D},
    journal={New Journal of Physics},
    volume={25},
    number={10},
    pages={103018},
    year={2023},
    publisher={IOP Publishing},
    doi={10.1088/1367-2630/acfc5f}
}

@article{xu2025fast,
    title = {Fast and Parallelizable Logical Computation with Homological Product Codes},
    author = {Xu, Qian and Zhou, Hengyun and Zheng, Guo and Bluvstein, Dolev and Ataides, J. Pablo Bonilla and Lukin, Mikhail D. and Jiang, Liang},
    journal = {Phys. Rev. X},
    volume = {15},
    issue = {2},
    pages = {021065},
    numpages = {34},
    year = {2025},
    month = {May},
    publisher = {American Physical Society},
    doi = {10.1103/PhysRevX.15.021065},
    url = {https://link.aps.org/doi/10.1103/PhysRevX.15.021065}
}

@article{xu2024constant,
    title={Constant-overhead fault-tolerant quantum computation with reconfigurable atom arrays},
    author={Xu, Qian and Bonilla Ataides, J Pablo and Pattison, Christopher A and Raveendran, Nithin and Bluvstein, Dolev and Wurtz, Jonathan and Vasi{\'c}, Bane and Lukin, Mikhail D and Jiang, Liang and Zhou, Hengyun},
    journal={Nature Physics},
    volume={20},
    number={7},
    pages={1084--1090},
    year={2024},
    publisher={Nature Publishing Group UK London},
    doi={10.1038/s41567-024-02479-z}
}

@article{bravyi2024high,
    title={High-threshold and low-overhead fault-tolerant quantum memory},
    author={Bravyi, Sergey and Cross, Andrew W and Gambetta, Jay M and Maslov, Dmitri and Rall, Patrick and Yoder, Theodore J},
    journal={Nature},
    volume={627},
    number={8005},
    pages={778--782},
    year={2024},
    publisher={Nature Publishing Group UK London},
    doi={10.1038/s41586-024-07107-7}
}

@misc{cross2025small,
      title={Small Binary Stabilizer Subsystem Codes}, 
      author={Andrew Cross and Drew Vandeth},
      year={2025},
      eprint={2501.17447},
      archivePrefix={arXiv},
      primaryClass={quant-ph},
      url={https://arxiv.org/abs/2501.17447}, 
}

@misc{yu2007graphical,
    title={Graphical Quantum Error-Correcting Codes}, 
    author={Sixia Yu and Qing Chen and C. H. Oh},
    year={2007},
    eprint={0709.1780},
    archivePrefix={arXiv},
    primaryClass={quant-ph},
    url={https://arxiv.org/abs/0709.1780}, 
}

@article{fowler2012surface,
    title = {Surface codes: Towards practical large-scale quantum computation},
    author = {Fowler, Austin G. and Mariantoni, Matteo and Martinis, John M. and Cleland, Andrew N.},
    journal = {Phys. Rev. A},
    volume = {86},
    issue = {3},
    pages = {032324},
    numpages = {48},
    year = {2012},
    month = {Sep},
    publisher = {American Physical Society},
    doi = {10.1103/PhysRevA.86.032324},
    url = {https://link.aps.org/doi/10.1103/PhysRevA.86.032324}
}

@article{mckay2014practical,
    title={Practical graph isomorphism, II},
    author={McKay, Brendan D and Piperno, Adolfo},
    journal={Journal of symbolic computation},
    volume={60},
    pages={94--112},
    year={2014},
    publisher={Elsevier},
    doi={10.1016/j.jsc.2013.09.003}
}

@article{tillich2013quantum,
    title={Quantum {LDPC} codes with positive rate and minimum distance proportional to the square root of the blocklength},
    author={Tillich, Jean-Pierre and Z{\'e}mor, Gilles},
    journal={IEEE Transactions on Information Theory},
    volume={60},
    number={2},
    pages={1193--1202},
    year={2013},
    publisher={IEEE},
    doi={10.1109/TIT.2013.2292061}
}

@book{biere2009handbook,
  title={Handbook of satisfiability},
  author={Biere, Armin and Heule, Marijn and van Maaren, Hans},
  volume={185},
  year={2009},
  publisher={IOS press}
}

@article{criger2016noise, 
    title={Noise thresholds for the [[4, 2, 2]]-concatenated toric code},
    author={Criger, Ben and Terhal, Barbara},
    volume={16},
    ISSN={1533-7146},
    doi={10.26421/qic16.15-16},
    number={15},
    journal={Quantum Information and Computation},
    publisher={Rinton Press},
    year={2016},
    pages={1261-1281}
}

@article{majidy2023unification,
    title={A unification of the coding theory and OAQEC perspectives on hybrid codes},
    author={Majidy, Shayan},
    journal={International Journal of Theoretical Physics},
    volume={62},
    number={8},
    pages={177},
    year={2023},
    publisher={Springer},
    doi={10.1007/s10773-023-05439-0}
}

@article{kubica2015unfolding,
    title={Unfolding the color code},
    author={Kubica, Aleksander and Yoshida, Beni and Pastawski, Fernando},
    journal={New Journal of Physics},
    volume={17},
    number={8},
    pages={083026},
    year={2015},
    publisher={IOP Publishing},
    doi={10.1088/1367-2630/17/8/083026}
}

@article{majidy2025scalable,
  title = {Scalable and fault-tolerant preparation of encoded $k$-uniform states},
  author = {Majidy, Shayan and Hangleiter, Dominik and Gullans, Michael J.},
  journal = {Phys. Rev. A},
  volume = {112},
  issue = {4},
  pages = {042409},
  numpages = {11},
  year = {2025},
  month = {Oct},
  publisher = {American Physical Society},
  doi = {10.1103/cl23-c3jg},
  url = {https://link.aps.org/doi/10.1103/cl23-c3jg}
}

@article{briegel2009measurement,
    title={Measurement-based quantum computation},
    author={Briegel, Hans J and Browne, David E and D{\"u}r, Wolfgang and Raussendorf, Robert and Van den Nest, Maarten},
    journal={Nature Physics},
    volume={5},
    number={1},
    pages={19--26},
    year={2009},
    publisher={Nature Publishing Group UK London},
    doi={10.1038/nphys1157}
}

@article{cohen2022low,
    title={Low-overhead fault-tolerant quantum computing using long-range connectivity},
    author={Cohen, Lawrence Z and Kim, Isaac H and Bartlett, Stephen D and Brown, Benjamin J},
    journal={Science Advances},
    volume={8},
    number={20},
    pages={eabn1717},
    year={2022},
    publisher={American Association for the Advancement of Science},
    doi={10.1126/sciadv.abn1717}
}

@inproceedings{babai1983canonical,
    title={Canonical labeling of graphs},
    author={Babai, L{\'a}szl{\'o} and Luks, Eugene M},
    booktitle={Proceedings of the fifteenth annual ACM symposium on Theory of computing},
    pages={171--183},
    year={1983},
    doi={10.1145/800061.808746}
}

@inproceedings{junttila2007engineering,
    title={Engineering an efficient canonical labeling tool for large and sparse graphs},
    author={Junttila, Tommi and Kaski, Petteri},
    booktitle={2007 Proceedings of the Ninth Workshop on Algorithm Engineering and Experiments (ALENEX)},
    pages={135--149},
    year={2007},
    organization={SIAM},
    doi={10.1137/1.9781611972870.13}
}

@article{derks2025designing,
    title={Designing fault-tolerant circuits using detector error models},
    volume={9},
    ISSN={2521-327X},
    url={http://dx.doi.org/10.22331/q-2025-11-06-1905},
    DOI={10.22331/q-2025-11-06-1905},
    journal={Quantum},
    publisher={Verein zur Forderung des Open Access Publizierens in den Quantenwissenschaften},
    author={Derks, Peter-Jan H.S. and Townsend-Teague, Alex and Burchards, Ansgar G. and Eisert, Jens},
    year={2025},
    month=nov, 
    pages={1905} 
}

@article{gidney2021stim,
  doi = {10.22331/q-2021-07-06-497},
  url = {https://doi.org/10.22331/q-2021-07-06-497},
  title = {Stim: a fast stabilizer circuit simulator},
  author = {Gidney, Craig},
  journal = {{Quantum}},
  issn = {2521-327X},
  publisher = {{Verein zur F{\"{o}}rderung des Open Access Publizierens
                in den Quantenwissenschaften}},
  volume = 5,
  pages = 497,
  month = jul,
  year = 2021
}

@article{bosma1997magma,
    title={The {Magma} algebra system {I}: The user language},
    author={Bosma, Wieb and Cannon, John and Playoust, Catherine},
    journal={Journal of Symbolic Computation},
    volume={24},
    number={3-4},
    pages={235--265},
    year={1997},
    publisher={Elsevier},
    doi={10.1006/jsco.1996.0125}
}

@article{biere2024cadical,
    title={{CaDiCaL}, {Gimsatul}, {IsaSAT} and {Kissat} entering the {SAT} competition 2024},
    author={Biere, Armin and Faller, Tobias and Fazekas, Katalin and Fleury, Mathias and Froleyks, Nils and Pollitt, Florian},
    journal={Proc. of SAT Competition},
    volume={B-2024-1},
    pages={8--10},
    year={2024}
}

@inproceedings{soos2009extending,
    title={Extending {SAT} solvers to cryptographic problems},
    author={Soos, Mate and Nohl, Karsten and Castelluccia, Claude},
    booktitle={International Conference on Theory and Applications of Satisfiability Testing},
    pages={244--257},
    year={2009},
    organization={Springer},
    doi={10.1007/978-3-642-02777-2_24}
}

@inproceedings{ignatiev2018pysat,
    title={{PySAT}: A Python toolkit for prototyping with {SAT} oracles},
    author={Ignatiev, Alexey and Morgado, Antonio and Marques-Silva, Joao},
    booktitle={International Conference on Theory and Applications of Satisfiability Testing},
    pages={428--437},
    year={2018},
    organization={Springer},
    doi={10.1007/978-3-319-94144-8_26}
}

@misc{blosc2009,
    author = {{Blosc Development Team}},
    title = "{A fast, compressed and persistent data store library}",
    year = {2009-2025},
    url = {https://blosc.org}
}

@article{higgott2025sparseblossom,
    doi = {10.22331/q-2025-01-20-1600},
    url = {https://doi.org/10.22331/q-2025-01-20-1600},
    title = {Sparse {B}lossom: correcting a million errors per core second with minimum-weight matching},
    author = {Higgott, Oscar and Gidney, Craig},
    journal = {{Quantum}},
    issn = {2521-327X},
    publisher = {{Verein zur F{\"{o}}rderung des Open Access Publizierens in den Quantenwissenschaften}},
    volume = {9},
    pages = {1600},
    month = jan,
    year = {2025}
}

@article{flammia2011direct,
    title = {Direct Fidelity Estimation from Few Pauli Measurements},
    author = {Flammia, Steven T. and Liu, Yi-Kai},
    journal = {Phys. Rev. Lett.},
    volume = {106},
    issue = {23},
    pages = {230501},
    numpages = {4},
    year = {2011},
    month = {Jun},
    publisher = {American Physical Society},
    doi = {10.1103/PhysRevLett.106.230501},
    url = {https://link.aps.org/doi/10.1103/PhysRevLett.106.230501}
}

@article{baumer2023efficient,
  title = {Efficient Long-Range Entanglement Using Dynamic Circuits},
  author = {B\"aumer, Elisa and Tripathi, Vinay and Wang, Derek S. and Rall, Patrick and Chen, Edward H. and Majumder, Swarnadeep and Seif, Alireza and Minev, Zlatko K.},
  journal = {PRX Quantum},
  volume = {5},
  issue = {3},
  pages = {030339},
  numpages = {20},
  year = {2024},
  month = {Aug},
  publisher = {American Physical Society},
  doi = {10.1103/PRXQuantum.5.030339},
  url = {https://link.aps.org/doi/10.1103/PRXQuantum.5.030339}
}

@article{koh2025readout,
    title = {Readout Error Mitigation for Mid-Circuit Measurements and Feedforward},
    author = {Koh, Jin Ming and Koh, Dax Enshan and Thompson, Jayne},
    journal = {PRX Quantum},
    volume = {7},
    issue = {1},
    pages = {010317},
    numpages = {40},
    year = {2026},
    month = {Jan},
    publisher = {American Physical Society},
    doi = {10.1103/cj89-4h5t},
    url = {https://link.aps.org/doi/10.1103/cj89-4h5t}
}

@article{vasmer2022morphing,
  title = {Morphing Quantum Codes},
  author = {Vasmer, Michael and Kubica, Aleksander},
  journal = {PRX Quantum},
  volume = {3},
  issue = {3},
  pages = {030319},
  numpages = {24},
  year = {2022},
  month = {Aug},
  publisher = {American Physical Society},
  doi = {10.1103/PRXQuantum.3.030319},
  url = {https://link.aps.org/doi/10.1103/PRXQuantum.3.030319}
}

@article{macwilliams1978self,
    title={Self-dual codes over {GF(4)}},
    author={MacWilliams, F Jessie and Odlyzko, Andrew M and Sloane, Neil JA and Ward, Harold N},
    journal={Journal of Combinatorial Theory, Series A},
    volume={25},
    number={3},
    pages={288--318},
    year={1978},
    publisher={Elsevier},
    doi={10.1016/0097-3165(78)90021-3}
}

@article{akahoshi2024partially,
  title = {Partially Fault-Tolerant Quantum Computing Architecture with Error-Corrected Clifford Gates and Space-Time Efficient Analog Rotations},
  author = {Akahoshi, Yutaro and Maruyama, Kazunori and Oshima, Hirotaka and Sato, Shintaro and Fujii, Keisuke},
  journal = {PRX Quantum},
  volume = {5},
  issue = {1},
  pages = {010337},
  numpages = {21},
  year = {2024},
  month = {Mar},
  publisher = {American Physical Society},
  doi = {10.1103/PRXQuantum.5.010337},
  url = {https://link.aps.org/doi/10.1103/PRXQuantum.5.010337}
}

@article{lacroix2025scaling,
    title={Scaling and logic in the color code on a superconducting quantum processor},
    author={Lacroix, Nathan and Bourassa, Alexandre and Heras, Francisco JH and Zhang, Lei M and Bausch, Johannes and Senior, Andrew W and Edlich, Thomas and Shutty, Noah and Sivak, Volodymyr and Bengtsson, Andreas and others},
    journal={Nature},
    volume={645},
    pages={614-–619},
    year={2025},
    publisher={Nature Publishing Group UK London},
    doi={10.1038/s41586-025-09061-4}
}

@misc{bilokur2024thermodynamic,
    title={Thermodynamic limitations on fault-tolerant quantum computing}, 
    author={Mykhailo Bilokur and Sarang Gopalakrishnan and Shayan Majidy},
    year={2024},
    eprint={2411.12805},
    archivePrefix={arXiv},
    primaryClass={quant-ph},
    url={https://arxiv.org/abs/2411.12805}, 
}

@inproceedings{moura2008z3,
    title={Z3: An efficient {SMT} solver},
    author={Moura, Leonardo de and Bj{\o}rner, Nikolaj},
    booktitle={International Conference on Tools and Algorithms for the Construction and Analysis of Systems},
    pages={337--340},
    year={2008},
    organization={Springer},
    doi={10.1007/978-3-540-78800-3_24}
}

@article{wietek2021stripes,
    title = {Stripes, Antiferromagnetism, and the Pseudogap in the Doped Hubbard Model at Finite Temperature},
    author = {Wietek, Alexander and He, Yuan-Yao and White, Steven R. and Georges, Antoine and Stoudenmire, E. Miles},
    journal = {Phys. Rev. X},
    volume = {11},
    issue = {3},
    pages = {031007},
    numpages = {24},
    year = {2021},
    month = {Jul},
    publisher = {American Physical Society},
    doi = {10.1103/PhysRevX.11.031007},
    url = {https://link.aps.org/doi/10.1103/PhysRevX.11.031007}
}

@article{jafari2019geometrically,
    title={Geometrically frustrated anisotropic four-leg spin-1/2 nanotube},
    author={Jafari, R and Mahdavifar, Saeed and Akbari, Alireza},
    journal={Journal of Physics: Condensed Matter},
    volume={31},
    number={49},
    pages={495601},
    year={2019},
    publisher={IOP Publishing},
    doi={10.1088/1361-648X/ab3b45}
}

@misc{maskara2025fast,
    title={Fast simulation of fermions with reconfigurable qubits}, 
    author={Nishad Maskara and Marcin Kalinowski and Daniel Gonzalez-Cuadra and Mikhail D. Lukin},
    year={2025},
    eprint={2509.08898},
    archivePrefix={arXiv},
    primaryClass={quant-ph},
    url={https://arxiv.org/abs/2509.08898}, 
}

@article{multilevel_jones,
  title = {Multilevel distillation of magic states for quantum computing},
  author = {Jones, Cody},
  journal = {Phys. Rev. A},
  volume = {87},
  issue = {4},
  pages = {042305},
  numpages = {8},
  year = {2013},
  month = {Apr},
  publisher = {American Physical Society},
  doi = {10.1103/PhysRevA.87.042305},
  url = {https://link.aps.org/doi/10.1103/PhysRevA.87.042305}
}

@misc{constantinides2025low,
    title={Low-depth fermion routing without ancillas}, 
    author={Nathan Constantinides and Jeffery Yu and Dhruv Devulapalli and Ali Fahimniya and Luke Schaeffer and Andrew M. Childs and Michael J. Gullans and Alexander Schuckert and Alexey V. Gorshkov},
    year={2025},
    eprint={2510.05099},
    archivePrefix={arXiv},
    primaryClass={quant-ph},
    url={https://arxiv.org/abs/2510.05099}, 
}

@misc{dasu2025breaking,
    title={Breaking even with magic: demonstration of a high-fidelity logical non-{C}lifford gate}, 
    author={Shival Dasu and Simon Burton and Karl Mayer and David Amaro and Justin A. Gerber and Kevin Gilmore and Dan Gresh and Davide DelVento and Andrew C. Potter and David Hayes},
    year={2025},
    eprint={2506.14688},
    archivePrefix={arXiv},
    primaryClass={quant-ph},
    url={https://arxiv.org/abs/2506.14688}, 
}

@article{tremblay2023finite,
    title={Finite-rate sparse quantum codes aplenty},
    volume={7},
    ISSN={2521-327X},
    url={http://dx.doi.org/10.22331/q-2023-04-20-985},
    DOI={10.22331/q-2023-04-20-985},
    journal={Quantum},
    publisher={Verein zur Forderung des Open Access Publizierens in den Quantenwissenschaften},
    author={Tremblay, Maxime and Duclos-Cianci, Guillaume and Kourtis, Stefanos},
    year={2023},
    month=apr, 
    pages={985}
}

@article{breuckmann2021quantum,
  title = {Quantum Low-Density Parity-Check Codes},
  author = {Breuckmann, Nikolas P. and Eberhardt, Jens Niklas},
  journal = {PRX Quantum},
  volume = {2},
  issue = {4},
  pages = {040101},
  numpages = {19},
  year = {2021},
  month = {Oct},
  publisher = {American Physical Society},
  doi = {10.1103/PRXQuantum.2.040101},
  url = {https://link.aps.org/doi/10.1103/PRXQuantum.2.040101}
}

@misc{yoder2025tour,
    title={Tour de gross: A modular quantum computer based on bivariate bicycle codes}, 
    author={Theodore J. Yoder and Eddie Schoute and Patrick Rall and Emily Pritchett and Jay M. Gambetta and Andrew W. Cross and Malcolm Carroll and Michael E. Beverland},
    year={2025},
    eprint={2506.03094},
    archivePrefix={arXiv},
    primaryClass={quant-ph},
    url={https://arxiv.org/abs/2506.03094}, 
}

@article{manetsch2025tweezer,
    title={A tweezer array with 6,100 highly coherent atomic qubits},
    author={Manetsch, Hannah J and Nomura, Gyohei and Bataille, Elie and Lv, Xudong and Leung, Kon H and Endres, Manuel},
    journal={Nature},
    volume={647},
    number={8088},
    pages={60--67},
    year={2025},
    publisher={Nature Publishing Group UK London},
    doi={10.1038/s41586-025-09641-4}
}

@article{muniz2025repeated,
    title = {Repeated Ancilla Reuse for Logical Computation on a Neutral Atom Quantum Computer},
    author = {Muniz, J. A. and Crow, D. and Kim, H. and Kindem, J. M. and Cairncross, W. B. and Ryou, A. and Bohdanowicz, T. C. and Chen, C.-A. and Ji, Y. and Jones, A. M. W. and Megidish, E. and Nishiguchi, C. and Urbanek, M. and Wadleigh, L. and Wilkason, T. and Aasen, D. and Barnes, K. and Bello-Rivas, J. M. and Bloomfield, I. and Booth, G. and Brown, A. and Brown, M. O. and Cassella, K. and Cowan, G. and Epstein, J. and Feldkamp, M. and Griger, C. and Hassan, Y. and Heinz, A. and Halperin, E. and Hofler, T. and Hummel, F. and Jaffe, M. and Kapit, E. and Kotru, K. and Lauigan, J. and Marjanovic, J. and Meredith, M. and McDonald, M. and Morshead, R. and Narayanaswami, S. and Pawlak, K. A. and Pudenz, K. L. and P\'erez, D. Rodr\'{\i}guez and Sabharwal, P. and Simon, J. and Smull, A. and Sorensen, M. and Stack, D. T. and Stone, M. and Taneja, L. and van de Veerdonk, R. J. M. and Vendeiro, Z. and Weverka, R. T. and White, K. and Wu, T.-Y. and Xie, X. and Zalys-Geller, E. and Zhang, X. and King, J. and Bloom, B. J. and Norcia, M. A.},
    journal = {Phys. Rev. X},
    volume = {15},
    issue = {4},
    pages = {041040},
    numpages = {16},
    year = {2025},
    month = {Dec},
    publisher = {American Physical Society},
    doi = {10.1103/v7ny-fg31},
    url = {https://link.aps.org/doi/10.1103/v7ny-fg31}
}

@misc{senoo2025high,
    title={High-fidelity entanglement and coherent multi-qubit mapping in an atom array}, 
    author={Aruku Senoo and Alexander Baumgärtner and Joanna W. Lis and Gaurav M. Vaidya and Zhongda Zeng and Giuliano Giudici and Hannes Pichler and Adam M. Kaufman},
    year={2025},
    eprint={2506.13632},
    archivePrefix={arXiv},
    primaryClass={quant-ph},
    url={https://arxiv.org/abs/2506.13632}, 
}
\let\addcontentsline\oldaddcontentsline

\clearpage
\pagebreak

\appendix
\onecolumngrid

\begin{center}
    \textbf{APPENDICES}
\end{center}

\tableofcontents

\clearpage
\pagebreak

\section{Phantom code basics, conditions, and properties}
\label{app:basics}

\subsection{Review of binary symplectic representation of Pauli operators and Clifford gates}
\label{app:basics/symplectic}

\textit{Pauli operators.} We make pervasive use of the symplectic representation in this work. In this representation, a Pauli operator on $n$ qubits is encoded as a binary vector $(\mathbf{x} \mid \mathbf{z}) \in \mathbb{F}_2^{2n}$, where $\mathbf{x}$ and $\mathbf{z}$ indicate the positions of $X$ and $Z$ components, respectively. For example, $XYIZ$ is represented as $(1100 \mid 0101)$, as $Y = iXZ$ contributes to both parts. Commutation relations are captured by the symplectic inner product---defining\footnote{Throughout our manuscript, the sizes of the identity matrix $\mathbb{I}$ and symplectic metric $\Omega$ are inferred from context where unambiguous, and arithmetic on binary vectors and matrices are modulo-$2$ unless stated otherwise.}
\begin{equation}
    \Omega = \begin{bmatrix}0 & \mathbb{I} \\ \mathbb{I} & 0\end{bmatrix},
\end{equation}
two Pauli operators represented by symplectic vectors $\mathbf{v}_1 = (\mathbf{x}_1 \mid \mathbf{z}_1)$ and $\mathbf{v}_2 = (\mathbf{x}_2 \mid \mathbf{z}_2)$ commute if and only if $\langle \mathbf{v}_1, \mathbf{v}_2 \rangle_S := \mathbf{v}_1 \Omega \mathbf{v}_2^\mathrm{T} = \mathbf{x}_1 \cdot \mathbf{z}_2 + \mathbf{x}_2 \cdot \mathbf{z}_1 = 0$. Finally, relevant to our setting here, a qubit permutation $\pi \in \S_n$ represented by an $n \times n$ permutation matrix $P$ acts on the symplectic representation of Pauli operators in a block-wise manner, $(\mathbf{x} \mid \mathbf{z}) \mapsto (\mathbf{x} \mid \mathbf{z})(P \oplus P) = (\mathbf{x}P \mid \mathbf{z}P)$.

\textit{Clifford gates and circuits.} Clifford circuits are linear maps between Pauli operators under conjugation. That is, $U P U^\dag = P'$ for a Clifford unitary $U$ and Pauli operators $P$ and $P'$. In the symplectic representation, on a register of $k$ qubits, a Clifford circuit $U$ can be represented by a binary Pauli transfer matrix $F(U) \in \Sp(2k, \mathbb{F}_2)$. That $F(U)$ is a symplectic matrix means that $\trans{F(U)} \Omega F(U) = \Omega$. For example, on $k = 1$ qubit, the Hadamard gate has the representation
\begin{equation}
    F(H) 
    = {
        \left[
        \begin{array}{c|c}
            F_{\mathrm{xx}}(H) & F_{\mathrm{xz}}(H) \\
            \hline
            F_{\mathrm{zx}}(H) & F_{\mathrm{zz}}(H) \\
        \end{array}
        \right]
    }
    = {
        \left[
        \begin{array}{c|c}
            0 & 1 \\
            \hline
            1 & 0 \\
        \end{array}
        \right],
    }
\end{equation}
and on a register of $k = 3$ qubits, a $\mathrm{CNOT}$ gate controlled by the first qubit and targeting the second has the representation 
\begin{equation}
    F(\mathrm{CNOT}_{12}) 
    = {
        \left[
        \begin{array}{c|c}
            F_{\mathrm{xx}}(\mathrm{CNOT}_{12}) & F_{\mathrm{xz}}(\mathrm{CNOT}_{12}) \\
            \hline
            F_{\mathrm{zx}}(\mathrm{CNOT}_{12}) & F_{\mathrm{zz}}(\mathrm{CNOT}_{12}) \\
        \end{array}
        \right]
    }
    = {
        \left[
        \begin{array}{ccc|ccc}
            1 & 1 & 0 & 0 & 0 & 0 \\
            0 & 1 & 0 & 0 & 0 & 0 \\
            0 & 0 & 1 & 0 & 0 & 0 \\
            \hline
            0 & 0 & 0 & 1 & 0 & 0 \\
            0 & 0 & 0 & 1 & 1 & 0 \\
            0 & 0 & 0 & 0 & 0 & 1 \\
        \end{array}
        \right].
    }
\end{equation}

The rows of the $F$ matrix correspond to input Pauli operators and the columns to their transformed outputs under conjugation, with basis elements ordered as $X_1, \dots, X_k, Z_1, \dots, Z_k$. Here, the defining action of $H$ is $X \mapsto Z$ and $Z \mapsto X$, and that of $\mathrm{CNOT}_{12}$ is $X_1 \mapsto X_1 X_2$, $Z_2 \mapsto Z_1 Z_2$, with all other basis Pauli operators remaining unchanged. This same formalism applies on logical qubits, but with physical Pauli operators replaced by logical Pauli operators (i.e.~$X_1, \dots, X_k, Z_1, \dots, Z_k$ replaced by $\overline{X}_1, \dots, \overline{X}_k, \overline{Z}_1, \dots, \overline{Z}_k$).

Composition of Clifford gates or circuits in the symplectic representation corresponds to multiplying their Pauli transfer matrices. The Pauli transfer matrix of a circuit $U$ comprising first the subcircuit $U_1$, then $U_2$, henceforth till $U_l$ is given by $F(U) = F(U_l) \cdots F(U_2) F(U_1)$. 

\textit{CNOT and bias-preserving circuits.} A gate or circuit that is bias-preserving is one that maps $X$- ($Z$-) type Pauli operators solely to $X$- ($Z$-) type Pauli operators under conjugation. Modulo Pauli operators, the CNOT gate is the only generator of the Clifford group that is bias-preserving; hence bias-preserving Clifford circuits are exactly those that are generated by CNOT gates (i.e.~CNOT circuits). 

The bias-preserving property means that the off-diagonal blocks of the symplectic representation of CNOT gates and circuits vanish: $F_{\mathrm{xz}}(\mathrm{CNOT}_{ij}) = F_{\mathrm{zx}}(\mathrm{CNOT}_{ij}) = 0$ for any $\mathrm{CNOT}_{ij}$ gate, and more generally $F_{\mathrm{xz}}(C) = F_{\mathrm{zx}}(C) = 0$ for any CNOT circuit $C$. Moreover, as a consequence of reversibility of the circuit, the diagonal submatrices must be invertible. The binary Pauli transfer matrix $F(C)$ for a CNOT circuit on $k$ qubits takes the form
\begin{equation}
    F(C) 
    = \diag\left(A, \invtrans{A}\right),
    \label{eq:bias_preserving_circuit_symplectic}
\end{equation}
where $A \in \GL(k, \mathbb{F}_2)$, and we use the shorthand for the transpose of the inverse $\invtrans{A} = \trans{(A^{-1})} = (\trans{A})^{-1}$. Such a CNOT circuit transforms $X$- and $Z$-type Pauli operators under conjugation as $\trans{\vb{x}} \mapsto A \trans{\vb{x}}$ and $\trans{\vb{z}} \mapsto \invtrans{A} \trans{\vb{z}}$ in the symplectic representation. Equivalently, the CNOT circuit effects the action $\ket{a} \mapsto \ket{a \trans{A}}$ for computational basis states $\ket{a}$ where $a \in \{0, 1\}^k$. Because $F(C)$ has the block-diagonal form in \cref{eq:bias_preserving_circuit_symplectic}, it suffices to consider the top-left $A$ block when analyzing $\mathrm{CNOT}$ circuits $C$, a fact that simplifies proofs in subsequent sections.

\subsection{Conditions for permutation logical Clifford gate sets and phantomness on stabilizer codes}
\label{app:basics/phantom_gen}

An $\db{n,k,d}$ stabilizer code is specified by a full-row-rank stabilizer generator matrix $H \in \mathbb{F}_2^{r \times 2 n}$ whose rows comprise a generating set of stabilizers in symplectic representation. Selecting an orthonormal logical basis fixes $Q_\mathrm{x}, Q_\mathrm{z} \in \mathbb{F}_2^{k \times 2n}$, whose rows are the symplectic representations of the logical operators $\{\overline{X}_i\}_{i = 1}^k$ and $\{\overline{Z}_i\}_{i = 1}^k$, such that $Q_\mathrm{x} \Omega \trans{Q_\mathrm{z}} = \mathbb{I}$. We can now express the conditions that the code has to satisfy to possess a given logical Clifford gate set implemented by a certain set of qubit permutations:

\begin{proposition}[Permutation logical Clifford gate set on a stabilizer code]
\label{prop:phantom_gateset_symplectic_gen}
    Given an $\db{n,k,d}$ stabilizer code with stabilizer generators $H \in \mathbb{F}_2^{r \times 2n}$ and an orthonormal logical basis $Q_\mathrm{x}, Q_\mathrm{z} \in \mathbb{F}_2^{k \times 2n}$ in symplectic form ($Q_\mathrm{x} \Omega \trans{Q_\mathrm{z}} = \mathbb{I}$), stack $Q := \binom{Q_\mathrm{x}}{Q_\mathrm{z}}$, and denote by $P^{(i)} \in \mathbb{F}_2^{n \times n}$ the permutation matrix representing $\pi^{(i)} \in \S_n$. Then $\{\pi^{(i)}\}_{i = 1}^p$ implements the logical Clifford gate set $\{\smash{\overline{U}}^{(i)}\}_{i = 1}^p$ on this code in this logical basis iff:
    \begin{subequations}\begin{align}
        Q \big( P^{(i)} \oplus P^{(i)} \big) \Omega \trans{Q} &= F\big(U^{(i)}\big) \Omega, \label{eq:phantom_logic}\\
        H \big( P^{(i)} \oplus P^{(i)} \big) \Omega \trans{H} &= 0, \label{eq:phantom_comm1}\\
        H \big( P^{(i)} \oplus P^{(i)} \big) \Omega \trans{Q} &= 0, \label{eq:phantom_comm2}\\
        Q \big( P^{(i)} \oplus P^{(i)} \big) \Omega \trans{H} &= 0, \label{eq:phantom_comm3}
    \end{align}\end{subequations}
    for every $i \in [p]$.
\end{proposition}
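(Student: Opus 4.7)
The plan is to translate the physical statement ``$\{\pi^{(i)}\}$ implements $\{\overline{U}^{(i)}\}$'' into two algebraic conditions on row spans, and then show by direct symplectic manipulation that these two conditions are equivalent to the four matrix equations (\ref{eq:phantom_logic})--(\ref{eq:phantom_comm3}). The two algebraic conditions are: (a) the stabilizer group is preserved, i.e.\ every row of $H(P^{(i)}\oplus P^{(i)})$ lies in $\rowspan(H)$; and (b) the induced map on logicals matches $F(U^{(i)})$ modulo stabilizers, i.e.\ there exists some $M^{(i)}\in\mathbb{F}_2^{2k\times r}$ such that $Q\,(P^{(i)}\oplus P^{(i)}) = F(U^{(i)})\,Q + M^{(i)} H$. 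The first key lemma I would record is that, because $\rowspan(H)$ has dimension $r=n-k$ and its symplectic complement coincides with $\rowspan(H\cup Q)$ (which has dimension $n+k$), one has $v\in\rowspan(H)$ if and only if $v\,\Omega\,\trans H=0$ and $v\,\Omega\,\trans Q=0$. The second key lemma I would record is $Q\,\Omega\,\trans Q = \Omega$, which is immediate from the orthonormality $Q_\mathrm{x}\Omega\trans{Q_\mathrm{z}}=\mathbb{I}$ together with $Q_\mathrm{x}\Omega\trans{Q_\mathrm{x}}=Q_\mathrm{z}\Omega\trans{Q_\mathrm{z}}=0$.

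For the forward direction, I would assume (a) and (b) and read off the four equations as follows. Applying the first key lemma to each row of $H(P^{(i)}\oplus P^{(i)})$ directly yields (\ref{eq:phantom_comm1}) and (\ref{eq:phantom_comm2}). From (b), right-multiplying $Q\,(P^{(i)}\oplus P^{(i)})=F(U^{(i)})Q+M^{(i)}H$ by $\Omega\trans H$ and using $H\Omega\trans H=0$, $Q\Omega\trans H=0$ gives (\ref{eq:phantom_comm3}); right-multiplying the same identity by $\Omega\trans Q$ and using $H\Omega\trans Q=0$ together with $Q\Omega\trans Q=\Omega$ gives (\ref{eq:phantom_logic}).

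For the converse, I would assume the four equations hold and reverse these manipulations. Equations (\ref{eq:phantom_comm1}) and (\ref{eq:phantom_comm2}), combined with the first key lemma applied row-by-row, imply (a). Equation (\ref{eq:phantom_comm3}) states that each row of $Q\,(P^{(i)}\oplus P^{(i)})$ commutes symplectically with every stabilizer, hence lies in $\rowspan(H\cup Q)$, so we may write $Q\,(P^{(i)}\oplus P^{(i)}) = M'\,Q + N'\,H$ for some $M'\in\mathbb{F}_2^{2k\times 2k}$, $N'\in\mathbb{F}_2^{2k\times r}$. Right-multiplying by $\Omega\trans Q$ and invoking $Q\Omega\trans Q=\Omega$, $H\Omega\trans Q=0$ yields $M'\Omega=Q\,(P^{(i)}\oplus P^{(i)})\,\Omega\trans Q$, which by (\ref{eq:phantom_logic}) equals $F(U^{(i)})\Omega$. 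Since $\Omega$ is invertible, $M'=F(U^{(i)})$, establishing (b) and completing the proof.

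The manipulations are mechanical; the main thing to be careful about is the orientation conventions (row versus column action of $F(U)$, the placement of $\Omega$), and verifying that the symplectic-orthogonal characterization of $\rowspan(H)$ used above is tight—both of which reduce to the dimension count $r+(n+k)=2n$ plus the standard commutation relations between $H$ and $Q$. I do not anticipate any substantive obstacle beyond bookkeeping.
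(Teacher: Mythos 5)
Your proof is correct and follows the same reasoning the paper itself offers (the paper states this proposition with only an interpretive remark — Eq.~\eqref{eq:phantom_logic} enforces the correct logical action modulo stabilizers, Eqs.~\eqref{eq:phantom_comm1}--\eqref{eq:phantom_comm3} enforce preservation of the stabilizer group — rather than a formal proof). Your two lemmas, the dimension count $r+(n+k)=2n$ identifying $\rowspan(H)$ as the symplectic complement of $\rowspan(H\cup Q)$ and the identity $Q\,\Omega\,\trans{Q}=\Omega$, are exactly the right formalization, and the forward and converse manipulations all check out; the only caveat, which the paper also suppresses by working at the binary symplectic level, is that stabilizer signs are fixed only up to Pauli corrections.
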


Above, \cref{eq:phantom_logic} ensures that each permutation transforms logical operators correctly, modulo stabilizers, and hence implements the desired logical action; while \cref{eq:phantom_comm1,eq:phantom_comm2,eq:phantom_comm3} demand that commutation relations among stabilizer and logical operators are maintained, so that the stabilizer group and hence codespace is preserved. 

By definition, phantom codes support a complete set of individually addressable $\overline{\mathrm{CNOT}}$ gates performed by qubit permutations, thus leading to \cref{prop:phantom_code_symplectic_gen}.

\begin{proposition}[Phantomness of a stabilizer code]
\label{prop:phantom_code_symplectic_gen}
    An $\db{n,k,d}$ stabilizer code is phantom iff it satisfies \cref{prop:phantom_gateset_symplectic_gen} for the gate set $\{\overline{\mathrm{CNOT}}_{a b}: (a, b) \in [k]^2, a \neq b\}$ for some choice of logical basis $Q_\mathrm{x}, Q_\mathrm{z}$.
\end{proposition}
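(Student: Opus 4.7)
The plan is to derive this as an immediate corollary of \cref{prop:phantom_gateset_symplectic_gen} by specializing its gate set to $\{\overline{\mathrm{CNOT}}_{ab}:(a,b)\in[k]^2,\;a\neq b\}$, and then matching the quantifiers over the logical basis with those in \cref{def:phantom_code}. Because \cref{prop:phantom_gateset_symplectic_gen} is already an iff statement, no new symplectic algebra is required; the only content is a translation between the definitional and the matrix-level descriptions of ``permutation implements $\overline{\mathrm{CNOT}}_{ab}$''.

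For the forward direction, I would start from a phantom code and invoke \cref{def:phantom_code} to fix a logical basis $Q_\mathrm{x}, Q_\mathrm{z}$ and a family $\{\pi^{(ab)}\}$ of qubit permutations such that $\pi^{(ab)}$ implements $\overline{\mathrm{CNOT}}_{ab}$ for every ordered pair $a\neq b$. Applying the necessity direction of \cref{prop:phantom_gateset_symplectic_gen} to each such $\pi^{(ab)}$ individually, with $U^{(ab)}=\mathrm{CNOT}_{ab}$ and $F(U^{(ab)})$ the standard symplectic transfer matrix of the corresponding CNOT on the logical register, yields Eqs.~\eqref{eq:phantom_logic}--\eqref{eq:phantom_comm3} in that basis for the full CNOT gate set.

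For the reverse direction, assume the conditions of \cref{prop:phantom_gateset_symplectic_gen} are met in some basis $Q_\mathrm{x}, Q_\mathrm{z}$ for the gate set $\{\overline{\mathrm{CNOT}}_{ab}\}$ by a family $\{\pi^{(ab)}\}$. By the sufficiency direction of that proposition, \cref{eq:phantom_comm1,eq:phantom_comm2,eq:phantom_comm3} guarantee that each $P^{(ab)}\oplus P^{(ab)}$ preserves the stabilizer group and its commutation with the logical operator group, hence preserves the codespace, while \cref{eq:phantom_logic} certifies that the induced action on the chosen logical operators equals that of $\mathrm{CNOT}_{ab}$ modulo stabilizers. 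Every in-block logical $\overline{\mathrm{CNOT}}$ is thus realized by a pure qubit permutation in a common basis, which is precisely the content of \cref{def:phantom_code}, so the code is phantom.

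The only subtlety to guard against---rather than a true obstacle---is that \cref{def:phantom_code} quantifies once over a single logical basis supporting all ordered-pair CNOTs (with possibly distinct permutations), whereas \cref{prop:phantom_gateset_symplectic_gen} is formulated for a fixed pair $(Q_\mathrm{x}, Q_\mathrm{z})$ and a family $\{\pi^{(i)}\}$. Specializing $\{U^{(i)}\}$ to $\{\mathrm{CNOT}_{ab}\}_{a\neq b}$ aligns the two quantifiers and closes the equivalence; no per-pair basis change is permitted, which is exactly what is needed for the compile-time absorption property that motivates the definition.
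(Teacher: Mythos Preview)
Your proposal is correct and matches the paper's approach: the paper treats this proposition as an immediate consequence of the definition of phantomness together with \cref{prop:phantom_gateset_symplectic_gen}, offering only the one-line justification ``By definition, phantom codes support a complete set of individually addressable $\overline{\mathrm{CNOT}}$ gates performed by qubit permutations, thus leading to \cref{prop:phantom_code_symplectic_gen}.'' Your write-up simply unpacks this in more detail, spelling out both directions and the alignment of quantifiers over the logical basis, which is exactly the intended (and only) content.
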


There is an additional consideration when using \cref{prop:phantom_gateset_symplectic_gen} to determine whether a stabilizer code supports a specific permutation logical gate set, or \cref{prop:phantom_code_symplectic_gen} to determine whether a code is phantom: the logical bases that enable the desired gate set to be implemented via permutations are generically unknown a priori. While a choice of $Q_\mathrm{x}, Q_\mathrm{z}$ can always be written given $H$ via basis completion (i.e.~Gaussian elimination), this arbitrary choice may not support the specific desired gate set, and one must allow changes in logical basis to determine whether the code can support the gate set. To do so, we note that every pair of orthonormal logical bases $Q := \binom{Q_\mathrm{x}}{Q_\mathrm{z}}$ and $L := \binom{L_\mathrm{x}}{L_\mathrm{z}}$ of a code is related by a logical Clifford circuit, and accordingly, $Q = R L$ where $R \in \Sp(2k, \mathbb{F}_2)$---see \cref{app:basics/symplectic} for a review of the symplectic representation. Such a rotation of logical bases preserves the anticommutation relations, $Q_\mathrm{x} \Omega \trans{Q_\mathrm{z}} = \mathbb{I} \Longleftrightarrow L_\mathrm{x} \Omega \trans{L_\mathrm{z}} = \mathbb{I}$. Therefore, in using \cref{prop:phantom_gateset_symplectic_gen} or \cref{prop:phantom_code_symplectic_gen} to determine the supported gate set or phantomness of a stabilizer code, a fixed logical basis $L$ can first be computed from $H$ and $Q$ set as $Q = R L$, where $R$ representing the Clifford transform is allowed to be freely chosen.

\subsection{Conditions for permutation \texorpdfstring{$\overline{\mathrm{CNOT}}$}{CNOT} gate sets and phantomness on CSS codes}
\label{app:basics/phantom_css}

We now specialize \cref{prop:phantom_gateset_symplectic_gen,prop:phantom_code_symplectic_gen} to CSS codes. CSS codes are defined by stabilizer generators that each comprise purely $X$ or $Z$ Pauli operators, and their logical bases can similarly be chosen such that $\overline{X}_i$ ($\overline{Z}_i$) logicals comprise only $X$ ($Z$) operators, which we refer to as being in CSS form. Thus, a more compact ``half-symplectic'' binary representation is natural, where the zero $X$- or $Z$-portion of the symplectic representation is removed. An $\db{n,k,d}$ CSS code is then specified by full-row-rank stabilizer generator matrices $H_\mathrm{x} \in \mathbb{F}_2^{r_\mathrm{x} \times n}$ and $H_\mathrm{z} \in \mathbb{F}_2^{r_\mathrm{z} \times n}$, whose rows encode generating set of $X$- and $Z$-type stabilizers, respectively, and likewise selecting an orthonormal CSS logical basis fixes $Q_\mathrm{x}, Q_\mathrm{z} \in \mathbb{F}_2^{k \times n}$ such that $Q_\mathrm{x} \trans{Q_\mathrm{z}} = \mathbb{I}$. 

First, we establish results that restrict the possible scope of permutation logical gates on CSS codes, as formalized in \cref{prop:only_cnot_circuits_css_logical_basis,remark:not_only_cnot_swap_non_css_logical_basis}.

\begin{proposition}[Qubit permutations can achieve only $\overline{\mathrm{CNOT}}$ circuits with CSS logical basis]
\label{prop:only_cnot_circuits_css_logical_basis}
    The only logical action implementable by qubit permutations on a stabilizer code are $\overline{\mathrm{CNOT}}$ circuits when using a CSS logical basis, modulo logical Pauli gates.
\end{proposition}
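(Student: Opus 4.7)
The plan is to directly invoke \cref{prop:phantom_gateset_symplectic_gen} with a CSS logical basis and show that the resulting symplectic representation of the induced logical action has the characteristic block-diagonal form of a $\mathrm{CNOT}$ circuit, \cref{eq:bias_preserving_circuit_symplectic}. The mechanism is that qubit permutations, being bias-preserving on the physical level, cannot generate any cross-term between the $X$- and $Z$-sectors of logical operators chosen in CSS form.

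First, I would write the CSS logical basis in symplectic form as $Q_\mathrm{x} = [L_\mathrm{x}\,|\,0]$ and $Q_\mathrm{z} = [0\,|\,L_\mathrm{z}]$, with orthonormality $L_\mathrm{x}\trans{L_\mathrm{z}} = \mathbb{I}$, and stack $Q = \binom{Q_\mathrm{x}}{Q_\mathrm{z}}$. For a permutation $\pi \in \S_n$ with matrix $P$, the action on symplectic vectors is $(P \oplus P)$, which preserves the $X$/$Z$ block structure: $Q_\mathrm{x}(P \oplus P) = [L_\mathrm{x} P\,|\,0]$ and $Q_\mathrm{z}(P \oplus P) = [0\,|\,L_\mathrm{z} P]$. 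Plugging into \cref{eq:phantom_logic} and computing $Q(P\oplus P)\Omega \trans{Q}$, together with the identity $\begin{bmatrix}0 & A \\ B & 0\end{bmatrix}\Omega = \diag(A, B)$ (applied to obtain $F(U)$ from $F(U)\Omega$), yields
\begin{equation}
    F(U) = \diag\!\left(L_\mathrm{x} P \trans{L_\mathrm{z}},\; L_\mathrm{z} P \trans{L_\mathrm{x}}\right).
\end{equation}
This is exactly the block-diagonal form characterizing the symplectic representation of bias-preserving Clifford circuits in \cref{eq:bias_preserving_circuit_symplectic}.

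Next, I would verify consistency: since $F(U) \in \Sp(2k, \mathbb{F}_2)$, being block-diagonal forces the two blocks to be related as $A$ and $\invtrans{A}$; this is consistent because $\trans{(L_\mathrm{z} P \trans{L_\mathrm{x}})} \cdot L_\mathrm{x} P \trans{L_\mathrm{z}}$ reduces to $\mathbb{I}$ once the permutation and commutation constraints \cref{eq:phantom_comm1,eq:phantom_comm2,eq:phantom_comm3} (which guarantee $\pi$ is a code automorphism in the first place) are imposed. Having established the block-diagonal symplectic form, I would then invoke the standard fact reviewed in \cref{app:basics/symplectic} that the bias-preserving subgroup of the Clifford group modulo Paulis is generated by the $\mathrm{CNOT}$ gates, so $\overline{U}$ must be a logical $\overline{\mathrm{CNOT}}$ circuit up to logical Paulis.

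The only delicate step I anticipate is the ``modulo logical Pauli gates'' clause. The symplectic representation discards global and Pauli phases, and permutations act by $\pm 1$ signs on Pauli generators that the binary symplectic formalism does not track. Accordingly, I would close the argument by noting that two Clifford gates with the same symplectic image differ precisely by a Pauli operator (physical or logical, depending on which level one works at), so the bias-preserving conclusion lifts from the symplectic calculation to the statement about $\overline{\mathrm{CNOT}}$ circuits modulo logical Paulis, as claimed.
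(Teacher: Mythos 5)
Your proposal is correct and is essentially the paper's argument — both rest on the fact that qubit permutations are bias-preserving, so with a CSS logical basis the induced logical symplectic matrix is block-diagonal, and block-diagonal symplectic matrices are exactly the $\overline{\mathrm{CNOT}}$ circuits modulo Paulis; you simply carry out the computation explicitly via \cref{prop:phantom_gateset_symplectic_gen} where the paper states it conceptually. The only point the paper makes explicit that you leave implicit is that permutations are Clifford (``contain no magic''), which justifies representing the logical action by a matrix $F(U)\in\Sp(2k,\mathbb{F}_2)$ in the first place and rules out higher Clifford-hierarchy logical actions.
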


\begin{proof}
    A CSS logical basis defines logical operators that are pure $X$-or $Z$-type. Qubit permutations are bias-preserving and cannot induce Pauli basis changes---they map $X$- ($Z$)-type Pauli operators exactly onto $X$- ($Z$)-type Pauli operators, and so cannot change the $X$- or $Z$-type of logical operators on the code. The only bias-preserving Clifford circuits are $\mathrm{CNOT}$ circuits (see \cref{app:basics/symplectic}), modulo single-qubit Pauli gates. Additionally, logical action at the third or higher level of the Clifford hierarchy is not possible as qubit permutations on stabilizer codes contain no magic.
\end{proof}

\begin{remark}
\label{remark:not_only_cnot_swap_non_css_logical_basis}
    \Cref{prop:only_cnot_circuits_css_logical_basis} does not hold for stabilizer, even CSS, codes when a non-CSS logical basis is used.
\end{remark}

\begin{proof}
    We give a counter-example. Consider a CSS logical basis of an arbitrary $k = 2$ code, $\{\overline{X}_1, \overline{X}_2, \overline{Z}_1, \overline{Z}_2\}$, and suppose $\overline{\mathrm{CNOT}}_{12}$ is available through a qubit permutation $\pi$ in this basis. Now consider the non-CSS logical basis
    \begin{equation*}
        \overline{X}_1' = \overline{X}_1 \overline{Z}_1, 
            \qquad
        \overline{X}_2' = \overline{X}_2 \overline{Z}_2, 
            \qquad
        \overline{Z}_1' = \overline{X}_1 \overline{X}_2 \overline{Z}_2, 
            \qquad
        \overline{Z}_2' = \overline{X}_1 \overline{X}_2 \overline{Z}_1.
    \end{equation*}
    In this logical basis, $\pi$ has action $\overline{X}_1' = \overline{X}_1 \overline{Z}_1 \mapsto \overline{X}_1 \overline{X}_2 \overline{Z}_1 = \overline{Z}_2'$, which cannot be generated by $\overline{\mathrm{CNOT}}$ gates.
\end{proof}

\Cref{remark:not_only_cnot_swap_non_css_logical_basis} suggests that, to assess logical operations beyond $\overline{\mathrm{CNOT}}$ circuits via qubit permutations on a CSS code, one can employ a non-CSS logical basis on the code. However, the availability of transversal (i.e.~depth-one) $\overline{\mathrm{CNOT}}$ gates between codeblocks generally require CSS logical bases to be chosen. Therefore, demanding the general ability to efficiently entangle distinct codeblocks through transversal $\overline{\mathrm{CNOT}}$ gates places us within the regime of \cref{prop:only_cnot_circuits_css_logical_basis}, which constrains available permutation logical actions to $\overline{\mathrm{CNOT}}$ circuits. With this limitation, \cref{prop:phantom_gateset_symplectic_gen} simplifies to \cref{prop:phantom_gateset_symplectic_css}, and \cref{prop:phantom_code_symplectic_css} follows.

\begin{proposition}[Permutation $\overline{\mathrm{CNOT}}$-circuit gate sets on a CSS code]
\label{prop:phantom_gateset_symplectic_css}
    Given an $\db{n,k,d}$ CSS code with stabilizer generators $H_\mathrm{x} \in \mathbb{F}_2^{r_\mathrm{x} \times n}$, $H_\mathrm{z} \in \mathbb{F}_2^{r_z \times n}$ and an orthonormal CSS logical basis $Q_\mathrm{x}, Q_\mathrm{z} \in \mathbb{F}_2^{k \times n}$ in binary form ($Q_\mathrm{x} \trans{Q_\mathrm{z}} = \mathbb{I}$), denote by $P^{(i)} \in \mathbb{F}_2^{n \times n}$ the permutation matrix representing $\pi^{(i)} \in \S_n$. Then $\{\pi^{(i)}\}_{i = 1}^p$ implements the logical gate set $\{\smash{\overline{U}}^{(i)}\}_{i = 1}^p$ on this code in this logical basis iff:
    \begin{subequations}\begin{align}
        Q_\mathrm{x} P^{a_i b_i} \trans{Q_\mathrm{z}} &= F_\mathrm{xx}\big(U^{(i)}\big), \\
        Q_\mathrm{z} P^{a_i b_i} \trans{Q_\mathrm{x}} &= F_\mathrm{zz}\big(U^{(i)}\big), \\
        H_\mathrm{x} P^{a_i b_i} \trans{H_\mathrm{z}}
        &= H_\mathrm{z} P^{a_i b_i} \trans{H_\mathrm{x}} = 0, \\
        H_\mathrm{x} P^{a_i b_i} \trans{Q_\mathrm{z}} 
        &= H_\mathrm{z} P^{a_i b_i} \trans{Q_\mathrm{x}} = 0, \\
        Q_\mathrm{x} P^{a_i b_i} \trans{H_\mathrm{z}}
        &= Q_\mathrm{z} P^{a_i b_i} \trans{H_\mathrm{x}} = 0,
    \end{align}\end{subequations}
    for every $i \in [p]$, where each $U^{(i)} \in \langle \mathrm{CNOT}_jk: j,k \in [k]^2, j \neq k \rangle$.
\end{proposition}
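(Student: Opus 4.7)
The plan is to derive the CSS conditions as a direct specialization of \cref{prop:phantom_gateset_symplectic_gen}, exploiting the compatible block-diagonal structure of CSS codes and $\mathrm{CNOT}$ circuits in the symplectic representation. First I would write the stabilizer and logical generator matrices in the CSS block form,
\begin{equation*}
    H = \begin{pmatrix} H_\mathrm{x} & 0 \\ 0 & H_\mathrm{z} \end{pmatrix},
    \qquad
    Q = \begin{pmatrix} Q_\mathrm{x} & 0 \\ 0 & Q_\mathrm{z} \end{pmatrix},
\end{equation*}
recall that each qubit permutation acts symplectically as $P^{(i)} \oplus P^{(i)}$, and invoke \cref{prop:only_cnot_circuits_css_logical_basis} to justify restricting each $U^{(i)}$ to a $\overline{\mathrm{CNOT}}$ circuit, so that by \cref{eq:bias_preserving_circuit_symplectic} its symplectic transfer matrix $F(U^{(i)}) = \diag(F_\mathrm{xx}(U^{(i)}), F_\mathrm{zz}(U^{(i)}))$ is itself block-diagonal with $F_\mathrm{xz}(U^{(i)}) = F_\mathrm{zx}(U^{(i)}) = 0$.

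Next, I would substitute these expressions into each of the four conditions of \cref{prop:phantom_gateset_symplectic_gen} and evaluate the resulting $2 \times 2$ block products. Right-multiplication by $\Omega$ swaps the $X$- and $Z$-index blocks, so both sides of every equation reduce to purely off-diagonal block form. The logical-action equation \cref{eq:phantom_logic} then splits cleanly into $Q_\mathrm{x} P^{(i)} \trans{Q_\mathrm{z}} = F_\mathrm{xx}(U^{(i)})$ and $Q_\mathrm{z} P^{(i)} \trans{Q_\mathrm{x}} = F_\mathrm{zz}(U^{(i)})$, while the three commutation conditions \cref{eq:phantom_comm1,eq:phantom_comm2,eq:phantom_comm3} decompose into the three stated pairs, namely $H_\mathrm{x} P^{(i)} \trans{H_\mathrm{z}} = H_\mathrm{z} P^{(i)} \trans{H_\mathrm{x}} = 0$ together with its two $H\!\leftrightarrow\!Q$ variants. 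The converse direction follows by reassembling these CSS equations back into the full symplectic conditions of the parent proposition.

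Since the argument is essentially a block-matrix bookkeeping exercise, I do not anticipate any substantive technical obstacle. The main care is in confirming that no constraints are dropped during the specialization: the diagonal-block equations produced by the substitution read $0 = 0$ on both sides, the left by CSS structure and the right because $\mathrm{CNOT}$ circuits are bias-preserving. These trivial equations impose no additional constraints, so the reduction is lossless and yields the stated biconditional.
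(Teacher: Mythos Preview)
Your proposal is correct and is precisely the approach the paper takes: the text surrounding \cref{prop:phantom_gateset_symplectic_css} simply states that \cref{prop:phantom_gateset_symplectic_gen} ``simplifies to'' it under the CSS block structure together with \cref{prop:only_cnot_circuits_css_logical_basis}, without spelling out the block-matrix bookkeeping you describe. Your explicit verification that the diagonal blocks vanish on both sides (so nothing is lost in the reduction) is exactly the check needed to make the biconditional rigorous.
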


The issue of an unknown suitable choice of logical basis $Q_\mathrm{x}, Q_\mathrm{z}$ again arises when using \cref{prop:phantom_gateset_symplectic_css} to determine whether a CSS code supports a specific gate set. We address this by noting that every pair of orthonormal CSS logical bases $Q_\mathrm{x}, Q_\mathrm{z}$ and $L_\mathrm{x}, L_\mathrm{z}$ is related by a $\overline{\mathrm{CNOT}}$ circuit---namely, $Q_\mathrm{x} = R L_\mathrm{x}$ and $Q_\mathrm{z} = \invtrans{R} L_\mathrm{z}$ where $R \in \GL(k, \mathbb{F}_2)$. The $\overline{\mathrm{CNOT}}$ circuit $C$ performing the logical basis change is then described by $F(C) = \diag(R, \invtrans{R})$; see \cref{app:basics/symplectic} for a review of the symplectic representation. Such a rotation of logical basis preserves the anticommutation relations, $Q_\mathrm{x} \trans{Q_\mathrm{z}} = \mathbb{I} \Longleftrightarrow L_\mathrm{x} \trans{L_\mathrm{z}} = \mathbb{I}$. Therefore, in using \cref{prop:phantom_gateset_symplectic_css} to determine the supported permutation gate set of a CSS code, a fixed logical basis $L_\mathrm{x}, L_\mathrm{z}$ can first be computed from $H_\mathrm{x}, H_\mathrm{z}$, and $Q_\mathrm{x}, Q_\mathrm{z}$ then set as the rotated versions of $L_\mathrm{x}, L_\mathrm{z}$, where $R$ are allowed to be freely chosen.

Interestingly, while the specific permutation gate set of a CSS code is logical basis-dependent, the phantomness of the code is basis-independent. We formalize this in \cref{prop:phantomness_css_logical_basis_independence}.

\begin{proposition}[Phantomness is independent of CSS logical basis]
    \label{prop:phantomness_css_logical_basis_independence}
    If a stabilizer code is phantom in some CSS logical basis, it is phantom in every CSS logical basis.
\end{proposition}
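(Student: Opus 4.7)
The plan is to reframe phantomness as a statement about the image of a group homomorphism, which is then manifestly invariant under conjugation by any CSS basis-change matrix. Two orthonormal CSS logical bases of the same code are related by some $R \in \GL(k, \mathbb{F}_2)$ via $Q_{\mathrm{x}} = R L_{\mathrm{x}}$ and $Q_{\mathrm{z}} = \invtrans{R} L_{\mathrm{z}}$, as noted in the exposition preceding \cref{prop:phantom_gateset_symplectic_css}. Combined with the elementary fact that the transvections $I + E_{ab}$ for distinct $(a,b) \in [k]^2$ generate all of $\GL(k, \mathbb{F}_2)$, basis-independence of phantomness will follow once the logical action under basis change is shown to transform by conjugation.

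Concretely, I would first restrict to the subgroup $G \leq \S_n$ of qubit permutations that preserve the codespace. By \cref{prop:only_cnot_circuits_css_logical_basis}, each $\pi \in G$ acts on logical operators as a $\overline{\mathrm{CNOT}}$ circuit in any CSS basis, so one can define transfer matrices $M_L(\pi), M_Q(\pi) \in \GL(k, \mathbb{F}_2)$ by $L_{\mathrm{x}} P \equiv M_L(\pi) L_{\mathrm{x}}$ and $Q_{\mathrm{x}} P \equiv M_Q(\pi) Q_{\mathrm{x}}$ modulo $\rowspan(H_{\mathrm{x}})$, where $P$ is the permutation matrix of $\pi$. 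Since permutations act by right multiplication on symplectic representations, composition in $\S_n$ induces multiplication in $\GL(k, \mathbb{F}_2)$, so $\pi \mapsto M_L(\pi)$ is a group homomorphism. Substituting $Q_{\mathrm{x}} = R L_{\mathrm{x}}$ into the defining relation for $M_Q$ and using the one for $M_L$ yields the covariance
\begin{equation*}
    M_Q(\pi) \;=\; R \, M_L(\pi) \, R^{-1},
\end{equation*}
with the $\overline{Z}$ sector transforming consistently via $Q_{\mathrm{z}} = \invtrans{R} L_{\mathrm{z}}$ together with the bias-preserving identity $F_{\mathrm{zz}} = \invtrans{F_{\mathrm{xx}}}$ for $\overline{\mathrm{CNOT}}$ circuits.

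To conclude, phantomness in the $L$-basis states that $I + E_{ab}$ lies in $M_L(G)$ for every ordered pair $(a,b)$; since transvections generate $\GL(k, \mathbb{F}_2)$, the image $M_L(G)$ must be all of $\GL(k, \mathbb{F}_2)$. Because conjugation by $R$ is an automorphism of $\GL(k, \mathbb{F}_2)$, the image in the new basis is $M_Q(G) = R \, M_L(G) \, R^{-1} = \GL(k, \mathbb{F}_2)$, which contains every transvection $I + E_{a'b'}$. Hence for each $(a',b')$ there exists $\pi \in G$ realizing $\overline{\mathrm{CNOT}}_{a'b'}$ in the $Q$-basis, establishing phantomness in that basis. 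The main technical step is the covariance identity $M_Q(\pi) = R M_L(\pi) R^{-1}$---a short symplectic computation that must be handled carefully modulo stabilizers (e.g.~by pairing with $\trans{L_{\mathrm{z}}}$ and using $L_{\mathrm{x}} \trans{L_{\mathrm{z}}} = \mathbb{I}$ to cancel the stabilizer slack). The conceptual kernel of the argument is the reformulation of phantomness as the image condition $M_L(G) = \GL(k, \mathbb{F}_2)$, which is manifestly invariant under the inner automorphisms of $\GL(k, \mathbb{F}_2)$ induced by CSS basis changes.
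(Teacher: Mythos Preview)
Your proof is correct and is essentially the same conjugation argument as the paper's, but packaged more algebraically. The paper argues operationally: since any two CSS logical bases differ by a logical $\overline{\mathrm{CNOT}}$ circuit, and a phantom code realizes \emph{every} $\overline{\mathrm{CNOT}}$ circuit via permutations, one obtains $\overline{\mathrm{CNOT}}_{ab}$ in the new basis by sandwiching the old-basis permutation for $\overline{\mathrm{CNOT}}_{ab}$ between the permutations implementing the basis change and its inverse. You instead cast phantomness as surjectivity of the homomorphism $G \to \GL(k,\mathbb{F}_2)$ (using that transvections generate $\GL(k,\mathbb{F}_2)$), show that a CSS basis change conjugates this map by $R$, and conclude that surjectivity is preserved.

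The underlying content is identical---both rely on transvections generating $\GL(k,\mathbb{F}_2)$ and on conjugation preserving the image---but your framing has a mild conceptual advantage: it never needs to invoke that the basis-change circuit $R$ is itself realized by a permutation, since conjugation by \emph{any} $R \in \GL(k,\mathbb{F}_2)$ is an automorphism. The paper's version, by contrast, is more constructive: it explicitly produces the new-basis permutation as a product of three old-basis permutations. One small point to tighten in your write-up: the map $\pi \mapsto M_L(\pi)$ as you've defined it (via right multiplication by $P$) is an anti-homomorphism rather than a homomorphism, but this is immaterial since the image of an anti-homomorphism is still a subgroup.
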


\begin{proof}
Any two CSS logical bases are related by a logical $\overline{\mathrm{CNOT}}$ circuit, and a phantom code can implement any $\overline{\mathrm{CNOT}}$ circuit through qubit permutations. Suppose the code is phantom in some CSS logical basis $L_\mathrm{x}, L_\mathrm{z}$. Then, in some other CSS logical basis $L_\mathrm{x}', L_\mathrm{z}'$, to perform an arbitrary $\overline{\mathrm{CNOT}}$, one does the following: (1) transform the logical basis to $L_\mathrm{x}, L_\mathrm{z}$ through qubit permutations; (2) implement the $\overline{\mathrm{CNOT}}$ as known; (3) transform the logical basis back to $L_\mathrm{x}', L_\mathrm{z}'$ through qubit permutations. The entire procedure comprises only qubit permutations, so the code is phantom also in logical basis $L_\mathrm{x}', L_\mathrm{z}'$.
\end{proof}

This basis independence allows a simplification of \cref{prop:phantom_code_symplectic_gen} to \cref{prop:phantom_code_symplectic_css}. Here, as explained in the main text and motivated above, we restrict our attention to CSS logical bases only on CSS codes, so as to guarantee the availability of transversal $\overline{\mathrm{CNOT}}$ gates between codeblocks.

\begin{proposition}[Phantomness of a CSS code]
\label{prop:phantom_code_symplectic_css}
    An $\db{n,k,d}$ CSS code is phantom iff it satisfies \cref{prop:phantom_gateset_symplectic_css} for the gate set $\{\overline{\mathrm{CNOT}}_{a b}: (a, b) \in [k]^2, a \neq b\}$ for an arbitrarily chosen CSS logical basis $Q_\mathrm{x}, Q_\mathrm{z}$.
\end{proposition}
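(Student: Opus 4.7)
The plan is to assemble the equivalence from three results already established: \cref{def:phantom_code}, which demands that \emph{some} CSS logical basis supports all $\overline{\mathrm{CNOT}}_{ab}$ via qubit permutations; \cref{prop:phantom_gateset_symplectic_css}, which provides the algebraic conditions equivalent to a prescribed $\overline{\mathrm{CNOT}}$-circuit gate set being implementable by qubit permutations in a \emph{fixed} CSS logical basis; and \cref{prop:phantomness_css_logical_basis_independence}, which upgrades ``some basis'' to ``every basis''. The only content of the present statement beyond a direct specialization of \cref{prop:phantom_gateset_symplectic_css} to the gate set $\{\overline{\mathrm{CNOT}}_{ab}\}$ is this upgrade, so the proof will be almost entirely a citation.

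For the forward direction, assume the code is phantom. By \cref{def:phantom_code} there is some CSS basis in which every $\overline{\mathrm{CNOT}}_{ab}$, $a\neq b$, is realized by a qubit permutation; \cref{prop:phantomness_css_logical_basis_independence} transfers this property to the arbitrarily chosen basis $Q_\mathrm{x}, Q_\mathrm{z}$, yielding a family $\{\pi^{(i)}\}$ of permutations implementing the full addressable $\overline{\mathrm{CNOT}}$ set in that basis. Enumerate the ordered pairs $(a,b) \in [k]^2$ with $a \neq b$ as $(a_i, b_i)$ and set $U^{(i)} = \mathrm{CNOT}_{a_i b_i}$; the $\Rightarrow$ direction of \cref{prop:phantom_gateset_symplectic_css} then produces exactly the listed algebraic conditions. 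For the reverse direction, if those conditions hold in some CSS basis $Q_\mathrm{x}, Q_\mathrm{z}$, the $\Leftarrow$ direction of \cref{prop:phantom_gateset_symplectic_css} exhibits an explicit family of permutations realizing every $\overline{\mathrm{CNOT}}_{ab}$, which is precisely the defining property of a phantom code in \cref{def:phantom_code}.

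The only point that requires any care is reconciling the ``arbitrarily chosen CSS logical basis'' clause in the statement with the existential quantifier in \cref{def:phantom_code}; this is exactly what \cref{prop:phantomness_css_logical_basis_independence} supplies, and is the reason the proof is genuinely an ``iff'' in both basis choices rather than only in a distinguished one. No further calculations are needed, since all symplectic content is already absorbed into \cref{prop:phantom_gateset_symplectic_css}.
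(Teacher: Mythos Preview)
Your proposal is correct and takes essentially the same approach as the paper: the paper does not give a standalone proof environment for this proposition but simply remarks that the basis independence of \cref{prop:phantomness_css_logical_basis_independence} ``allows a simplification of \cref{prop:phantom_code_symplectic_gen} to \cref{prop:phantom_code_symplectic_css}'', which is exactly the assembly you describe. Your write-up makes explicit the two directions and the role of each cited result, which is a faithful unpacking of the paper's one-line justification.
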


Lastly, we remark that a smaller gate set can be used in the conditions of \cref{prop:phantom_code_symplectic_gen,prop:phantom_code_symplectic_css} when assessing whether a code is phantom. Doing so is advantageous for computational efficiency, for example in our code enumeration and SAT-based code discovery efforts described later in \cref{app:enumeration,app:code_discovery}.

\begin{remark}[Minimal-size gate set for phantomness of stabilizer codes]
\label{prop:phantom_code_minimal_gateset}
    In \cref{prop:phantom_code_symplectic_gen,prop:phantom_code_symplectic_css}, the smaller-sized gate set $\{\overline{\mathrm{CNOT}}_{12}, \overline{\mathrm{SWAP}}_{12}, \overline{\mathrm{SWAP}}_{23}, \ldots, \overline{\mathrm{SWAP}}_{(k - 1) (k)}\}$ can equivalently be used.
\end{remark}

\begin{proof}
    Either gate set generates the other.
\end{proof}

\subsection{Code equivalence under qubit permutation and global Hadamards}
\label{app:basics/equivalence}

As mentioned in the main text and to facilitate further technical discussion, it is useful to define a suitable notion of equivalence of codes. While prior literature considered equivalence of codes up to qubit permutations and local (single-qubit) Cliffords~\cite{cross2025small,yu2007graphical}, the phantom property of codes is not preserved under local Clifford deformations. It is most suitable instead to restrict the freedom of local Cliffords to global Hadamards, leading to \cref{def:code_equivalence}.

\begin{definition}[$\mathrm{\Pi H}$ equivalence of codes]
    \label{def:code_equivalence}
    Two $n$-qubit codes are $\mathrm{\Pi H}$-equivalent if one can be mapped onto the other through a qubit permutation and optionally a global Hadamard ($H^{\otimes n}$).
\end{definition}

This notion of equivalence preserves the $\db{n, k, d}$ parameters and crucially the phantom property of codes.

\begin{proposition}[Invariance of phantomness under $\mathrm{\Pi H}$ equivalence]
    \label{prop:phantom_equivalence_invariance}
    Any code that is $\mathrm{\Pi H}$-equivalent to a phantom code is also phantom.
\end{proposition}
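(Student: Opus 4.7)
The plan is to handle the two generators of $\mathrm{\Pi H}$ equivalence separately---an arbitrary qubit permutation $\sigma \in \S_n$ and the global Hadamard $H^{\otimes n}$---since any $\mathrm{\Pi H}$-equivalence is a composition of these. For each generator I will verify phantomness directly via the symplectic characterization in \cref{prop:phantom_code_symplectic_css}, by exhibiting a family of witness permutations and a CSS logical basis for the transformed code $C'$ that together satisfy the required identities.

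For the permutation case, let $\sigma$ with matrix $S \in \mathbb{F}_2^{n\times n}$ relate $C$ with data $H_\mathrm{x}, H_\mathrm{z}, Q_\mathrm{x}, Q_\mathrm{z}$ to $C'$ with $H_\mathrm{x}' = H_\mathrm{x} S$, $H_\mathrm{z}' = H_\mathrm{z} S$, $Q_\mathrm{x}' = Q_\mathrm{x} S$, $Q_\mathrm{z}' = Q_\mathrm{z} S$. Orthonormality $Q_\mathrm{x}' \trans{(Q_\mathrm{z}')} = \mathbb{I}$ follows from $S \trans{S} = \mathbb{I}$. For each ordered pair $(a,b)$, I would take as the witness on $C'$ the conjugated permutation $\tilde{\pi}^{ab} := \sigma^{-1} \pi^{ab} \sigma$ with matrix $\tilde{P}^{ab} = \trans{S} P^{ab} S$. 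Substituting into each identity of \cref{prop:phantom_gateset_symplectic_css}, for instance
\begin{equation*}
    Q_\mathrm{x}' \tilde{P}^{ab} \trans{(Q_\mathrm{z}')} = Q_\mathrm{x} (S \trans{S}) P^{ab} (S \trans{S}) \trans{Q_\mathrm{z}} = Q_\mathrm{x} P^{ab} \trans{Q_\mathrm{z}} = F_\mathrm{xx}(\mathrm{CNOT}_{ab}),
\end{equation*}
which recovers the hypothesis on $C$; the remaining commutation and logical-action identities follow identically. Hence $C'$ is phantom with witnesses $\{\tilde{\pi}^{ab}\}$.

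For the global-Hadamard case, $H^{\otimes n}$ swaps $X$ and $Z$ at every site, so $C' := H^{\otimes n} C$ is CSS with $H_\mathrm{x}' = H_\mathrm{z}$, $H_\mathrm{z}' = H_\mathrm{x}$, and the natural orthonormal CSS logical basis $Q_\mathrm{x}' = Q_\mathrm{z}$, $Q_\mathrm{z}' = Q_\mathrm{x}$. Because any uniform single-qubit layer commutes with any qubit permutation, each $\pi^{ab}$ preserves the stabilizer and logical-operator sets of $C'$. A direct calculation (or the identity $H^{\otimes 2}\,\mathrm{CNOT}_{ab}\,H^{\otimes 2} = \mathrm{CNOT}_{ba}$ applied at the logical level) shows that $\pi^{ab}$ now implements $\overline{\mathrm{CNOT}}_{ba}$ in the swapped basis. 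Since $(a,b)\mapsto(b,a)$ merely reindexes the complete ordered-pair family, $\{\pi^{ba}\}_{(a,b)}$ is a full set of witnesses for $C'$, and $C'$ is phantom.

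The principal subtlety is bookkeeping rather than any new construction: one must track how the CSS logical basis transforms under each generator, verify that the transported basis remains orthonormal and CSS, and confirm that the symplectic identities of \cref{prop:phantom_code_symplectic_css} collapse to those already known for $C$. In both cases this reduces to the orthogonality $S \trans{S} = \mathbb{I}$ of permutation matrices or the clean sector-exchange action of $H^{\otimes n}$, so no obstacle of substance arises; invoking \cref{prop:phantomness_css_logical_basis_independence} on $C'$ additionally frees one from committing to a specific choice of CSS basis when checking the Hadamard case.
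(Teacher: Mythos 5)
The paper states \cref{prop:phantom_equivalence_invariance} without an explicit proof, so there is no argument of record to compare against; your write-up supplies the natural verification and it is correct. Both generator cases check out: the conjugated witnesses $\trans{S}P^{ab}S$ collapse the identities of \cref{prop:phantom_gateset_symplectic_css} to those for $C$ via $S\trans{S}=\mathbb{I}$, and in the Hadamard case you correctly observe that the sector swap turns each witness for $\overline{\mathrm{CNOT}}_{ab}$ into one for $\overline{\mathrm{CNOT}}_{ba}$, which merely reindexes the complete ordered-pair family. No gaps.
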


\subsection{Uniform weight distribution of logical operators on phantom codes and the Hamming bound}
\label{app:basics/weight}

\begin{table}[t]
    \centering
    \renewcommand{\arraystretch}{1.3}
    \begin{tabular}{|C{5cm}|C{2cm}|C{3cm}|C{6.5cm}|}
        \hline
        Logical Pauli equivalence class & Weight $\abs{\vb{w}}$ & $\vb{w} = (w_\mathrm{x}, w_\mathrm{z}, w_\mathrm{y})$ & Class members (physical Pauli operators) \\
        \hline
        \multirow{2}{*}{$\overline{X_1}$} & 2 & $(2, 0, 0)$ & $X_{1} X_{4}, X_{2} X_{3}$ \\ 
        \cline{2-4}
         & 4 & $(0, 2, 2)$ & $Y_{1} Z_{2} Z_{3} Y_{4}, Z_{1} Y_{2} Y_{3} Z_{4}$ \\
        \hline
        \multirow{2}{*}{$\overline{X_2}$} & 2 & $(2, 0, 0)$ & $X_{1} X_{2}, X_{3} X_{4}$ \\
        \cline{2-4}
         & 4 & $(0, 2, 2)$ & $Y_{1} Y_{2} Z_{3} Z_{4}, Z_{1} Z_{2} Y_{3} Y_{4}$ \\
        \hline
        \multirow{2}{*}{$\overline{X_1 X_2}$} & 2 & $(2, 0, 0)$ & $X_{1} X_{3}, X_{2} X_{4}$ \\
        \cline{2-4}
         & 4 & $(0, 2, 2)$ & $Y_{1} Z_{2} Y_{3} Z_{4}, Z_{1} Y_{2} Z_{3} Y_{4}$ \\
        \hline
        \multirow{2}{*}{$\overline{Z_1}$} & 2 & $(0, 2, 0)$ & $Z_{1} Z_{2}, Z_{3} Z_{4}$ \\
        \cline{2-4}
         & 4 & $(2, 0, 2)$ & $X_{1} X_{2} Y_{3} Y_{4}, Y_{1} Y_{2} X_{3} X_{4}$ \\
        \hline
        \multirow{2}{*}{$\overline{Z_2}$} & 2 & $(0, 2, 0)$ & $Z_{1} Z_{4}, Z_{2} Z_{3}$ \\
        \cline{2-4}
         & 4 & $(2, 0, 2)$ & $X_{1} Y_{2} Y_{3} X_{4}, Y_{1} X_{2} X_{3} Y_{4}$ \\
        \hline
        \multirow{2}{*}{$\overline{Z_1 Z_2}$} & 2 & $(0, 2, 0)$ & $Z_{1} Z_{3}, Z_{2} Z_{4}$ \\
        \cline{2-4}
         & 4 & $(2, 0, 2)$ & $X_{1} Y_{2} X_{3} Y_{4}, Y_{1} X_{2} Y_{3} X_{4}$ \\
        \hline
        \multirow{2}{*}{$\overline{X_1 Z_2}$} & 2 & $(0, 0, 2)$ & $Y_{1} Y_{4}, Y_{2} Y_{3}$ \\
        \cline{2-4}
         & 4 & $(2, 2, 0)$ & $X_{1} Z_{2} Z_{3} X_{4}, Z_{1} X_{2} X_{3} Z_{4}$ \\
        \hline
        \multirow{2}{*}{$\overline{Z_1 X_2}$} & 2 & $(0, 0, 2)$ & $Y_{1} Y_{2}, Y_{3} Y_{4}$ \\
        \cline{2-4}
         & 4 & $(2, 2, 0)$ & $X_{1} X_{2} Z_{3} Z_{4}, Z_{1} Z_{2} X_{3} X_{4}$ \\
        \hline
        \multirow{2}{*}{$\overline{Y_1 Y_2}$} & 2 & $(0, 0, 2)$ & $Y_{1} Y_{3}, Y_{2} Y_{4}$ \\
        \cline{2-4}
         & 4 & $(2, 2, 0)$ & $X_{1} Z_{2} X_{3} Z_{4}, Z_{1} X_{2} Z_{3} X_{4}$ \\
        \hline
        $\overline{Y_1}$ & 3 & $(1, 1, 1)$ & $X_{1} Z_{3} Y_{4}, X_{2} Y_{3} Z_{4}, Y_{1} Z_{2} X_{4}, Z_{1} Y_{2} X_{3}$ \\
        \hline
        $\overline{Y_2}$ & 3 & $(1, 1, 1)$ & $X_{1} Y_{2} Z_{3}, Y_{1} X_{2} Z_{4}, Z_{1} X_{3} Y_{4}, Z_{2} Y_{3} X_{4}$ \\
        \hline
        $\overline{X_1 Y_2}$ & 3 & $(1, 1, 1)$ & $X_{1} Z_{2} Y_{3}, Y_{1} X_{3} Z_{4}, Y_{2} Z_{3} X_{4}, Z_{1} X_{2} Y_{4}$ \\
        \hline
        $\overline{Y_1 X_2}$ & 3 & $(1, 1, 1)$ & $X_{1} Y_{3} Z_{4}, X_{2} Z_{3} Y_{4}, Y_{1} Z_{2} X_{3}, Z_{1} Y_{2} X_{4}$ \\
        \hline
        $\overline{Y_1 Z_2}$ & 3 & $(1, 1, 1)$ & $X_{1} Z_{2} Y_{4}, Y_{1} Z_{3} X_{4}, Y_{2} X_{3} Z_{4}, Z_{1} X_{2} Y_{3}$ \\
        \hline
        $\overline{Z_1 Y_2}$ & 3 & $(1, 1, 1)$ & $X_{1} Y_{2} Z_{4}, Y_{1} X_{2} Z_{3}, Z_{1} Y_{3} X_{4}, Z_{2} X_{3} Y_{4}$ \\
        \hline
    \end{tabular}
    \caption{\textbf{Logical Pauli equivalence classes of the $\mathbf{\db{4,2,2}}$ code and their weights.} The last column lists explicitly all members of each class $\smash{\mathcal{P}_{\overline{P}, \vb{w}}}$; the size of the class $\abs{\smash{\mathcal{P}_{\overline{P}, \vb{w}}}}$ is the length of the list.}
    \label{table:422_all_logical}
\end{table}

As the weight, and in fact at a finer-grained level the $X$-, $Y$- and $Z$-operator weights, of a Pauli operator is invariant under qubit permutations, yet qubit permutations induce logical $\overline{\mathrm{CNOT}}$ actions on phantom codes, phantom codes must possess an underlying structure of weight uniformity on their logical operators. This observation ultimately enables the derivation of a Hamming bound on the parameters of phantom codes. We formalize these concepts and this line of argument in this section. 

\begin{definition}[Logical Pauli equivalence classes]
    For a logical Pauli operator $\overline{P}$ on a stabilizer code, we define:
    \begin{itemize}
        \item $\mathcal{P}_{\overline{P}}$ to be the set of all physical Pauli operators implementing the logical Pauli $\overline{P}$, i.e., the set obtained by multiplying a fixed representative of $\overline{P}$ by all elements of the stabilizer group of the code.
        \item $\mathcal{P}_{\overline{P},w}$ to be the subset of $\mathcal{P}_{\overline{P}}$ consisting of physical Pauli operators of weight $w$.
        \item $\mathcal{P}_{\overline{P}, \vb{w}}$ to be the subset of $\mathcal{P}_{\overline{P}}$ consisting of physical Pauli operators with weight vector $\vb{w} := (w_\mathrm{x}, w_\mathrm{z}, w_\mathrm{y})$, where $w_\mathrm{x}$, $w_\mathrm{z}$, and $w_\mathrm{y}$ count the numbers of $X$, $Z$, and $Y$ operators, respectively, in the physical Pauli implementation.
    \end{itemize}
\end{definition}

Note that the $\smash{\mathcal{P}_{\overline{P},w}}$ subsets are disjoint for different $w$---$\smash{\mathcal{P}_{\overline{P},w}} \cap \smash{\mathcal{P}_{\overline{P},w'}} = \empty$ when $w \neq w'$---and likewise $\smash{\mathcal{P}_{\overline{P},\vb{w}}} \cap \smash{\mathcal{P}_{\overline{P},\vb{w}'}} = \empty$ when $\vb{w} \neq \vb{w}'$. As the weight of a Pauli operator is the sum of its part-wise weights, $\abs{\vb{w}} = w_\mathrm{x} + w_\mathrm{z} + w_\mathrm{y}$, we have also that $\smash{\mathcal{P}_{\overline{P},w}} = \smash{\bigcup_{\vb{w}: \abs{\vb{w}} = w} \mathcal{P}_{\overline{P},\vb{w}}}$, where all subsets in the union are disjoint. That the distance of the code is $d$ implies that there can be no logical operators of weight less than $d$, therefore $\mathcal{P}_{\overline{P},w} = \emptyset$ for all $w < d$, and likewise $\mathcal{P}_{\overline{P}, \vb{w}} = \emptyset$ for all $\vb{w}$ with $\abs{\vb{w}} < d$. 

As a concrete example, we list comprehensively the (nontrivial) logical Pauli equivalence classes and their weights for the $\db{4,2,2}$ phantom code in \cref{table:422_all_logical}. We read, for instance, that $\abs{\smash{\mathcal{P}_{\overline{X}_1}}} = 4$ is naturally the same size as the stabilizer group of the code; $\abs{\smash{\mathcal{P}_{\overline{X}_1,2}}} = 2$ and $\abs{\smash{\mathcal{P}_{\overline{X}_1,4}}} = 2$. As the code is CSS and we have used a CSS logical basis, there is only a single possible weight vector for each weight: $\abs{\smash{\mathcal{P}_{\overline{X}_1,(2,0,0)}}} = \abs{\smash{\mathcal{P}_{\overline{X}_1,(0,2,2)}}} = 2$.

\begin{lemma}[Uniform Pauli weight distribution of logical operators on phantom stabilizer codes]
\label{prop:uniform_weight_distribution}
    For an $\db{n,k,d}$ phantom stabilizer code and any fixed weight vector $\vb{w}$, the size of the equivalence class $\abs{\smash{\mathcal{P}_{\overline{P}, \vb{w}}}}$, in other words the number of physical Pauli operators implementing the logical Pauli operator $\overline{P}$, is the same for all nontrivial $X$-type logical operators $\overline{P}$. (There are $2^k - 1$ such operators: $\overline{X}_1, \overline{X}_2, \ldots, \overline{X}_1 \overline{X}_2, \ldots, \overline{X}_1 \overline{X}_2 \cdots \overline{X}_k$.) The same statement holds for $Z$-type logical operators.
\end{lemma}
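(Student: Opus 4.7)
The plan is to leverage the defining property of phantom codes—that every $\overline{\mathrm{CNOT}}_{ab}$ arises from a qubit permutation—to exhibit explicit bijections between the logical Pauli equivalence classes for different nontrivial $X$-type logicals. The crucial observation is that a qubit permutation $\pi \in \S_n$ acts on any physical Pauli operator by simply relabelling which qubit carries which single-qubit Pauli factor, so it preserves the full weight vector $\vb{w} = (w_\mathrm{x}, w_\mathrm{z}, w_\mathrm{y})$.

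I would proceed in three steps. First, identify each nontrivial $X$-type logical with a nonzero vector in $\mathbb{F}_2^k$ (via the expansion of $\overline{P}$ as a product of the $\overline{X}_i$'s in the chosen logical basis). Any $\overline{\mathrm{CNOT}}$ circuit acts on this label space by an invertible linear map $A \in \GL(k, \mathbb{F}_2)$, and since $\GL(k, \mathbb{F}_2)$ acts transitively on $\mathbb{F}_2^k \setminus \{0\}$, for any two nontrivial $X$-type logicals $\overline{P}_1, \overline{P}_2$ there exists a $\overline{\mathrm{CNOT}}$ circuit $C$ with $C \overline{P}_1 C^\dagger = \overline{P}_2$ (up to stabilizers). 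Second, invoking \cref{prop:phantom_code_symplectic_gen}, phantomness supplies a qubit permutation $\pi_{ab}$ implementing each generator $\overline{\mathrm{CNOT}}_{ab}$, and since the composition of qubit permutations is again a qubit permutation, there is a single $\pi \in \S_n$ whose logical action equals that of $C$. Third, $\pi$ preserves the stabilizer group setwise (each $\pi_{ab}$ does, and this property is closed under composition), so conjugation by $\pi$ maps $\mathcal{P}_{\overline{P}_1}$ bijectively onto $\mathcal{P}_{\overline{P}_2}$; because this conjugation preserves the weight vector, it restricts to a bijection $\mathcal{P}_{\overline{P}_1, \vb{w}} \to \mathcal{P}_{\overline{P}_2, \vb{w}}$, yielding $\abs{\mathcal{P}_{\overline{P}_1, \vb{w}}} = \abs{\mathcal{P}_{\overline{P}_2, \vb{w}}}$.

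The only point requiring care is confirming that $\pi$ sends \emph{every} representative of $\overline{P}_1$ to some representative of $\overline{P}_2$, not just one: this follows because $\pi$ preserves the stabilizer group, so $\pi$-conjugation commutes with multiplication by stabilizers, and the logical action $\overline{P}_1 \mapsto \overline{P}_2$ is well-defined modulo stabilizers. The $Z$-type statement proceeds identically: from \cref{app:basics/symplectic}, $\overline{\mathrm{CNOT}}$ circuits act on the $Z$-type label space by $\invtrans{A}$, which ranges over all of $\GL(k, \mathbb{F}_2)$ and hence also acts transitively on the $2^k - 1$ nontrivial $Z$-type logicals, so the same three-step argument applies verbatim. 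I do not anticipate a substantive obstacle here—the proof is essentially a transitivity-and-bijection argument, and the nontrivial content is entirely absorbed into the hypothesis of phantomness.
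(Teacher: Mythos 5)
Your proof is correct and follows essentially the same route as the paper's: use phantomness to obtain a qubit permutation realizing the $\overline{\mathrm{CNOT}}$ circuit relating any two nontrivial $X$-type logicals, then note that permutation conjugation preserves both the stabilizer group and the weight vector, yielding a weight-preserving bijection between the equivalence classes. You merely make explicit two points the paper leaves implicit (transitivity of $\GL(k,\mathbb{F}_2)$ on $\mathbb{F}_2^k\setminus\{0\}$, and well-definedness of the map on entire stabilizer cosets), which is a welcome tightening rather than a different argument.
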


\begin{proof}
    Consider any two $X$-type logical operators $\smash{\overline{P}}$ and $\smash{\overline{P}}'$. These must be related by $\overline{\mathrm{CNOT}}$ transformations: there exist $\overline{\mathrm{CNOT}}$ circuits $C$ and $C'$ transforming $\smash{\overline{P}}$ to $\smash{\overline{P}}'$ and $\smash{\overline{P}}'$ to $\smash{\overline{P}}$, respectively (in fact $C'$ is the circuit $C$ reversed). As the code is phantom, $C$ and $C'$ can be implemented by qubit permutations. Any qubit permutation preserves the weight vector of any physical Pauli operator. Therefore, each physical Pauli operator in the equivalence class $\smash{\mathcal{P}_{\overline{P}, \vb{w}}}$ is mapped to a physical Pauli operator in $\smash{\mathcal{P}_{\overline{P}', \vb{w}}}$, and likewise from $\smash{\mathcal{P}_{\overline{P}', \vb{w}}}$ to $\smash{\mathcal{P}_{\overline{P}, \vb{w}}}$. This defines a bijective map between $\smash{\mathcal{P}_{\overline{P}, \vb{w}}}$ and $\smash{\mathcal{P}_{\overline{P}', \vb{w}}}$, so the two sets have the same cardinality. An analogous argument applies to $Z$-type logical operators.
\end{proof}

A consequence of this structure of weight uniformity is a Hamming bound for CSS phantom
codes, which limits the possible $\db{n,k,d}$ parameters of a CSS phantom code. This result has been presented as \cref{thm:phantom_css_hamming_bound} in the main text and is reproduced below for convenient reference. Here, we provide a proof for the bound.

\PhantomCSSHammingBound*

\begin{proof}
    Consider the case $\mu = X$, that is, the $X$-distance is limiting. By \cref{prop:uniform_weight_distribution}, all $2^k - 1$ nontrivial $X$-type logical operators $\overline{P}$ have equivalence classes of the same size at weight $d$, namely
    $\abs{\smash{\mathcal{P}_{\overline{P}, d}}} = \eta$. Thus, the left-hand side of the Hamming bound, $\eta (2^k - 1)$, counts the total number of weight-$d$ Pauli-$X$ physical operators that implement nontrivial logical $X$-type operators. Note that, because logical states on a code are orthogonal, these physical operators are all distinct. 
    
    On the other hand, the right-hand side $B(n, d)$ is an upper bound on this quantity, since each such physical Pauli-$X$ operator corresponds to a binary string of length $n$ and weight $d$, and because the code has distance $d$, the product of any two such physical operators must have weight at least $d$. This is precisely the constraint in the definition of $B(n, d)$. The case of $\mu = Z$ is analogous.
\end{proof}

\begin{table}[t]
    \newcommand{\colwidth}{1.5cm}
    \begin{tabular}{|C{0.4cm}|C{1cm}|C{1.1cm}|C{1.1cm}|C{1.1cm}|C{1.2cm}|C{1.4cm}|C{1.4cm}|C{1.4cm}|C{\colwidth}|C{\colwidth}|C{\colwidth}|C{\colwidth}|}
    \hline
    $d$ & $n=4$ & $n=5$ & $n=6$ & $n=7$ & $n=8$ & $n=9$ & $n=10$ & $n=11$ & $n=12$ & $n=13$ & $n=14$ & $n=15$ \\
    \hline
    2 & 6 (6) & 10 (10) & 15 (15) & 21 (21) & 28 (28) & 36 (36) & 45 (45) & 55 (55) & 66 (66) & 78 (78) & 91 (91) & 105 (105)\\
    3 & 1 (2) & 2 (3) & 4 (5) & 7 (7) & ${\ge} \,8$ (9) & 12 (12) & ${\ge} \,13$ (15) & ${\ge} \,17$ (18) & ${\ge} \,20$ (22) & 26 (26) & ${\ge} \,27$ (30) & 35 (35)\\
    4 & 1 (1) & 1 (2) & 3 (5) & 7 (8) & 14 (14) & ${\ge} \,16$ (21) & 30 (30) & ${\ge} \,31$ (41) & ${\ge} \,41$ (55) & ${\ge} \,53$ (71) & ${\ge} \,68$ (91) & ${\ge} \,85$ (113)\\
    5 & 0 (0) & 1 (1) & 1 (2) & 1 (3) & 2 (5) & 3 (8) & ${\ge} \,6$ (12) & ${\ge} \,8$ (16) & ${\ge} \,11$ (22) & ${\ge} \,14$ (28) & ${\ge} \,27$ (36) & ${\ge} \,22$ (45)\\
    \hline
    \end{tabular}
    \caption{\textbf{Computed exact values and bounds on $\bm{B(n,d)}$.} Used in the Hamming bound, $B(n,d)$ is the maximum number of length-$n$ weight-$d$ bitstrings with pairwise weight ${\geq} d$. Entries denoted with $\geq$ are best-known lower bounds obtained by the \texttt{Z3} SMT solver~\cite{moura2008z3} within a 12-hour time limit; the other numbers are proven optimal. Upper bounds of $\left\lfloor \tbinom{n}{t} / \tbinom{d}{t} \right\rfloor$, $t=\lfloor d/2 \rfloor + 1$, are in parentheses.}
    \label{table:fnd}
\end{table}

We make a few further remarks on the Hamming bound. First, in the presentation of \cref{thm:phantom_css_hamming_bound}, we implicitly assume that $\eta$ is known. If only $n, k, d$ are specified and $\eta$ is unknown, $\eta$ may be replaced by any lower bound on the size of the equivalence class of physical weight-$d$ Pauli operators implementing a type-$\mu$ logical Pauli. 

Secondly, the right-hand side of \cref{thm:phantom_css_hamming_bound} can in principle be tightened by computing sharper upper bounds on the number of type-$\mu$ physical Pauli operators that implement logical operators. Using the \texttt{Z3} SMT solver~\cite{moura2008z3}, we computed exact values of $B(n, d)$ for small $n$ and $d$, which are reported in \cref{table:fnd}. For $d = 2$, one has an exact closed-form formula $F(n,2) = n(n-1)/2$, since any two distinct weight-$2$ bitstrings can overlap in at most one position. The parentheses in \cref{table:fnd} enclose upper bounds $B(n,d) \leq \smash{\lfloor \binom{n}{t}/\binom{d}{t} \rfloor}$ where $t=\lfloor d/2\rfloor + 1$. To see this bound, identify the support of each weight-$d$ string with a block $S \subseteq [n]$ of size $d$. For two blocks $S$ and $T$, the condition $|S \oplus T|\le d$ implies $|S\cap T|\le t-1$, equivalently that no $t$-subset of $[n]$ is contained in more than one block. Each block contains $\smash{\binom{d}{t}}$ distinct $t$-subsets, while each $t$-subset can appear in at most one block. Counting $t$-subsets therefore yields the stated upper bound.

Lastly, \cref{thm:phantom_css_hamming_bound} is reminiscent of the Hamming bound for classical linear codes. We emphasize that, for a classical linear code, the codewords are only required to have weight at least $d$, whereas in the definition of $B(n,d)$, all strings are constrained to have weight exactly $d$. We comment that there is also a quantum Hamming bound for non-degenerate stabilizer codes~\cite{ekert1996error}, but some phantom codes are degenerate (e.g.~the $\db{20,2,6}$ code), so that bound does not directly apply.

\subsection{Efficient addressable \texorpdfstring{$\overline{\mathrm{CNOT}}$}{CNOT} gates and arbitrary \texorpdfstring{$\overline{\mathrm{CNOT}}$}{CNOT} circuits between CSS phantom codeblocks} 
\label{app:basics/addressable_cnots}

Phantom codes support super-efficient arbitrary in-block $\overline{\mathrm{CNOT}}$ circuits implemented by qubit permutations, which do not have to be physically performed as circuit compilation can commute the permutations through the rest of the circuit. We additionally show that individually addressable interblock $\overline{\mathrm{CNOT}}$ gates, and arbitrary $\overline{\mathrm{CNOT}}$ circuits spanning multiple codeblocks, enjoy benefits from the phantomness of CSS codes and can too be performed efficiently, as has been formalized in \cref{thm:phantom_css_interblock_cnot_circuits,remark:phantom_css_interblock_cnot_circuits} in the main text. Here we give the proofs of these results. 

The central idea is to interleave in-block permutation $\overline{\mathrm{CNOT}}$ gates between interblock transversal $\overline{\mathrm{CNOT}}$ gates~\cite{grassl2013leveraging} to obtain arbitrary interblock $\overline{\mathrm{CNOT}}$ connectivity. This decomposition is applied recursively from large to small scale on the desired logical circuit, with suitable parallelization of the required transversal $\overline{\mathrm{CNOT}}$ gates to control depth. As our codes are phantom and the qubit permutations implementing in-block $\overline{\mathrm{CNOT}}$ gates need not be physically performed, the only contribution to physical depth comes from the transversal $\overline{\mathrm{CNOT}}$s, which are depth one each.

\begin{lemma}[Efficient arbitrary $\overline{\mathrm{CNOT}}$ circuits between two CSS phantom codeblocks]
\label{lemma:phantom_css_interblock_cnot_circuits_two_codeblocks}
    Arbitrary $\overline{\mathrm{CNOT}}$ circuits between two codeblocks of a CSS phantom code can be performed in physical depth at most four, up to a residual permutation of logical qubits. This depth reduces to at most two whilst preserving the ordering of logical qubits when the $\overline{\mathrm{CNOT}}$ gates in the circuit are unidirectional.
\end{lemma}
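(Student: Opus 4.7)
The plan is to convert the lemma into a decomposition statement over $\mathrm{GL}(2k, \mathbb{F}_2)$ and then bound the number of ``expensive'' factors by $2$ or $4$. Working in the symplectic representation of \cref{app:basics/symplectic} with a fixed CSS logical basis on both codeblocks, \cref{prop:only_cnot_circuits_css_logical_basis} encodes an arbitrary $\overline{\mathrm{CNOT}}$ circuit across the two blocks as an invertible $2k \times 2k$ binary matrix
\[
M = \begin{pmatrix} A & B \\ C & D \end{pmatrix}
\]
with $k \times k$ blocks, the diagonal blocks encoding in-block actions and the off-diagonal blocks encoding interblock contributions. Two primitive factor types are available: (i) by phantomness, any in-block $\overline{\mathrm{CNOT}}$ circuit on either block, corresponding to left or right multiplication of $M$ by an arbitrary element of $\mathrm{GL}(k, \mathbb{F}_2) \times \mathrm{GL}(k, \mathbb{F}_2)$, at zero physical depth; and (ii) a single transversal $\overline{\mathrm{CNOT}}^{\otimes k}$ in either direction, at physical depth $1$, realizing $T_{1 \to 2} = \bigl(\begin{smallmatrix} I & 0 \\ I & I \end{smallmatrix}\bigr)$ or $T_{2 \to 1} = \bigl(\begin{smallmatrix} I & I \\ 0 & I \end{smallmatrix}\bigr)$. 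The task reduces to decomposing $M$ as an alternating product of these two factor types while minimizing the transversal count.

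For the unidirectional statement the target is block lower triangular, $M = \bigl(\begin{smallmatrix} A & 0 \\ C & D \end{smallmatrix}\bigr)$ with $A, D \in \mathrm{GL}(k, \mathbb{F}_2)$ and arbitrary $C$. Absorbing $A, D$ into outer free block-diagonal factors reduces the target to $\bigl(\begin{smallmatrix} I & 0 \\ C' & I \end{smallmatrix}\bigr)$ with $C' = D^{-1} C A^{-1}$. I would first verify by direct expansion that a single sandwiched layer $(P_1 \oplus P_2)\, T_{1 \to 2}\, (Q_1 \oplus Q_2) = \bigl(\begin{smallmatrix} P_1 Q_1 & 0 \\ P_2 Q_1 & P_2 Q_2 \end{smallmatrix}\bigr)$ realizes exactly those block-lower-triangular targets whose off-diagonal block is invertible, covering the depth-$1$ case. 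For singular $C'$ I would use two sandwiched $T_{1 \to 2}$ layers and, by matching block entries of the composite, reduce the depth-$2$ claim to finding a decomposition $C' = U + V$ with $U, V \in \mathrm{GL}(k, \mathbb{F}_2)$. The step then follows from the classical fact that, over $\mathbb{F}_2$ with $k \ge 2$, every $k \times k$ matrix is a sum of two invertible matrices; the $k = 1$ case is handled by the depth-$1$ argument since the only singular $C'$ is then $0$. No residual permutation is introduced because every factor lives in the block-lower-triangular parabolic subgroup.

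For the bidirectional statement I would invoke a block LU factorization with block pivoting: any $M \in \mathrm{GL}(2k, \mathbb{F}_2)$ admits $M = \Pi \, L \, U$ where $\Pi$ is either $I$ or the $2k \times 2k$ block-SWAP matrix, and $L, U$ are block lower and upper triangular invertible matrices given by Schur-complement formulas once the top-left block has been pivoted to invertibility. The permutation $\Pi$ is exactly the residual permutation of logical qubits permitted by the lemma; if $\Pi = I$ no residual permutation arises, and otherwise the block SWAP is absorbed at compilation or, if required physically, handled via \cref{remark:phantom_css_interblock_cnot_circuits}. Applying the unidirectional result to $L$ contributes depth at most $2$ via $T_{1 \to 2}$ layers and to $U$ depth at most $2$ via $T_{2 \to 1}$ layers, for a total physical depth of at most $4$.

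The main technical obstacle is the sum-of-two-invertibles fact: for any $C \in \mathbb{F}_2^{k \times k}$ with $k \ge 2$ there exists $A \in \mathrm{GL}(k, \mathbb{F}_2)$ with $A + C \in \mathrm{GL}(k, \mathbb{F}_2)$. I expect to prove this by a brief case analysis on whether $I + C$ is invertible and, if not, by perturbing $I$ by a rank-one or conjugation correction chosen to eliminate the offending null vector from both $A$ and $A + C$ simultaneously; a separate check disposes of the rank-deficient cases of $C$ by exploiting the freedom in choosing $A$ within a coset of $\mathrm{GL}(k, \mathbb{F}_2)$. Every other step is routine block linear algebra once the symplectic dictionary of \cref{app:basics/symplectic} is in place.
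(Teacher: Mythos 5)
Your overall route matches the paper's proof almost exactly: the same two primitives (free block-diagonal $\GL(k,\mathbb{F}_2)\times\GL(k,\mathbb{F}_2)$ factors from phantomness, plus depth-one transversal $\overline{\mathrm{CNOT}}$ factors), the same observation that a single sandwiched transversal layer realizes precisely the unipotent triangular targets with \emph{invertible} off-diagonal block, the same reduction of the singular case to writing the off-diagonal block as a sum of two elements of $\GL(k,\mathbb{F}_2)$ (the paper cites Lemma~IX.7 of Ref.~\cite{malcolm2025computing} for this fact rather than proving it, but your sketch of a direct proof is a standard exercise and the $k=1$ caveat you note is handled correctly), and the same $2+2$ accounting via a permuted block-triangular factorization for the bidirectional case.

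There is, however, one concrete gap: your block-pivoting claim that every $M \in \GL(2k,\mathbb{F}_2)$ factors as $M = \Pi\, L\, U$ with $\Pi$ restricted to the identity or the block-SWAP is false. Writing $M = \bigl(\begin{smallmatrix} A & B \\ C & D\end{smallmatrix}\bigr)$, your $\Pi$ permutes only the two row blocks, so the pivoted top-left block is either $A$ or $C$, and you need one of these to be invertible. Invertibility of $M$ only guarantees that the first $k$ columns have full rank $k$, not that the rank concentrates in one of the two row blocks. For $k=2$ the invertible matrix
\begin{equation*}
    M=\left(\begin{array}{cc|cc}1&0&1&0\\0&0&0&1\\\hline 0&0&1&0\\0&1&0&1\end{array}\right)
\end{equation*}
has $A=\bigl(\begin{smallmatrix}1&0\\0&0\end{smallmatrix}\bigr)$ and $C=\bigl(\begin{smallmatrix}0&0\\0&1\end{smallmatrix}\bigr)$ both singular, so neither choice of $\Pi$ admits the claimed $LU$ factorization (the $(1,1)$ block of $LU$ is a product of invertibles and hence invertible). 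Note also that the free block-diagonal factors cannot rescue this, since they preserve the rank of each block. The fix is exactly what the paper does: allow $\Pi$ to be an \emph{arbitrary} permutation of the $2k$ logical qubits (any $k$ rows on which the first $k$ columns restrict to an invertible matrix can be moved to the top), which is still within the ``residual permutation of logical qubits'' permitted by the lemma statement and, if it must be performed physically, is handled by \cref{remark:phantom_css_interblock_cnot_circuits}. With that one correction your argument goes through and coincides with the paper's.
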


\begin{proof}
    We consider two codeblocks of an $\db{n,k,d}$ CSS phantom code, for a total of $2k$ logical qubits. Recall that, in the symplectic representation, a $\mathrm{CNOT}$ circuit $C$ on $2k$ qubits is described by a symplectic matrix $F(C) \in \Sp(4k, \mathbb{F}_2)$ with a block-diagonal structure $F(C) = \diag(X, \invtrans{X})$ for $X \in \GL(2k, \mathbb{F}_2)$---see \cref{app:basics/symplectic} for a review. Henceforth we deal only the top-left block $X$, which contains all degrees of freedom characterizing the circuit. We consider cases of increasing complexity:
    \begin{itemize}
        
        \item $X$ is upper-block-triangular,
        \begin{equation}
            X = \left(\begin{array}{c|c} 
                    \mathbb{I} & U \\ 
                    \hline 
                    0 & \mathbb{I} 
                \end{array}\right),
        \end{equation}
        where each block is $k \times k$. First, if $U$ is invertible, then the circuit can be implemented using a single transversal $\overline{\mathrm{CNOT}}$ gate between the codeblocks, sandwiched by in-block $\overline{\mathrm{CNOT}}$ gates on the control codeblock:
        \begin{equation}
        \label{eq:cb_from_ib}
            X
            = \left(\begin{array}{c|c} 
                \mathbb{I} & U \\ 
                \hline 
                0 & \mathbb{I} 
            \end{array}\right)
            = \underbrace{\left(\begin{array}{c|c} 
                U & 0 \\ 
                \hline 
                0 & \mathbb{I} 
            \end{array}\right)}_{\text{in-block}}
            \underbrace{\left(\begin{array}{c|c} 
                \mathbb{I} & \mathbb{I} \\ 
                \hline 
                0 & \mathbb{I} 
            \end{array}\right)}_{\text{transversal}}
            \underbrace{\left(\begin{array}{c|c} 
                U^{-1} & 0 \\ 
                \hline 
                0 & \mathbb{I} 
            \end{array}\right)}_{\text{in-block}}.
        \end{equation}
        
        Otherwise, if $U$ is not invertible, by Ref.~\cite[Lemma~IX.7]{malcolm2025computing}, $U$ can be written as the sum of two $\GL(k, \mathbb{F}_2)$ matrices $U_1, U_2$. The block-triangular structure of these matrices enables us to rewrite the sum as a product, and we use the circuit decomposition of \cref{eq:cb_from_ib} for each of the terms,
        \begin{equation}\begin{split}
            \label{eq:cb_additive}
            X
            = \left(\begin{array}{c|c} 
                \mathbb{I} & U_1 + U_2 \\ 
                \hline 
                0 & \mathbb{I} 
            \end{array}\right)
            &= \left(\begin{array}{c|c} 
                \mathbb{I} & U_1 \\ 
                \hline 
                0 & \mathbb{I} 
            \end{array}\right)
            \left(\begin{array}{c|c} 
                \mathbb{I} & U_2 \\ 
                \hline 
                0 & \mathbb{I} 
            \end{array}\right) \\
            &= \underbrace{\left(\begin{array}{c|c} 
                U_1 & 0 \\ 
                \hline 
                0 & \mathbb{I} 
            \end{array}\right)}_{\text{in-block}}
            \underbrace{\left(\begin{array}{c|c} 
                \mathbb{I} & \mathbb{I} \\ 
                \hline 
                0 & \mathbb{I} 
            \end{array}\right)}_{\text{transversal}}
            \underbrace{\left(\begin{array}{c|c} 
                U_1^{-1} & 0 \\ 
                \hline 
                0 & \mathbb{I} 
            \end{array}\right)}_{\text{in-block}}
            \underbrace{\left(\begin{array}{c|c} 
                U_2 & 0 \\ 
                \hline 
                0 & \mathbb{I} 
            \end{array}\right)}_{\text{in-block}}
            \underbrace{\left(\begin{array}{c|c} 
                \mathbb{I} & \mathbb{I} \\ 
                \hline 
                0 & \mathbb{I} 
            \end{array}\right)}_{\text{transversal}}
            \underbrace{\left(\begin{array}{c|c} 
                U_2^{-1} & 0 \\ 
                \hline 
                0 & \mathbb{I} 
            \end{array}\right)}_{\text{in-block}}.
        \end{split}\end{equation}
        
        Therefore, the circuit can be implemented using two transversal $\overline{\mathrm{CNOT}}$ gates between the codeblocks, interleaved with in-block $\overline{\mathrm{CNOT}}$ gates on the control codeblock.

        \item $X$ is lower-block-triangular. The argument is analogous as for the upper-block-triangular case. Circuits with unidirectional $\overline{\mathrm{CNOT}}$ gates correspond precisely to either a lower- or upper-block-triangular $X$, hence proving the second part of \cref{lemma:phantom_css_interblock_cnot_circuits_two_codeblocks}.

        \item $X$ is not block-triangular. In this case, we invoke the block-PLDU decomposition\footnote{For invertible $X$, we can first apply a permutation matrix $P$, such that $P^{-1}X$ has an invertible upper-left quadrant $A$, then use the block-LDU decomposition $\left(\begin{array}{c|c} A & B\\ \hline C & D \end{array}\right)=\left(\begin{array}{c|c} I & 0\\ \hline CA^{-1} & I \end{array}\right)\left(\begin{array}{c|c} A & 0\\ \hline 0 & D-CA^{-1}B \end{array}\right)\left(\begin{array}{c|c} I & A^{-1}B\\ \hline 0 & I \end{array}\right)$.}:
        \begin{equation}
        \label{eq:PLDU}
            X 
            = P
            \left(\begin{array}{c|c} 
                \mathbb{I} & 0 \\ 
                \hline 
                L & \mathbb{I} 
            \end{array}\right)
            \underbrace{\left(\begin{array}{c|c} 
                C_1 & 0 \\ 
                \hline 
                0 & C_2 
            \end{array}\right)}_{\text{in-block}}
            \left(\begin{array}{c|c} 
                \mathbb{I} & U \\ 
                \hline 
                0 & \mathbb{I} 
            \end{array}\right),
        \end{equation}
        where $P$ is a permutation matrix, $L, U \in \mathbb{F}_2^{k \times k}$, and $C_1,C_2\in\GL(k,\mathbb{F}_2)$. The middle $\diag(C_1, C_2)$ term corresponds to in-block $\overline{\mathrm{CNOT}}$ gates on the two codeblocks, while the lower- and upper-block-triangular terms can be treated as described above. Thus, at most four transversal $\overline{\mathrm{CNOT}}$ gates between the codeblocks are needed, up to a permutation of logical qubits $P$.
        
    \end{itemize}
\end{proof}

Now we examine $\overline{\mathrm{CNOT}}$ circuits across a larger number of codeblocks, which is the general setting considered in \cref{thm:phantom_css_interblock_cnot_circuits} of the main text. We reproduce the statement here for reference and provide a proof.

\MainPropositionInterblockCNOT*

\begin{proof}
    There are $2^a$ codeblocks of an $\db{n, k, d}$ CSS phantom code, amounting to $2^a k$ logical qubits in total. Like before, let $X \in \GL(2^a k, \mathbb{F}_2)$ be the top-left block of the symplectic matrix representing the $\mathrm{CNOT}$ circuit. We invoke the block-PLDU decomposition on $X$ as in \cref{eq:PLDU}, and further decompose $C_1, C_2 \in \GL(2^{a - 1} k, \mathbb{F}_2)$ into PLDU forms, which involve permutation matrices $P_1, P_2$. We can then pull $P_1, P_2$ to the left,
    \begin{equation}
        X
        = P
        \left(\begin{array}{c|c} 
            \mathbb{I} & 0 \\ 
            \hline 
            L & \mathbb{I}
        \end{array}\right)
        \left(\begin{array}{c|c} 
            P_1 & 0 \\ 
            \hline 
            0 & P_2 
        \end{array}\right)
        \cdots \
        = P
        \left(\begin{array}{c|c} 
            P_1 & 0 \\ 
            \hline 
            0 & P_2 
        \end{array}\right)
        \left(\begin{array}{c|c} 
            \mathbb{I} & 0 \\ 
            \hline 
            P_2^{-1}LP_1 & \mathbb{I}
        \end{array}\right)
        \cdots \ .
    \end{equation}

    Merging $P$ and $\diag(P_1, P_2)$ leads to another permutation matrix. Repeating this process further, we obtain the following decomposition:
    \begin{equation}
    \label{eq:recursive_PLDU}
        X
        = P
        \left(\begin{array}{c|c} 
            \mathbb{I} & 0 \\ 
            \hline 
            L_0 & \mathbb{I} 
        \end{array}\right)
        \Biggl(\begin{array}{c|c} 
            \begin{smallmatrix}
                \mathbb{I} & 0 \\ 
                L_{1,1} & \mathbb{I} \\[1pt]
            \end{smallmatrix} & 0 \\ 
            \hline
            0 & \begin{smallmatrix}
                \\[0.5pt]
                \mathbb{I} & 0 \\ 
                L_{1,2} & \mathbb{I}
            \end{smallmatrix} 
        \end{array}\Biggr)
        \cdots
        \underbrace{\left(\begin{smallmatrix}
            C_{1} &&& \\ 
            & C_2 && \\ 
            && \sddots & \\ 
            &&& C_{2^{a}} 
        \end{smallmatrix}\right)}_{\text{in-block}} \\
        \cdots
        \Biggl(\begin{array}{c|c} 
            \begin{smallmatrix}
                \mathbb{I} & U_{1,1} \\ 
                0 & \mathbb{I} \\[1pt]
            \end{smallmatrix} & 0 \\ 
            \hline 
            0 & \begin{smallmatrix}
                \\[0.5pt]
                \mathbb{I} & U_{1,2} \\ 
                0 & \mathbb{I}
            \end{smallmatrix}
        \end{array}\Biggr)
        \left(\begin{array}{c|c} 
            \mathbb{I} & U_0 \\ 
            \hline 0 & \mathbb{I} 
        \end{array}\right),
    \end{equation}
    where $L_{i,j},U_{i,j}$ for $j \in [2^i]$ are binary square matrices of size $2^{a-i-1} k \times 2^{a-i-1} k$, and $C_i \in \GL(k,\mathbb{F}_2)$ for $i \in [2^a]$. The $C_i$ term in the middle corresponds to in-block $\overline{\mathrm{CNOT}}$ gates on the codeblocks independently. It then remains to implement the lower- and upper-block-triangular block matrices in this decomposition. To reduce depth, we seek to parallelize the required interblock $\overline{\mathrm{CNOT}}$ interactions as far as possible. This can be done through cyclic scheduling~\cite{malcolm2025computing}, which amounts to decomposing the block matrix into a sum of matchings on the complete directed graph of codeblocks. All $\overline{\mathrm{CNOT}}$ interactions in a matching are on disjoint pairs of codeblocks, so each matching is fully parallelizable; the overall depth of the implementation is then determined by the number of matchings (i.e.~rounds) required. As a concrete example, a cyclic scheduling for an upper-block-triangular term where $V$ is $4k \times 4k$ is
    \begin{equation}\begin{split}
        \left(\begin{array}{c|c} 
            \mathbb{I} & V \\ 
            \hline 0 & \mathbb{I} 
        \end{array}\right)
        &= \left(\begin{array}{c|c} 
            \mathbb{I} & V^{(1)} + V^{(2)} + V^{(3)} + V^{(4)} \\ 
            \hline 0 & \mathbb{I} 
        \end{array}\right)
        = \left(\begin{array}{c|c} 
            \mathbb{I} & V^{(1)} \\ 
            \hline 0 & \mathbb{I} 
        \end{array}\right)
        \left(\begin{array}{c|c} 
            \mathbb{I} & V^{(2)} \\ 
            \hline 0 & \mathbb{I} 
        \end{array}\right)
        \left(\begin{array}{c|c} 
            \mathbb{I} & V^{(3)} \\ 
            \hline 0 & \mathbb{I} 
        \end{array}\right)
        \left(\begin{array}{c|c} 
            \mathbb{I} & V^{(4)} \\ 
            \hline 0 & \mathbb{I} 
        \end{array}\right),
    \end{split}\end{equation}
    where
    \begin{equation}\begin{split}
        V
        &= \begin{psmallmatrix} 
            A_{1,1} & A_{1,2} & A_{1,3} & A_{1,4} \\
            A_{2,1} & A_{2,2} & A_{2,3} & A_{2,4} \\
            A_{3,1} & A_{3,2} & A_{3,3} & A_{3,4} \\
            A_{4,1} & A_{4,2} & A_{4,3} & A_{4,4} \\
        \end{psmallmatrix}
        = \underbrace{\begin{psmallmatrix} 
            A_{1,1} & 0 & 0 & 0 \\
            0 & A_{2,2} & 0 & 0 \\
            0 & 0 & A_{3,3} & 0 \\
            0 & 0 & 0 & A_{4,4} \\
        \end{psmallmatrix}}_{V^{(1)}}
        + \underbrace{\begin{psmallmatrix} 
            0 & A_{1,2} & 0 & 0 \\
            A_{2,1} & 0 & 0 & 0 \\
            0 & 0 & 0 & A_{3,4} \\
            0 & 0 & A_{4,3} & 0 \\
        \end{psmallmatrix}}_{V^{(2)}}
        + \underbrace{\begin{psmallmatrix} 
            0 & 0 & A_{1,3} & 0 \\
            0 & 0 & 0 & A_{2,4} \\
            A_{3,1} & 0 & 0 & 0 \\
            0 & A_{4,2} & 0 & 0 \\
        \end{psmallmatrix}}_{V^{(3)}}
        + \underbrace{\begin{psmallmatrix} 
            0 & 0 & 0 & A_{1,4} \\
            0 & 0 & A_{2,3} & 0 \\
            0 & A_{3,2} & 0 & 0 \\
            A_{4,1} & 0 & 0 & 0 \\
        \end{psmallmatrix}}_{V^{(4)}}.
    \end{split}\end{equation}
    
    Each $V^{(j)}$ round is implementable with fully parallelized unidirectional $\overline{\mathrm{CNOT}}$ circuits between disjoint pairs of codeblocks. In general, a $V$ that is $2^b k \times 2^b k$ in size requires $2^b$ rounds in its cyclic scheduling.

    In the decomposition of \cref{eq:recursive_PLDU}, $U_0$ is $2^{a-1} k \times 2^{a-1} k$ in size and therefore its cyclic scheduling takes $2^{a-1}$ rounds. Each round contains only lower- or upper-block-triangular submatrices, corresponding to unidirectional $\overline{\mathrm{CNOT}}$ circuits between disjoint pairs of codeblocks, and so takes depth at most two to implement by \cref{lemma:phantom_css_interblock_cnot_circuits_two_codeblocks} with parallelization. For the next term, the submatrices $U_{1,1}, U_{1,2}$ are $2^{a-2} k \times 2^{a-2} k$ in size each and their cyclic schedules can be performed in parallel; only $2^{a-2}$ rounds are needed in their schedules. Continuing this reasoning to all the upper-block-triangular matrices in \cref{eq:recursive_PLDU}, we find that the total number of rounds needed is at most $2^{a-1}+2^{a-2}+\cdots+1 = 2^a-1$, for a total depth of at most $2 (2^a-1)$. The same is true for the lower-block-triangular matrices. Therefore, the overall depth required is at most $4 (2^a-1)$ up to a residual permutation of logical qubits $P$.

    When the desired $\overline{\mathrm{CNOT}}$ circuit contains only unidirectional $\overline{\mathrm{CNOT}}$ gates, $X$ is itself either lower- or upper-block-triangular. Then the decomposition of \cref{eq:recursive_PLDU} contains only either the lower- or upper-block-triangular half, and there is no permutation $P$ present. Therefore, the overall depth required is at most $2 (2^a-1)$ whilst maintaining ordering of the logical qubits.
\end{proof}

While the worst-case depth up to residual permutation $P$ is $k$-independent, in cases where the permutation must be performed physically, $P$ could take a worst-case depth proportional to $k$ to implement. This has been stated in \cref{remark:phantom_css_interblock_cnot_circuits} of the main text, reproduced below for reference.

\MainRemarkInterblockCNOT*

\begin{proof}
We first decompose the permutation of logical qubits into two layers of involutions~\cite[Thm.~XI.4]{malcolm2025computing}. An involution layer can be understood as a product of $\overline{\mathrm{SWAP}}$s on disjoint supports that can be performed in parallel. The $\overline{\mathrm{SWAP}}$s that take place within the same codeblock can be compiled away for phantom codes (i.e.~by writing $\overline{\mathrm{SWAP}}_{ij} = \overline{\mathrm{CNOT}}_{ij} \overline{\mathrm{CNOT}}_{ji} \overline{\mathrm{CNOT}}_{ij}$ and the qubit permutations implementing the in-block $\overline{\mathrm{CNOT}}$ gates are pulled through the physical circuit), so we henceforth deal only with interblock $\overline{\mathrm{SWAP}}$s.

For each involution layer, we construct a graph where each codeblock serves as a vertex. We add an edge between two vertices if there exists interblock $\overline{\mathrm{SWAP}}$ gates between the two corresponding codeblocks\footnote{This is different from the multigraph construction in Ref.~\cite[Thm.~XI.4]{malcolm2025computing} where an edge is added for \emph{every} interblock $\overline{\mathrm{SWAP}}$ gate. A multigraph of degree $k$ could take up to $3k/2$ colours for edge colouring using Shannon's theorem. The reason that Ref.~\cite{malcolm2025computing} uses a multigraph is that a single $\overline{\mathrm{CNOT}}$ or $\overline{\mathrm{SWAP}}$ gate between two of SHYPS codeblocks is of constant depth; but were they to use the simple graph construction employed here, there will in general be arbitrary $\overline{\mathrm{CNOT}}$ or $\overline{\mathrm{SWAP}}$ circuits between two codeblocks, which would cost $\mathcal{O}(k)$ depth in the worst case on SHYPS codes.}. This graph has degree at most $k$, hence an edge-colouring of the graph requires at most $k+1$ colours by Vizing's theorem. An edge between two vertices represent a $\overline{\mathrm{SWAP}}$ circuit between two codeblocks; edges labelled by the same colour can be performed in parallel. For each colour, we use \cref{lemma:phantom_css_interblock_swap_circuits_two_codeblocks} to exactly implement the $\overline{\mathrm{SWAP}}$ circuits in depth four.

Therefore, the total depth required for an arbitrary permutation is $2 \times (k+1) \times 4 = 8k + 8$; this number is independent of the number of codeblocks.
\end{proof}

\begin{lemma}
\label{lemma:phantom_css_interblock_swap_circuits_two_codeblocks}
Arbitrary $\overline{\mathrm{SWAP}}$ circuits between two codeblocks of a CSS phantom code can be performed in physical depth at most four.
\end{lemma}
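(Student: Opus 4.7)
The plan is to reduce the problem to a canonical form via a double-coset decomposition of $\S_{2k}$, and then exhibit an explicit depth-four physical decomposition of the resulting representative.

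First, any $\overline{\mathrm{SWAP}}$ circuit between the two codeblocks realizes some permutation $\sigma$ of the $2k$ logical qubits. The double cosets $(\S_k \times \S_k) \backslash \S_{2k} / (\S_k \times \S_k)$ are indexed by an integer $t \in \{0, 1, \ldots, k\}$ counting the number of qubits of the first codeblock sent to the second by $\sigma$; a canonical representative is $\tau_t$, which swaps $A_i \leftrightarrow B_i$ for $i \in \{1, \ldots, t\}$ and fixes all other logical qubits. Hence $\sigma = \pi^{(2)} \cdot \tau_t \cdot \pi^{(1)}$ for some $t$ and in-block permutations $\pi^{(1)}, \pi^{(2)} \in \S_k \times \S_k$. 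Since in-block permutations on a phantom code are absorbed at the compilation stage with zero physical cost, it suffices to implement each $\tau_t$ in physical depth at most four.

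Second, setting $E_t = \diag(I_t, 0_{k-t})$, the top-left symplectic block of $\tau_t$ takes the form $\bigl(\begin{smallmatrix} I - E_t & E_t \\ E_t & I - E_t \end{smallmatrix}\bigr)$, and direct multiplication (using $I + E_t = I - E_t$ in $\mathbb{F}_2$) verifies the identity
\begin{equation*}
    \tau_t
    = \left(\begin{array}{c|c} I & I \\ \hline 0 & I \end{array}\right)
      \left(\begin{array}{c|c} I & 0 \\ \hline E_t & I \end{array}\right)
      \left(\begin{array}{c|c} I & I \\ \hline 0 & I \end{array}\right).
\end{equation*}
The outer factors are standard transversal $\overline{\mathrm{CNOT}}$s between the two codeblocks, each of physical depth one. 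For the middle factor: when $t = k$, the off-diagonal $E_k = I$ is invertible and the factor is itself a single transversal $\overline{\mathrm{CNOT}}$, giving total depth three; when $0 < t < k$, $E_t$ is not invertible, in which case I would invoke Ref.~\cite[Lemma~IX.7]{malcolm2025computing} to write $E_t = C_1 + C_2$ with $C_1, C_2 \in \GL(k, \mathbb{F}_2)$, yielding
\begin{equation*}
    \left(\begin{array}{c|c} I & 0 \\ \hline E_t & I \end{array}\right)
    = \left(\begin{array}{c|c} I & 0 \\ \hline C_1 & I \end{array}\right)
      \left(\begin{array}{c|c} I & 0 \\ \hline C_2 & I \end{array}\right),
\end{equation*}
where each factor is a transversal $\overline{\mathrm{CNOT}}$ conjugated by in-block $\overline{\mathrm{CNOT}}$ circuits (depth one each). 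Aggregating contributions yields physical depth $1 + 2 + 1 = 4$, and the case $t = 0$ reduces trivially to depth zero.

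The principal technical step is the decomposition of the non-invertible $E_t$ (for $0 < t < k$) into a sum of two invertible matrices, which parallels the mechanism used in the proof of \cref{lemma:phantom_css_interblock_cnot_circuits_two_codeblocks} to handle non-invertible off-diagonals in the bidirectional CNOT case. The remainder is routine matrix algebra combined with the phantom-code property that all in-block $\overline{\mathrm{CNOT}}$ circuits, including permutations, incur zero physical depth.
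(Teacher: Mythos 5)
Your proof is correct, and it takes a genuinely different route from the paper's. The paper proves this lemma by exhibiting an explicit seven-factor decomposition [its Eq.~\eqref{eq:swap_four_CX}] that swaps logical qubit $2i-1$ of each codeblock using logical qubit $2i$ as an in-block helper; because each swapped qubit consumes a spare logical qubit, the paper must first argue that swapping all $k$ qubits is a free codeblock relabelling, so that only $\ell \le \lfloor k/2\rfloor$ swaps ever need to be performed physically. Your argument replaces both the helper-qubit trick and the $\lfloor k/2\rfloor$ reduction with the block-level identity
\begin{equation*}
    \left(\begin{array}{c|c} I+E_t & E_t \\ \hline E_t & I+E_t \end{array}\right)
    = \left(\begin{array}{c|c} I & I \\ \hline 0 & I \end{array}\right)
      \left(\begin{array}{c|c} I & 0 \\ \hline E_t & I \end{array}\right)
      \left(\begin{array}{c|c} I & I \\ \hline 0 & I \end{array}\right),
\end{equation*}
which is the matrix form of ``SWAP equals three CNOTs,'' and handles the non-invertible middle block with the same sum-of-two-invertibles lemma already used in the proof of \cref{lemma:phantom_css_interblock_cnot_circuits_two_codeblocks}. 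The double-coset reduction to $\tau_t$ is valid (in-block logical permutations are physical qubit permutations on a phantom code and compile away), the matrix identity checks out over $\mathbb{F}_2$, and the depth count $1+2+1=4$ for $0<t<k$ (with depth $3$ at $t=k$ and $0$ at $t=0$) is right. What each approach buys: the paper's proof is fully constructive down to the gate level and exhibits a single parallel circuit for all $\ell$ swaps at once, whereas yours is shorter, treats all $t$ uniformly without case-splitting on $\ell$ versus $\lfloor k/2\rfloor$, reuses the machinery of the CNOT lemma, and crucially produces an exact implementation with no residual logical permutation—which is precisely why the CNOT lemma alone cannot be cited here.
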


\begin{proof}
We start with a case wherein a $\overline{\mathrm{SWAP}}$ is to be implemented between the first logical qubits of the two codeblocks. A naive construction, for example, decomposes the $\overline{\mathrm{SWAP}}$ into three $\overline{\mathrm{CNOT}}$s and implements each using two transversal $\overline{\mathrm{CNOT}}$s, for a total depth of six. Here we show that a depth-four implementation is possible. We rely on the following decomposition

\renewcommand{\arraystretch}{1.3}
\vspace{-12pt}
\begin{equation}
\label{eq:swap_four_CX}
    \Biggl(\begin{array}{c|c} 
    \begin{smallmatrix}0&0\\0&1\end{smallmatrix} & \begin{smallmatrix}1&0\\0&0\end{smallmatrix} \\ \hline 
    \begin{smallmatrix}1&0\\0&0\end{smallmatrix} & 
    \begin{smallmatrix}0&0\\0&1\end{smallmatrix} 
    \end{array}\Biggr)
    =
    \overset{\raisebox{4pt}{\scriptsize 1}}{\underbrace{\Biggl(\begin{array}{c|c} 
    \begin{smallmatrix}1&\\&1\end{smallmatrix} & \begin{smallmatrix}1&\\&1\end{smallmatrix} \\ \hline  & 
    \begin{smallmatrix}1&\\&1\end{smallmatrix} 
    \end{array}\Biggr)}_{\text{transversal}}}
    \overset{\raisebox{4pt}{\scriptsize 2}}{\underbrace{\Biggl(\begin{array}{c|c} 
    \begin{smallmatrix}1&1\\1&0\end{smallmatrix} &  \\ \hline  & 
    \begin{smallmatrix}1&0\\1&1\end{smallmatrix} 
    \end{array}\Biggr)}_{\text{in-block}}}
    \overset{\raisebox{4pt}{\scriptsize 3}}{\underbrace{\Biggl(\begin{array}{c|c} 
    \begin{smallmatrix}1&\\&1\end{smallmatrix} &  \\ \hline   
    \begin{smallmatrix}1&\\&1\end{smallmatrix} &
    \begin{smallmatrix}1&\\&1\end{smallmatrix} 
    \end{array}\Biggr)}_{\text{transversal}}}
    \overset{\raisebox{4pt}{\scriptsize 4}}{\underbrace{\Biggl(\begin{array}{c|c} 
    \begin{smallmatrix}1&1\\1&0\end{smallmatrix} &  \\ \hline  & 
    \begin{smallmatrix}1&0\\1&1\end{smallmatrix} 
    \end{array}\Biggr)}_{\text{in-block}}}
    \overset{\raisebox{4pt}{\scriptsize 5}}{\underbrace{\Biggl(\begin{array}{c|c} 
    \begin{smallmatrix}1&\\&1\end{smallmatrix} &  \\ \hline   
    \begin{smallmatrix}1&\\&1\end{smallmatrix} &
    \begin{smallmatrix}1&\\&1\end{smallmatrix} 
    \end{array}\Biggr)}_{\text{transversal}}}
    \overset{\raisebox{4pt}{\scriptsize 6}}{\underbrace{\Biggl(\begin{array}{c|c} 
    \begin{smallmatrix}1&1\\1&0\end{smallmatrix} &  \\ \hline  & 
    \begin{smallmatrix}1&0\\0&1\end{smallmatrix} 
    \end{array}\Biggr)}_{\text{in-block}}}
    \overset{\raisebox{4pt}{\scriptsize 7}}{\underbrace{\Biggl(\begin{array}{c|c} 
    \begin{smallmatrix}1&\\&1\end{smallmatrix} & \begin{smallmatrix}1&\\&1\end{smallmatrix} \\ \hline  & 
    \begin{smallmatrix}1&\\&1\end{smallmatrix} 
    \end{array}\Biggr)}_{\text{transversal}}},
\end{equation}
\renewcommand{\arraystretch}{1.0}
which corresponds to the circuit
\begin{equation*}
    \begin{quantikz}[row sep={0.4cm,between origins}, column sep={0.5cm,between origins}]
        \lstick[2]{\small codeblock 1}
        & \swap{2} & 
        \\
        & & 
        \\[0.1cm]
        \lstick[2]{\small codeblock 2}
        & \targX{} & 
        \\
        & &
    \end{quantikz}
    \ = \
    \begin{quantikz}[row sep={0.4cm,between origins}, column sep={0.4cm,between origins}]
        & \ctrl{2} &    & & \ctrl{1} & \swap{1}    & & \targ{} &       & & \ctrl{1} & \swap{1}    & & \targ{} &       & & \ctrl{1} & \swap{1}    & & \ctrl{2} &      &
        \\
        & & \ctrl{2}    & & \targ{} & \targX{}     & & & \targ{}       & & \targ{} & \targX{}     & & & \targ{}       & & \targ{} & \targX{}     & & & \ctrl{2}      &
        \\[0.1cm]
        & \targ{} &     & & \targ{}   &  & &  \ctrl{-2} &     & & \targ{}     & & & \ctrl{-2} &     & & &             & & \targ{} &       &
        \\
        & & \targ{}     & & \ctrl{-1} &   & & &  \ctrl{-2}     & & \ctrl{-1}   & & & & \ctrl{-2}     & & &             & & & \targ{}       &
    \end{quantikz}.
\end{equation*}

To avoid clutter we have expressed the above for a $k = 2$ code, but the decomposition trivially generalizes for arbitrary $k$: matrices $1,3,5,7$ remain as transversal $\overline{\mathrm{CNOT}}$s, and the in-block permutation $\overline{\mathrm{CNOT}}$ (matrices $2,4,6$) receive additional identity block matrices for the remaining $k - 2$ logical qubits.

Next, we note that swapping all $k$ logical qubits between the two codeblocks amounts to completely swapping the two codeblocks, which can be performed by relabelling the codeblocks, plus the additional freedom of depth-zero in-block $\overline{\mathrm{SWAP}}$ operations. Therefore, up to codeblock relabelling, we only need to swap at most $\lfloor k/2 \rfloor$ logicals between the two codeblocks. Let $\ell \leq \lfloor k/2 \rfloor$ be the number of $\overline{\mathrm{SWAP}}$s we need to implement.

Since in-block $\overline{\mathrm{SWAP}}$s are free, without loss of generality, we can assume the action to be implemented is to swap the logical qubit $2i-1$ on the first codeblock with that on the second, for all $1\le i\le \ell$. The swapping of the logical qubit $2i-1$ uses the logical qubit $2i$ as in-block ancillary aid, as appears in matrices $2,4,6$ of \cref{eq:swap_four_CX}; the logical qubits $2\ell+1,\dots,k$ are left invariant. This finishes our constructive proof for a depth-four implementation.
\end{proof}

\subsection{Additional permutation logical gate and phantom code properties}
\label{app:basics/properties}

Here we discuss various additional properties of permutation logical gates and phantom codes.

\subsubsection{Structure of permutations implementing logical gates on stabilizer codes}
\label{app:basics/properties/permutations}

First, \cref{prop:permutation_cnot_swap_even_period,prop:permutation_cnot_swap_single_layer_swap} address the form of permutations that implement involutory logical gates (i.e.~a gate which squares to the logical identity) on stabilizer codes. 

\begin{proposition}[Even period of qubit permutations implementing involutory logical gates on stabilizer codes]
\label{prop:permutation_cnot_swap_even_period}
    Any permutation of physical qubits on a stabilizer code implementing an involutory logical gate must be of even period.
\end{proposition}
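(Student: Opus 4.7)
The plan is a short contradiction argument using the homomorphism property of the ``permutation $\mapsto$ logical action'' map. Suppose a permutation $\pi \in \S_n$ has odd period $m$ (so $\pi^m$ is the identity permutation on physical qubits) and implements a nontrivial involutory logical gate $\overline{U}$, in the sense that $\overline{U}^2 = \overline{I}$ but $\overline{U} \neq \overline{I}$. My goal is to derive $\overline{U} = \overline{I}$ from these two assumptions, contradicting nontriviality.

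The first step is to observe that the assignment sending a code-preserving qubit permutation to its induced logical action is a group homomorphism. This is immediate from the stabilizer formalism recalled in \cref{app:basics/phantom_gen}: each $\pi$ acts on Pauli operators by conjugation, preserves the stabilizer group (as guaranteed by the conditions of \cref{prop:phantom_gateset_symplectic_gen}), and therefore descends to a well-defined action on logical operators modulo stabilizers; conjugation by composed permutations factors as composition of the induced actions. Consequently, $\pi^m$ implements $\overline{U}^m$; but $\pi^m$ is the physical identity, which implements $\overline{I}$, so $\overline{U}^m = \overline{I}$ as logical actions.

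The second step exploits the parity hypothesis. Writing $m = 2q+1$ and using $\overline{U}^2 = \overline{I}$ gives $\overline{U}^m = (\overline{U}^2)^q \, \overline{U} = \overline{U}$. Combined with $\overline{U}^m = \overline{I}$ from the previous step, this forces $\overline{U} = \overline{I}$, contradicting the assumed nontriviality of the involution. Therefore the period of $\pi$ must be even.

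I do not anticipate a significant obstacle. The only subtlety worth flagging in the writeup is the convention that an ``involutory'' logical gate has order exactly two; otherwise the identity trivially satisfies $\overline{U}^2 = \overline{I}$ and the statement would be vacuous. A brief remark to this effect at the start of the proof, together with an explicit invocation of the homomorphism property, yields a clean and self-contained argument.
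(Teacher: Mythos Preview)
Your proposal is correct and follows essentially the same contradiction argument as the paper: assume odd period $m=2q+1$, use the involutory property $\overline{U}^2=\overline{I}$ to reduce $\overline{U}^m$ to $\overline{U}$, and conclude $\overline{U}=\overline{I}$ from $\pi^m=\mathrm{id}$. The only cosmetic difference is that the paper phrases the argument in terms of the unitary $U_\pi$ acting on code states rather than via the permutation-to-logical-action homomorphism; your abstract formulation is arguably cleaner, as it sidesteps the need to pick a code state not fixed by the gate.
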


\begin{proof}
    Suppose otherwise, and let the odd period of the qubit permutation $\pi$ be $2 p + 1$. We denote the unitary representation of $\pi$ as $U_\pi$, such that $\pi^{2p + 1}$ is the trivial permutation and $U_\pi^{2p + 1} = \mathbb{I}$. Consider any code state $\ket{\psi}$. First, as the logical gate implemented by $\pi$ is of nontrivial logical action, it must be that $U_\pi \ket{\psi} \neq e^{i \phi} \ket{\psi}$ for any global phase $\phi$. But the logical gate is involutory and square to a logical identity, so we must have $U_\pi^2 \ket{\psi} = e^{i \theta} \ket{\psi}$ for some $\theta$. Now $U_\pi^{2p + 1} \ket{\psi} = U_\pi (U_\pi^2)^p \ket{\psi} = e^{i p \theta} U_\pi \ket{\psi} = \ket{\psi}$, a contradiction. So the period of $\pi$ must be even.
\end{proof}

\begin{proposition}[Single-layer qubit swaps for involutory logical gates on stabilizer codes]
\label{prop:permutation_cnot_swap_single_layer_swap}
    Suppose a stabilizer code does not support any nontrivial permutation of qubits with trivial logical action (i.e.~implements the logical identity). Then all permutation involutory logical gates on the code are implementable through period-two permutations (i.e.~qubit involutions), which correspond to single parallelizable layers of qubit swaps.
\end{proposition}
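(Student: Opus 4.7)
The plan is to leverage the hypothesis on trivial logical actions together with the squaring identity for involutions, which reduces the problem to a one-line group-theoretic observation. Let $\pi$ be a nontrivial qubit permutation implementing an involutory logical gate $\overline{U}$, so that $\overline{U}^2 = \overline{\mathbb{I}}$ in the projective sense. Then the physical permutation $\pi^2$, viewed as a unitary, implements $\overline{U}^2$, which is the logical identity (up to a global phase that is immaterial to logical action, since permutation unitaries are real 0/1 matrices and the stabilizer formalism treats logical gates projectively).

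First, I would observe that $\pi^2$ is itself a qubit permutation and has trivial logical action by the previous sentence. The standing hypothesis of the proposition then forces $\pi^2$ to be the trivial permutation. Hence $\pi$ has order dividing $2$ in $\S_n$, and since it is nontrivial (as it implements a nontrivial $\overline{U}$), its order is exactly $2$. This is consistent with, and in fact sharpens, \cref{prop:permutation_cnot_swap_even_period}, which only guaranteed even period in general.

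Second, I would invoke the elementary fact that every order-$2$ element of $\S_n$ decomposes uniquely into a disjoint union of transpositions together with fixed points. In circuit language, this decomposition corresponds to a collection of physical $\mathrm{SWAP}$ gates acting on pairwise disjoint qubit pairs, which therefore commute and can all be executed in parallel as a single depth-one layer. Composed with the identity on the fixed points, this realizes $\pi$ exactly.

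The only subtlety, which is not a substantial obstacle, is pinning down the meaning of ``trivial logical action'' — specifically that a global phase inherited from $\overline{U}^2 = e^{i\theta}\,\overline{\mathbb{I}}$ does not obstruct the hypothesis. This is handled by the projective convention above; once it is accepted, the proof reduces entirely to the two steps outlined. No appeal to the specific CSS or symplectic structure of \cref{app:basics/phantom_gen,app:basics/phantom_css} is needed.
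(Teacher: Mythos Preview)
Your proof is correct and follows essentially the same argument as the paper: since $\pi^2$ implements the logical identity, the hypothesis forces $\pi^2$ to be the trivial permutation, whence $\pi$ has order two and is a disjoint product of transpositions. The paper phrases this by contradiction (assuming period $>2$) rather than directly, but the content is identical.
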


\begin{proof}
    Suppose there is an involutory logical gate on the code whose implementation requires a qubit permutation $\pi$ of period $p > 2$, such that $\pi^2$ is a nontrivial permutation. But the logical gate is involutory, so $\pi^2$ implements the logical identity but is a nontrivial permutation of physical qubits, a contradiction to the premise.
\end{proof}

\cref{prop:permutation_cnot_swap_even_period,prop:permutation_cnot_swap_single_layer_swap} apply to permutation $\overline{\mathrm{CNOT}}$ and $\overline{\mathrm{SWAP}}$ gates, which are involutory, on phantom codes. As we propose compiling away all physical qubit permutations by pulling them through the physical circuit, the structure of the permutations does not typically matter in practical use. However, in cases where one considers performing some of the qubit permutations physically, perhaps due to circuit compilation or hardware constraints, permutations that take the form of depth-one parallelizable qubit swaps as discussed in \cref{prop:permutation_cnot_swap_single_layer_swap} may be of interest.

\begin{remark}
    There exist stabilizer codes for which period ${>} \, 2$ qubit permutations are required to implement an involutory logical gate set.
\end{remark}

\begin{proof}
    We give the smallest example of a CSS code that requires period ${>} \, 2$ qubit permutations to implement a $\overline{\mathrm{CNOT}}$ gate. Restricting to period-two qubit permutations entails the loss of the $\overline{\mathrm{CNOT}}$. This minimal example is an $\db{10,2,3}$ code defined by stabilizer generator matrices:
    \begin{equation}
        H_\mathrm{x} = \left[\begin{array}{ccccccccccc}
            1 & 1 & 1 & 1 & 0 & 0 & 0 & 0 & 0 & 0 \\
            0 & 0 & 1 & 0 & 1 & 1 & 1 & 0 & 0 & 0 \\
            0 & 1 & 0 & 0 & 1 & 0 & 0 & 1 & 1 & 0 \\
            0 & 0 & 0 & 1 & 0 & 0 & 1 & 1 & 0 & 1
        \end{array}\right],
        \qquad
        H_\mathrm{z} = \left[\begin{array}{ccccccccccc}
            1 & 1 & 0 & 0 & 1 & 1 & 0 & 0 & 0 & 0 \\
            0 & 0 & 1 & 1 & 1 & 0 & 0 & 1 & 0 & 0 \\
            0 & 0 & 0 & 0 & 0 & 1 & 1 & 1 & 1 & 0 \\
            0 & 1 & 0 & 1 & 0 & 0 & 0 & 0 & 1 & 1
        \end{array}\right].
    \end{equation}

    This is the smallest-$n$ example found via exhaustive code enumeration; determination of the permutation logical gate set of the code was by SAT solving (see \cref{app:enumeration}). Our search also uncovered $\db{n,k}$ CSS codes for which restricting to period-two qubit permutations incurs a reduction in the size of their permutation logical gate set for all $n=11,12,13,14$ and $k=2,3$, and all $n=12,13,14$ for $k=4$; presumably examples likewise exist at higher $k$ but we did not cover those $k$ in our search.
\end{proof}

\subsubsection{Permutation \texorpdfstring{$\overline{\mathrm{CNOT}}$}{CNOT} gate-set-preserving logical basis changes on phantom codes}
\label{app:basics/properties/permutation_preserving_logical_basis_changes}

A phantom code realizes a complete set of individually addressable $\overline{\mathrm{CNOT}}$ logical gates via qubit permutations for some logical basis. A natural question to be asked is what other logical bases can be used on the code so that the \emph{same set} of qubit permutations likewise \emph{generate} a complete set of individually addressable $\overline{\mathrm{CNOT}}$ gates.

\begin{definition}[Phantom-gate-set-preserving logical basis change]
\label{definition:basis_change}
Let $M$ be a set of physical qubit permutations that realize the logical gate set $G = \{\overline{\mathrm{CNOT}}_{a b}: (a, b) \in [k]^2, a \neq b\}$ on a phantom stabilizer code for some choice of logical basis, and $\mathcal{M}$ be the group generated by $M$. A phantom-gate-preserving logical basis change is a transformation of the logical basis, so that a set of permutations $M' \subset \mathcal{M}$ still realizes $G$ on the code in the transformed logical basis.
\end{definition}

Note that, in \cref{definition:basis_change}, the logical action of each permutation in $M$ may be different in the transformed logical basis---i.e.~a permutation may now implement a different $\overline{\mathrm{CNOT}}$ gate or a product of $\overline{\mathrm{CNOT}}$ gates---but we demand that the complete set of $\overline{\mathrm{CNOT}}$ gates on the code can still be implemented by products of permutations in $M$.

We give two immediate remarks. First, phantom-gate-set-preserving logical basis changes, by their nature, form a group. Second, the notion of a phantom-gate-set-preserving logical basis change is stricter than that of a \emph{phantomness-preserving} logical basis change---a logical basis change of the first type is also necessarily one of the second type, but not the other way around. The reason is that a code that is phantom in a logical basis could still be phantom in a different logical basis using a completely \emph{different} set of qubit permutations, not in the span of the original, to implement the $\overline{\mathrm{CNOT}}$ gate set, but this case is not considered in \cref{definition:basis_change}. To re-iterate the point: phantom-gate-set-preserving logical basis changes as defined in \cref{definition:basis_change} suffice to preserve the phantom property of a stabilizer code but may not be necessary.

\Cref{definition:basis_change} places nontrivial restrictions on the admissible structure of logical basis change circuits, especially in regard to Hadamard-type gates in the circuits. We provide a definitive characterization of such circuits in \cref{prop:basis_change}.

\begin{theorem}
\label{prop:basis_change}
Phantom-gate-set-preserving logical basis-change circuits are generated by the following logical gates: (a)~for $k=2$ codes, $\smash{\overline{H}}^{\otimes 2}$, $\overline{\mathrm{CZ}}$, and $\overline{\mathrm{CNOT}}$ gates; or (b)~for $k\ge 3$ codes, $H^{\otimes k}$ and $\overline{\mathrm{CNOT}}$ gates.
\end{theorem}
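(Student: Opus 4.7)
The plan is to recast the claim as a normalizer computation inside the Clifford group and then reduce the computation, via the symplectic representation, to a classification of $\mathrm{GL}(k, \mathbb{F}_2)$-invariant symmetric bilinear forms.

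First, I would observe that a logical-basis change is a Clifford circuit $V$ on the $k$ logical qubits, under which each $\pi \in \mathcal{M}$ now acts logically as $V \overline{U}_\pi V^\dagger$ rather than $\overline{U}_\pi$. The image of $\mathcal{M}$ in the logical Clifford group (modulo logical Paulis) is exactly the CNOT subgroup $\mathcal{G} := \langle \overline{\mathrm{CNOT}}_{ab} : a \neq b \rangle$, since by hypothesis $M$ realizes the generating set $G$. Demanding that a subset $M' \subseteq \mathcal{M}$ still realizes every $\overline{\mathrm{CNOT}}_{ab}$ in the new basis is equivalent to $\mathcal{G} \subseteq V \mathcal{G} V^\dagger$, which, since $V \mathcal{G} V^\dagger$ has the same cardinality as $\mathcal{G}$, forces equality. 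Hence $V$ is phantom-gate-set-preserving iff it normalizes $\mathcal{G}$ in the Clifford group modulo Paulis, and the theorem reduces to computing this normalizer.

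Second, in the symplectic representation reviewed in \cref{app:basics/symplectic}, the Clifford group modulo Paulis is $\mathrm{Sp}(2k, \mathbb{F}_2)$ and $\mathcal{G}$ embeds as $\{\diag(A, A^{-\mathsf{T}}) : A \in \mathrm{GL}(k, \mathbb{F}_2)\}$. I would write a general normalizer element as $F = \left(\begin{smallmatrix} F_1 & F_2 \\ F_3 & F_4 \end{smallmatrix}\right) \in \mathrm{Sp}(2k, \mathbb{F}_2)$ and directly compute $F \cdot \diag(A, A^{-\mathsf{T}}) \cdot F^{-1}$; requiring this conjugate to again take the form $\diag(B, B^{-\mathsf{T}})$ for every $A \in \mathrm{GL}(k, \mathbb{F}_2)$ forces the two off-diagonal blocks of the conjugate to vanish identically in $A$. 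In the stratum where $F_1$ is invertible, substituting $M := F_1^{-1} F_2$ and using the symplectic relations between the blocks reduces the top-right vanishing condition to $A M A^\mathsf{T} = M^\mathsf{T}$ for all $A \in \mathrm{GL}(k, \mathbb{F}_2)$; the choice $A = I$ forces $M$ symmetric, so the condition becomes the invariance $A M A^\mathsf{T} = M$ of a symmetric bilinear form, and an entirely analogous condition emerges for $N := F_3 F_4^{-1}$ from the bottom-left block. The remaining Bruhat cells where $F_1$ or $F_4$ is non-invertible are handled by first multiplying $F$ by $H^{\otimes k}$, whose symplectic image $\Omega$ interchanges $F_1 \leftrightarrow F_3$ and $F_2 \leftrightarrow F_4$ and which already normalizes $\mathcal{G}$ via $A \mapsto A^{-\mathsf{T}}$, reducing every case to the invertible one.

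Third, the proof therefore reduces to classifying the $\mathrm{GL}(k, \mathbb{F}_2)$-invariant symmetric bilinear forms over $\mathbb{F}_2$. I claim the only such invariants are $M = 0$ for $k \geq 3$ and $M \in \{0, J\}$ with $J = \left(\begin{smallmatrix} 0 & 1 \\ 1 & 0 \end{smallmatrix}\right)$ for $k = 2$. The $k = 2$ statement is verified by directly checking invariance on the two generators of $\mathrm{GL}(2, \mathbb{F}_2) \cong S_3$. For $k \geq 3$, any nonzero symmetric $M$ has a proper orthogonal isotropy $\mathrm{O}(M) \subsetneq \mathrm{GL}(k, \mathbb{F}_2)$, so it cannot be globally $\mathrm{GL}(k, \mathbb{F}_2)$-fixed; equivalently, an explicit elementary upper-triangular $A$ acting by $M \mapsto A M A^\mathsf{T}$ changes an off-diagonal entry of $M$. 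This invariant-form classification is the step I expect to be the main obstacle, since the existence of the $k = 2$ invariant $J$ is a small-rank coincidence specific to characteristic two, where $J$ is simultaneously symmetric and alternating and happens to satisfy $\mathrm{O}(J) = \mathrm{GL}(2, \mathbb{F}_2)$.

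Finally, reassembling the classification, for $k \geq 3$ the normalizer is generated by the block-diagonal Levi $\mathcal{G}$ together with the Lagrangian-swap $H^{\otimes k}$, reproducing the theorem's generators $\overline{H}^{\otimes k}$ and $\overline{\mathrm{CNOT}}$. For $k = 2$, the additional solution $M = J$ corresponds precisely to the element $\overline{\mathrm{CZ}}$ (whose Hadamard conjugate, corresponding to $N = J$, is already expressible via $\overline{H}^{\otimes 2}$ and $\overline{\mathrm{CZ}}$), yielding the stated extended generating set $\{\overline{H}^{\otimes 2}, \overline{\mathrm{CZ}}, \overline{\mathrm{CNOT}}\}$.
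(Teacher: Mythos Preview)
Your approach is essentially the paper's: both reduce the question to determining which symplectic $R$ conjugate the embedded $\mathrm{GL}(k,\mathbb{F}_2)$ Levi into itself, by requiring the off-diagonal blocks of $R\,\diag(A,\invtrans{A})\,R^{-1}$ to vanish for all $A$. The paper's key step is \cref{lemma:basis_change_matrix}, which shows directly (via elementary matrices $I+E_{ij}$) that for $k\ge 3$ the condition $A E \trans{B}+B \invtrans{E}\trans{A}=0$ for all $E\in\mathrm{GL}(k,\mathbb{F}_2)$ forces $A=0$ or $B=0$, with no invertibility hypothesis on $A$; this is your invariant-form statement $AMA^{\mathsf T}=M\Rightarrow M=0$ after factoring out the Levi, but phrased so as not to need the Bruhat reduction. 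Your step ``multiply by $H^{\otimes k}$ to reach the invertible-$F_1$ cell'' only swaps $F_1\leftrightarrow F_3$ and leaves uncovered the case where both are singular but nonzero; the paper's lemma sidesteps this by working with the full-rank pair $[A\mid B]$ directly, so you would need either to import that argument or to rule out intermediate Bruhat cells a~priori. For $k=2$, your invariant-form treatment---identifying $J$ as the unique nonzero $\mathrm{GL}(2,\mathbb{F}_2)$-invariant symmetric form via $\mathrm{GL}(2,\mathbb{F}_2)=\mathrm{Sp}(2,\mathbb{F}_2)$---is cleaner than the paper's brute-force enumeration of all $36$ admissible symplectic matrices.
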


\begin{proof}
    Let the logical basis change be $R \in \Sp(2k, \mathbb{F}_2)$ in the symplectic representation (see \cref{app:basics/symplectic}). Recall that $R$ being a symplectic matrix means $\trans{R} \Omega R = \Omega$. Let $Q = \smash{\binom{Q_\mathrm{x}}{Q_\mathrm{z}}}$ where $Q_\mathrm{x}, Q_\mathrm{z} \in \mathbb{F}_2^{k \times 2n}$ be a logical basis in which the code is phantom, and let $M$ be a set of permutations that implements the complete set of individually addressable $\overline{\mathrm{CNOT}}$ gates on the code. Denote by $\mathcal{M}$ the group generated by $M$. We assess whether there exists $M' \subset \mathcal{M}$ that implements the complete set of $\overline{\mathrm{CNOT}}$s in the transformed logical basis $Q' = R Q$. We consider different gate generators for $R$:
    \begin{itemize}
        
        \item $R$ is generated by $\overline{\mathrm{CNOT}}$ gates. This case subsumes the setting of \cref{prop:phantomness_css_logical_basis_independence}. Recall that $\mathrm{CNOT}$ circuits on $k$ qubits take the form $\diag(A, \invtrans{A})$ for $A \in \GL(k, \F_2)$ in the symplectic representation (see \cref{app:basics/symplectic}). Thus $R = \diag(C, \invtrans{C})$ for some $C \in \GL(k, \F_2)$. Let us consider a permutation $P^{(i)} \in M$, which implements a $\overline{\mathrm{CNOT}}$ gate described by $\diag(F_i, \invtrans{F_i})$ in the $Q$ logical basis. By \cref{eq:phantom_logic} of \cref{prop:phantom_gateset_symplectic_gen}, this means
        \begin{equation}
            Q \big( P^{(i)} \oplus P^{(i)} \big) \Omega \trans{Q}
            = \mqty[F_i & \\ & \invtrans{F_i}] \Omega.
        \end{equation}
    
        Now, in the $Q'$ logical basis,
        \begin{equation}\begin{split}
            Q' \big( P^{(i)} \oplus P^{(i)} \big) \Omega \trans{Q'}
            &= R Q \big( P^{(i)} \oplus P^{(i)} \big) \Omega \trans{Q} \trans{R} \\
            &= \mqty[C & \\ & \invtrans{C}] 
                \mqty[F_i & \\ & \invtrans{F_i}] \Omega 
                \mqty[\trans{C} & \\ & C^{-1}]
            = \mqty[C F_i C^{-1} & \\ & \invtrans{\left(C F_i C^{-1}\right)}] \Omega.
            \label{eq:basis_change_action_in_transformed_basis}
        \end{split}\end{equation}
    
        As the code is phantom in the $Q$ logical basis, $\spn(\{F_i\})$ for all the permutations in $M$ is precisely $\GL(k, \F_2)$, the space of $\overline{\mathrm{CNOT}}$ circuits on the $k$ logical qubits. Conjugating $\GL(k, \F_2)$ by a matrix $C \in \GL(k, \F_2)$ does not change the group. It must therefore be possible to implement a complete set of individually addressable $\overline{\mathrm{CNOT}}$ gates in the $Q'$ logical basis using some $M' \subset \mathcal{M}$, whose elements are generically products of permutations in $M$.
    
        \item In the same way as above, we can check that $R=\left[\begin{smallmatrix} & \mathbb{I} \\ \mathbb{I} & \end{smallmatrix}\right]$ corresponding to a $H^{\otimes k}$ circuit, and $R = \left[\begin{smallmatrix} \mathbb{I} & X \\ \mathbb{I} & 0 \end{smallmatrix}\right]$ for $k = 2$ where $\mathbb{I}, X$ are $2 \times 2$ Pauli matrices, are phantom-gate-set-preserving.
        
    \end{itemize}
    
    As phantom-gate-set-preserving logical basis changes form a group, all circuits generated by the gates above are phantom-gate-set-preserving. To show that only these circuits are admissible, consider an arbitrary basis change $R=\left[\begin{smallmatrix} A & B \\ C & D \end{smallmatrix}\right]$. Substituting into \cref{eq:basis_change_action_in_transformed_basis},
    \begin{equation}
        \begin{bmatrix} A & B \\ C & D \end{bmatrix} 
        \begin{bmatrix} F_i & \\ & \invtrans{F_i} \end{bmatrix}
        \Omega 
        \begin{bmatrix} \trans{A} & \trans{C} \\ \trans{B} & \trans{D} \end{bmatrix} =  
        \begin{bmatrix} 
            A F_i \trans{B} + B \invtrans{F_i} \trans{A} & A F_i \trans{D} + B \invtrans{F_i} \trans{C} 
            \\ 
            C F_i \trans{B} + D \invtrans{F_i} \trans{A} & C F_i \trans{D} + D \invtrans{F_i} \trans{C} \end{bmatrix},
    \end{equation}
    and we require this logical action to take the form 
    \begin{equation}
        \begin{bmatrix} F_{\sigma(i)} & \\  & \invtrans{F_{\sigma(i)}} \end{bmatrix} \Omega
        = \begin{bmatrix} & F_{\sigma(i)}  \\  \invtrans{F_{\sigma(i)}} & \end{bmatrix},
    \end{equation}
    for some $F_{\sigma(i)} \in \GL(k, \F_2)$. We treat the $k = 2$ and $k \geq 3$ cases separately:
    
    \begin{itemize}
        
        \item For $k=2$, we enumerate all admissible $R$. Concretely, we enumerate all $4\times 4$ binary matrix $R$ such that $\trans{R} \Omega R = \Omega$ and $A E \trans{B} + B \invtrans{E} \trans{A} = C E \trans{D} + D \invtrans{E} \trans{C} = 0$ for all $E\in\GL(k,\mathbb{F}_2)$.
        The results are 36 matrices in the six families below
        \begin{equation}
            \begin{bmatrix} E & \\ & E^{-\top} \end{bmatrix},\quad
            \begin{bmatrix} & E \\ E^{-\top} & \end{bmatrix},\quad
            \begin{bmatrix} E & EX \\ & E^{-\top} \end{bmatrix},\quad
            \begin{bmatrix} & E \\ E^{-\top} &E^{-\top} X \end{bmatrix},\quad
            \begin{bmatrix} E & \\ E^{-\top} X &E^{-\top} \end{bmatrix},\quad
            \begin{bmatrix} E & EX \\ E^{-\top} X & \end{bmatrix},
        \end{equation}
        where $E$ is any matrix in $\GL(k,\mathbb{F}_2)$ in each family. These $R$ are precisely circuits generated by $H \otimes H$, $\mathrm{CZ}$, and $\mathrm{CNOT}$ gates.

        \item For $k\geq3$, we note that the matrices $A$ and $B$ satisfies the conditions in \cref{lemma:basis_change_matrix}, to be proved later, and so do $C$ and $D$. (The full rank condition of the matrices is satisfied as $\trans{R} \Omega R = \Omega$.) Applying \cref{lemma:basis_change_matrix}, and noting that $A$ and $C$ cannot be both zero, $R$ can only be 
        \begin{equation}
            \begin{bmatrix} A & \\ & D \end{bmatrix},\quad
            \begin{bmatrix} & B \\ C & \end{bmatrix}.
        \end{equation}

        Further noting that $\trans{R} \Omega R = \Omega$, it must be that $D = \invtrans{A}$ in the first case with $A \in \GL(k,\mathbb{F}_2)$, or $C = \invtrans{B}$ with $B \in \GL(k,\mathbb{F}_2)$ in the second case. These correspond to a $\overline{\mathrm{CNOT}}$ circuit, and a $\overline{\mathrm{CNOT}}$ circuit followed by $\smash{\overline{H}}^{\otimes k}$, respectively.
    
    \end{itemize}
    
\end{proof}

For example, \cref{table:422_all_logical} shows the canonical choice of logical basis for the $\db{4,2,2}$ code.
However, we can choose a new basis corresponding to a basis-change circuit consisting of a single $\overline{\mathrm{CZ}}$ gate: $\overline{X}'_1=\overline{X_1Z_2}$, $\overline{X}'_2=\overline{X_2Z_1}$, $\overline{Z}'_1=\overline{Z_1}$, and $\overline{Z}'_2=\overline{Z_2}$.
Then, the same set of permutation still realizes all logical CNOT circuits.
We finish by proving a lemma required in the proposition above.

\begin{lemma} 
\label{lemma:basis_change_matrix}
Let $A$ and $B$ be binary matrices of size $k \times k$ with $k \ge 3$. 
If (1) the augmented matrix $[\begin{matrix} A \, | \, B \end{matrix}]$ has full rank, and
(2) for all $E \in \GL(k, \mathbb{F}_2)$, $A E \trans{B} + B \invtrans{E} \trans{A} = 0$;
then either $A = 0$ or $B = 0$.
\end{lemma}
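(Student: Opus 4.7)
The plan is to probe the condition $AE\trans{B} + B\invtrans{E}\trans{A} = 0$ with carefully chosen invertible $E$ in order to show that every rank-one outer product $a_i \trans{b_j}$ is symmetric over $\F_2$ (here $a_i, b_j$ denote the $i$-th and $j$-th columns of $A$ and $B$), and then to exploit this to bound $\mathrm{rank}\,[A \mid B]$.

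First, taking $E = \mathbb{I}$ yields $A\trans{B} = B\trans{A}$. Next, for each pair $i \neq j$, I would plug in the shear $E = \mathbb{I} + e_i \trans{e_j}$, which is an involution over $\F_2$ and is therefore manifestly invertible with $\invtrans{E} = \mathbb{I} + e_j \trans{e_i}$. Expanding the condition and subtracting the $E = \mathbb{I}$ identity leaves the off-diagonal relations $a_i \trans{b_j} = b_j \trans{a_i}$ for all $i \neq j$. The diagonal relations $a_i \trans{b_i} = b_i \trans{a_i}$ are invisible to shears alone, so reaching them requires a more careful perturbation.

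To obtain them, I would use $E = \mathbb{I} + e_i \trans{e_i} + e_i \trans{e_j} + e_j \trans{e_i}$ for $i \neq j$. A direct $2\times 2$ block inversion confirms invertibility and yields $\invtrans{E} = \mathbb{I} + e_j \trans{e_j} + e_i \trans{e_j} + e_j \trans{e_i}$. Expanding the condition and cancelling using the identities already established leaves $a_i \trans{b_i} = \trans{(a_j \trans{b_j})}$ for all $i \neq j$. This is precisely where $k \geq 3$ enters: picking three distinct indices $i, j, l$, two applications of this identity force $a_l \trans{b_l}$ to equal its own transpose, so $a_i \trans{b_i} = b_i \trans{a_i}$ for each $i$. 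Combined with the off-diagonal case, every $a_i \trans{b_j}$ is symmetric.

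A nonzero rank-one matrix $u\trans{v}$ over $\F_2$ is symmetric iff $u$ and $v$ are proportional, which over $\F_2$ means $u = v$. Supposing both $A \neq 0$ and $B \neq 0$, fix $i_0, j_0$ with $a_{i_0}, b_{j_0}$ nonzero; symmetry of $a_{i_0}\trans{b_{j_0}}$ forces $a_{i_0} = b_{j_0}$, and the symmetries of $a_i \trans{b_{j_0}}$ and $a_{i_0}\trans{b_j}$ then confine every column of $A$ and of $B$ to $\{0, a_{i_0}\}$. Hence $\mathrm{rank}\,[A \mid B] \leq 1 < k$, contradicting the full-rank hypothesis. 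The main obstacle is guessing the correct invertible perturbation for the diagonal terms: shears alone leave a residual freedom that only collapses after combining a transposition with a diagonal rank-one update, and the hypothesis $k \geq 3$ is essential because with only two columns one cannot triangulate $a_i \trans{b_i} = \trans{(a_j \trans{b_j})}$ into individual symmetry of each $a_i \trans{b_i}$.
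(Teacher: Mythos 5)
Your proof is correct, and it shares its opening with the paper's: both probe the identity with $E = \mathbb{I}$ to get $A\trans{B} = B\trans{A}$ and with the involutive shears $\mathbb{I} + e_i\trans{e_j}$ to extract the off-diagonal symmetry relations $a_i\trans{b_j} = b_j\trans{a_i}$ for $i \neq j$, and both then invoke the $\mathbb{F}_2$ fact that a symmetric rank-one product forces its two vectors to be linearly dependent. The divergence is in the endgame. The paper stops probing there: it never establishes the diagonal relations $a_i\trans{b_i} = b_i\trans{a_i}$, and instead runs a two-case analysis (nonzero columns $a_i$, $b_j$ found at the same index versus different indices) to show that all columns of $[\,A \mid B\,]$ lie in a two-dimensional span, contradicting full rank since $k \ge 3$. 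You instead introduce the additional invertible probe $\mathbb{I} + e_i\trans{e_i} + e_i\trans{e_j} + e_j\trans{e_i}$ (whose inverse-transpose you compute correctly) to derive $a_i\trans{b_i} = \trans{(a_j\trans{b_j})}$, and use three distinct indices to triangulate this into symmetry of every diagonal product; with symmetry of \emph{all} products $a_i\trans{b_j}$ in hand, the contradiction collapses to $\mathrm{rank}\,[\,A\mid B\,] \le 1$ with no case split. The trade-off is that your route spends the hypothesis $k \ge 3$ earlier (in the triangulation step) and requires verifying invertibility of an extra matrix family, but buys a cleaner, case-free conclusion; the paper's route uses fewer probes but pays with a slightly fiddlier case analysis and needs $k \ge 3$ only at the final rank comparison. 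Both correctly fail at $k = 2$, where the lemma is genuinely false.
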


\begin{proof}
When $E = \mathbb{I}$, condition (2) leads to $B \trans{A} = A \trans{B}$.
More generally, consider the elementary matrix $\mathbb{I} + E_{ij}$ for distinct indices $i \neq j$, where $E_{ij}$ has a 1 at position $(i,j)$ and zero everywhere else.
In $\mathbb{F}_2$, these matrices are self-inverse: $(\mathbb{I} + E_{ij})^2 = \mathbb{I} + 2 E_{ij} + E_{ij} E_{ij} = \mathbb{I}$.
Therefore, condition (2) leads to $A (\mathbb{I} + E_{ij}) \trans{B} = B \invtrans{(\mathbb{I} + E_{ij})} \trans{A} = B (\mathbb{I} + E_{ji}) \trans{A}$, i.e., $A E_{ij} \trans{B} = B E_{ji} \trans{A}$ after cancelling $B \trans{A} = A \trans{B}$. Recall that $E_{ij} = e_i \trans{e_j}$ where $e_i$ is a column vector with 1 at index $i$ and zero everywhere else. The condition becomes
\begin{equation}
\label{eq:dan_lemma_unequal}
    b_j \trans{a_i} = a_i \trans{b_j} \qquad \forall \, i,j\in\{1,...,k\}, \quad i\neq j,
\end{equation}
where $a_x$ and $b_x$ denote the $x^\text{th}$ columns of $A$ and $B$, respectively. Over $\mathbb{F}_2$, the equality $u \trans{v} = v \trans{u}$ holds iff $u$ and $v$ are linearly dependent, i.e., either $u=0$, $v=0$, or $u=v \neq 0$.

Suppose, for contradiction, that $A\neq 0$ and $B\neq 0$. Choose indices $i$ and $j$ such that $a_i\neq 0$ and $b_j\neq 0$.
\begin{itemize}
    \item Case of $i = j$. Then \cref{eq:dan_lemma_unequal} tells us that $a_\ell\in\spn(b_i) = \{0, b_i\}$ and $b_\ell\in\spn(a_i) = \{0, a_i\}$ for every $\ell\neq i$. This means $\colspan([\begin{matrix} A \, | \, B \end{matrix}]) \subseteq \spn(a_i,b_i)$, and accordingly $\rank([\begin{matrix} A \, | \, B \end{matrix}]) \leq \dim \spn(a_i,b_i) \leq 2$, contradicting condition (1) since $k \geq 3$.
    \item Case of $i\neq j$. Then \cref{eq:dan_lemma_unequal} constrains $a_i = b_j$. Moreover, we also have that $a_p\in\spn(b_j)=\spn(a_i)$ for every $p \neq j$, and $b_q\in\spn(a_i)$ for every $q \neq i$. Thus all columns of $[\begin{matrix} A \, | \, B \end{matrix}]$ lie in $\spn(a_i)$ except possibly the two columns $a_j$ and $b_i$. If $a_j = 0$, then $\colspan([\begin{matrix} A \, | \, B \end{matrix}]) \subseteq \spn(a_i,b_i)$ and $\rank([A\mid B]) \leq 2$. Otherwise, applying \cref{eq:dan_lemma_unequal} tells us that $b_i \in \spn(a_j)$. In this case, $\colspan([\begin{matrix} A \, | \, B \end{matrix}])\subseteq \spn(a_i,a_j)$ and again $\rank([\begin{matrix} A \, | \, B \end{matrix}]) \leq 2$. Either way, we find a contradiction with condition (1) since $k \geq 3$.
\end{itemize}

Thus $A$ and $B$ cannot both be nonzero. But they cannot both be zero because $[\begin{matrix} A \, | \, B \end{matrix}]$ is full-rank. So $A = 0$ or $B = 0$.
\end{proof}

\subsubsection{Limitations on strictly transversal logical gate sets on stabilizer codes}

That a stabilizer code supports a permutation logical gate places associated inherent limitations on the strictly transversal logical gates possible on the code. This has been formalized in \cref{thm:no_noncommuting_transversal_gates} in the main text. We reproduce the statement below for reference and provide a proof.

\MainTheoremTransversalGates*
\begin{proof}
    Suppose otherwise, that there exists an $\db{n, k, d}$ stabilizer code which supports $\overline{U}$ implemented by a qubit permutation $\pi$ and a strictly transversal $\overline{V}$ across $b$ codeblocks. As before, we denote the unitary representation of $\pi$ as $U_\pi$. The strict transversality of $\overline{V}$ entails an implementation $\overline{V} = \smash{\prod_{i = 1}^n W_{i i \cdots i}}$, where the physical gate $W$ act on corresponding qubits across the codeblocks. The inverse gate can then be performed as $\smash{\overline{V}}^\dag = \smash{\prod_{i = 1}^n W_{i i \cdots i}^\dag}$. We note that $U_\pi^{\otimes b}$ and $\smash{\prod_{i = 1}^n W_{i i \cdots i}}$ commute, as the same qubit permutation is applied to each codeblock.
    
    Consider the logical gate sequence $\overline{V} \smash{\overline{U}}^{\otimes b} \overline{V}^\dag$, which is implemented by $\smash{(\prod_{i = 1}^n W_{i i \cdots i})} U_\pi^{\otimes b} \smash{(\prod_{i = 1}^n W_{i i \cdots i}^\dag)} = U_\pi^{\otimes b} \smash{(\prod_{i = 1}^n W_{i i \cdots i})} \smash{(\prod_{i = 1}^n W_{i i \cdots i}^\dag)} = U_\pi^{\otimes b}$ at the physical level. But now we see that $U_\pi^{\otimes b}$ on the physical qubits implements both $\smash{\overline{V} \smash{\overline{U}}^{\otimes b} \overline{V}^\dag}$ and $\smash{\overline{U}}^{\otimes b}$, which have distinct logical actions and is therefore a contradiction when $U^{\otimes b}$ and $V$ do not commute.
\end{proof}

On CSS phantom codes, which support a complete set of individually addressable permutation $\overline{\mathrm{CNOT}}$ gates, \cref{thm:no_noncommuting_transversal_gates} rules out a large class of strictly transversal logical gates. This includes logical Hadamard ($\overline{H}$) or phase gates ($\overline{S}$), in-block and interblock $\overline{\mathrm{CZ}}$, and magic gates, for all or a subset of logical qubits. The sole exception is the $\smash{\overline{H}}^{\otimes 2} \overline{\mathrm{SWAP}}$ logical action, which are achievable transversally by $H^{\otimes 2}$, on some $k = 2$ codes (see e.g.~codes in \cref{tab:gate_for_phantom_codes}). Essentially, \cref{thm:no_noncommuting_transversal_gates} implies that, if other logical gates are available on phantom codes, then they cannot be strictly transversal: they must involve non-uniform operations on the qubits of the code or qubit permutations in addition to single-qubit operations.

\section{SAT preliminaries}
\label{app:sat_preliminaries}

In this work, we made extensive use of SAT solving to accomplish tasks such as checking the permutation gate sets supported by a code and thereby phantomness, and to discover phantom codes with desired parameters. To facilitate discussion of these methods, we first establish some background on SAT. 

To begin, as SAT instances are defined on Boolean variables subject to Boolean logic, while our problems and constraints are expressed mathematically in $\mathbb{F}_2$, a mapping is necessary. Conventionally, one associates the elements of $\mathbb{F}_2$ with Boolean values $0 \equiv \mathrm{FALSE}$, $1 \equiv \mathrm{TRUE}$, under which  addition over $\mathbb{F}_2$ corresponds to Boolean $\mathrm{XOR}$, $a + b \equiv a \lxor b$, and multiplication over $\mathbb{F}_2$ corresponds to Boolean $\mathrm{AND}$, $a + b \equiv a \land b$.

To declare $\mathbb{F}_2$ vectors or matrices as variables means to declare a Boolean variable for each of their entries. To restrict the matrix to a certain type, constraints are added into the SAT instance or structure in the matrix can be exploited. The most relevant cases are:
\begin{itemize}
    \item $A$ to be an $n \times n$ symmetric matrix in $\mathbb{F}_2$. Then $n (n + 1) / 2$ Boolean variables are declared to correspond to the upper triangular part of $A$; the lower triangular part is fixed by symmetry. 
    \item $A$ to be an $n \times n$ permutation matrix. Then $A \in \mathbb{F}_2^{n \times n}$ is declared, and the constraint that each row and column contains a single $1$ entry is imposed. Optionally, to restrict the permutation to be period-$2$, equivalently implementable by a single layer of swaps, either the constraint $A^2 = \mathbb{I}$ is directly imposed, or $A$ is declared to be symmetric and the row and column constraints imposed.
    \item $A \in \GL(n, \mathbb{F}_2)$. Then $A \in \mathbb{F}_2^{n \times n}$ and $A' \in \mathbb{F}_2^{n \times n}$ are declared and the constraint $A A' = \mathbb{I}$ is imposed (i.e.~$A'$ is the inverse of $A$). In contexts where row and column basis ordering does not matter, the diagonals of $A, A'$ can be set to be all ones.
    \item $A \in \Sp(2n, \mathbb{F}_2)$. Then $A \in \mathbb{F}_2^{2n \times 2n}$ is declared and the constraint $A \Omega \trans{A} = \Omega$ is imposed. 
\end{itemize}

The mapping described above converts $\mathbb{F}_2$ linear algebraic and arithmetic constraints prevalent in our problems to clauses involving $\mathrm{XOR}$s and $\mathrm{AND}$s. We convert these into a conjunctive normal form formulae through standard Tseitin transformation, which SAT solvers can read as input. Three computational outcomes are then possible: $\mathrm{SAT}$, which denotes that the formulae is satisfiable and a solution can be read off; $\mathrm{UNSAT}$, which denotes that the formulae is provably unsatisfiable and no solution exists; or that the solver does not finish.

We used the state-of-the-art SAT solver \textit{kissat}~\cite{biere2024cadical}, and the PySAT toolbox~\cite{ignatiev2018pysat} to aid in problem instance construction, throughout this study. We explored also the \textit{cryptominisat}~\cite{soos2009extending} solver, which natively supports $\mathrm{XOR}$ clauses and thus appears suitable for problems dominated by $\mathbb{F}_2$ matrix arithmetic constraints, but found no performance advantage. We were limited to $14$ days of solving time per SAT instance.

\section{Exhaustive code enumeration}
\label{app:enumeration}

We exhaustively enumerated all $n \leq 14$ CSS codes and filtered them for the phantom property. This amounted to $2.71 \times 10^{10}$ inequivalent codes in total, of which $132\,305$ are phantom codes with distance $d \geq 2$. A subset of these results were discussed in the main text, and we report detailed results in \cref{tab:enumeration_counts_phantom,tab:enumeration_counts_all_css}. As our approach extends in full generality to stabilizer codes, we first describe the general procedure and then explain the optimized specialization to CSS codes.

\subsection{Iteratively generating codes} 

Our general strategy builds on Sec.~6 of Ref.~\cite{cross2025small}. To enumerate $n$-qubit codes, we start from the trivial $\db{n,n,1}$ code whose stabilizer group is empty. Iteratively, for each $k = n - 1, \dots, 2, 1$, we build up the set of $\db{n, k}$ codes from $\db{n, k + 1}$ seed codes by taking each seed code and looping over all ways of appending a new (nontrivial) stabilizer generator into the stabilizer group. Accordingly, a valid $s$ must be linearly independent of and must commute with the stabilizers present. Equivalently, $s$ belongs to the logical group of the $\db{n, k + 1}$ seed code. We loop over the $2^{2 k + 2} - 1$ choices of $s$ to produce $\db{n, k}$ codes from each seed code. This procedure exhaustively generates every code, as every code can be obtained from some sequence of stabilizer generator additions starting from the trivial code.

\subsection{Efficient equivalence classification of codes via canonical forms} 

As the phantom property of codes is preserved under $\mathrm{\Pi H}$ equivalence (see \cref{prop:phantom_equivalence_invariance}), it suffices to store a single member of each $\mathrm{\Pi H}$ equivalence class of codes generated during enumeration. This avoids massive redundancy that would otherwise render enumeration infeasible. The deduplication of codes can be accomplished by computing a canonical form for each code, with the property that two codes are $\mathrm{\Pi H}$-equivalent iff their canonical forms are identical. Then, when generating $\db{n, k}$ codes in the enumeration process, we retain only ones with distinct canonical forms.

To obtain a canonical form of an $\db{n, k}$ code $\mathcal{C}$, we construct its expanded Tanner graph $G[\mathcal{C}]$, which is bipartite and comprises $n$ qubit vertices and $m = 2^{n - k} - 1$ stabilizer vertices enumerating the nontrivial stabilizer group elements. The $j^{\text{th}}$ stabilizer element $\bigotimes_{i_\mathrm{x} \in \mathcal{I}_\mathrm{x}} X_{i_\mathrm{x}} \bigotimes_{i_\mathrm{z} \in \mathcal{I}_\mathrm{z}} Z_{i_\mathrm{z}}$ contributes $X$- ($Z$-) coloured edges connecting the $j^{\text{th}}$ stabilizer vertex to the $i_\mathrm{x}^{\text{th}}$ ($i_\mathrm{z}^{\text{th}}$) qubit vertex for every $i_\mathrm{x} \in \mathcal{I}_\mathrm{x}$ ($i_\mathrm{z} \in \mathcal{I}_\mathrm{z}$). We compute the canonical labeling~\cite{babai1983canonical} of $G[\mathcal{C}]$ respecting vertex and edge colours, which removes the freedom in vertex and edge orderings, equivalently qubit and stabilizer permutations in $\mathcal{C}$. Lastly, we extract the biadjacency matrix $A_{\mathcal{C}} \in \smash{\mathbb{Z}_3^{m \times n}}$ of the canonical $G[\mathcal{C}]$ with edge colours encoded in the nonzero entries. The canonical form of $\mathcal{C}$ is then defined to be $\min(A_{\mathcal{C}}, A_{H^{\otimes n} \mathcal{C} H^{\otimes n}})$, where ordering is lexicographic, to remove the freedom of global Hadamards.

As the canonical form of a code is a bitvector, equivalently a byte array or string, equivalence checking or deduplication of a set of canonical forms is highly efficient. Empirically, the comparison cost of canonical forms is entirely negligible relative to other parts of the code enumeration procedure.

\subsection{Optimized enumeration of CSS codes}

CSS codes affords three key simplifications. First, the splitting of the stabilizers and logicals into pure $X$- and $Z$-sectors enables a reduction of the ``branching factor'' when generating codes. In particular, for each $\db{n, k + 1}$ seed code, one needs only loop over $2^{k + 1} - 1$ choices of appending an $X$- ($Z$-) logical into the $X$- ($Z$-) stabilizer group to generate $\db{n, k}$ codes. Second, it suffices to enumerate codes with stabilizer ranks $r_\mathrm{x} \leq r_\mathrm{z}$, as $r_\mathrm{x} > r_\mathrm{z}$ codes are $\mathrm{\Pi H}$-equivalent to them. Third, the expanded Tanner graph $G[\mathcal{C}]$ of an $\db{n, k}$ CSS code is simpler in structure and can be made smaller, in particular being tripartite and comprising $n$ qubit, $m_\mathrm{x} = 2^{r_\mathrm{x}} - 1$ $X$-stabilizer, and $m_\mathrm{z} = 2^{r_\mathrm{z}} - 1$ $Z$-stabilizer vertices. Edge colours are no longer needed, as the vertex colours of the $X$- and $Z$-stabilizer vertices suffice to encode the Pauli operator types of the stabilizers. It follows also that the canonical $G[\mathcal{C}]$ can be characterized more compactly by an ordered pair of two biadjacency matrices, $A[\mathcal{C}] = (A_\mathrm{x}[\mathcal{C}], A_\mathrm{z}[\mathcal{C}])$, where $A_\mathrm{x}[\mathcal{C}] \in \smash{\mathbb{F}_2^{m_\mathrm{x} \times n}}$, $A_\mathrm{z}[\mathcal{C}] \in \smash{\mathbb{F}_2^{m_\mathrm{z} \times n}}$. These simplifications make CSS code enumeration cheaper in compute time and storage costs, and accounts for our ability to reach higher $n = 14$ on CSS codes.

\subsection{Technical implementation leveraging massive compute}

We used the well-established \textit{bliss} algorithm~\cite{junttila2007engineering} to compute canonical labelings of graphs. As canonical labelling algorithms generally benefit from finer-grained colour classes, we additionally coloured stabilizer vertices according to their weight, which are invariant under qubit permutations. We massively parallelized the code enumeration process to utilize ${\sim}\, 10 \, 000$ cores across hundreds of nodes, over a wall-clock time of $1.5$ months.

In consideration of storage and memory performance, we stored the code data generated in \texttt{HDF5} format with \texttt{Zstandard} and \texttt{BLOSC}~\cite{blosc2009} compression, which allowed random-access chunked reading of large data files. Despite compression, the enumeration process used $>\SI{80}{\tera\byte}$ of storage during computation, and ${\sim} \, \SI{10}{\tera\byte}$ for the final database. We found that at such a scale of compute, low-level factors such as networked I/O latency and bandwidth, homogeneity of hardware, and load balancing between workers become important in addition to the intrinsic compute complexity of the algorithms employed; thus we invested considerable effort to fine-tune our implementation in these respects.

\subsection{Alternative approaches}

For comprehensiveness, we remark on several alternative approaches to code enumeration that we found to be less performant or were difficult to employ due to intrinsic limitations.

\begin{itemize} 
    \item First, in place of the described iterative method to generate codes, one could consider enumerating stabilizer generator matrices in standard form, as in \cref{eq:check_matrix_std_form_gen}. The limitation here is that there are $\smash{\mathcal{O}(2^{n^2})}$ such matrices to go over to generate $\db{n, k}$ codes. Even after reasonable optimizations---such as imposing stabilizer commutation, which constrains the submatrices in the standard form, and quotienting out qubit permutations, which allows restricting the columns of $\trans{\smash{\mqty(\trans{A_2} & \trans{C} & \trans{E})}}$ to be in nondecreasing lexicographic order---the number of matrices is astronomical: for example, ${>} \, 10^{17}$ for $\db{14, 2}$ CSS codes. Conceptually, the iterative approach mitigates this blow-up by deduplicating codes up to $\mathrm{\Pi H}$-equivalence at every $\db{n, k}$ stage of the process.
    
    \item Second, instead of deduplicating codes through canonical forms, one could consider partitioning them into equivalence classes by pairwise comparisons---in particular, checking isomorphism of their expanded Tanner graphs---and retaining a single member of each class. However, deduplicating $E$ codes in this manner requires $\mathcal{O}(E^2 \mathcal{T}_\mathrm{iso})$ time, in comparison to $\mathcal{O}(E \mathcal{T}_\mathrm{can})$ for the canonical form approach, where $\mathcal{T}_\mathrm{iso}$ and $\mathcal{T}_\mathrm{can}$ are characteristic runtimes for graph isomorphism and canonical labelling, respectively. As $\mathcal{T}_\mathrm{iso} \approx \mathcal{T}_\mathrm{can}$ and $E$ easily exceeds $10^9$, the canonical form approach is vastly faster.

    \item Third, instead of starting from the $\db{n, n}$ trivial code and decreasing $k$ iteratively, one could instead start from the $\db{n, 0}$ end of the spectrum and increasing $k$. The $\db{n, 0}$ codes in this context correspond to graph states. This approach, however, requires prior knowledge of the complete set of distinct graph states for each $n$ desired (up to $n = 14$ in this work), for which, to the best of our knowledge, there exists no publicly available database. We therefore found the $\db{n, n} \to \db{n, 1}$ direction more straightforward. 
    
\end{itemize}

\subsection{Code distances}

We computed the exact distance $d$ of each distinct code produced by brute force, which is computationally quick at the scale of codes examined here. This amounts to recording the minimum weight of $s \overline{P}$ over all choices of stabilizers $s$ and nontrivial Pauli logical operators $\overline{P}$, which totals $2^{n - k} \times (2^{2k} - 1)$ choices for an $\db{n, k}$ code---but as $n \leq 14$ this calculation is fast. An optimization is available for CSS codes: the $X$- and $Z$-distances, $d_\mathrm{x}$ and $d_\mathrm{z}$, can be computed separately, where $s$ and $\overline{P}$ are restricted to be $X$- or $Z$-type; then $d = \min(d_\mathrm{x}, d_\mathrm{z})$. 

\subsection{SAT solving for \texorpdfstring{$\overline{\mathrm{CNOT}}$}{CNOT} permutation gate sets and phantomness}

We analyzed each distinct CSS code for the permutation $\overline{\mathrm{CNOT}}$ gate sets they support using \cref{prop:phantom_gateset_symplectic_css,prop:phantom_code_symplectic_css}. In particular, we checked whether each code supports a complete set of individually addressable permutation $\overline{\mathrm{CNOT}}$ gates on $p = 2, 3, \ldots, k$ out of $k$ logical qubits. The highest case of $p = k$ corresponds to phantom codes; lower $p$ reflect partial phantomness, and we refer to codes that possess such gate sets as \emph{weak} phantom codes. For $p < k$ cases, we declare the logical basis rotation matrices $R_\mathrm{x}, R_\mathrm{z} \in \GL(k, \mathbb{F}_2)$ as free variables, whereas for $p = k$ this degree of freedom is unnecessary (see \cref{prop:phantomness_css_logical_basis_independence}); in all cases we declare permutation matrices $\{P^{a_i b_i}\}_{i = 1}^p$ as free variables. The lower-level encoding of these variables and constraints into a SAT problem instance, and the solving process, follows the prescription in \cref{app:sat_preliminaries}. Altogether $k - 1$ SAT problem instances ($p = 2, 3, \ldots, k$) are solved for each generated code. 

\subsection{Breakdown of code enumeration results}

We report an exhaustive breakdown of the numbers of $\mathrm{\Pi H}$-inequivalent $d \geq 2$ CSS codes, weak phantom, and phantom codes with $k = 2, 3, 4$ logical qubits of code sizes $n \leq 14$, stratified by their $(n, k, d_\mathrm{x}, d_\mathrm{z})$ parameters, in \cref{tab:enumeration_counts_phantom}. As described above, weak phantom codes are codes that support a complete set of individually addressable permutation $\overline{\mathrm{CNOT}}$ gates on $2 \leq p < k$ logical qubits. The stringency of the demanded gate set is reflected in the rapidly diminishing numbers of distinct codes as $p$ is increased.

For completeness, we present also an exhaustive breakdown of the numbers of $\mathrm{\Pi H}$-inequivalent CSS codes up to $n = 14$ at all $k$, stratified by their $(n, k, d)$ parameters, in \cref{tab:enumeration_counts_all_css}. These results may be of independent interest. Additionally, we provide a visualization of the number of inequivalent CSS codes at each $(n, k)$ in \cref{fig:num_codes_all_css}, which clearly illustrates the super-exponential growth in the number of codes with $n$ that underlies the core computational challenge of code enumeration. Generally, for a fixed $n$, the number of inequivalent codes increases with $k$ up to a (super-exponentially sharp) peak at $k^*$ and decreases thereafter; this peak location $k^*$ increases with $n$.

\medskip
\medskip

\begin{figure}[!ht]
    \centering
    \includegraphics[width=0.66\linewidth]{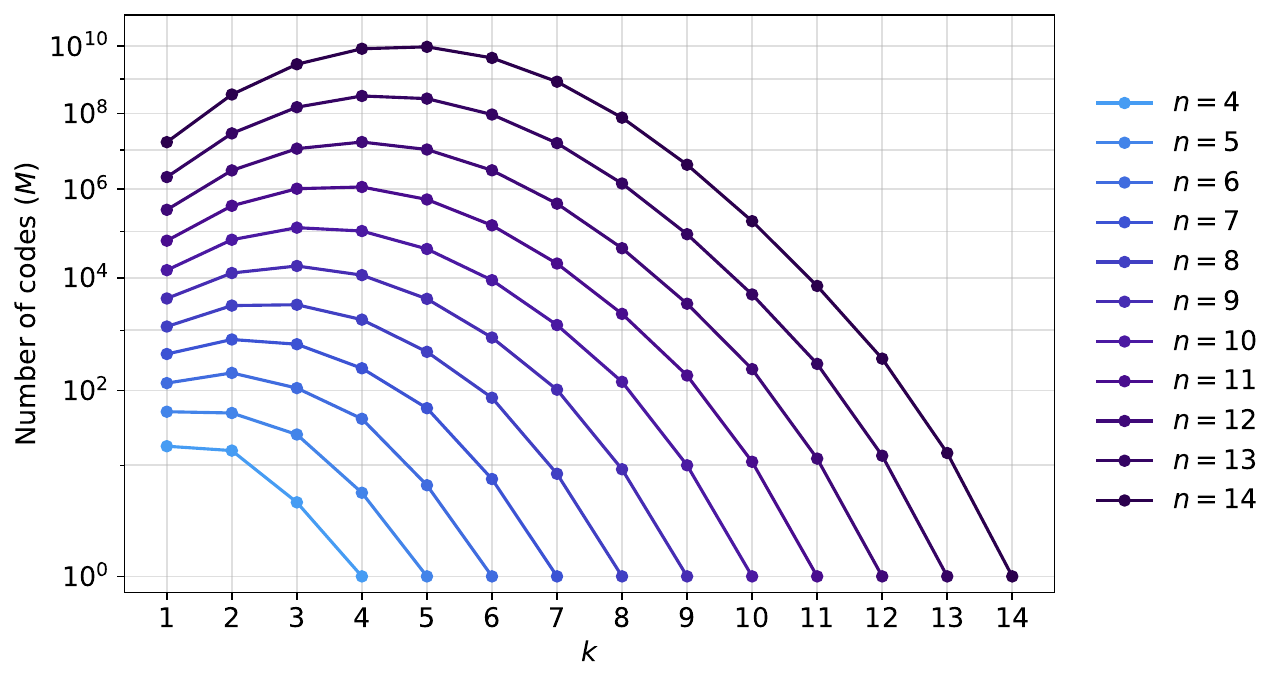}
    \caption{\textbf{Numbers of inequivalent $\mathbf{\db{n, k}}$ CSS codes}. The vertical axis scale is $\sqrt{\log M}$, chosen because a simple upper bound on the number of $n$-qubit CSS codes is $\smash{\mathcal{O}(2^{n^2})}$. This upper bound is loose as it neglects equivalency of codes.}
    \label{fig:num_codes_all_css}
\end{figure}

\clearpage
\pagebreak

{
\renewcommand{\arraystretch}{0.9}
\definecolor{lightmid}{gray}{0.7}
\newcommand{\graymidrule}{%
  \arrayrulecolor{lightmid}%
  \specialrule{0.3pt}{0.25ex}{0.80ex}
  \arrayrulecolor{black}%
}
\begin{table}[!ht]
    \begin{minipage}[t]{0.5\textwidth}
    \centering
    \scriptsize
    \begin{tabular}{C{0.6cm} C{0.6cm} C{0.6cm} R{1.5cm} R{1.5cm} R{1.5cm}}
        \multicolumn{6}{c}{$k = 2$} \\
        \toprule
        $n$ & $d_\mathrm{x}$ & $d_\mathrm{z}$ & $M$ & $M[K_1]$ & $M[K_2]$ \\
        \midrule
        4 & 2 & 
            2 & 1 & 1 & 1 \\
        \graymidrule
        5 & 2 & 
            2 & 2 & 2 & 2 \\
        \graymidrule
        6 & 2 & 
            2 & 15 & 15 & 9 \\
        \graymidrule
        \multirow{2}{*}{7} & \multirow{2}{*}{2} & 
            2 & 59 & 56 & 22 \\
        {} & {} & 
            3 & 2 & 2 & 2 \\
        \graymidrule
        \multirow{3}{*}{8} & \multirow{3}{*}{2} & 
            2 & 352 & 284 & 61 \\
        {} & {} & 
            3 & 20 & 17 & 8 \\
        {} & {} & 
            4 & 2 & 2 & 2 \\
        \graymidrule
        \multirow{3}{*}{9} & \multirow{3}{*}{2} & 
            2 & 1969 & 1219 & 164 \\
        {} & {} & 
            3 & 205 & 140 & 33 \\
        {} & {} & 
            4 & 11 & 11 & 11 \\
        \graymidrule
        \multirow{3}{*}{10} & \multirow{3}{*}{2} & 
            2 & 13229 & 5985 & 469 \\
        {} & {} & 
            3 & 2157 & 970 & 107 \\
        {} & {} & 
            4 & 99 & 98 & 56 \\
        \graymidrule
        10 & 3 & 
            3 & 5 & 2 & 0 \\
        \graymidrule
        \multirow{4}{*}{11} & \multirow{4}{*}{2} & 
            2 & 97043 & 30362 & 1371 \\
        {} & {} & 
            3 & 25389 & 6665 & 334 \\
        {} & {} & 
            4 & 967 & 794 & 229 \\
        {} & {} & 
            5 & 5 & 5 & 5 \\
        \graymidrule
        \multirow{2}{*}{11} & \multirow{2}{*}{3} & 
            3 & 103 & 12 & 1 \\
        {} & {} & 
            4 & 1 & 1 & 1 \\
        \graymidrule
        \multirow{5}{*}{12} & \multirow{5}{*}{2} & 
            2 & 830830 & 175410 & 4487 \\
        {} & {} & 
            3 & 338823 & 46711 & 1096 \\
        {} & {} & 
            4 & 13236 & 7163 & 960 \\
        {} & {} & 
            5 & 74 & 63 & 31 \\
        {} & {} & 
            6 & 4 & 4 & 4 \\
        \graymidrule
        \multirow{2}{*}{12} & \multirow{2}{*}{3} & 
            3 & 4885 & 231 & 5 \\
        {} & {} & 
            4 & 21 & 14 & 5 \\
        \graymidrule
        12 & 4 & 4 & 1 & 1 & 1 \\
        \graymidrule
        \multirow{5}{*}{13} & \multirow{5}{*}{2} & 
            2 & 8311808 & 1153978 & 16111 \\
        {} & {} & 
            3 & 5219100 & 352094 & 3601 \\
        {} & {} & 
            4 & 238551 & 66603 & 3907 \\
        {} & {} & 
            5 & 1342 & 810 & 176 \\
        {} & {} & 
            6 & 29 & 29 & 29 \\
        \graymidrule
        \multirow{2}{*}{13} & \multirow{2}{*}{3} & 
            3 & 187956 & 3375 & 35 \\
        {} & {} & 
            4 & 879 & 226 & 31 \\
        \graymidrule
        13 & 4 & 
            4 & 3 & 3 & 3 \\
        \graymidrule
        \multirow{5}{*}{14} & \multirow{5}{*}{2} & 
            2 & 100151088 & 9086873 & 67104 \\
        {} & {} & 
            3 & 94277832 & 2946714 & 12305 \\
        {} & {} & 
            4 & 5695469 & 686332 & 16650 \\
        {} & {} & 
            5 & 34936 & 10499 & 853 \\
        {} & {} & 
            6 & 391 & 374 & 196 \\
        \graymidrule
        \multirow{3}{*}{14} & \multirow{3}{*}{3} & 
            3 & 7246836 & 46369 & 161 \\
        {} & {} & 
            4 & 75374 & 5018 & 183 \\
        {} & {} & 
            5 & 4 & 4 & 3 \\
        \graymidrule
        14 & 4 & 
            4 & 65 & 54 & 31 \\
        \bottomrule
    \end{tabular}
    \end{minipage}%
    \begin{minipage}[t]{0.5\textwidth}
    \centering
    \scriptsize
    \begin{tabular}{C{0.6cm} C{0.6cm} C{0.6cm} R{1.5cm} R{1.5cm} R{1.5cm} R{1.5cm}}
        \multicolumn{7}{c}{$k = 3$} \\
        \toprule
        $n$ & $d_\mathrm{x}$ & $d_\mathrm{z}$ & $M$ & $M[K_1]$ & $M[K_2]$ & $M[K_3]$ \\
        \midrule
        6 & 2 & 
            2 & 2 & 2 & 2 & 0 \\
        \graymidrule
        \multirow{2}{*}{7} & \multirow{2}{*}{2} & 
            2 & 16 & 16 & 11 & 0 \\
        {} & {} & 
            3 & 1 & 1 & 1 & 1 \\
        \graymidrule
        \multirow{3}{*}{8} & \multirow{3}{*}{2} & 
            2 & 142 & 142 & 56 & 0 \\
        {} & {} & 
            3 & 4 & 4 & 4 & 2 \\
        {} & {} & 
            4 & 1 & 1 & 1 & 1 \\
        \graymidrule
        \multirow{3}{*}{9} & \multirow{3}{*}{2} & 
            2 & 1342 & 1242 & 302 & 0 \\
        {} & {} & 
            3 & 37 & 37 & 21 & 4 \\
        {} & {} & 
            4 & 4 & 4 & 4 & 4 \\
        \graymidrule
        \multirow{3}{*}{10} & \multirow{3}{*}{2} & 
            2 & 14173 & 10912 & 1445 & 0 \\
        {} & {} & 
            3 & 533 & 446 & 107 & 8 \\
        {} & {} & 
            4 & 26 & 26 & 23 & 12 \\
        \graymidrule
        \multirow{3}{*}{11} & \multirow{3}{*}{2} & 
            2 & 174822 & 100255 & 7287 & 0 \\
        {} & {} & 
            3 & 11304 & 5906 & 550 & 16 \\
        {} & {} & 
            4 & 219 & 219 & 122 & 34 \\
        \graymidrule
        11 & 3 & 
            3 & 6 & 4 & 3 & 0 \\
        \graymidrule
        \multirow{4}{*}{12} & \multirow{4}{*}{2} & 
            2 & 2542097 & 1000105 & 39042 & 0 \\
        {} & {} & 
            3 & 303742 & 79083 & 3042 & 32 \\
        {} & {} & 
            4 & 3859 & 3200 & 732 & 93 \\
        {} & {} & 
            5 & 2 & 2 & 2 & 0 \\
        \graymidrule
        \multirow{2}{*}{12} & \multirow{2}{*}{3} & 
            3 & 254 & 48 & 12 & 0 \\
        {} & {} & 
            4 & 2 & 2 & 2 & 0 \\
        \graymidrule
        \multirow{4}{*}{13} & \multirow{4}{*}{2} & 
            2 & 44258048 & 11273064 & 231983 & 0 \\
        {} & {} & 
            3 & 9507219 & 1110857 & 17283 & 68 \\
        {} & {} & 
            4 & 115581 & 55042 & 4454 & 260 \\
        {} & {} & 
            5 & 74 & 73 & 36 & 0 \\
        \graymidrule
        \multirow{2}{*}{13} & \multirow{2}{*}{3} & 
            3 & 54220 & 1987 & 89 & 0 \\
        {} & {} & 
            4 & 37 & 25 & 14 & 0 \\
        \graymidrule
        \multirow{5}{*}{14} & \multirow{5}{*}{2} & 
            2 & 930863172 & 149252977 & 1586674 & 0 \\
        {} & {} & 
            3 & 342695045 & 17000963 & 105686 & 148 \\
        {} & {} & 
            4 & 5530021 & 1096275 & 30013 & 759 \\
        {} & {} & 
            5 & 3620 & 2192 & 338 & 1 \\
        {} & {} & 
            6 & 20 & 20 & 15 & 2 \\
        \graymidrule
        \multirow{2}{*}{14} & \multirow{2}{*}{3} & 
            3 & 6236568 & 59587 & 489 & 1 \\
        {} & {} & 
            4 & 6377 & 1143 & 138 & 3 \\
        \graymidrule
        14 & 4 & 
            4 & 2 & 2 & 2 & 0 \\
        \bottomrule 
        \\[2pt]
        \multicolumn{7}{c}{$k = 4$} \\
        \toprule
        $n$ & $d_\mathrm{x}$ & $d_\mathrm{z}$ & $M$ & $M[K_1]$ & $M[K_2]$ & $M[K_3]$ \\
        \midrule
        6 & 2 & 
            2 & 1 & 1 & 1 & 0 \\
        \graymidrule
        7 & 2 & 
            2 & 3 & 3 & 3 & 0 \\
        \graymidrule
        8 & 2 & 
            2 & 39 & 39 & 29 & 0 \\
        \graymidrule
        \multirow{2}{*}{9} & \multirow{2}{*}{2} & 
            2 & 370 & 365 & 191 & 1 \\
        {} & {} & 
            3 & 1 & 1 & 1 & 0 \\
        \graymidrule
        \multirow{3}{*}{10} & \multirow{3}{*}{2} & 
            2 & 5732 & 5316 & 1512 & 5 \\
        {} & {} & 
            3 & 34 & 26 & 7 & 1 \\
        {} & {} & 
            4 & 1 & 1 & 1 & 0 \\
        \graymidrule
        \multirow{3}{*}{11} & \multirow{3}{*}{2} & 
            2 & 104987 & 82236 & 11970 & 26 \\
        {} & {} & 
            3 & 1288 & 783 & 96 & 6 \\
        {} & {} & 
            4 & 9 & 7 & 5 & 0 \\
        \graymidrule
        \multirow{3}{*}{12} & \multirow{3}{*}{2} & 
            2 & 2446585 & 1430851 & 104031 & 112 \\
        {} & {} & 
            3 & 71376 & 22840 & 1114 & 30 \\
        {} & {} & 
            4 & 338 & 257 & 64 & 2 \\
        \graymidrule
        12 & 3 & 
            3 & 6 & 4 & 3 & 0 \\
        \graymidrule
        \multirow{4}{*}{13} & \multirow{4}{*}{2} & 
            2 & 70097191 & 27711646 & 1002835 & 480 \\
        {} & {} & 
            3 & 4674483 & 665810 & 12209 & 150 \\
        {} & {} & 
            4 & 15455 & 7934 & 729 & 18 \\
        {} & {} & 
            5 & 2 & 2 & 2 & 0 \\
        \graymidrule
        13 & 3 & 
            3 & 859 & 114 & 14 & 0 \\
        \graymidrule
        \multirow{5}{*}{14} & \multirow{5}{*}{2} & 
            2 & 2467108696 & 620290888 & 11226575 & 2061 \\
        {} & {} & 
            3 & 335232851 & 20178079 & 141712 & 755 \\
        {} & {} & 
            4 & 1457061 & 346812 & 9937 & 123 \\
        {} & {} & 
            5 & 105 & 69 & 27 & 0 \\
        {} & {} & 
            6 & 2 & 2 & 2 & 1 \\
        \graymidrule
        \multirow{2}{*}{14} & \multirow{2}{*}{3} & 
            3 & 975140 & 8551 & 110 & 1 \\
        {} & {} & 
            4 & 54 & 22 & 3 & 0 \\
        \bottomrule
    \end{tabular}
    \end{minipage}%
    \caption{\textbf{Number of $\bm{d \geq 2}$ CSS, weakly phantom, and phantom codes at $\bm{k = 2, 3, 4}$ up to $\bm{n = 14}$.} Entries are the number of distinct codes of parameters $(n, k, d_\mathrm{x}, d_\mathrm{z})$ up to equivalence under qubit permutations and global Hadamard duality ($\mathrm{\Pi H}$ equivalence) found through exhaustive code enumeration. Counts cover codes where $d_\mathrm{x} \leq d_\mathrm{z}$; the $d_\mathrm{x} > d_\mathrm{z}$ counterparts are duals of $d_\mathrm{x} < d_\mathrm{z}$ codes. $M$, $M[K_1]$, $M[K_{p \geq 2}]$ denote number of CSS codes with no other constraints, supporting at least a single permutation $\overline{\mathrm{CNOT}}$, and supporting at least a complete set of individually addressable $\overline{\mathrm{CNOT}}$s on $p$ logical qubits, respectively; therefore $M[K_k]$ counts the number of CSS phantom codes. $M[K_4] = 0$ for all $n \leq 14$, $k = 4$.
    }
    \label{tab:enumeration_counts_phantom}
\end{table}
}

{
\renewcommand{\arraystretch}{0.9}
\definecolor{lightmid}{gray}{0.7}
\newcommand{\graymidrule}{%
  \arrayrulecolor{lightmid}%
  \specialrule{0.3pt}{0.25ex}{0.80ex}
  \arrayrulecolor{black}%
}
\newcommand{\blackmidrule}{%
  \arrayrulecolor{black}%
  \specialrule{0.3pt}{0.25ex}{0.80ex}
  \arrayrulecolor{black}%
}
\begin{table}[!ht]
    \begin{minipage}[t]{0.33\textwidth}
    \centering
    \scriptsize
    \begin{tabular}[t]{C{0.6cm} C{0.6cm} C{0.6cm} R{1.8cm}}
        \toprule
        $n$ & $k$ & $d$ & $M$ \\
        \midrule
        4 & 1 & 1 & 16 \\
        4 & 1 & 2 & 1 \\
        \graymidrule
        4 & 2 & 1 & 14 \\
        4 & 2 & 2 & 1 \\
        \graymidrule
        4 & 3 & 1 & 4 \\
        \graymidrule
        4 & 4 & 1 & 1 \\
        \graymidrule
        5 & 1 & 1 & 44 \\
        5 & 1 & 2 & 5 \\
        \graymidrule
        5 & 2 & 1 & 45 \\
        5 & 2 & 2 & 2 \\
        \graymidrule
        5 & 3 & 1 & 24 \\
        \graymidrule
        5 & 4 & 1 & 5 \\
        \graymidrule
        5 & 5 & 1 & 1 \\
        \graymidrule
        6 & 1 & 1 & 109 \\
        6 & 1 & 2 & 21 \\
        \graymidrule
        6 & 2 & 1 & 173 \\
        6 & 2 & 2 & 15 \\
        \graymidrule
        6 & 3 & 1 & 107 \\
        6 & 3 & 2 & 2 \\
        \graymidrule
        6 & 4 & 1 & 38 \\
        6 & 4 & 2 & 1 \\
        \graymidrule
        6 & 5 & 1 & 6 \\
        \graymidrule
        6 & 6 & 1 & 1 \\
        \graymidrule
        7 & 1 & 1 & 299 \\
        7 & 1 & 2 & 84 \\
        7 & 1 & 3 & 1 \\
        \graymidrule
        7 & 2 & 1 & 621 \\
        7 & 2 & 2 & 61 \\
        \graymidrule
        7 & 3 & 1 & 547 \\
        7 & 3 & 2 & 17 \\
        \graymidrule
        7 & 4 & 1 & 220 \\
        7 & 4 & 2 & 3 \\
        \graymidrule
        7 & 5 & 1 & 55 \\
        \graymidrule
        7 & 6 & 1 & 7 \\
        \graymidrule
        7 & 7 & 1 & 1 \\
        \graymidrule
        8 & 1 & 1 & 827 \\
        8 & 1 & 2 & 332 \\
        8 & 1 & 3 & 2 \\
        \graymidrule
        8 & 2 & 1 & 2481 \\
        8 & 2 & 2 & 374 \\
        \graymidrule
        8 & 3 & 1 & 2817 \\
        8 & 3 & 2 & 147 \\
        \graymidrule
        8 & 4 & 1 & 1514 \\
        8 & 4 & 2 & 39 \\
        \graymidrule
        8 & 5 & 1 & 415 \\
        8 & 5 & 2 & 3 \\
        \graymidrule
        8 & 6 & 1 & 77 \\
        8 & 6 & 2 & 1 \\
        \graymidrule
        8 & 7 & 1 & 8 \\
        \graymidrule
        8 & 8 & 1 & 1 \\
        \graymidrule
        9 & 1 & 1 & 2507 \\
        9 & 1 & 2 & 1385 \\
        9 & 1 & 3 & 24 \\
        \graymidrule
        9 & 2 & 1 & 10499 \\
        9 & 2 & 2 & 2185 \\
        \graymidrule
        9 & 3 & 1 & 16402 \\
        9 & 3 & 2 & 1383 \\
        \graymidrule
        9 & 4 & 1 & 11111 \\
        9 & 4 & 2 & 371 \\
        \graymidrule
        9 & 5 & 1 & 3803 \\
        9 & 5 & 2 & 46 \\
        \graymidrule
        9 & 6 & 1 & 733 \\
        9 & 6 & 2 & 4 \\
        \graymidrule
        9 & 7 & 1 & 103 \\
        \graymidrule
        9 & 8 & 1 & 9 \\
        \graymidrule
        9 & 9 & 1 & 1 \\
        \bottomrule
    \end{tabular}
    \end{minipage}%
    \hfill
    \begin{minipage}[t]{0.33\textwidth}
    \centering
    \scriptsize
    \begin{tabular}[t]{C{0.6cm} C{0.6cm} C{0.6cm} R{1.8cm}}
        \toprule
        $n$ & $k$ & $d$ & $M$ \\
        \midrule
        10 & 1 & 1 & 8204 \\
        10 & 1 & 2 & 6182 \\
        10 & 1 & 3 & 161 \\
        \graymidrule
        10 & 2 & 1 & 50158 \\
        10 & 2 & 2 & 15485 \\
        10 & 2 & 3 & 5 \\
        \graymidrule
        10 & 3 & 1 & 107996 \\
        10 & 3 & 2 & 14732 \\
        \graymidrule
        10 & 4 & 1 & 97088 \\
        10 & 4 & 2 & 5767 \\
        \graymidrule
        10 & 5 & 1 & 40178 \\
        10 & 5 & 2 & 897 \\
        \graymidrule
        10 & 6 & 1 & 8949 \\
        10 & 6 & 2 & 91 \\
        \graymidrule
        10 & 7 & 1 & 1232 \\
        10 & 7 & 2 & 4 \\
        \graymidrule
        10 & 8 & 1 & 135 \\
        10 & 8 & 2 & 1 \\
        \graymidrule
        10 & 9 & 1 & 10 \\
        \graymidrule
        10 & 10 & 1 & 1 \\
        \graymidrule
        11 & 1 & 1 & 30289 \\
        11 & 1 & 2 & 30793 \\
        11 & 1 & 3 & 1473 \\
        \graymidrule
        11 & 2 & 1 & 273325 \\
        11 & 2 & 2 & 123404 \\
        11 & 2 & 3 & 104 \\
        \graymidrule
        11 & 3 & 1 & 843528 \\
        11 & 3 & 2 & 186345 \\
        11 & 3 & 3 & 6 \\
        \graymidrule
        11 & 4 & 1 & 1028434 \\
        11 & 4 & 2 & 106284 \\
        \graymidrule
        11 & 5 & 1 & 539400 \\
        11 & 5 & 2 & 23181 \\
        \graymidrule
        11 & 6 & 1 & 136435 \\
        11 & 6 & 2 & 2163 \\
        \graymidrule
        11 & 7 & 1 & 19895 \\
        11 & 7 & 2 & 114 \\
        \graymidrule
        11 & 8 & 1 & 1984 \\
        11 & 8 & 2 & 5 \\
        \graymidrule
        11 & 9 & 1 & 171 \\
        \graymidrule
        11 & 10 & 1 & 11 \\
        \graymidrule
        11 & 11 & 1 & 1 \\
        \graymidrule
        12 & 1 & 1 & 128355 \\
        12 & 1 & 2 & 175795 \\
        12 & 1 & 3 & 14416 \\
        12 & 1 & 4 & 2 \\
        \graymidrule
        12 & 2 & 1 & 1768453 \\
        12 & 2 & 2 & 1182967 \\
        12 & 2 & 3 & 4906 \\
        12 & 2 & 4 & 1 \\
        \graymidrule
        12 & 3 & 1 & 8031258 \\
        12 & 3 & 2 & 2849700 \\
        12 & 3 & 3 & 256 \\
        \graymidrule
        12 & 4 & 1 & 13811647 \\
        12 & 4 & 2 & 2518299 \\
        12 & 4 & 3 & 6 \\
        \graymidrule
        12 & 5 & 1 & 9523416 \\
        12 & 5 & 2 & 798474 \\
        \graymidrule
        12 & 6 & 1 & 2877957 \\
        12 & 6 & 2 & 96580 \\
        \graymidrule
        12 & 7 & 1 & 441457 \\
        12 & 7 & 2 & 5158 \\
        \graymidrule
        12 & 8 & 1 & 42402 \\
        12 & 8 & 2 & 211 \\
        \graymidrule
        12 & 9 & 1 & 3093 \\
        12 & 9 & 2 & 6 \\
        \graymidrule
        12 & 10 & 1 & 215 \\
        12 & 10 & 2 & 1 \\
        \graymidrule
        12 & 11 & 1 & 12 \\
        \graymidrule
        12 & 12 & 1 & 1 \\
        \bottomrule
    \end{tabular}
    \end{minipage}%
    \begin{minipage}[t]{0.33\textwidth}
    \centering
    \scriptsize
    \begin{tabular}[t]{C{0.6cm} C{0.6cm} C{0.6cm} R{1.8cm}}
        \toprule
        $n$ & $k$ & $d$ & $M$ \\
        \midrule
        13 & 1 & 1 & 649360 \\
        13 & 1 & 2 & 1191186 \\
        13 & 1 & 3 & 165529 \\
        13 & 1 & 4 & 21 \\
        \graymidrule
        13 & 2 & 1 & 13991607 \\
        13 & 2 & 2 & 13770830 \\
        13 & 2 & 3 & 188835 \\
        13 & 2 & 4 & 3 \\
        \graymidrule
        13 & 3 & 1 & 96173113 \\
        13 & 3 & 2 & 53880922 \\
        13 & 3 & 3 & 54257 \\
        \graymidrule
        13 & 4 & 1 & 240548389 \\
        13 & 4 & 2 & 74787131 \\
        13 & 4 & 3 & 859 \\
        \graymidrule
        13 & 5 & 1 & 228181293 \\
        13 & 5 & 2 & 36098973 \\
        13 & 5 & 3 & 5 \\
        \graymidrule
        13 & 6 & 1 & 86904718 \\
        13 & 6 & 2 & 6099111 \\
        \graymidrule
        13 & 7 & 1 & 14864068 \\
        13 & 7 & 2 & 392972 \\
        \graymidrule
        13 & 8 & 1 & 1371561 \\
        13 & 8 & 2 & 11931 \\
        \graymidrule
        13 & 9 & 1 & 86886 \\
        13 & 9 & 2 & 267 \\
        \graymidrule
        13 & 10 & 1 & 4684 \\
        13 & 10 & 2 & 6 \\
        \graymidrule
        13 & 11 & 1 & 263 \\
        \graymidrule
        13 & 12 & 1 & 13 \\
        \graymidrule
        13 & 13 & 1 & 1 \\
        \graymidrule
        14 & 1 & 1 & 4060337 \\
        14 & 1 & 2 & 9897944 \\
        14 & 1 & 3 & 2254065 \\
        14 & 1 & 4 & 539 \\
        \graymidrule
        14 & 2 & 1 & 140083911 \\
        14 & 2 & 2 & 200159716 \\
        14 & 2 & 3 & 7322214 \\
        14 & 2 & 4 & 65 \\
        \graymidrule
        14 & 3 & 1 & 1482677292 \\
        14 & 3 & 2 & 1279091878 \\
        14 & 3 & 3 & 6242945 \\
        14 & 3 & 4 & 2 \\
        \graymidrule
        14 & 4 & 1 & 5513412044 \\
        14 & 4 & 2 & 2803798715 \\
        14 & 4 & 3 & 975194 \\
        \graymidrule
        14 & 5 & 1 & 7453583787 \\
        14 & 5 & 2 & 2094974672 \\
        14 & 5 & 3 & 2118 \\
        \graymidrule
        14 & 6 & 1 & 3792773348 \\
        14 & 6 & 2 & 529029086 \\
        14 & 6 & 3 & 21 \\
        \graymidrule
        14 & 7 & 1 & 780207939 \\
        14 & 7 & 2 & 46309837 \\
        \graymidrule
        14 & 8 & 1 & 74486954 \\
        14 & 8 & 2 & 1582818 \\
        \graymidrule
        14 & 9 & 1 & 4110070 \\
        14 & 9 & 2 & 27189 \\
        \graymidrule
        14 & 10 & 1 & 172435 \\
        14 & 10 & 2 & 447 \\
        \graymidrule
        14 & 11 & 1 & 6918 \\
        14 & 11 & 2 & 7 \\
        \graymidrule
        14 & 12 & 1 & 320 \\
        14 & 12 & 2 & 1 \\
        \graymidrule
        14 & 13 & 1 & 14 \\
        \graymidrule
        14 & 14 & 1 & 1 \\
        \bottomrule
    \end{tabular}
    \end{minipage}%
    \caption{\textbf{Number of CSS codes up to $\bm{n = 14}$.} Entries are the number of distinct CSS codes of parameters $(n, k, d)$ up to equivalence under qubit permutations and global Hadamard duality ($\mathrm{\Pi H}$ equivalence) found through exhaustive code enumeration. We include $d = 1$ codes for completeness but they do not support error detection or correction capability.
    }
    \label{tab:enumeration_counts_all_css}
\end{table}
}

\clearpage
\pagebreak

\section{SAT-based code discovery}
\label{app:code_discovery}

While code enumeration provides an exhaustive coverage of $n$-qubit phantom codes, the maximum $n$ reachable is computationally limited. To explore higher $n$, we employed a code discovery method based on Boolean constraint satisfaction (i.e.~SAT solving) that produces phantom codes of desired $\db{n, k, d}$ parameters in an automated fashion.

We note that our work is not the first to use SAT-based methods for quantum code discovery. Ref.~\cite{tremblay2023finite} employs SAT to search for sparse quantum codes. By contrast, our approach is the first to search for codes via their gate sets.

\subsection{Review of standard form of stabilizer generator matrices}
\label{app:code_discovery/standard_form}

To start, we review Gottesman's standard form for the stabilizer generator matrix of an $\db{n, k}$ stabilizer code in symplectic form~\cite{gottesman1997stabilizer},
\begin{equation}
    H =
    \left(
    \begin{array}{ccc|ccc}
    \mathbb{I} & A_1 & A_2 & B & 0 & C \\
    0 & 0 & 0 & D & \mathbb{I} & E
    \end{array}
    \right),
    \label{eq:check_matrix_std_form_gen}
\end{equation}
which is parametrized by a rank $0 \leq r \leq n - k$. Letting $t \coloneqq n - r - k$, the submatrices $A_1 \in \mathbb{F}_2^{r \times t}$, $A_2 \in \mathbb{F}_2^{r \times k}$, $B \in \mathbb{F}_2^{r \times r}$, $C \in \mathbb{F}_2^{r \times k}$, $D \in \mathbb{F}_2^{t \times r}$, and $E \in \mathbb{F}_2^{t \times k}$. Each row of $H$ represents a stabilizer generator of the code in symplectic form (see \cref{app:basics/symplectic}). For $H$ to represent a valid code, the stabilizers must commute ($H \Omega \trans{H} = 0$), which amounts to the constraints
\begin{subequations}
    \begin{align}
        \trans{D} &= A_1 + A_2 \trans{E}, 
        \label{eq:check_matrix_std_form_comm_constraints_gen_1}
        \\
        B + C \trans{A_2} & \quad \text{symmetric}.
        \label{eq:check_matrix_std_form_comm_constraints_gen_2}
    \end{align}
    \label{eq:check_matrix_std_form_comm_constraints_gen}
\end{subequations}

Conveniently, a valid logical basis can also be written,
\begin{equation}
    L_\mathrm{x} =
    \left(
    \begin{array}{ccc|ccc}
    0 & E^{\mathrm T} & \mathbb{I} & C^{\mathrm T} & 0 & 0
    \end{array}
    \right), 
    \qquad
    L_\mathrm{z} =
    \left(
    \begin{array}{ccc|ccc}
    0 & 0 & 0 & A_2^{\mathrm T} & 0 & \mathbb{I}
    \end{array}
    \right),
    \label{eq:logicals_std_form_gen}
\end{equation}
with the anticommutation $L_\mathrm{x} \Omega \trans{L_\mathrm{z}} = \mathbb{I}$ guaranteed. Every qubit stabilizer code can be represented in standard form, thus taking $H, L_\mathrm{x}, L_\mathrm{z}$ in such a format for code discovery is without loss of generality. 

\subsection{SAT problem formulation for discovery of stabilizer phantom codes} 
\label{app:code_discovery/gen}

Each SAT problem instance determines if there exists an $\db{n, r, k, d}$ code that is phantom, and in positive cases, reports the solution. Enumerating the parameters $\db{n, r, k, d}$ therefore discovers phantom codes. To build a SAT problem instance, we declare the submatrices of the standard form, \cref{eq:check_matrix_std_form_gen}, as free variables, and impose the commutation constraints of \cref{eq:check_matrix_std_form_comm_constraints_gen} to ensure a valid stabilizer code. We then use \cref{prop:phantom_code_symplectic_gen} to constrain the code to be phantom. In particular, we declare the logical basis change $R \in \mathrm{Sp}(2k, \mathbb{F}_2)$ and permutation matrices $\{P^{a_i b_i}\}_{i = 1}^p$ as free variables, and impose the gate set constraints therein. 

Imposing the distance-$d$ constraint on the code is more subtle. Our approach is to impose distance lower- and upper-bound constraints simultaneously. To have distance at least $d$ means that every Pauli error of weight at most $d - 1$ must either generate nontrivial syndrome or is itself a stabilizer. This requirement can be expressed as
\begin{equation}
    \left( 
        \biglor_{i = 1}^{n - k} \left( H \Omega \trans{e} \right)_i
    \right)
    \lor
    \left[
        \left( 
            \neg \biglor_{i = 1}^{n - k} \left( H \Omega \trans{e} \right)_i
        \right)
        \land
        \left( 
            \neg \biglor_{i = 1}^{2 k} \left( L \Omega \trans{e} \right)_i
        \right)
    \right],
    \label{eq:code_discovery_gen_distance_constraint_lower_bound_1}
\end{equation}
for every $e \in \mathcal{E}_{\geq} \coloneqq \{e \in \mathcal{P}_n: \wt(e) \leq d - 1\}$ in binary symplectic representation. By applying De Morgan's laws and noting that the set of vectors $e$ is invariant under multiplication by $\Omega$, this can be simplified to
\begin{equation}
    \left( 
        \biglor_{i = 1}^{n - k} \left( H \trans{e} \right)_i
    \right)
    \lor
    \left(
        \neg \biglor_{i = 1}^{2 k} \left( L \trans{e} \right)_i
    \right),
    \label{eq:code_discovery_gen_distance_constraint_lower_bound_2}
\end{equation}
for every $e \in \mathcal{E}_{\geq}$---which we add as constraints. Next, to have distance at most $d$ means that there must exist a weight-$d$ Pauli error that generates no syndrome and is not a stabilizer. This can be expressed as
\begin{equation}
    \biglor_{e \in \mathcal{E}_{\leq}} 
    \Bigg\{
        \left(
            \neg \biglor_{i = 1}^{n - k} \left( H \Omega \trans{e} \right)_i
        \right)
        \land
        \left[
            \left(
                \biglor_{i = 1}^{n - k} \left( H \Omega \trans{e} \right)_i
            \right)
            \lor 
            \left(
                \biglor_{i = 1}^{2 k} \left( L \Omega \trans{e} \right)_i
            \right)
        \right]
    \Bigg\},
    \label{eq:code_discovery_gen_distance_constraint_upper_bound_1}
\end{equation}
where $\mathcal{E}_{\leq} \coloneqq \{e \in \mathcal{P}_n: \wt(e) = d\}$ in binary symplectic representation. Likewise, this can be simplified to
\begin{equation}\begin{split}
    \biglor_{e \in \mathcal{E}_{\leq}} 
    \left[
        \left(
            \neg \biglor_{i = 1}^{n - k} \left( H \trans{e} \right)_i
        \right)
        \land 
        \left(
            \biglor_{i = 1}^{2 k} \left( L \trans{e} \right)_i
        \right)
    \right],
    \label{eq:code_discovery_gen_distance_constraint_upper_bound_2}
\end{split}\end{equation}
which we add as a constraint. Thus, \cref{eq:code_discovery_gen_distance_constraint_lower_bound_2,eq:code_discovery_gen_distance_constraint_upper_bound_2} together enforces that the distance of the code is exactly $d$. The lower-level encoding of these variables and constraints into a form suitable for treatment by a SAT solver follows the prescription in \cref{app:sat_preliminaries}.

\subsection{SAT problem formulation for discovery of CSS phantom codes} 
\label{app:code_discovery/css}

Specializing the code discovery strategy to CSS codes affords key simplifications. First, $B = C = 0$ are fixed in the standard form of stabilizer generator matrices [\cref{eq:check_matrix_std_form_gen}] and the ``half-symplectic'' binary representations $H_\mathrm{x} = \mqty(\mathbb{I} & A_1 & A_2)$, $H_\mathrm{z} = \mqty(D & \mathbb{I} & E)$, $L_\mathrm{x} = \mqty(0 & \trans{E} & \mathbb{I})$, $L_\mathrm{z} = \mqty(\trans{A_2} & 0 & \mathbb{I})$ can be used. There is accordingly no need to declare $B$ and $C$ as free variables and only \cref{eq:check_matrix_std_form_comm_constraints_gen_1} applies as a commutation constraint. Second, we use the simpler \cref{prop:phantom_code_symplectic_css} to constrain the code to be phantom. In particular, there is no need for arbitrary logical basis rotations, and we declare only the permutation matrices $\{P^{a_i b_i}\}_{i = 1}^p$ as variables. 

Third, we impose distance constraints on both the $X$- and $Z$-logical sectors separately, which also affords us the flexibility of individually specifying the desired $X$- and $Z$-distances $d_\mathrm{x}$ and $d_\mathrm{z}$ of the code. Each SAT problem instance therefore determines if there exists an $\db{n, r, k, (d_\mathrm{x}, d_\mathrm{z})}$ CSS code that is phantom, and in positive cases, reports the solution. Explicitly, to enforce distance $d_\mu$ for $\mu \in \{X, Z\}$, we add the lower- and upper-bound constraints
\begin{subequations}\begin{align}
    & \left( 
        \biglor_{i = 1}^{r_{\nu}} \left( H_\mathrm{\nu} \trans{e} \right)_i
    \right)
    \lor
    \left(
        \neg \biglor_{i = 1}^{k} \left( L_\mathrm{\nu} \trans{e} \right)_i
    \right)
    \quad
    \forall e \in \mathcal{E}_{\geq}^{\mu}, \\
    & 
    \biglor_{e \in \mathcal{E}_{\leq}^{\mu}} 
    \left[
        \left(
            \neg \biglor_{i = 1}^{r_{\nu}} \left( H_\mathrm{\nu} \trans{e} \right)_i
        \right)
        \land 
        \left(
            \biglor_{i = 1}^{k} \left( L_\mathrm{\nu} \trans{e} \right)_i
        \right)
    \right],
    \label{eq:code_discovery_css_distance_constraints}
\end{align}\end{subequations}
where $\nu$ is the conjugate sector to $\mu$, and the vector sets $\mathcal{E}_{\geq}^{\mu} \coloneqq \{e \in \mathbb{F}_2^n: \wt(e) \leq d_\mu - 1\}$ and $\mathcal{E}_{\leq}^{\mu} \coloneqq \{e \in \mathbb{F}_2^n: \wt(e) = d_\mu\}$. 

We report results on the minimal block lengths of $k = 2,3$ CSS phantom codes as certified through SAT solving in \cref{tab:min_n_css_phantom_methods}, which is an expanded version of the smaller \cref{tab:min_n_phantom} of the main text. Examples of codes found by SAT that possess exceptional properties, such as high encoding rate, distance, or large logical gate sets are highlighted in \cref{tab:gate_for_phantom_codes} of the main text.

\subsection{SAT problem formulation for discovery of stabilizer and CSS codes without gate set constraints}

In \cref{tab:min_n_phantom} of the main text, we reported minimal block lengths of CSS codes in general. These were obtained by running the SAT solver for code discovery without gate set constraints, which turns the tool into one for generating codes with specified parameters (or certifying that none exists). Specifically, the problem formulation in \cref{app:code_discovery/gen} for stabilizer codes or \cref{app:code_discovery/css} for CSS codes are the same, including the distance constraints; but the permutation gate set or phantomness constraints of \cref{prop:phantom_code_symplectic_gen} or \cref{prop:phantom_code_symplectic_css} are excluded.

{\begin{table}[!ht]
    \centering
    \newcommand{\colwidth}{0.9cm}
    \renewcommand{\arraystretch}{1.2}
    \begin{tabular}{
        |C{0.5cm}|C{1cm}||
        C{\colwidth}|C{\colwidth}|C{\colwidth}|C{\colwidth}|C{\colwidth}|
        C{\colwidth}|C{\colwidth}|C{\colwidth}|C{\colwidth}|C{\colwidth}|
    }
    \hline
    $k$ & \diagbox{$d_\mathrm{x}$}{$d_\mathrm{z}$} & 2 & 3 & 4 & 5 & 6 & 7 & 8 & 9 & 10 
    \\\hline\hline
    \multirow{7}{*}{2} & 2 & 
        4 & 7 & 8 & 11 & 12 & 15 & 16 & 18--19 & 20 
    \\\cline{2-11}
    & 3 &  
        & 11 & 11 & 14 & 15 & ${\geq}\,17$ & ${\geq}\,19$ & ${\geq}\,20$ & ${\geq}\,21$
    \\\cline{2-11}
    & 4 &   
        & & 12 & 15 & 16 & ${\geq}\,19$ & ${\geq}\,20$ & ${\geq}\,21$ & ${\geq}\,22$
    \\\cline{2-11}
    & 5 &
        & & & 18 & 19 & ${\geq}\,20$ & ${\geq}\,21$ & ${\geq}\,22$ & ${\geq}\,22$
    \\\cline{2-11}
    & 6 &  
        & & & & 20 & ${\geq}\,21$ & ${\geq}\,22$ & &
    \\\cline{2-11}
    & 7 &
        & & & & & ${\geq}\,23$ & ${\geq}\,24$ & &
    \\\cline{2-11}
    & 8 &
        & & & & & & ${\geq}\,24$ & &
    \\\hline\hline
    \multirow{5}{*}{3} & 2 &
        ${\geq}\,15$ & 7 & 8 & 14 & 14 & 15 & 16 & 19--21 & 20--22
    \\\cline{2-11}
    & 3 &
        & 14 & 14 & ${\geq}\,15$ & ${\geq}\,16$ & 18--21 & ${\geq}\,19$ & ${\geq}\,21$ & ${\geq}\,21$ 
    \\\cline{2-11}
    & 4 &
        & & 15 & ${\geq}\,16$ & ${\geq}\,17$ & & & &
    \\\cline{2-11}
    & 5 & 
        & & & ${\geq}\,19$ & ${\geq}\,20$ & & & &
    \\\cline{2-11}
    & 6 & 
        & & & $\geq$19 & & & & &
    \\\hline
    \end{tabular}
    \caption{\textbf{Smallest $\bm{n}$ for CSS phantom codes with $\bm{k=2}$–$\bm{3}$}. Entries show the minimal $n$ known for each $(d_\mathrm{x}, d_\mathrm{z})$. Tables are symmetric along the diagonal since each code has a dual with $d_\mathrm{x}$ and $d_\mathrm{z}$ exchanged, assessed by deforming the code with Hadamards on every qubit ($H^{\otimes n}$); we therefore omit the $d_\mathrm{x} > d_\mathrm{z}$ entries (shaded grey). All $n \leq 14$ cells are from exhaustive code enumeration and the rest are from SAT code discovery. Ranges of $n$, for example $n_1$--$n_2$, are shown when SAT discovers a phantom code at size $n_2$ but cannot prove that no phantom code exists at size $n_1$ within time limits; lower bounds, for example $\geq n_{\mathrm{b}}$, are shown when SAT code discovery has proven that no phantom code exists at sizes $< n_{\mathrm{b}}$ but has not yet found a concrete solution. Time limit for SAT solving was constrained to be $14$ days.}
    \label{tab:min_n_css_phantom_methods}
\end{table}}

\section{Phantom quantum Reed--Muller codes}
\label{app:qrm}

Our numerical methods are tractable only for $k \le 4$. To reach higher $k$, we construct an infinite family of CSS phantom codes, starting from quantum Reed--Muller (qRM) codes. By fixing selected logical qubits to $\ket*{\overline{0}}$ or $\ket*{\overline{+}}$, we promote the corresponding logical operators to $Z$- or $X$-type stabilizers, respectively. The resulting codes have parameters $\db{2^m, m-l+1, \min(2^{m-l}, 2^l)}$. This construction is naturally described using the polynomial representation of qRM codes, which we briefly review below.

\subsection{Review of Reed--Muller codes and polynomial formalism}
\label{app:qrm/polymonial}

First, we briefly review classical Reed--Muller codes. Reed--Muller (RM) codes are a family of classical linear codes denoted $\RM(r, m)$, parameterized by the order $r$ and block length $2^m$, with code parameters $[2^m, \sum_{i=0}^r \binom{m}{i}, 2^{m-r}]$. The generators of RM codes can be constructed using the \textit{polynomial formalism}~\cite[Chapter~13]{macwilliams1977theory}. Since the dual of RM codes are themselves also RM codes, they can be constructed with the same formalism.

\subsubsection{Polynomial formalism}

Consider $m$ variables over the binary field \(x_1, x_2, \dots, x_m \in \mathbb{F}_2.\) A \emph{monomial} is a product of these variables that uses each variable at most once: for example, $x_1x_3$, $x_2x_4x_5$, or $1$. Here $1$ is the constant monomial, the product of zero variables. A \emph{polynomial} is a sum (modulo 2) of monomials, such as $1 + x_1 + x_2x_3$. An important example is the NOT monomial \(\bar{x}_i = 1 - x_i\), which evaluates to $1$ precisely when $x_i=0$.

It is useful to view a polynomial as a function $f : \mathbb{F}_2^m \to \mathbb{F}_2$. For instance, with input coordinates $(a_1,a_2,a_3) = (1,0,0)$, the polynomial $f(x_1,x_2,x_3) = x_1x_3$ evaluates to $f(1,0,0) = 1\cdot 0 = 0$. Since for binary variables we have $x^2_i=x_i$, every polynomial can be uniquely represented as a sum of square-free monomials.

Viewing polynomials as Boolean functions naturally associates to each polynomial a length-$2^m$ binary vector, obtained by evaluating it on all $2^m$ binary inputs in $\mathbb{F}_2^m$. For each input tuple $(a_1,\dots,a_m)$, we substitute these values into the polynomial $f(x_1,\dots,x_m)$ and record the output $f(a_1,\dots,a_m) \in \mathbb{F}_2$. Collecting all outputs in lexicographic order (from $00\cdots0$ to $11\cdots1$), which fixes a canonical ordering of coordinates, gives the \emph{evaluation vector}.

\begin{definition}[Evaluation vector]
Given a polynomial $f\in \F_2 [x_1, \dots, x_m]/\la x_1^2-x_1,\cdots,x_m^2-x_m\ra$, the evaluation vector is the vector of values $f$ takes on at all the $2^m$ possible coordinates. Note that the values are modulo $2$.
\end{definition}

For example, below we present the evaluation vector for all monomials up to $m=2$
\[
\begin{array}{c|cccc}
a_2 & 1 & 1 & 0 & 0 \\
a_1 & 1 & 0 & 1 & 0 \\ \hline
1 & 1 & 1 & 1 & 1 \\
x_1 & 1 & 0 & 1 & 0 \\
x_2 & 1 & 1 & 0 & 0 \\ 
x_1x_2 & 1 & 0 & 0 & 0
\end{array}
\]

We now review some salient properties of polynomials.
\begin{itemize}
    \item \textit{Algebraic conventions.} 
    All arithmetic is over $\mathbb{F}_2$, so the polynomial ring has characteristic two. In particular, for any polynomial $f$ we have $2f=0$, and hence addition and subtraction coincide. Moreover, since we quotient by the relations $x_i^2 = x_i$, every polynomial in $\mathbb{F}_2[x_1,\dots,x_m]/\langle x_1^2-x_1,\dots,x_m^2-x_m\rangle$ can be written as a sum of square-free monomials; higher powers satisfy $x_i^a = x_i$ for all $a \ge 1$.
    \item \textit{Overlap.}
    Given two polynomials $f,g$, their \emph{overlap} is defined as the bitwise AND of their evaluation vectors. Equivalently, the overlap is the evaluation vector of the product polynomial $fg$. For $m=2$, let $f(x_1,x_2)=x_1$ and $g(x_1,x_2)=x_2$. Their product is $fg=x_1x_2$, whose evaluation vector is $(0,0,0,1)$ in lexicographic order.
    \item \textit{Weight.} 
    The \textit{weight} $\mathrm{wt}(f)$ of a polynomial $f$ is the Hamming weight of its evaluation vector. The following properties will be useful:
    \begin{itemize}
        \item If $f$ is a monomial in $m$ variables that omits $i$ variables, then $\mathrm{wt}(f) = 2^i$.
        \item If a polynomial $f$ omits $j$ variables (i.e., no monomial term depends on them), then $\mathrm{wt}(f)$ is divisible by $2^j$.
    \end{itemize}
\end{itemize}

\subsubsection{Reed--Muller codes}

\begin{definition}[Reed--Muller codes]
\label{def:Reed--Muller-code}
For a positive integer $m$ and $r \leq m$, the Reed--Muller code $\RM(r, m)$ is the span of evaluation vectors of all monomials in $m$ variables of degree at most $r$.
\end{definition}
The code $\RM(r,m)$ has parameters $[2^m,\, \sum_{i=0}^r \binom{m}{i},\, 2^{m-r}]$. The distance arises from the highest-degree (degree-$r$) monomials, whose evaluation vectors have weight $2^{m-r}$, giving \( d = 2^{m-r}.\) The dual code of $\RM(r,m)$ is $\RM(m-r-1,m)$. Examples of generating monomials are
\[
\begin{aligned}
\mathrm{RM}(2,2): &\quad \{1,\, x_1,\, x_2,\, x_1x_2\},\\
\mathrm{RM}(1,3): &\quad \{1,\, x_1,\, x_2,\, x_3\}.
\end{aligned}
\]

All Reed–Muller (RM) codes share the same structure of encoding circuit as illustrated in \cref{fig:encoding-RM}~\cite{arikan2008rm}; different RM codes are obtained solely by choosing which input qubits are variable (initialized to either $\ket{0}$ or $\ket{1}$), with all remaining inputs fixed to $\ket{0}$. The variable inputs correspond to the generating monomials of the code. For example, $\RM(1,3)$ has variable inputs associated with ${1,x_1,x_2,x_3}$, while $\RM(2,2)$ uses ${1,x_1,x_2,x_1x_2}$; in both cases $k=4$. 

\begin{figure}[t]

\centering
\begin{quantikz}[row sep=0.5em, column sep=1.em]
\lstick{$x_1x_2x_3$: $\qquad$ 0}    & \targ{}   & \qw       & \targ{} & \qw & \qw & \qw & \targ{} &\\
\lstick{$x_1x_2$: $\qquad$ 0}    & \ctrl{-1}  & \targ{}  & \qw    & \qw & \qw & \targ{} & \qw &\\
\lstick{$x_1x_3$: $\qquad$ 0}    & \targ{}   & \qw       & \ctrl{-2} & \qw & \targ{} & \qw & \qw &\\
\lstick{$x_1$: $\quad$ 0/1}   & \ctrl{-1}  & \ctrl{-2} & \qw    & \targ{} & \qw & \qw & \qw &\\
\lstick{$x_2x_3$: $\qquad$ 0}    & \targ{}   & \qw       & \targ{} & \qw & \qw & \qw & \ctrl{-4} &\\
\lstick{$x_2$: $\quad$ 0/1}   & \ctrl{-1}  & \targ{}  & \qw    & \qw & \qw & \ctrl{-4} & \qw &\\
\lstick{$x_3$: $\quad$ 0/1}   & \targ{}   & \qw       & \ctrl{-2} & \qw & \ctrl{-4} & \qw & \qw &\\
\lstick{$1$: $\quad$ 0/1}   & \ctrl{-1}  & \ctrl{-2} & \qw    & \ctrl{-4} & \qw & \qw & \qw &\\
\end{quantikz}
\caption{\textbf{Encoding circuit of $\bm{\RM(1,3)}$}. Bits corresponding to monomials of degree more than 1 are fixed to 0, while the rest are variable. There are 4 variable bits, corresponding to the code dimension $k=4$.}
\label{fig:encoding-RM}
\end{figure}
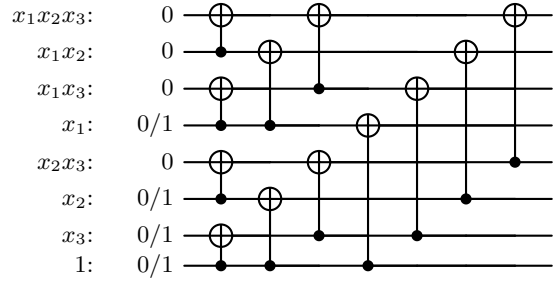

\subsubsection{Affine transformations}

The automorphism group of a Reed--Muller code arises from affine transformations of the variables.
Let $A \in \mathrm{GL}(m,\mathbb{F}_2)$ and $\vb{b} \in \mathbb{F}_2^m$, and define the affine map
\begin{equation}\label{eq:RM_affine_automorphism}
T(x) = Ax + \vb{b}.
\end{equation}
Such a transformation induces (i) a permutation of the $2^m$ coordinates via $a \mapsto T(a)$, and
(ii) an action on polynomials by substitution, $f(x) \mapsto f(T(x))$.
Since affine transformations preserve degree, this substitution maps degree-${\le} \, r$ polynomials to degree-${\le} \, r$ polynomials.
Moreover, applying the inverse of the induced coordinate permutation has the same effect on evaluation vectors as polynomial substitution.
Consequently, affine transformations act as automorphisms of $\RM(r,m)$, and
\[
\operatorname{Aut}(\RM(r,m)) = \operatorname{AGL}(m,\mathbb{F}_2).
\]

\begin{lemma}[Reed--Muller code automorphism~{\cite{macwilliams1977theory}}]
For $A \in \GL(m,\mathbb{F}_2)$ and $\vb{b} \in \mathbb{F}_2^m$, the coordinate permutation $\sigma \in S_{2^m}$ that maps $A\vb{x} + \vb{b}$ to $\vb{x}$ is a code automorphism, under which codewords transform as $f(\vb{x}) \mapsto f(T \vb{x})$.
This \emph{general affine group} formed by $A$ and $\vb{b}$ is the automorphism group of the RM codes in $m$ variables for any $1\le r\le m-2$.
\end{lemma}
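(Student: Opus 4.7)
The plan is to separate the two claims: first that the affine action yields automorphisms (the forward direction), then that these exhaust $\Aut(\RM(r,m))$ in the non-degenerate range $1\le r\le m-2$ (the converse).

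For the forward direction, I would verify by direct substitution that the coordinate permutation $\sigma$ induced by $T(\vb{x})=A\vb{x}+\vb{b}$ carries the evaluation vector of any polynomial $f$ to the evaluation vector of $f\circ T$. Because $T$ is affine, substituting $A\vb{x}+\vb{b}$ into a degree-$\le r$ monomial yields a polynomial of degree $\le r$: no degree can be raised, since $A\vb{x}$ acts linearly and the constant $\vb{b}$ can only lower degree after binomial expansion over $\mathbb{F}_2$. Hence $f\in\RM(r,m)$ implies $f\circ T\in\RM(r,m)$, and invertibility of $T$ gives $\sigma\in\Aut(\RM(r,m))$.

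For the converse, my plan follows the classical minimum-weight-codeword argument of MacWilliams and Sloane. I would first show that the minimum-weight codewords of $\RM(r,m)$ are precisely the characteristic vectors of $(m-r)$-dimensional affine flats in $\mathbb{F}_2^m$: each flat $\{\vb{x}:L_i(\vb{x})=c_i,\ i=1,\ldots,r\}$ has indicator $\prod_{i=1}^r\!\bigl(1+L_i(\vb{x})+c_i\bigr)$, a codeword of degree $r$ and weight $2^{m-r}$, and the classical characterization shows these exhaust all minimum-weight words. Any $\sigma\in\Aut(\RM(r,m))$ must then permute this family $\mathcal{F}_{m-r}$ of $(m-r)$-flats. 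Next, I would recover full affine structure by noting that the complement of a flat (and the symmetric difference of two intersecting flats) is again representable through low-degree polynomials, so iterated intersection and union arguments show that $\sigma$ preserves flats of every dimension, in particular the family of $2$-element affine lines $\{\vb{a},\vb{a}+\vb{v}\}$. Finally I would invoke the fundamental theorem of affine geometry over $\mathbb{F}_2$: a permutation of $\mathbb{F}_2^m$ preserving all affine lines must lie in $\mathrm{AGL}(m,\mathbb{F}_2)$, with no room for field semiautomorphisms since $\Aut(\mathbb{F}_2)$ is trivial.

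The principal obstacle is the combinatorial descent from ``preserves $(m-r)$-flats'' to ``preserves all lines,'' and this is precisely where the hypothesis $1\le r\le m-2$ enters. The upper bound $r\le m-2$ ensures that minimum-weight codewords correspond to flats of dimension $\ge 2$, so their intersections carry genuine affine data; for $r=m-1$ the code is the even-weight parity-check code and for $r=m$ it is the full ambient space, both having the entire $S_{2^m}$ as automorphism group. The lower bound $r\ge 1$ rules out the repetition code $\RM(0,m)$, whose only nonzero codeword is $\vb{1}$ and which also admits $S_{2^m}$. Executing the intersection-based reconstruction rigorously---verifying that enough $(m-r)$-flats intersect in lower-dimensional flats to uniquely pin down the line structure, and that no additional ``sporadic'' permutations slip through in this parameter range---is the technical step I expect to demand the most care.
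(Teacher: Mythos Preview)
The paper does not actually prove this lemma; it is stated as a cited result from MacWilliams--Sloane, with only a one-line sketch of the forward direction (affine substitution preserves degree, hence preserves $\RM(r,m)$). Your forward direction matches that sketch and is fine.

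Your converse has a genuine gap. The sentence ``a permutation of $\mathbb{F}_2^m$ preserving all affine lines must lie in $\mathrm{AGL}(m,\mathbb{F}_2)$'' is false as stated: over $\mathbb{F}_2$, an affine line through two distinct points contains \emph{only} those two points, so every element of $S_{2^m}$ trivially preserves all lines. The fundamental theorem of affine geometry in its line-preserving form simply does not apply over $\mathbb{F}_2$. You partially anticipate this when you note that $r\le m-2$ forces the minimum-weight flats to have dimension $\ge 2$, but then you undo that by descending all the way to lines. The correct endpoint of the descent is the family of $2$-flats (affine planes, four-point sets): a permutation of $\mathbb{F}_2^m$ with $m\ge 3$ that preserves all $2$-flats is affine, and this is the version you need to invoke. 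Equivalently, one can follow MacWilliams--Sloane and reduce via duality to $\RM(1,m)$, where the automorphism group is computed directly from the doubly transitive action. Either route closes the argument, but as written your final step would let the full symmetric group through.
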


The following example illustrates how transforming the monomial is equivalent to permuting bits. Consider
\begin{equation}
A =
\begin{pmatrix}
1 & 1 & 0\\
0 & 1 & 0\\
0 & 0 & 1
\end{pmatrix},
\qquad
\vb{b} =
\begin{pmatrix}
0\\0\\0
\end{pmatrix}.
\label{eq:ex1-affine-trans}
\end{equation}
As shown in \cref{fig:affine-trans}, this transformation is equivalent to swapping bits $(2,3)$ and $(6,7)$.

\begin{figure}[t]
\centering
\begin{minipage}[t]{0.28\textwidth}
\centering
\renewcommand{\arraystretch}{1.1}
\setlength{\tabcolsep}{3pt}
\scalebox{0.72}{
\begin{tabular}{c|cccccccc}
 & 7 & 6 & 5 & 4 & 3 & 2 & 1 & 0\\
 & 111 & 110 & 101 & 100 & 011 & 010 & 001 & 000\\
\hline
$1$   & 1 & 1 & 1 & 1 & 1 & 1 & 1 & 1\\
$x_1$ & 1 & 0 & 1 & 0 & 1 & 0 & 1 & 0\\
$x_2$ & 1 & 1 & 0 & 0 & 1 & 1 & 0 & 0\\
$x_3$ & 1 & 1 & 1 & 1 & 0 & 0 & 0 & 0\\
\end{tabular}}
\\[4pt]
{\footnotesize \textit{Original generators}}
\end{minipage}\hfill
\begin{minipage}[t]{0.34\textwidth}
\centering
\renewcommand{\arraystretch}{1.1}
\setlength{\tabcolsep}{3pt}
\scalebox{0.72}{
\begin{tabular}{c|cccccccc}
 & 7 & 6 & 5 & 4 & 3 & 2 & 1 & 0\\
 & 111 & 110 & 101 & 100 & 011 & 010 & 001 & 000\\
\hline
$1$   & 1 & 1 & 1 & 1 & 1 & 1 & 1 & 1\\
\cellcolor{yellow}$x_1{+}x_2$ & \cellcolor{yellow}0 & \cellcolor{yellow}1 & 1 & 0 & \cellcolor{yellow}0 & \cellcolor{yellow}1 & 1 & 0\\
$x_2$ & 1 & 1 & 0 & 0 & 1 & 1 & 0 & 0\\
$x_3$ & 1 & 1 & 1 & 1 & 0 & 0 & 0 & 0\\
\end{tabular}}
\\[4pt]
{\footnotesize \textit{Affine map applied to monomials}}
\end{minipage}\hfill
\begin{minipage}[t]{0.34\textwidth}
\centering
\renewcommand{\arraystretch}{1.1}
\setlength{\tabcolsep}{3pt}
\scalebox{0.72}{
\begin{tabular}{c|cccccccc}
 & \cellcolor{yellow}6 & \cellcolor{yellow}7 & 5 & 4 & \cellcolor{yellow}2 & \cellcolor{yellow}3 & 1 & 0\\
 & \cellcolor{yellow}110 & \cellcolor{yellow}111 & 101 & 100 & \cellcolor{yellow}010 & \cellcolor{yellow}011 & 001 & 000\\
\hline
$1$   & 1 & 1 & 1 & 1 & 1 & 1 & 1 & 1\\
$x_1$ & \cellcolor{yellow}0 & \cellcolor{yellow}1 & 1 & 0 & \cellcolor{yellow}0 & \cellcolor{yellow}1 & 1 & 0\\
$x_2$ & 1 & 1 & 0 & 0 & 1 & 1 & 0 & 0\\
$x_3$ & 1 & 1 & 1 & 1 & 0 & 0 & 0 & 0\\
\end{tabular}}
\\[4pt]
{\footnotesize \textit{Affine map applied by permuting bits}}
\end{minipage}
\caption{\textbf{Equivalence between affine transformation of variables and permutation of bits.} We illustrate an example of an affine transformation, as described in \cref{eq:ex1-affine-trans}, and a bit permutation that produces an identical action on the code.}
\label{fig:affine-trans}
\end{figure}

A second example is shown in \cref{fig:rm_aut}, where the generator matrix is for $\RM(1,2)$ and the variables are mapped as $T(x_2) = x_1 + x_2$ and $T(x_1) = x_1 + 1$.
Thus, the coordinates (i.e., columns in the generator matrix) are permuted according to $T^{-1}$: $00 \mapsto 11$, $11 \mapsto 10$, $10 \mapsto 01$, and $01 \mapsto 00$. 
One can verify that this is equivalent to transforming the polynomials (i.e., rows of the generator matrix): $1 \mapsto 1$ (the constant polynomial), $x_1 \mapsto T(x_1)=x_1+1$, and $x_2 \mapsto T(x_2)=x_1+x_2$.

\begin{figure}
\centering
\includegraphics[scale=.5]{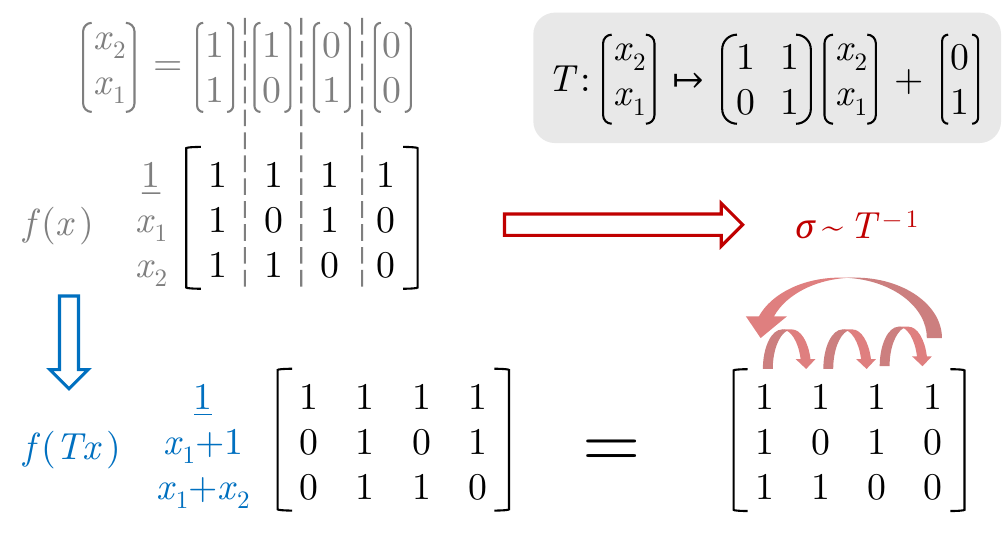}
\vspace{-4pt}
\caption{\label{fig:rm_aut}\textbf{Automorphism of a Reed--Muller code.}
An affine transformation $T$ to the variables induces two equivalent operations on the generator matrix: 1) permuting the coordinates by $T^{-1}$, and 2) substituting $x_i \mapsto T(x_i)$ in each polynomial.}
\end{figure}

\subsection{Quantum Reed--Muller code construction}
\label{app:qrm/construction}

The stabilizers and logical operators of quantum Reed--Muller (qRM) codes can be described via classical Reed--Muller codes.

\begin{definition}[Quantum Reed--Muller codes~\cite{gong2024computation}]
\label{def:general_qrm}
For positive $m$ and $l\le m$, the quantum code with $X$-type stabilizers given by $\RM(l-1,m)$ and $Z$-type stabilizers given by $\RM(m-l-1,m)$ has parameters $\db{2^m, \binom{m}{l}, \min(2^{m-l},\,2^l)}$.
\end{definition}

Following \cref{def:general_qrm}, on qRM codes, the degree-$l$ monomials serve as $X$-type logical operators, and the $Z$-type logical operators can be obtained by taking the corresponding negated complementing monomials: for a monomial $f$ of degree $l$, take $x_1 x_2 \dots x_m/f$ (a degree $m-l$ monomial) and replace each variable $x_i$ with its complement $\overline{x_i} = x_i + 1$.

While the codes in \cref{def:general_qrm} are valid CSS quantum codes (see e.g.~Ref.~\cite{gong2024computation} for proof)---i.e.~their stabilizers commute, and their logical operators have the correct commutation structure between themselves and commute with stabilizers---they are not guaranteed to be phantom. Yet, as we will show, there is a way to construct phantom codes out of these codes. 

\begin{proposition}[Hypercube codes are phantom]
    All codes with $l=1$ from \cref{def:general_qrm} are phantom. These codes have parameters $\db{2^D,D,2}$ and are also known as hypercube codes.
\end{proposition}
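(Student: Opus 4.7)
\medskip

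The plan is to exhibit, for every ordered pair $(a,b)$, an explicit element of the affine automorphism group of the underlying Reed--Muller codes whose induced qubit permutation realizes $\overline{\mathrm{CNOT}}_{ab}$. For the $l=1$ member of \cref{def:general_qrm} (so $m=D$), the $X$-stabilizers are $\RM(0,D)=\spn\{1\}$, the $Z$-stabilizers are $\RM(D-2,D)$, and a natural logical basis is $\overline{X}_i \leftrightarrow x_i$ and $\overline{Z}_i \leftrightarrow \prod_{j\neq i}\bar{x}_j$ in the polynomial formalism of \cref{app:qrm/polymonial}.

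First I would fix $(a,b)$ and define the linear map $T\in\GL(D,\mathbb{F}_2)$ acting on variables by $x_a \mapsto x_a+x_b$ and $x_j\mapsto x_j$ for $j\neq a$. Being an element of the general affine group, $T$ induces a permutation $\sigma\in S_{2^D}$ of the physical qubits and, by the Reed--Muller automorphism lemma recalled in \cref{app:qrm/polymonial}, preserves both $\RM(0,D)$ (trivially, since $1\mapsto 1$) and $\RM(D-2,D)$ (since affine maps are automorphisms of every RM code). Hence $\sigma$ preserves the stabilizer group.

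Next I would compute the induced substitution on the logical generators and verify it matches the symplectic action of $\overline{\mathrm{CNOT}}_{ab}$ modulo stabilizers. On $X$-logicals this is immediate: $T(x_a)=x_a+x_b$, so $\overline{X}_a \mapsto \overline{X}_a\overline{X}_b$, while $\overline{X}_b$ and $\overline{X}_j$ ($j\neq a,b$) are fixed because $T$ leaves $x_b$ and $x_j$ untouched. For $Z$-logicals, $\overline{Z}_a\leftrightarrow\prod_{j\neq a}\bar{x}_j$ has no dependence on $x_a$ and is therefore fixed. For $\overline{Z}_b$, expanding
\begin{equation*}
    T\Bigl(\bar{x}_a\prod_{j\neq a,b}\bar{x}_j\Bigr)
    = (1+x_a+x_b)\prod_{j\neq a,b}\bar{x}_j
    = (x_a+x_b)\prod_{j\neq a,b}\bar{x}_j \;+\; \prod_{j\neq a,b}\bar{x}_j,
\end{equation*}
where the first term is the polynomial representation of $\overline{Z}_a\overline{Z}_b$ (obtained by summing $\prod_{j\neq a}\bar{x}_j$ and $\prod_{j\neq b}\bar{x}_j$) and the second term has degree $D-2$ and thus lies in $\RM(D-2,D)$. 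For $\overline{Z}_j$ with $j\neq a,b$, the product $\prod_{k\neq j}\bar{x}_k$ contains the factor $\bar{x}_b$, and after substitution the correction picks up $x_b\cdot\bar{x}_b = x_b(1+x_b)=0$, leaving $\overline{Z}_j$ invariant.

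The main obstacle is bookkeeping: one must confirm that the degree-$(D-2)$ correction landing on $\overline{Z}_b$ is genuinely a $Z$-stabilizer and that the $x_b(1+x_b)=0$ cancellation is the only mechanism needed to kill the unwanted terms on $\overline{Z}_j$; both are routine once the polynomial formalism is set up. Putting the pieces together, the qubit permutation $\sigma$ implements exactly $\overline{\mathrm{CNOT}}_{ab}$ for every $(a,b)\in [D]^2$, $a\neq b$, so by \cref{prop:phantom_code_symplectic_css} the hypercube code is phantom.
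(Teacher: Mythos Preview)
Your proof is correct and follows essentially the same route as the paper: both exploit that $\GL(D,\mathbb{F}_2)$ acts as automorphisms of every $\RM(r,D)$ and hence as codespace-preserving permutations whose induced action on the degree-one monomials $x_1,\dots,x_D$ is exactly the group of $\overline{\mathrm{CNOT}}$ circuits. The paper argues this bijection abstractly, whereas you spell out the particular elementary matrix $T$ for each $\overline{\mathrm{CNOT}}_{ab}$ and verify the $Z$-logical action by hand; your explicit check is a nice sanity confirmation but not strictly needed once stabilizer preservation and the $X$-logical action are known.
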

\begin{proof}
The automorphism of classical RM codes directly applies to this quantum code family.
Specifically, under an affine transformation $T$ as in \cref{eq:RM_affine_automorphism}, permuting the coordinates of physical qubits by $T^{-1}$ is equivalent to transforming the logical generator monomials by $T$.
The affine component $\vb{b}$ acts as a logical identity: when applied to a degree-$l$ monomial (i.e., an $X$-type logical), $\vb{b}$ preserves the degree-$l$ part (the logical) and only introduces lower-degree terms (corresponding to stabilizers).
Thus, distinct actions on the logical space are captured by $A \in \mathrm{GL}(m,\mathbb{F}_2)$.
For example, consider the $\db{8,3,2}$ code, and its higher-dimensional generalization, the $\db{2^D, D, 2}$ hypercube code ($m=D$ and $l=1$), where the $X$-type logical generators are the $D$ degree-1 monomials $x_1,\dots,x_D$.
This hypercube code is phantom, as every possible logical $\overline{\mathrm{CNOT}}$ circuit one-to-one corresponds to an $A$ describing a qubit permutation.
\end{proof}

An immediate way to build higher-distance phantom codes is to concatenate the $\db{2^D,D,(d_\rmx=2^{D-1},d_\rmz=2)}$ hypercube code with some $k=1$ code. 
For example, to obtain a code with approximately equal $X$- and $Z$-sector error-correction capability (i.e.~distances)\footnote{Another reason one might be interested in such balanced codes is that a transversal Hadamard together with qubit permutations might be a valid logical gate. However, we suspect that phantom codes with $ZX$-duality for $k>2$ may not exist. One can show that transversal $H^{\otimes n}$ implementing logical $\overline{H}^k$ up to any logical $\overline{\mathrm{CNOT}}$ for $k>2$ is not possible using \cref{thm:no_noncommuting_transversal_gates} and \cref{lemma:basis_change_matrix}, but it is not clear whether transversal $H$ on a subset of qubits and/or composed with permutations that are not necessarily a valid logical operation can overcome this barrier. In fact, we have not found any $k=3$ phantom code with $d_\rmx=d_\rmz$ and $H_\rmx,H_\rmz$ having the same rank through code enumeration up to $n = 14$ and non-exhaustive SAT search up to $n \sim 21$.}, one can balance the distance by 
concatenating with a $\db{2^{D-2},1,(d_\rmx=1,d_\rmz=2^{D-2})}$ phase-flip repetition code~\cite{hangleiter2025fault}. The resulting phantom code has parameters $\db{2^{2D-2},D,2^{D-1}}$ and possesses $2^{2D-2}-2^D+1$ $X$-type stabilizers and $2^{D}-D-1$ $Z$-type stabilizers. Despite being balanced in distances, these codes nonetheless contain significantly more stabilizers in one basis than the other, which translates to a practical difference in logical error rate performance in the two sectors.

Next, we show that it is possible to produce more balanced phantom codes by direct construction from the $\db{2^m, \binom{m}{l}, \min(2^{m-l},\,2^l)}$ qRM codes. The strategy is to retain a few logicals while carefully promoting the other logicals to either $X$- or $Z$-type stabilizers. We begin by considering the simplest case where all other logicals are fixed into $Z$-type stabilizers.

\begin{theorem}[Phantom quantum Reed--Muller codes]
\label{thm:phantom_qRM}
From the $\db{2^m, \binom{m}{l}, \min(2^{m-l},\,2^l)}$ qRM code, retain the following $m-l+1$ $X$-logical generators: $x_1x_2\dots x_{l-1}x_l$, $x_1x_2\dots x_{l-1}x_{l+1}$, ..., $x_1x_2\dots x_{l-1}x_m$, and fix the other logicals to the $|\overline{0}\rangle$ state---effectively extending the $Z$-type stabilizers with the negated complements of the other $X$-type logicals.
The resulting code is phantom with parameters
\[
\db{2^m, m-l+1, \min(2^{m-l}, 2^l)}.
\]
\end{theorem}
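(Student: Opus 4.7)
The plan is to verify the stated code parameters and then exhibit an explicit family of affine substitutions of the variables that, interpreted as physical-qubit permutations, realize every logical $\overline{\mathrm{CNOT}}$ between retained logical qubits. Phantomness then follows via \cref{prop:phantom_code_symplectic_css}.

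First I would confirm the parameters. The block length $n=2^m$ is inherited from the parent code. Since exactly $m-l+1$ of the original $\binom{m}{l}$ $X$-logical generators are retained and the complementary $Z$-logicals are promoted to stabilizers, $k=m-l+1$. The $X$-stabilizer group $\RM(l-1,m)$ is unchanged, so each retained $X$-logical has weight at least $2^{m-l}$ in every representative. Dually, the enlarged $Z$-stabilizer group consists of $\RM(m-l-1,m)$ together with degree-$(m-l)$ negated-complement monomials, all of which lie in $\RM(m-l,m)$; any nontrivial $Z$-logical representative therefore still sits in $\RM(m-l,m)\setminus\RM(m-l-1,m)$, whose minimum weight is $2^l$ by the parent code. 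The retained generators saturate both bounds, so $d=\min(2^{m-l},2^l)$.

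For phantomness, for any ordered pair $(a,b)$ with $a,b\in\{l,\ldots,m\}$ and $a\neq b$, I would define the affine substitution $T_{ab}$ by $x_b\mapsto x_a+x_b$, fixing all other variables. Its linear part is an elementary matrix in $\mathrm{GL}(m,\mathbb{F}_2)$, so $T_{ab}$ induces a permutation of the $2^m$ physical qubits. On a retained $X$-logical generator $L_j=x_1\cdots x_{l-1}x_j$, one computes $L_j\mapsto L_j$ for $j\neq b$ and $L_b\mapsto x_1\cdots x_{l-1}(x_a+x_b)=L_a+L_b$, which is precisely the $X$-sector action of $\overline{\mathrm{CNOT}}_{ba}$. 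A parallel computation using $T_{ab}(\bar x_b)=1+x_a+x_b$ and $x_a\bar x_a\equiv 0\pmod 2$ gives the dual action on the retained $Z$-logicals $\overline{Z}_j=\prod_{i\in\{l,\ldots,m\},\,i\neq j}\bar x_i$.

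The main obstacle is to verify that $T_{ab}$ is an automorphism of the \emph{extended} stabilizer group. As an affine substitution, $T_{ab}$ preserves both $\RM(l-1,m)$ and $\RM(m-l-1,m)$, so the parent stabilizer groups are preserved; it remains to check that each new $Z$-stabilizer $\bar\mu$, the negated complement of a non-retained degree-$l$ monomial $\mu$, maps into the extended $Z$-stabilizer group. I would carry out a three-case analysis according to how $\bar\mu$ contains $\bar x_a,\bar x_b$: (i) if $\bar\mu$ lacks $\bar x_b$, it is fixed; (ii) if it contains both $\bar x_a$ and $\bar x_b$, the identity $\bar x_a(1+x_a+x_b)=\bar x_a\bar x_b$ shows it is again fixed; and (iii) if it contains $\bar x_b$ but not $\bar x_a$, direct expansion gives $T_{ab}(\bar\mu)=\bar\mu+\bar\mu|_{b\to a}+r$ with $r\in\RM(m-l-1,m)$. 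The crux is case (iii): writing $\mu$ as a monomial containing $x_a$ but not $x_b$, the shifted monomial $\mu|_{a\to b}$ (whose negated complement is $\bar\mu|_{b\to a}$) has support $(\operatorname{supp}\mu\setminus\{a\})\cup\{b\}$, which equals $\{1,\ldots,l-1,j\}$ for some $j$ only if $j=b$ and $\mu=x_1\cdots x_{l-1}x_a=L_a$, contradicting the non-retained assumption. Hence $\bar\mu|_{b\to a}$ is itself a new $Z$-stabilizer and $T_{ab}(\bar\mu)$ lies in the extended $Z$-stabilizer group. Ranging over ordered pairs $(a,b)$ and invoking \cref{prop:phantom_code_symplectic_css} then completes the proof.
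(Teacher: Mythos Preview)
Your proposal is correct and follows essentially the same approach as the paper: both use the $\GL(m-l+1,\mathbb{F}_2)$ subgroup acting invertibly on $\{x_l,\dots,x_m\}$ while fixing $x_1,\dots,x_{l-1}$ to realize all permutation $\overline{\mathrm{CNOT}}$s. Where the paper leaves codespace preservation implicit (relying on the orbit structure of degree-$l$ monomials under this subgroup), you spell out the elementary generators $T_{ab}$ and carry out an explicit three-case verification that the extended $Z$-stabilizer group is preserved, which is a welcome level of detail.
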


\begin{proof}
By construction, the code has $2^m$ physical qubits and $m-l+1$ logical qubits.
The $X$-type logicals and stabilizers set is a subcode of $\RM(l,m)$, so the $X$-distance of the new code cannot decrease---it in fact remains $2^{m-l}$, since the monomial $x_1x_2\dots x_l$ still has that weight.
The $Z$-type logicals and stabilizers remain within $\RM(m-l, m)$, keeping the $Z$-distance at $2^l$. 
To enable arbitrary $\overline{\mathrm{CNOT}}$s, it suffices to use the $\GL(m-l+1,\mathbb{F}_2)$ subgroup of $\GL(m,\mathbb{F}_2)$, acting invertibly on $\{x_l, x_{l+1}, \dots, x_m\}$ but fixing $x_1,\dots,x_{l-1}$. The action of this subgroup on the $X$-type logical monomials corresponds one-to-one to the set of possible permutation $\overline{\mathrm{CNOT}}$ circuits. 
\end{proof}

One can build a more balanced version of phantom qRM codes as follows. We still consider the above $\GL(m-l+1,\mathbb{F}_2)$ action which invertibly transforms $\{x_l, x_{l+1}, \dots, x_m\}$.
We can divide the original $X$-type logicals, namely the degree-$l$ monomials into invariant subsets (orbits) under this action (the invariance is up to degree $\le l-1$ monomials, i.e., stabilizers), then promote each orbit, either completely to $X$-type stabilizers, or (their negated complement) to $Z$-type stabilizers. To see why the resulting code is still phantom, note that one only needs to verify for the $X$-type sector that stabilizers (resp.~logicals) transform to stabilizers (resp.~logicals) up to stabilizers. Making the same choice on each orbit precisely guarantees this.

Taking the $\db{64,4,8}$ phantom qRM as an example, we divide the $\binom{6}{3}=20$ degree-$3$ monomials into four orbits: 
\begin{align}
\{x_1x_2x_3,x_1x_2x_4&,x_1x_2x_5,x_1x_2x_6\}, \label{eq:64-4-8_orbit_1}\\
\{x_1x_3x_4,x_1x_3x_5,x_1&x_3x_6,x_1x_4x_6,x_1x_5x_6\}, \label{eq:64-4-8_orbit_2}\\
\{x_2x_3x_4,x_2x_3x_5,x_2&x_3x_6,x_2x_4x_6,x_2x_5x_6\}, \label{eq:64-4-8_orbit_3}\\
\{x_3x_4x_5,x_3x_5x_6&,x_3x_4x_6,x_3x_4x_5\}. \label{eq:64-4-8_orbit_4}
\end{align}

\begin{definition}[Balanced $\db{64,4,8}$ phantom qRM code used in benchmarking]
\label{def:balanced_64-4-8}
Consider the $\db{64,4,8}$ phantom qRM code with the orbit~\eqref{eq:64-4-8_orbit_1} chosen to be logical representatives, the negated complement of~\eqref{eq:64-4-8_orbit_2} set to $Z$-type stabilizers, and \eqref{eq:64-4-8_orbit_3} and~\eqref{eq:64-4-8_orbit_4} retained as $X$-type stabilizers. The resulting code has $32$ $X$-type and $28$ $Z$-type stabilizers. This is the code used in our benchmarking study (\cref{sec:benchmarking}).
\end{definition}

We found in preliminary simulations that the $\db{64,4,8}$ code defined in~\cref{thm:phantom_qRM} has a logical $Z$-sector error rate ${\sim} \, 50 \times$ worse than the $X$-sector, under circuit-level depolarizing noise (see \cref{app:benchmarking/noise_model}). Therefore, we chose to use the more balanced version in \cref{def:balanced_64-4-8} for our benchmarking study, whose logical error rates in both sectors are similar at roughly the geometric mean of the less balanced one.

We give a few further remarks on our phantom qRM construction:
\begin{enumerate}
\item The specific choice of setting every orbit to $Z$-type stabilizers (except leaving one as the logicals) allows addressable diagonal logical gates such as $\overline{S}_i \overline{S}_j$ and $\overline{\mathrm{CZ}}_{ij}$ via a physical fold-type circuit ($S$ gates on the fold line and $\mathrm{CZ}$ on each pair of qubits mirrored across the fold line), as established in the next subsection (\cref{app:qrm/folds}).
\item The different balancing choices can be seen as different gauge-fixings of the parent $\db{2^m, \binom{m}{l}, \min(2^{m-l},\,2^l)}$ code. Code-switching between them can be transversally performed via Steane EC \cite{gong2024computation}, which can be understood as a kind of logical teleportation. Then one can implement the $\overline{\mathrm{CNOT}}$s using the most balanced version, and resort to a less balanced version for diagonal logical gates. Fault-tolerant state preparation schemes for both versions are needed to enable the transversal code-switching.
\item On the $\db{64,4,8}$ example, choosing both the first~\eqref{eq:64-4-8_orbit_1} and last~\eqref{eq:64-4-8_orbit_4} orbits are chosen as logicals is effectively coupling a phantom code with its dual (see \cref{sec:code_and_its_dual} for more details). While the resulting code is not phantom, it inherits phantom-like gates in the form of pairwise products of $\overline{\mathrm{CNOT}}$s; this code then supports permutation logical Hadamards, and transversal $S$ operation produces a logical $\overline{\mathrm{CZ}}$ action.
\end{enumerate}

We additionally remark that a punctured variant of the hypercube codes is also phantom. 

\begin{theorem}[Punctured hypercube codes are phantom]
Puncturing the coordinate $0\cdots 00$ from the hypercube codes yield a family of phantom codes with parameters $\db{2^D-1, D, (d_\rmx=2^{D-1}-1,d_\rmz=2)}$.
\end{theorem}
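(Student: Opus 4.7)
The plan is to verify the three components of the claim — the stabilizer structure, the distances, and the phantom property — by exploiting the polynomial description of the hypercube code.

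First, I will identify the stabilizer group of the punctured code. The $\db{2^D, D, 2}$ hypercube code is the qRM code with $m = D$, $l = 1$, so it has a single $X$-stabilizer given by the constant polynomial $1$ and $Z$-stabilizers spanning $\RM(D-2, D)$. Puncturing coordinate $\vb{0} = 0\cdots 0$ truncates $X^{\otimes 2^D}$ to $X^{\otimes(2^D-1)}$ and the constant $Z$-stabilizer to $Z^{\otimes(2^D-1)}$; these two truncated operators anticommute, since their supports share $2^D - 1$ positions, an odd number. I therefore drop the constant $Z$-stabilizer. The remaining $Z$-generators are the non-constant monomials of degree $\le D - 2$, which combined with $X^{\otimes(2^D-1)}$ form a commuting group of rank $2^D - D - 1$, giving $k = D$. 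Truncated linear independence is preserved because the indicator of $\vb{0}$ is the degree-$D$ polynomial $\bar x_1 \cdots \bar x_D$, so no relation of degree $\le D-2$ can emerge from puncturing.

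Second, I will compute the distances using polynomial cosets. Each nontrivial $X$-logical $\prod_{i \in S}\overline{X}_i$ has only two representatives modulo the $X$-stabilizer group $\{0, 1\}$: $x_S := \sum_{i \in S} x_i$ and $1 + x_S$, both of weight $2^{D-1}$ on the unpunctured code. Since $x_S$ vanishes at $\vb{0}$, its truncated weight is $2^{D-1}$; since $1 + x_S$ equals $1$ at $\vb{0}$, its truncated weight is $2^{D-1} - 1$. Hence $d_\rmx = 2^{D-1} - 1$. For the $Z$-sector, the canonical logical $\overline{Z}_i = x_1 \cdots \hat{x}_i \cdots x_D$ has weight-$2$ support $\{(1,\ldots,0_i,\ldots,1), (1,\ldots,1)\}$, which does not contain $\vb{0}$, so its truncated weight is $2$; the same computation gives weight $2$ for $\overline{Z}_i \overline{Z}_j$ and weight $\ge 4$ for larger products. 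Any nonzero element of the new $Z$-stabilizer group has weight $\ge 4$ (since $\RM(D-2, D)$ has minimum distance $4$ for $D \ge 3$), so adding any stabilizer to a weight-$2$ logical representative yields weight $\ge |2 - 4| = 2$. Hence $d_\rmz = 2$.

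Third, I will lift the phantom property from the hypercube code. The hypercube code is phantom via its $\GL(D, \mathbb{F}_2)$ subgroup of affine automorphisms, which acts by $f(\vb{x}) \mapsto f(A\vb{x})$. The key observation is that every $A \in \GL(D, \mathbb{F}_2)$ fixes $\vb{0}$, so the induced coordinate permutation restricts to a well-defined permutation of the $2^D - 1$ surviving qubits. Moreover, the map $f(\vb{x}) \mapsto f(A\vb{x})$ preserves the constant term of a polynomial, so the space of non-constant polynomials in $\RM(D-2, D)$ — i.e.~the new $Z$-stabilizer group — is $\GL$-invariant, and the constant-polynomial $X$-stabilizer is trivially preserved. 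The induced action on the degree-$1$ logical generators $\{x_1, \ldots, x_D\}$ is identical to the one on the unpunctured hypercube code, where $\GL(D, \mathbb{F}_2)$ realizes every addressable logical $\overline{\mathrm{CNOT}}$ by the $m=D$, $l=1$ case of \cref{thm:phantom_qRM}. Thus the punctured code is phantom.

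The main subtlety to watch for is the two-way interaction between puncture location and stabilizer choice. Puncturing at the origin is distinguished because $\vb{0}$ is the unique coordinate fixed by every element of $\GL(D, \mathbb{F}_2)$; puncturing at a generic coordinate would generally break the lifted phantom action since the fixing subgroup of a nonzero coordinate is a proper subgroup of $\GL(D, \mathbb{F}_2)$. Among the two anticommuting truncated stabilizers, dropping the constant $Z$-stabilizer (rather than the $X$-stabilizer) selects the $X$-sector as the one that inherits the reduced distance, giving $d_\rmx = 2^{D-1} - 1$ and $d_\rmz = 2$ as claimed.
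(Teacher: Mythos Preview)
Your proof is correct and follows essentially the same approach as the paper: both arguments hinge on the observation that the $\GL(D,\mathbb{F}_2)$ permutation automorphisms fix the punctured coordinate $\vb{0}$ and hence descend to the punctured code with the same logical $\overline{\mathrm{CNOT}}$ action on the degree-one monomials. Your treatment is more self-contained --- you compute the distances explicitly and verify $\GL$-invariance of the modified $Z$-stabilizer group directly, whereas the paper invokes the classical result that $\Aut(\RM(1,D)^*) = \GL(D,\mathbb{F}_2)$ and does not re-derive the distance parameters inside the proof.
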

\begin{proof}
Recall that the original $\db{2^D,D,(d_\rmx=2^{D-1},d_\rmz=2)}$ hypercube code has a single $X$-type stabilizer supported on every qubit, and $Z$-type stabilizers $\RM(m-1,m)$. The punctured hypercube code again only has one $X$-type stabilizer, which is supported on all the $2^D-1$ qubits.
The $D$ $X$-type logical generators remain the same: they are still expressed as the degree-1 monomials $x_1, \dots$, and $x_D$ (these monomials evaluate to zero at the punctured coordinate $0\cdots 00$). The $X$-type logicals and stabilizers together form the punctured $\RM(1,D)^*$ code, and $\GL(D,\F_2)$ is its automorphism group~\cite[Ch.~13, Thm.~24]{macwilliams1977theory}. Therefore, this punctured version is also phantom for the same reason as the original version.
\end{proof}

For completeness, we state the $Z$-type logicals and stabilizers that form the shortened $\overline{\RM}(D-1,D)$ code: these are precisely the codewords from $\RM(D-1, D)$ that vanish at the coordinate $00\cdots 0$, with that coordinate subsequently deleted. Here, the paired $Z$-type logical of the $X$-type logical $x_i$ is its complement monomial $x_1\dots x_{i-1}x_{i+1}\dots x_D$ (no negation).

While the original $\db{2^D, D, 2}$ code supports $D^{\rm{th}}$ Clifford-hierarchy-level logical magic via transversal $Z^{1/2^{D-1}}$ rotations, the punctured $\db{2^D-1, D, 2}$ version loses the highest-level magic gate. 
The ${\le} \, (D-1)$ level magic gates persists (by addressing subcubes), and the logical $\overline{S}_i \overline{S}_j$ gates via folding described in the next subsection also applies to this punctured family.

Indeed, the $\db{7,3,(d_\rmx=3,d_\rmz=2)}$ phantom code in \cref{tab:gate_for_phantom_codes} is the punctured version of the $\db{8,3,2}$ hypercube code:
\begin{center}
\scalebox{0.72}{
\begin{tabular}{c|cccccccc}
 & 7 & 6 & 5 & 4 & 3 & 2 & 1 & \tikzmark{start}0\\
 & 111 & 110 & 101 & 100 & 011 & 010 & 001 & 000\\
\hline
$1$   & 1 & 1 & 1 & 1 & 1 & 1 & 1 & 1\\
$x_1$ & 1 & 0 & 1 & 0 & 1 & 0 & 1 & 0\\
$x_2$ & 1 & 1 & 0 & 0 & 1 & 1 & 0 & 0\\
$x_3$ & 1 & 1 & 1 & 1 & 0 & 0 & 0 & \tikzmark{end}0\\
\end{tabular}
\tikz[remember picture] \draw[overlay,draw=red] ([xshift=.25em,yshift=0.8em]pic cs:start) -- ([xshift=.25em,yshift=-0.2em]pic cs:end);}
\end{center}

Concatenating this code with the $\db{2,1,d_\rmx=1\;/\;d_\rmz=2}$ phase-flip repetition code yields a $\db{14,3,(d_\rmx=3,d_\rmz=4)}$ phantom code that has $8$ $X$-type and $3$ $Z$-type stabilizers, which was found also by our exhaustive code enumeration.
By running the gate solver described in \cref{app:gates/diagonal/fold}, we find fold-diagonal gates that implement $\smash{\overline{S}_i \overline{S}_j}$ or $\smash{\overline{\mathrm{CZ}}}$ on the Hadamard-dual of this code.
Note that there is also a fold-diagonal gate on the original $\db{7,3,(d_\rmx=3,d_\rmz=2)}$ phantom code, which can be derived by undoing the $\mathrm{CNOT}$s from the phase-flip repetition code encoding, applying $S$ gates on a subcube of the $\db{7,3,2}$, and redoing the encoding $\mathrm{CNOT}$s; these together can be compiled as $\mathrm{CNOT}_{12} S_2 \mathrm{CNOT}_{12} = \mathrm{CZ} S_1 S_2$.

There are two other $\db{14,3,(d_\rmx=3,d_\rmz=4)}$ phantom codes found by enumeration. They have $7/5$ $X$-type and $4/6$ $Z$-type stabilizers respectively, and they too admit fold-diagonal gates.

\subsection{Addressable diagonal logical gates via folding}
\label{app:qrm/folds}

\subsubsection{Involutions and associated fold-type circuits on qRM phantom codes}
\label{app:qrm/folds/SS_CZ}

Here we show that the phantom qRM code family of \cref{thm:phantom_qRM}, where all the extra logical operators are promoted to $Z$-type stabilizers, support $\smash{\overline{S}_i \overline{S}_j}$ and $\smash{\overline{\mathrm{CZ}}_{ij}}$ gates through fold-type circuits.

We begin with the $m=2l$ code family with parameters $\db{2^{2l},l+1,2^l}$; there, degree ${<} \, l$ monomials are both $X$- and $Z$-type stabilizers, the $X$-type logical generators are $x_1x_2\dots x_{l-1}x_l,\;x_1x_2\dots x_{l-1}x_{l+1},\;\dots,\;x_1x_2\dots x_{l-1}x_{2l}$, and the negated complement of the other degree $l$ monomials are $Z$-type stabilizers. We associate a fold-type circuit to an \emph{involution} $\tau$ permuting the $2^{2l}$ coordinates: single-qubit $S$ or its powers are applied to the fixed points, and $\mathrm{CZ}$ between every $(i,\tau(i))$ pair where $i\neq \tau(i)$. 

\begin{proposition}[Fold-diagonal logical gates on $m=2l$ phantom qRM codes]
\label{prop:phantom_qRM_fold}
For the $m=2l$ phantom qRM code family of \cref{thm:phantom_qRM}:
(a) the involution $\tau_{SS}$ that maps $x_i$ to $\overline{x}_{2l+1-i}$ for $i\in [1,2l]$ implements $\overline{S} \overline{S}$ on $X$-type logicals $x_1x_2\dots x_{l-1}x_l$ and $x_1x_2\dots x_{l-1}x_{l+1}$; 
(b) the involution $\tau_{\mathrm{CZ}}$ that maps $x_i$ to $\overline{x}_{2l+1-i}$ for $i\in [1,2l]\backslash \{l,l+1\}$ and $x_{l}=\overline{x}_l,\;x_{l+1}=\overline{x}_{l+1}$ implements $\overline{\mathrm{CZ}}$ between $X$-type logicals $x_1x_2\dots x_{l-1}x_l$ and $x_1x_2\dots x_{l-1}x_{l+1}$.
\end{proposition}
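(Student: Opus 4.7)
The plan is to verify both parts (a) and (b) by computing how the specified fold circuit conjugates $X$-type Pauli operators and checking (i) codespace preservation and (ii) the induced action on each retained $X$-logical generator. For a fold circuit $U_\tau = \prod_{i \in \mathrm{Fix}(\tau)} S_i \cdot \prod_{\{j,\tau(j)\}} \mathrm{CZ}_{j,\tau(j)}$, applying $SXS^\dagger = i X Z$ on fixed points and $\mathrm{CZ}\,X_i X_j\,\mathrm{CZ}^\dagger = -X_i X_j Z_i Z_j$ on each pair factor-by-factor gives the compact identity
\begin{equation*}
    U_\tau X^g U_\tau^\dagger = (-1)^{c(g)}\, i^{\phi(g)}\, X^g Z^{\tau(g)},
\end{equation*}
where $\phi(g) = |\supp(g) \cap \mathrm{Fix}(\tau)|$, $c(g) = |\{\{j,\tau(j)\} : g_j = g_{\tau(j)} = 1,\ j\neq \tau(j)\}|$, and $\tau(g)$ denotes the coordinate-permuted bitstring. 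A direct orbit count then yields the key identity $\phi(g) + 2c(g) = \wt\!\bigl(f_g \cdot (f_g \circ T)\bigr)$, where $f_g$ is the polynomial whose evaluation vector is $g$ and $T$ is the affine transformation inducing $\tau$.

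Codespace preservation reduces to checking the above quantity is $\equiv 0 \pmod 4$ for every $f_g \in \RM(l-1, 2l)$, since $f_g \circ T$ is also of degree $\le l-1$ (substitution preserves degree) and hence $Z^{\tau(g)}$ is automatically a $Z$-stabilizer. First I would verify mod-$4$ divisibility at the monomial level: for $f_g = \prod_{i \in S} x_i$ with $|S| \le l-1$ and $\sigma(S) := \{2l+1-i : i \in S\}$, either $S \cap \sigma(S) \neq \emptyset$ (then $f_g \cdot (f_g \circ T)$ contains a factor $x_i \overline{x}_i \equiv 0$ and vanishes), or the supports are disjoint (then the product constrains $2|S|$ coordinates, giving weight $2^{2l - 2|S|} \geq 4$). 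Extending to arbitrary $f_g$ would proceed through the bilinear expansion of $f_g \cdot (f_g \circ T)$ using the inclusion-exclusion identity $\wt(p_1 + p_2) = \wt(p_1) + \wt(p_2) - 2\wt(p_1 p_2)$ together with the analogous mod-$4$ control on monomial cross-terms $m_S \cdot (m_{S'} \circ T)$, which either vanish or have weight $2^{2l - |S| - |S'|} \geq 4$ whenever $|S|, |S'| \leq l-1$.

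Finally I would verify the logical action by applying the conjugation formula to each retained $X$-logical $L_j = x_1 \cdots x_{l-1} x_{l+j-1}$ for $j = 1,\ldots,l+1$. For part (a), the support of $L_1 = x_1 \cdots x_{l-1} x_l$ intersects the fixed-point set of $\tau_{SS}$ (bitstrings satisfying $a_j + a_{2l+1-j} = 1$ for all $j$) in exactly the single bitstring $11\cdots 1\,00\cdots 0$, giving $\phi(L_1) = 1$ and $c(L_1) = 0$; since $T_{SS}(L_1) = \overline{x}_{l+1} \overline{x}_{l+2} \cdots \overline{x}_{2l}$ is the paired $Z$-logical representative of $\overline{Z}_1$, this yields $U_{\tau_{SS}} \overline{X}_1 U_{\tau_{SS}}^\dagger = i\,\overline{X}_1 \overline{Z}_1 = \overline{Y}_1$, matching $\overline{S}_1$; the case of $L_2$ follows symmetrically. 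For $j \geq 3$, one checks that $T_{SS}(L_j)$ is the negated complement of a degree-$l$ monomial outside the retained logical set (hence a $Z$-stabilizer) and $L_j \cap T_{SS}(L_j) = \emptyset$ by incompatibility of bit constraints, so $U_{\tau_{SS}}$ commutes with $\overline{X}_j$ modulo stabilizers. For part (b), $T_{CZ}(L_1) = \overline{x}_l \overline{x}_{l+2} \cdots \overline{x}_{2l}$ is the paired $Z$-logical of $\overline{Z}_2$ and $T_{CZ}(L_2)$ that of $\overline{Z}_1$, while $L_1 \cap T_{CZ}(L_1) = \emptyset$ (from the incompatible constraints $a_l = 1$ vs.~$a_l = 0$, and analogously for $L_2$), producing trivial phases and the conjugation $\overline{X}_1 \mapsto \overline{X}_1 \overline{Z}_2$, $\overline{X}_2 \mapsto \overline{X}_2 \overline{Z}_1$, which is exactly $\overline{\mathrm{CZ}}_{12}$; the remaining $L_{j \geq 3}$ commute with $U_{\tau_{CZ}}$ by the same argument as in part (a). The main technical obstacle is the mod-$4$ divisibility for general (non-monomial) $X$-stabilizers in the codespace-preservation step, where careful bookkeeping of cross-terms is required; this may be systematically streamlined via the XP-stabilizer formalism of Ref.~\cite{webster2022xp}, which expresses the codespace-preservation condition as an $\F_2$-linear constraint on the $\mathbb{Z}/4$-valued phase vector of the diagonal gate.
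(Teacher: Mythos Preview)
Your approach is essentially the same as the paper's: both compute the conjugation $U_\tau X^g U_\tau^\dagger = i^{|g\wedge\tau(g)|}\,X^g Z^{\tau(g)}$ (your $\phi(g)+2c(g)$ is exactly the paper's $|f\wedge\tau(f)|$), verify the phase is $\equiv 0\pmod 4$ on $X$-stabilizers via the weight property of Reed--Muller polynomials, and then check the action on each retained logical. Your treatment of the phases on $L_{j\ge 3}$ is in fact more explicit than the paper's, which only argues that $\tau(L_j)$ is a $Z$-stabilizer without spelling out that the overlap is empty.

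The one place you overshoot is the ``main technical obstacle'' you flag. Codespace preservation need only be checked on a \emph{generating set} of $X$-stabilizers: if $U S_i U^\dagger$ is a $+1$ stabilizer for each generator $S_i$, then $U(\prod_i S_i)U^\dagger = \prod_i(U S_i U^\dagger)$ is automatically a product of stabilizers and hence a $+1$ stabilizer. Since the $X$-stabilizer generators of the qRM code are precisely the degree-$\le l{-}1$ \emph{monomials}, your monomial argument (the two cases $S\cap\sigma(S)\neq\emptyset$ and disjoint supports) already suffices, and the paper proceeds exactly this way. The inclusion--exclusion extension to arbitrary $f_g\in\RM(l{-}1,2l)$, the cross-term bookkeeping, and the appeal to the XP-stabilizer formalism are all unnecessary.
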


\begin{proof}
Since the physical gates are diagonal and commute with $Z$, the $Z$-type stabilizers of the code are left invariant; we need only check that the $X$-type stabilizers are mapped to stabilizers and that $X$-type logicals transform as desired. It is clear that both $\tau$ stated above are indeed involutions, i.e., $\tau^2=\mathbb{I}$.

The image of an $X$-type stabilizer generator $f$ (degree ${\le} \, l-1$ monomials) under both $\tau_{SS}$ and $\tau_{\rm{CZ}}$ is the product of itself and the $Z$-type $\tau(f)$, which is likewise a degree ${\le} \, l-1$ monomial and is hence a $Z$-type stabilizer. The phase accumulated is $i^{|f\wedge \tau(f)|}=1$: the product of $f$ and $\tau(f)$ is a degree ${\le} \, 2l-2$ monomial and lacks at least two variables from $\{x_1,\dots,x_{2l}\}$, so by the weight property (see \cref{app:qrm/polymonial}), $|f\wedge \tau(f)|$ is divisible by $4$. 

We now consider the logical action of the two involutions:
\begin{itemize}
    \item For $\tau_{SS}$, the $X$-type logical $x_1x_2\dots x_{l-1}x_l$ is mapped to the product of itself and the $Z$-type $\overline{x}_{2l}\overline{x}_{2l-1}\dots \overline{x}_{l+2}\overline{x}_{l+1}$, precisely its paired $Z$-type logical, with a phase $i$; and $x_1x_2\dots x_{l-1}x_{l+1}$ is mapped to the product of itself and $Z$-type $\overline{x}_{2l}\overline{x}_{2l-1}\dots \overline{x}_{l+2}\overline{x}_{l}$, again its paired $Z$-type logical and with a phase $i$. The other $X$-type logical generators are left invariant up to a $Z$-type stabilizer $\tau_{SS}(f)$. This is because $\tau_{SS}(f)$ would contain $\overline{x}_s$, where $1\le s\le l-1$, so the $Z$-type operator defined by $\tau_{SS}(f)$ cannot be the paired $Z$-type logical operator of any $X$-type logical.
    \item For $\tau_{\rm{CZ}}$, $x_1x_2\dots x_{l-1}x_l$ is mapped to the product of itself and the $Z$-type $\overline{x}_{2l}\overline{x}_{2l-1}\dots \overline{x}_{l+2}\overline{x}_{l}$, precisely the paired $Z$-type logical of the $X$-type logical $x_1x_2\dots x_{l-1} x_{l+1}$, with a trivial phase. The transformation of the other $X$-type logicals can be similarly verified.
\end{itemize}

\end{proof}

To give a concrete example, for the $\db{16,3,4}$ phantom qRM code, $\tau_{SS}$ maps each coordinate as [see~\cref{eq:RM_affine_automorphism}]
\begin{equation}
\begin{pmatrix}x_4\\x_3\\x_2\\x_1\end{pmatrix} \mapsto \begin{pmatrix}0&0&0&1\\0&0&1&0\\0&1&0&0\\1&0&0&0\end{pmatrix}\begin{pmatrix}x_4\\x_3\\x_2\\x_1\end{pmatrix}+\begin{pmatrix}1\\1\\1\\1\end{pmatrix}.
\label{eq:16_3_4_fold_S}
\end{equation}

As the qRM codes are phantom, $\smash{\overline{S}_i \overline{S}_j}$ and $\smash{\overline{\mathrm{CZ}}_{ij}}$ fold-type gates are realizable between any two logical qubits (logical $\smash{\overline{\mathrm{SWAP}}}$s implemented by qubit permutations can be performed on top of any $\tau_{SS}$ and $\tau_{\mathrm{CZ}}$ solution). To obtain an individually addressable $\overline{S}_i$ gate, a convenient method is to inject an $\overline{S}$ gate using a targeted $\overline{\mathrm{CNOT}}$ as shown in \cref{fig:6-2-2}(b), which can be implemented using two transversal $\overline{\mathrm{CNOT}}$s according to \cref{eq:cb_additive}. It is important to insert a round of (Steane) EC between the two transversal $\overline{\mathrm{CNOT}}$s, so that the targeted $\overline{\mathrm{CNOT}}$ gate is distance preserving.

\begin{figure*}
    \includegraphics[width=0.8\textwidth]{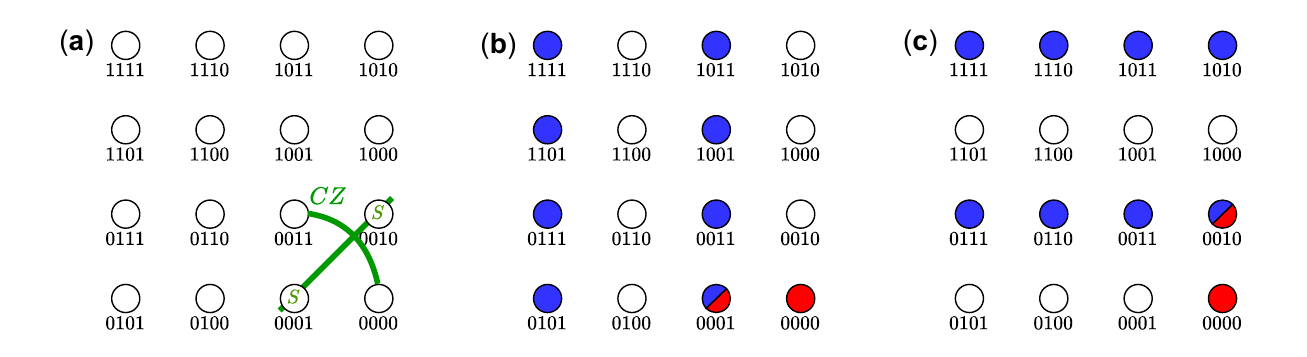}
    \caption{\textbf{Example of partial fold-type circuit on phantom qRM codes.} \textbf{(a)} An example partial fold-type circuit for the $\db{16,4,2}$ hypercube code, performing an $\overline{S} \overline{S}$ gate on the two $X$-type logicals $x_1, x_2$. The coordinates are shown in the order of $x_4x_3x_2x_1$, and this circuit only addresses the $\overline{x}_3\overline{x}_4$ subcube, i.e., the coordinates $x_4x_3x_2x_1=\text{00**}$. \textbf{(b)} The $X$-type logical $x_1$ is mapped into its paired $Z$-type logical $\overline{x}_2\overline{x}_3\overline{x}_4$. \textbf{(c)} The $X$-type logical $x_2$ is mapped into its paired $Z$-type logical $\overline{x}_1\overline{x}_3\overline{x}_4$.}
    \label{fig:Partial_fold}
\end{figure*}

Lastly, we remark that \cref{prop:phantom_qRM_fold} can be adapted to other types of phantom qRM codes discussed in this appendix, and comment on the use of fold gates in other scenarios: 
\begin{itemize}
    
    \item Note that \Cref{prop:phantom_qRM_fold} applies also to the $l<m/2$ phantom qRM codes of~\cref{thm:phantom_qRM}, but the fold-type circuit should only address the $C=\overline{x}_{2l+1}\overline{x}_{2l+2}\cdots \overline{x}_{m}$ subcube, i.e., the coordinates $x_m\dots x_1=\underbrace{\text{00}\cdots\text{0}}_{m-2l}\underbrace{\text{**}\cdots\text{*}}_{2l}$. 
    
    We refer to this as a \emph{partial} fold-type circuit. An example is shown in~\cref{fig:Partial_fold} for the $\db{16,4,2}$ code, where the circuit only addresses the subcube $\overline{x}_3\overline{x}_4$ and implements $\overline{S} \overline{S}$ on the $X$-type logicals $x_1$ and $x_2$. More generally, a monomial $f$ denoting an $X$-type logical or stabilizer generator transforms under the partial fold $\tau$ to the product of itself and a $Z$-type $\tau(f\wedge C)$. For example, one can verify that the $Z$-type contribution in the transformed $X$-type logical $x_1x_2\cdots x_{l-1}x_{l}$ is $\tau_{SS}(x_1x_2\cdots x_{l-1}x_{l}\overline{x}_{2l+1}\cdots \overline{x}_m)=\overline{x}_{2l}\overline{x}_{2l-1}\cdots\overline{x}_{l+2}\overline{x}_{l+1}\overline{x}_{2l+1}\cdots \overline{x}_m$, which is its paired $Z$-type logical. The action on other $X$-type monomials can be similarly verified and we omit them for brevity.

    \item We do not examine $l>m/2$, because one can instead consider $l'\leftarrow m-l$ and the resulting $\db{2^m,m-l'+1,\min(2^{m-l'},2^{l'})}$ code has the same distance but a higher $k$.
    
    \item On the $\db{2^D-1, D, (d_\rmx=2^{D-1}-1,d_\rmz=2)}$ punctured hypercube codes, to perform a fold-$\overline{S} \overline{S}$ gate on $X$-type logicals $x_1,x_2$, one uses $\tau_{SS}$ that maps $x_1$ to $x_2$, and $x_2$ to $x_1$, and restricts to the $x_3\cdots x_m$ subcube. For fold-$\overline{\mathrm{CZ}}$ on $x_1,x_2$, one can simply apply single-qubit $S$ gates to the $x_3\cdots x_m$ subcube.

    \item To perform an addressable $\overline{S}$ gate, one can use a targeted $\overline{\mathrm{CNOT}}$ to teleport in a $\overline{S}$ gate (see~\cref{fig:6-2-2/b}) from a freshly prepared all-plus state acted on by fold-$\overline{SS}$. This approach discussed further in the following subsection (\cref{app:qrm/folds/distillation}). Alternatively, one can decompose the desired $\overline{S}$ through the following circuit identity:
    \begin{equation*}
        \begin{quantikz}[row sep={0.65cm,between origins}, column sep={0.75cm,between origins}]
            & \gate{S} & \qw \\
            & \qw      & \qw 
        \end{quantikz}
        \;=\;
        \begin{quantikz}[row sep={0.65cm,between origins}, column sep={0.75cm,between origins}]
            & \targ{} & \gate{S} 
                \gategroup[wires=2,steps=1,style={draw,rounded corners,inner xsep=0pt,inner ysep=0pt},label style={label position=above,anchor=south,yshift=-6pt}]{fold}
                & \targ{} & \ctrl{1} & \qw & \qw \\
            & \ctrl{-1}  & \gate{S} & \ctrl{-1}  & \control{} & \gate{Z} & \qw
        \end{quantikz}
    \end{equation*}

    The $\overline{\mathrm{CZ}}$ can be performed through folding or automorphism (if available). In the case of the former, the logical performance of this decomposition suffers if the fold-$\overline{\mathrm{CZ}}$ is not distance-preserving.
    
\end{itemize}

\subsubsection{Distilling \texorpdfstring{$\overline{S} \overline{S}$}{SS} states on qRM and self-dual binarization-and-concatenation phantom codes}
\label{app:qrm/folds/distillation}

A subtlety is that fold-type gates are not distance-preserving in general, as they involve two-qubit physical interactions. Indeed, for example, we find by SAT solving (using built-in tools from \texttt{stim}~\cite{gidney2021stim}) that on the $\db{64,4,8}$ phantom qRM code, the fold-diagonal logical gates as described in \cref{prop:phantom_qRM_fold} suffer from reduced distance: $\tau_{SS}$ exhibits $(d_\mathrm{x}, d_\mathrm{z}) = (7, 4)$, while $\tau_{\mathrm{CZ}}$ exhibits $(d_\mathrm{x}, d_\mathrm{z}) = (6, 4)$. While the circuit-level distance $d = 4 > 3$ and therefore error-correction capability is still retained, the decrease in distance generally results in poorer logical error rate performance as compared to distance-preserving (e.g.~transversal or automorphism) logical gates.

\begin{figure}
    \includegraphics[width=0.6\linewidth]{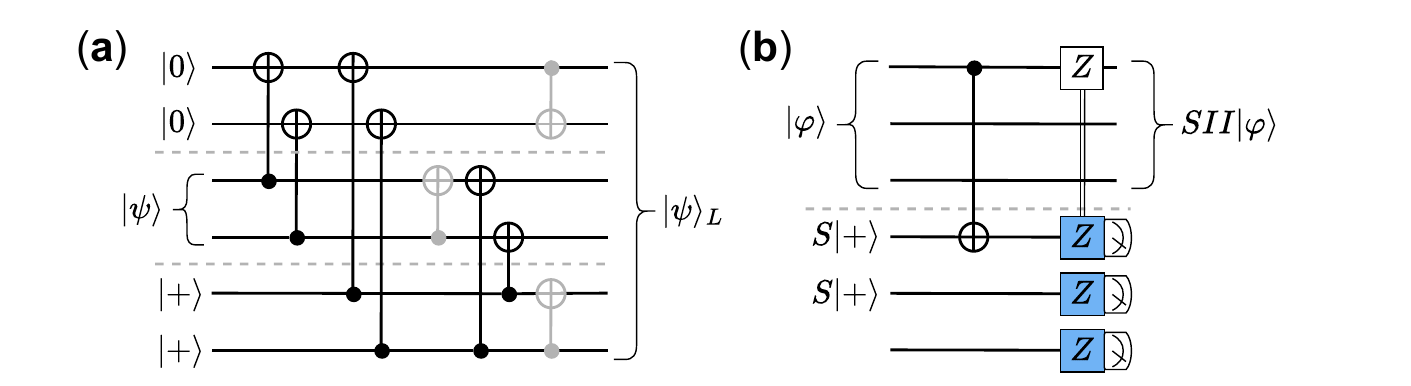}
    \phantomsubfloat{\label{fig:6-2-2/a}}
    \phantomsubfloat{\label{fig:6-2-2/b}}
    \vspace{-8pt}
    \caption{\textbf{Distillation scheme for $\overline{S} \overline{S}$ resource states and teleportation.} \textbf{(a)} Encoding circuit of $\db{6,2,2}$, to be implemented at the logical level using three $k = 2$ phantom codeblocks, as separated by the dashed grey lines. The grey $\overline{\mathrm{CNOT}}$ gates are permutation gates and can be implemented by qubit relabelling; only the three transversal $\overline{\mathrm{CNOT}}$s in black need to be performed physically. \textbf{(b)} Injection of a single $\overline{S}$ gate using a targeted $\overline{\mathrm{CNOT}}$. The ancilla codeblock hosting the $\overline{S} \overline{S}$ resource state is measured transversally in the $Z$-basis and a conditional $\overline{Z}$ correction performed. Besides the two logical qubits in the $\overline{S}|\overline{+}\ra$ state, the other logical qubit(s) of the ancilla codeblock can be in an arbitrary state that is not entangled with the first two.}
    \label{fig:6-2-2}
\end{figure}

To retain distance, we discuss an alternative scheme employing the following $\db{6,2,2}$ code to distill logical $\overline{S} \overline{S}$ resource states:
\begin{equation}
    H_\rmx=H_\rmz=\begin{pmatrix}1&1&0&1&1&0\\0&1&1&0&1&1\end{pmatrix},\quad L_\rmx=\begin{pmatrix}1&1&1&0&0&0\\0&0&0&1&1&1\end{pmatrix}.
\end{equation}

This code support a logical $\overline{S} \overline{S}$ gate implemented by transversal $S^\dagger$ operations, and can therefore serve as the outer code for distillation. In particular, we use three (resp.~two) codeblocks of $k = 2$ (resp.~$k > 2$) phantom codes to implement the distillation protocol illustrated in \cref{fig:6-2-2}. The encoding circuit of the $\db{6,2,2}$ code shown in \cref{fig:6-2-2} uses three interblock $\overline{\mathrm{CNOT}}$s between the phantom codeblocks. When $k = 2$ phantom codes are used, the interblock $\overline{\mathrm{CNOT}}$s are straightforwardly implementable transversally; for $k > 2$, we can again use transversal $\overline{\mathrm{CNOT}}$ gates rather than addressable interblock $\overline{\mathrm{CNOT}}$s (following \cref{lemma:phantom_css_interblock_cnot_circuits_two_codeblocks}), at the expense of ruining the other $k-2$ logicals; this is acceptable since targeted $\overline{\mathrm{CNOT}}$(s) will be used when teleporting in the $\overline{S}$ gate(s). After encoding, we perform fold-$\overline{S} \overline{S}$ on the phantom codes, which is realizable in the qRM and the self-dual binarization-and-concatenation codes (see \cref{app:binarize_concatenate}), and on various other numerically identified phantom codes (see \cref{tab:gate_for_phantom_codes}). Lastly, we perform unencoding of the $\db{6,2,2}$ outer code by running the $\overline{\mathrm{CNOT}}$s in reverse order, followed by transversal measurement of the first and third phantom codeblocks in the $Z$- and $X$-basis, respectively. The distillation is deemed successful when $|\overline{\text{00}}\ra$ and $|\overline{\text{++}}\ra$ are respectively recovered in the first and last codeblocks.

Although the distillation protocol entails transversal $\overline{\mathrm{CNOT}}$s, we expect it to bring in advantage
when folding is significantly noisier than the transversal $\overline{\mathrm{CNOT}}$ (e.g.~one to two orders of magnitude worse in logical error rate). Such a difference in logical error rates can arise at low physical error rates when the fold-type gates are not distance-preserving.

Lastly, we remark that the $\db{6,2,2}$ code used here is the same as that discussed in \cref{eq:binarized_3-1-2}, and is the smallest of the $\db{n,n-4,2}$ for  ($n$ even) ``$H$-code'' family introduced in Ref.~\cite{multilevel_jones}, for which logical Hadamards ($\smash{\overline{H}}^{\otimes k}$) can be implemented transversally. One can verify that $\smash{\overline{S}}^{\otimes k}$ gates are implemented by transversal $S^\dagger$ operations for this family, hence providing a route for higher-rate $\overline{S}$ distillation.

\subsection{Full logical Clifford group for qRM phantom codes}

The $m>2$ phantom qRM codes lack a transversal logical Hadamard gate. However, with the help of diagonal Clifford gates achieved through folding and targeted injection, an addressable logical Hadamard gate can be implemented. This thereby allows the implementation of the full logical Clifford group on these codes.

\begin{figure}
    \includegraphics[width=0.8\textwidth]{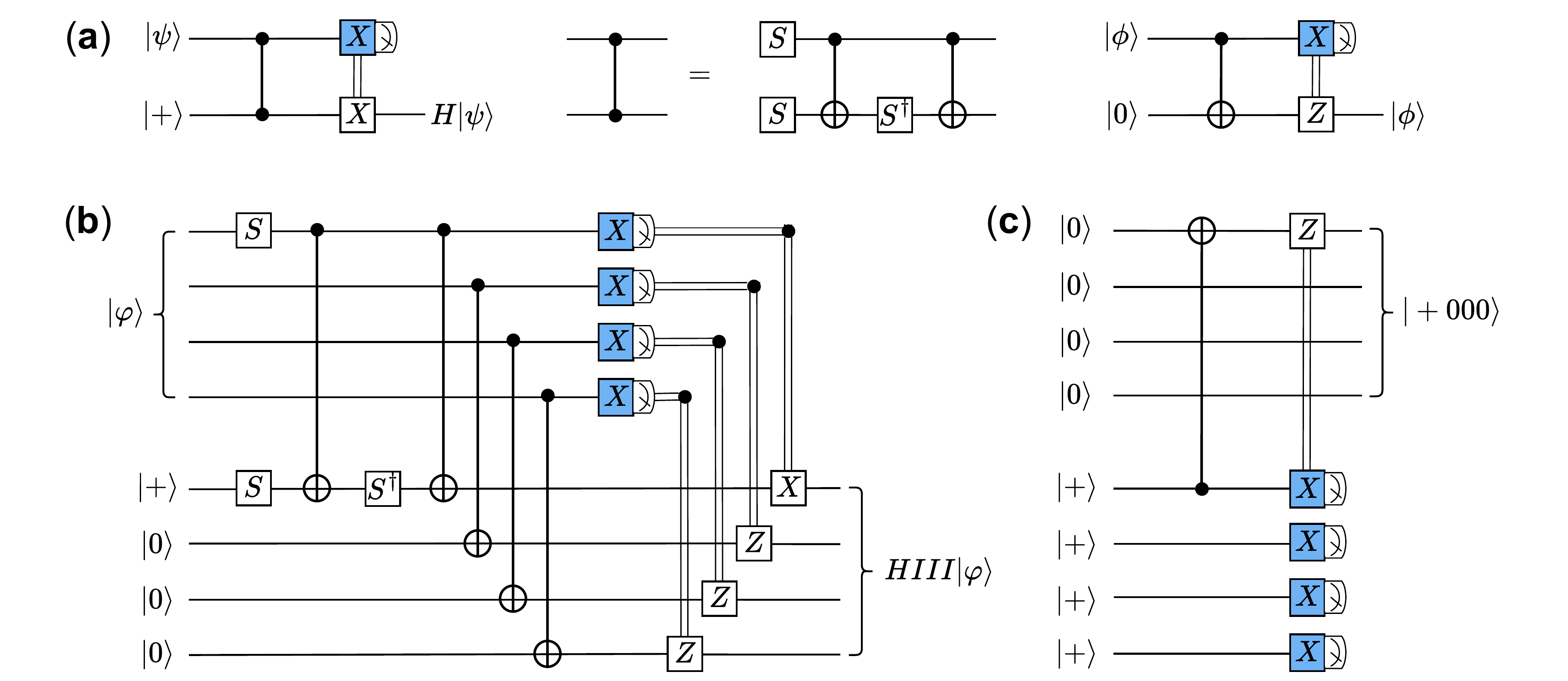}
    \phantomsubfloat{\label{fig:qRM_addressability/a}}
    \phantomsubfloat{\label{fig:qRM_addressability/b}}
    \phantomsubfloat{\label{fig:qRM_addressability/c}}
    \vspace{-8pt}
    \caption{\textbf{Addressable logical Hadamard via teleportation and mixed-basis state preparation.} \textbf{(a)} Gadget components for \textbf{(b)} teleporting a single logical Hadamard gate, where the first codeblock is measured transversally in the $X$-basis and corresponding logical Pauli corrections are performed. The $\overline{S}$ and $\smash{\overline{S}}^\dagger$ gates in (b) can be injected using a targeted $\overline{\mathrm{CNOT}}$ gate as~\cref{fig:6-2-2/b} and the $|\overline{\text{+000}}\ra$ mixed-basis state preparation following \textbf{(c)}. The scheme in (c) is used in our logical GHZ state preparation benchmarking (see~\cref{fig:benchmark/ghz}).}
    \label{fig:qRM_addressability}
\end{figure}

\Cref{fig:qRM_addressability/b} illustrates the central idea for a $k=4$ phantom code.
The protocol combines the three gadgets in \cref{fig:qRM_addressability/a}: Hadamard teleportation (left), substituting $\overline{\mathrm{CZ}}$ with $\overline{S}$, $\smash{\overline{S}}^\dag$ (middle), and $\overline{\mathrm{CNOT}}$, and the logical one-bit teleportation protocol (right). Both the teleportation gadgets measure the first codeblock in the $X$-basis, and combining them in \cref{fig:qRM_addressability/b} allows one to measure all logicals in the $X$-basis through a transversal $X$-basis measurement. The $\overline{S}$, $\smash{\overline{S}}^\dag$ gates can be performed by (distillation followed by) targeted injection as explained in the previous subsection. The mixed-basis state $|\overline{\text{+000}}\rangle$ can be prepared using a targeted $\overline{\mathrm{CNOT}}$ gate as shown in~\cref{fig:qRM_addressability/c}. The states of the other three logical qubits do not affect the protocol, so one could prepare the second codeblock in $|\overline{\text{++++}}\rangle$ and upon teleportation, measure and preselect on the other qubits being in $\ket{\overline{+}}$ to further improve the chance of having the first logical qubit in $\ket{\overline{+}}$ (and thereby a successful $\overline{H}$ teleported).

To perform an $\smash{\overline{H}}^{\otimes 2} \smash{\overline{\mathbb{I}}}^{\otimes 2}$ gate instead, we add $\overline{S}$ gates on the second logical qubit on both codeblocks, and another $\overline{\mathrm{CNOT}}$ between them. The $\overline{\mathrm{CNOT}}$ can again be achieved physically through two transversal $\overline{\mathrm{CNOT}}$ gates (see \cref{lemma:phantom_css_interblock_cnot_circuits_two_codeblocks}). Similar to the first logical qubit, the conditional correction on the second logical qubit changes to $X$-type.

On our phantom codes, this scheme of performing an addressable logical Hadamard has a lower overhead compared to the conventional scheme of utilizing the identity $(HS)^3 \propto \mathbb{I}$---concretely, for example, at $k=2$ and with the ability to perform addressable $\overline{S}$, the sequence $(H\otimes H)(S \otimes \mathbb{I})(H\otimes H)(S\otimes \mathbb{I})(H\otimes H)(S\otimes \mathbb{I})\propto \mathbb{I}\otimes H$ produces an addressable Hadamard.

\subsection{Magic gates on qRM phantom codes} 

\begin{figure*}
    \includegraphics[width=0.95\textwidth]{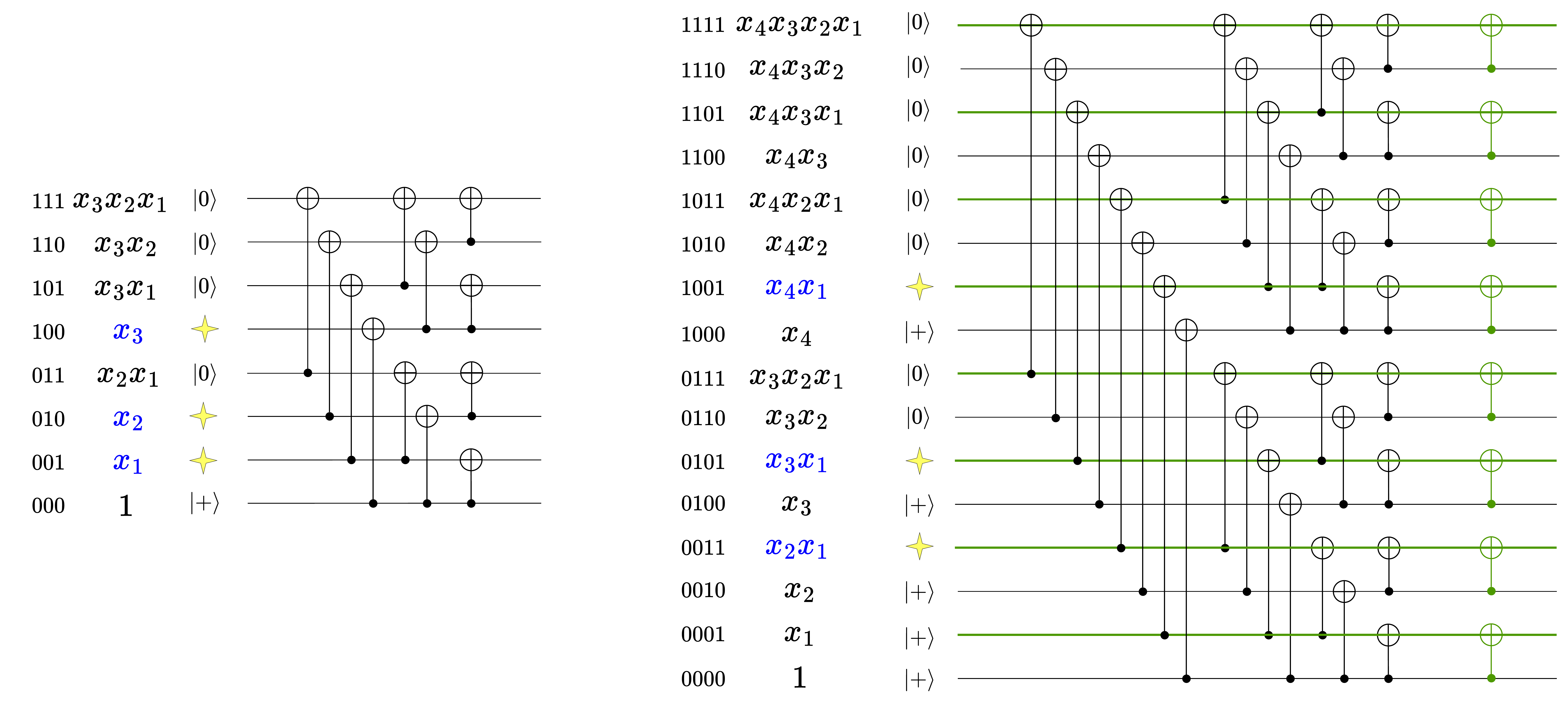}
    \caption{\textbf{Encoding circuits and decoupling scheme for phantom qRM codes}. The hypercube encoding circuit of the $\db{8,3,2}$ (left) and $\db{16,3,4}$ (right) phantom qRM codes. This recursively constructed encoding circuit applies in fact to the entire qRM code family~\cite{gong2024computation}. On the input side of the circuit, $x_3x_2x_1=111,\dots,000$ etc., label the coordinate of each qubit; each monomial indicates what an $X$ operator placed on this qubit propagates to through this encoding circuit. This can be thought of as stabilizer propagation: if a qubit is initialized to $|+\rangle$, which is stabilized by a single-qubit $X$, the propagation result of the $X$ is an $X$-type stabilizer of the code. Next, we take the $\db{16,3,4}$ (right) as an example to illustrate decoupling. The hypercube circuit couples qubits whose coordinates differ at $x_4$ in the first round using transversal $\mathrm{CNOT}$s, couples those that differ at $x_3$ in the next round, and so on. By undoing the last round of transversal $\mathrm{CNOT}$s (green) and focusing on the coordinates where $x_1=1$, the $\db{16,3,4}$ is decoupled to the $\db{8,3,2}$ code. For the $\db{2^m,m-l+1,\min(2^{m-l},2^l)}$ phantom code, we undo the $x_1,\dots,x_{l-1}$ layers of $\mathrm{CNOT}$s, and keep those coordinates where $x_1=1,\dots,x_{l-1}=1$; the resulting code is the $\db{2^{m-l+1},m-l+1,2}$ hypercube code.}
\label{fig:qRM_magic_decompose}
\end{figure*}

The infinite family of quantum Reed--Muller phantom codes we identified allows a distance-limited mechanism for magic gates. We start from the observation that the $\db{16,3,4}$ phantom code can be decoupled into two $\db{8,3,2}$ codes, as illustrated in \cref{fig:qRM_magic_decompose}, by undoing the encoding $\mathrm{CNOT}$s connecting the $x_1=0$ and $x_1=1$ subcubes. One can then perform a $\overline{\mathrm{CCZ}}$ gate using transversal physical $T$ gates on the target $\db{8,3,2}$ code, and then couple them back to restore the $\db{16,3,4}$ code. Alternatively, one can compile this $\mathrm{CNOT}$--$T$--$\mathrm{CNOT}$ sequence as $\smash{\mathrm{CS}^\dagger_{12} T_1 T_2}$ on the physical qubit pairs acted on.

More generally, our phantom qRM codes of parameters $\db{2^m,m-l+1,\min(2^{m-l},2^l)}$ (balanced or not) can be decoupled into the hypercube codes $\db{2^{m-l+1},m-l+1,2}$ by undoing the last $l-1$ layers of their hypercube encoding circuits and thereby only keeping the $x_1=x_2=\cdots=x_{l-1}=1$ subcube. There, logical $\mathrm{C}^{l_0}\mathrm{Z}$ gates are accessible for all $l_0 \le m-l$ through physical $Z$-basis rotations on the qubits of suitable subcubes~\cite{barg2024rm}. For example, the $\db{64,4,8}$ code reduces to a $\db{16,4,2}$ block, where a transversal $\smash{\sqrt{T}/\smash{\sqrt{T}}^\dagger}$ implements a $\overline{\mathrm{CCCZ}}$ gate, and transversal $T/T^\dagger$ on an $8$-qubit subcube gives a $\overline{\mathrm{CCZ}}$~\cite{barg2024rm}. The limitation is that the intermediate codes have a distance of only two.

Lastly, we comment that on the parent $\db{64,15,4}$ qRM code (where the $X$-type logicals are the $6\choose 2$ degree-2 monomials) of the $\db{64,5,4}$ phantom code, transversal $T$ gates lead to a logical hypergraph $\overline{\mathrm{CCZ}}$ circuit. The $\overline{\mathrm{CCZ}}$s are on every triplet of degree-2 monomials that multiply to $x_1x_2x_3x_4x_5x_6$~\cite{rengaswamy2020optimality,barg2024rm}. For example, $(x_1x_2,x_3x_4,x_5x_6)$, $(x_1x_2,x_3x_5,x_4x_6)$, etc.; each triple intersection has support at one coordinate $x_6=\cdots=x_1=1$. The phase polynomial framework in~\cref{app:gates/diagonal/transversal} allows a simpler proof of this result than in Refs.~\cite{rengaswamy2020optimality,barg2024rm}. 
However, the $\db{64,5,4}$ phantom qRM code, where $x_1x_2,x_1x_3,x_1x_4,x_1x_5,x_1x_6$ serve as the $X$-type logicals, does not have such a triplet. 
One could consider swapping the monomials in the parent $\db{64,15,4}$ qRM code (e.g.~swap $x_1x_3$ with $x_3x_5$, and $x_1x_4$ with $x_4x_6$) through interleaving qubit permutations with transversal $\overline{\mathrm{CNOT}}$s with an ancilla codeblock~\cite{gong2024computation}, applying transversal $T$ gates, and then swapping back. The disadvantage of this scheme, however, is the high depth of the circuit. Therefore, though in principle we can realize $\overline{\mathrm{CCZ}}$s of fault distance four this way, it might be cheaper in practice to distill $\overline{\mathrm{CCZ}}$ states through the distance-two decoupling scheme above and inject them where needed.

\subsection{Space-time efficient analog rotation}
\label{app:qrm/STAR}

Space-time efficient analog rotation (STAR) has been shown to be useful for implementing arbitrary small-angle single-qubit rotations in early fault-tolerant quantum computers~\cite{akahoshi2024partially, choi2023fault}.
In this section, we provide an implementation for the $\db{64,4,8}$ qRM phantom code and discuss some restrictions of STAR on related codes.

\subsubsection{STAR for the \texorpdfstring{$\db{64,4,8}$}{[[64,4,8]]} qRM phantom code}

\Cref{fig:star_teleportation} illustrates the STAR protocol for a small-angle $\overline{R}_Z$ gate. We discuss the rotation as applied to the first logical qubit of the code; rotations on other logical qubits work similarly (alternatively, as the code is phantom, logical swaps can be performed through qubit permutations). Concretely, physical $R_Z(\theta)$ rotations are applied on the support of the weight-8 logical operator $\overline{Z}_1$, 
\(
  R_Z^{\otimes 8}(\theta)
  = [\cos(\theta/2) \mathbb{I} - i \sin(\theta/2) Z]^{\otimes 8},
\)
which can be expanded as a sum of tensor-product terms with different Pauli weights.
At this point, a round of error detection is performed.
Note that since the code distance is $8$, $Z$-type Pauli operators with weight between $1$ and $7$ correspond to detectable errors.
After preselecting for no detected error, the induced logical operation is proportional to
\(
  \cos^8(\theta/2)\overline{\mathbb{I}}
  +
  \sin^8(\theta/2)\overline{Z}_1.
\)
Since the logical qubit is initialized in the $\ket*{\smash{\overline{+}}}$ state, this effective logical action is equivalent to $\overline{R}_Y(-\gamma)$ for an angle $\gamma$ given by
\(
  \tan(\gamma/2) = \tan^8(\theta/2).
\)

Following this resource state preparation, a transversal $\overline{\mathrm{CNOT}}$ from the target (data) codeblock to the ancillary codeblock is applied, followed by a fold-$\overline{S}_1 \overline{S}_2$ operation (see \cref{app:qrm/folds}), and finally the ancillary codeblock is measured in the $X$-basis.
Note that the $\overline{\mathrm{CNOT}}$s only act non-trivially on the first logical qubit in the ancillary codeblock since the other qubits are in $|+\rangle$ states.

The net effect of the circuit on the target codeblock is a $\overline{R}_Z(\pm \gamma)$ on the first logical qubit conditioned on the measurement outcome $m$ of the first logical qubit in the ancillary codeblock: with probability $1/2$ we obtain $m=1$ and realize the desired $Z$-rotation, while with probability $1/2$ we obtain $m=-1$, in which case we must compensate by applying a $2\gamma$ rotation.
This compensation procedure again succeeds with probability $1/2$, and otherwise requires a further ``fix-up'' rotation with twice the angle.
From this probabilistic structure, it follows that, on average, we need to perform the STAR circuit twice to implement a single logical $Z$-rotation gate.

\begin{figure}
    \includegraphics[width=0.45\textwidth]{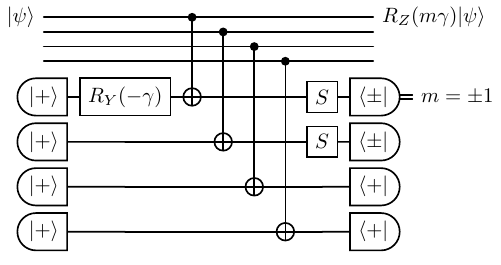}
    \caption{\textbf{Teleportation of STAR-prepared small-angle rotation resource state.} The circuit illustrated here is for a $k = 4$ phantom code (e.g.~the $\db{64,4,8}$ code). The target codeblock, on which we wish to implement a logical rotation, is depicted on top, and an ancillary codeblock is introduced below.
    The ancillary block is initialized to the all-$\ket*{\overline{+}}$ state, after which the STAR scheme is used to implement the $R_Y$ logical gate.}
    \label{fig:star_teleportation}
\end{figure}

\subsubsection{Limitations of STAR for CSS codes with even-weight stabilizer generators}

The reason for the $\overline{S}$ gates in \cref{fig:star_teleportation} is that the STAR implements an $\overline{R}_Y$ on the ancillary block, while we intend to realize $\overline{R}_Z$ on the target block. 
This is in fact a limitation due to the code structure.
We begin with the following fact:

\begin{lemma}
    \label{lemma:star}
    For a CSS code such that any pure-$X$ logical operator or stabilizer is of even weight and any pure-$Z$ logical operator or stabilizer is of even weight, any such logical operator is of even weight: a single $Z$-logical and any $X$-logical on the other logical qubits; any such logical operator are of odd weight: a single $Y$-logical and any $X$-logical on the other logical qubits.
\end{lemma}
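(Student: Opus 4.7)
The plan is to reduce the weight of a mixed-type Pauli operator to three parity contributions that can each be controlled: the weight of its pure-$X$ part, the weight of its pure-$Z$ part, and the symplectic overlap between them. First I would fix notation: for any Pauli operator $P$ with binary $X$-support $x \in \mathbb{F}_2^n$ and $Z$-support $z \in \mathbb{F}_2^n$, the weight is $\wt(P) = |x \vee z| = |x| + |z| - |x \wedge z|$, and hence
\begin{equation*}
    \wt(P) \equiv |x| + |z| + \langle x, z \rangle \pmod 2,
\end{equation*}
where $\langle x, z \rangle = |x \wedge z| \bmod 2$ coincides with the $\mathbb{F}_2$ symplectic inner product governing Pauli commutation in a CSS setting.

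Next, for the first claim I would take any representative of the form $P = \overline{Z}_i \cdot \prod_{j \neq i} \overline{X}_j^{a_j} \cdot S$, where $S = S_X S_Z$ is an arbitrary CSS stabilizer. Its $X$-support is $x = \sum_{j \neq i} a_j x_j + s_X$ and its $Z$-support is $z = z_i + s_Z$, where $x_j, z_i, s_X, s_Z$ are the binary supports of $\overline{X}_j, \overline{Z}_i, S_X, S_Z$. The vector $x$ is precisely the support of a pure-$X$ logical-or-stabilizer operator, and likewise for $z$; by hypothesis, both $|x|$ and $|z|$ are even. For the cross-term, expand
\begin{equation*}
    \langle x, z \rangle
    = \sum_{j \neq i} a_j \langle x_j, z_i \rangle
    + \sum_{j \neq i} a_j \langle x_j, s_Z \rangle
    + \langle s_X, z_i \rangle
    + \langle s_X, s_Z \rangle,
\end{equation*}
and apply the standard CSS orthogonality relations $\langle x_j, z_i \rangle = \delta_{ij}$, $\langle x_j, s_Z \rangle = 0$, $\langle s_X, z_i \rangle = 0$, $\langle s_X, s_Z \rangle = 0$. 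Since $j \neq i$ throughout, every term vanishes, so $\wt(P)$ is even, proving the first claim.

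For the second claim I would argue identically after writing $\overline{Y}_i \propto \overline{X}_i \overline{Z}_i$. The only modification is that the $X$-support acquires the extra term $x_i$, becoming $x = x_i + \sum_{j \neq i} a_j x_j + s_X$, while $z$ is unchanged. Both $x$ and $z$ still describe pure-$X$ and pure-$Z$ logical/stabilizer operators, so by hypothesis $|x|, |z|$ are even; the cross-term now picks up exactly one non-vanishing contribution, $\langle x_i, z_i \rangle = 1$, because $\overline{X}_i$ and $\overline{Z}_i$ anti-commute. Hence $\wt(P)$ is odd.

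The only subtlety---really the ``hard part'' to keep clean---is verifying that the stated parity is a property of the logical equivalence class rather than a particular representative. This is automatic here: the hypothesis constrains every element of the pure-$X$ and pure-$Z$ cosets, not just distinguished generators, so the parities of $|x|$ and $|z|$ are invariant under multiplying $P$ by an additional stabilizer, and the cross-term $\langle x, z \rangle$ is similarly stable under $x \mapsto x + s_X'$, $z \mapsto z + s_Z'$ by the same commutation identities used above.
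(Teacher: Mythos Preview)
Your proof is correct and follows essentially the same approach as the paper: decompose the operator into its pure-$X$ and pure-$Z$ parts, use the even-weight hypothesis on each, and determine the overlap parity via commutation. The only difference is cosmetic---you work with the symplectic formula $\wt(P) \equiv |x| + |z| + \langle x, z\rangle \pmod 2$, while the paper phrases the same computation in terms of counting physical $X$s, $Y$s, and $Z$s separately.
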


\begin{proof} 
Without loss of generality, consider the operator $\overline{Z}_1 \cdot \overline{X}_{\mathcal{L}}$ where $\overline{X}_{\mathcal{L}}$ is any product of $\overline{X}$ operators excluding $\overline{X}_1$.
Such a logical operator corresponds to physical Pauli operators of the form $\mathcal{Z} \cdot \mathcal{X}$. 
Since $\mathcal{X}$ is pure-$X$ and $\mathcal{Z}$ is pure-$Z$, and they commute (because $\overline{Z}_1$ and $\overline{X}_{\mathcal{L}}$ commute), their support overlap on an even number locations, which turns to an even number of physical $Y$s and a $\pm$ sign.
Thus, $\overline{Z}_1 \cdot \overline{X}_{\mathcal{L}}$ is the remaining (even number of) $X$s in $\mathcal{X}$, the remaining (even number of) $Z$s in $\mathcal{Z}$ and an even number of $Y$s, which sums up to an even weight.

Without loss of generality, consider operator $\overline{Y}_1 \cdot \overline{X}_{\mathcal{L}} = i \overline{X}_{\mathcal{L}\cup\{1\}} \overline{Z}_1$ where, again, $\overline{X}_{\mathcal{L}}$ is any product of $\overline{X}$ operators excluding $\overline{X}_1$.
Such a logical operator corresponds to physical Pauli operators of the form $i\mathcal{X}\cdot \mathcal{Z}$.
Since $\mathcal{X}$ is pure-$X$, $\mathcal{Z}$ is pure-$Z$, and they anticommute (because $\overline{X}_{\mathcal{L}\cup\{1\}}$ and $\overline{Z}_1$ anticommute), their support overlap on an odd number of locations, which turns to an odd number of physical $Y$s and an odd number of $i$s.
Adding also the $i$ in the expression, the total sign is a $\pm$.
Thus, $\overline{Y}_1 \cdot \overline{x}_{\mathcal{L}}$ is the remaining (odd number of) $X$s in $\mathcal{X}$, the remaining (odd number of) $Z$s in $\mathcal{Z}$, and an odd number of $Y$s, which sums up to an odd weight.
\end{proof}

This restricts the accessible rotation bases of the STAR scheme, as we formalize below.

\begin{proposition}
    For a CSS code such that any pure-$X$ logical operator or stabilizer is of even weight and any pure-$Z$ logical operator or stabilizer is of even weight, the logical action of such a STAR protocol cannot be $\overline{R}_Z$: 1)~prepare logical all-$|+\rangle$ state, 2)~apply general single-qubit physical gates, 3)~perform error detection.
\end{proposition}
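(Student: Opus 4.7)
The plan is to parametrize each single-qubit unitary in Bloch form and track the complex phases through the STAR protocol, using the even-weight hypothesis to show that the relative phases in the output state cannot realize the imaginary ratio required by $\overline{R}_Z$. Writing the output as $|\psi\rangle = \Pi \prod_i U_i |\overline{+}\rangle^{\otimes k} = \sum_{\vec{b}} \beta_{\vec{b}} \overline{Z}^{\vec{b}}|\overline{+}\rangle^{\otimes k}$, the target resource state $\overline{R}_Z(\gamma)|\overline{+}\rangle^{\otimes k}$ for nontrivial $\gamma$ has $\beta_{e_1}/\beta_0 = -i\tan(\gamma/2)$, which is purely imaginary; I therefore aim to show that this ratio is necessarily real.

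I would first parametrize each physical gate as $U_i = e^{i\phi_i}[\cos(\theta_i/2)\, I - i\sin(\theta_i/2)(n_{x,i} X + n_{y,i} Y + n_{z,i} Z)]$, so that every Pauli coefficient of $U_i$ factorizes as $e^{i\phi_i}$ times either a real $\cos(\theta_i/2)$ (for the $I$ component) or $-i\sin(\theta_i/2)$ times a real direction cosine (for each of $X$, $Y$, $Z$). Expanding $\prod_i U_i$ and retaining only Pauli strings in the stabilizer normalizer (the surviving terms after error detection), each term decomposes, for a CSS code, as $P = i^{|S\cap T|}\mathcal{X}_S \mathcal{Z}_T$ with $\mathcal{X}_S$ and $\mathcal{Z}_T$ in the $X$- and $Z$-normalizers respectively. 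Its Pauli-expansion coefficient then takes the factorized form $\alpha_P = \prod_i e^{i\phi_i}\cdot(-i)^{|S\cup T|}\cdot R_P$ with $R_P$ a product of real Bloch parameters.

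The crucial step combines the phase $i^{|S\cap T|}$ from $P$ itself with the $(-i)^{|S\cup T|}$ inside $\alpha_P$, simplifying to $(-1)^{|S\cup T|}\, i^{|S|+|T|}$. This is where the even-weight hypothesis enters: because $\mathcal{X}_S$ and $\mathcal{Z}_T$ are pure-$X$/pure-$Z$ normalizer elements and the pure-$X$ and pure-$Z$ normalizers are closed under symmetric difference of even-weight supports, $|S|$ and $|T|$ are both even, so $i^{|S|+|T|} = \pm 1$ is real. Every $P$ therefore contributes to its logical class $\overline{X}^{\vec{a}}\overline{Z}^{\vec{b}}$ a coefficient of the form $\prod_i e^{i\phi_i}$ times a real number, and applying $\overline{X}^{\vec{a}}\overline{Z}^{\vec{b}}|\overline{+}\rangle^{\otimes k} = (-1)^{\vec{a}\cdot\vec{b}}\overline{Z}^{\vec{b}}|\overline{+}\rangle^{\otimes k}$ only introduces further real signs. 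Consequently each $\beta_{\vec{b}}$ inherits this same global phase with a real remainder, making $\beta_{e_1}/\beta_0$ real and incompatible with $-i\tan(\gamma/2)$ for nontrivial $\gamma$.

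The main obstacle will be the phase-collapsing step where evenness of $|S|$ and $|T|$ is what forces $i^{|S|+|T|}$ to a real sign; if either parity were allowed to be odd, this factor would become $\pm i$ and different $P$'s in the same logical class could contribute independent imaginary pieces, reinstating access to $\overline{R}_Z$ (consistent with the fact that codes violating the hypothesis, such as the $\db{15,1,3}$ Reed--Muller code, do admit transversal $\overline{R}_Z(\pi/4)$ gates). A subsidiary bookkeeping point is verifying that replacing $\mathcal{X}_S$ or $\mathcal{Z}_T$ by an equivalent representative modulo stabilizers only changes $R_P$ by a real sign, which follows because the stabilizers are themselves pure-$X$ or pure-$Z$ and of even weight, preserving the parities of $|S|$, $|T|$, and $|S\cap T|$.
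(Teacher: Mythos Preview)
Your proof is correct and proceeds by essentially the same mechanism as the paper's: both arguments expand $\prod_i U_i$ in the Pauli basis, observe that the coefficient of a weight-$w$ normalizer Pauli carries a factor $(\pm i)^w$, and then use the even-weight hypothesis to force the contribution to each $\beta_{\vec b}$ to be real up to a common global phase. The organizational difference is that the paper first isolates the weight-parity facts into a separate lemma (showing that any physical representative of $\overline{Z}_1\overline{X}_{\mathcal L}$ has even weight while any representative of $\overline{Y}_1\overline{X}_{\mathcal L}$ has odd weight) and then argues case-by-case on these two logical classes, whereas you handle all normalizer Paulis uniformly via the decomposition $P=i^{|S\cap T|}\mathcal X_S\mathcal Z_T$ and invoke the hypothesis directly on $|S|$ and $|T|$. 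Your phase identity $i^{|S\cap T|}(-i)^{|S\cup T|}=(-1)^{|S\cup T|}i^{|S|+|T|}$ is exactly the bookkeeping that collapses the paper's two cases into one, so the two routes are equivalent.
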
   

\begin{proof}
Per \cref{lemma:star}, since any physical Pauli operator corresponding to $\overline{Z}_1 \overline{X}_{\mathcal{L}}$ with $1\notin\mathcal{L}$, has an even weight, the coefficient for such a term in $\smash{\prod_{q} e^{i\theta \vec \sigma_q\cdot \vec n_q}}$ is real, which rules out actions like $(\cos \alpha \overline{\mathbb{I}} +i\sin \alpha \overline{Z}_1 \overline{X}_{\mathcal{L}}) \ket*{\overline{+}}^{\otimes k}=(\cos \alpha \overline{\mathbb{I}}+i\sin \alpha \overline{Z}_1) \ket*{\overline{+}}^{\otimes k}$.
Moreover, since any $\overline{Y}_1 \overline{X}_{\mathcal{L}}$ with $1\notin\mathcal{L}$ has an odd weight, the coefficient for such a term is purely imaginary, which rules out actions like $(\cos\alpha \overline{\mathbb{I}} + \sin\alpha \overline{Y}_1 \overline{X}_{\mathcal{L}}) \ket*{\overline{+}}^{\otimes k} =(\cos \alpha \overline{\mathbb{I}}-i\sin \alpha \overline{Z}_1) \ket*{\overline{+}}^{\otimes k}$ (because $\overline{Y}_1 \overline{X}_{\mathcal{L}} \ket*{\overline{+}}^{\otimes k} = -i \overline{Z}_1 \ket*{\overline{+}}^{\otimes k}$).
\end{proof}

\section{Binarization and concatenation scheme for CSS phantom codes}
\label{app:binarize_concatenate}

Here, we construct a class of $k = 2$ CSS phantom codes with high distance. The overarching recipe is to take a qudit CSS code over $\GF(4)$ encoding one logical qudit, and encode every $\GF(4)$ qudit---implementable using two qubits via binarization---through the qubit $\db{4,2,2}_2$ code (i.e.~a code concatenation procedure). 

We refer to this scheme as binarization-and-concatenation (B\&C). We show that, if the qudit code is of parameters $\db{n,1,d}_4$, then the resulting qubit code from this B\&C construction has parameters $\db{4n,2,\ge 2d}_2$. We give a sufficient condition that the $\GF(4)$ code should satisfy in order for the resulting qubit code to be phantom. We then use a family of small $\GF(4)$ cyclic codes that naturally satisfy this condition to concretely construct a family of phantom qubit codes; these codes are (Hermitian) self-dual and have extremal distances at small block lengths $n$.

We begin with an introduction to $\GF(4)$ codes and the binarization step in \cref{app:binarize_concatenate/binarization}, and then explain the motivation for concatenating with the $\db{4,2,2}_2$ code in \cref{app:binarize_concatenate/concatenation}, thereby deriving the condition on the $\GF(4)$ code to guarantee that the resulting qubit code is phantom. Next, we introduce a family of small $\GF(4)$ cyclic codes in \cref{app:binarize_concatenate/small_cyclic} and show that they satisfy this condition. Finally, in \cref{app:binarize_concatenate/gates}, we explore the logical gates of the resulting qubit codes.

\subsection{\texorpdfstring{$\GF(4)$}{GF(4)} CSS quantum codes and binarization}
\label{app:binarize_concatenate/binarization}

Recall that $\GF(4) \cong \F_2[z] / (z^2 + z + 1) \cong \{0,1,\omega,\omega^2\}$ where $\omega^2=\omega+1$. We establish the following basics:
\begin{itemize}[noitemsep]
    \item The conjugate $\overline{x}$ of an element $x \in \GF(4)$ is its square: $\overline{x} = x^2$.
    \item The trace of an element $x\in\GF(4)$ is $\tr(x):=x+x^2\in\{0,1\}$. The trace is an $\mathbb{F}_2$-linear function, meaning for $a,b\in\F_2$, $x,y\in\GF(4)$, $\tr(ax+by)=a\tr(x)+b\tr(y)$.
\end{itemize}

We work with $\GF(4)$ CSS quantum codes~\cite[Ch.~8.3.1]{gottesman2024surviving}, whose $X$- and $Z$-type stabilizer groups are themselves $\GF(4)$-linear classical codes. By $\GF(4)$-linear, we not only require stabilizers to be closed under multiplication, but also demand that if $\bigotimes_{j=1}^n Z^{\gamma_j}$ is a $Z$-type stabilizer, $\gamma_j\in\GF(4)$, then for any $\gamma_0\in \GF(4)$, $\bigotimes_{j=1}^n Z^{\gamma_0 \gamma_j}$ is also a $Z$-type stabilizer; and likewise for $X$-type stabilizers. That a $Z$-type stabilizer $\otimes_{j=1}^n Z^{\gamma_j}$ and a $X$-type stabilizer $\otimes_{j=1}^n X^{\eta_j}$ commute means that $\smash{\sum_{j=1}^n\gamma_j\eta_j}=0$, $\gamma_j,\eta_j\in\GF(4)$.

A $\GF(4)$-qudit can be implemented using two qubits through binarization~\cite[Ch.~8.1.2]{gottesman2024surviving}. This process can be performed as follows. Taking the self-dual normal basis $\{\omega,\omega^2\}$, where $\tr(\omega\omega)=\tr(\omega^2\omega^2)=1$ and $\tr(\omega\;\omega^2)=0$, we note that any $\alpha\in \GF(4)$ can be written as $\smash{\alpha=\alpha_1\omega+\alpha_2\omega^2}$ where $\smash{\alpha_1=\tr(\alpha\omega),\alpha_2=\tr(\alpha\omega^2)\in\F_2}$. In this way, expressing $\alpha$ as $(\alpha_1, \alpha_2)$ therefore provides a bijective map between $\GF(4)$ and two $\F_2$ spaces. Explicitly:
\begin{itemize}[noitemsep]
    \item $X^\alpha$ (resp.~$Z^\alpha$) where $\alpha\in\GF(4)$ is mapped to $X^{\alpha_1}X^{\alpha_2}$ (resp.~$Z^{\alpha_1}Z^{\alpha_2}$).
    \item An $X$-type stabilizer $\otimes_{j=1}^n X^{\eta_j}$ where $\eta_j\in\GF(4)$ is mapped to $X^{\eta_{1,1}}X^{\eta_{1,2}}\cdots X^{\eta_{n,1}}X^{\eta_{n,2}}$, by expanding each $\smash{\eta_j=\eta_{j,1}\omega+\eta_{j,2}\omega^2}$, $\eta_{j,1},\eta_{j,2}\in \F_2$. Likewise for $Z$-type stabilizers.
\end{itemize}

Indeed, the commutation relationship of $X$- and $Z$-type stabilizers on the $\GF(4)$ code translates to the qubit level, 
\begin{equation}
    \sum_{j=1}^n \gamma_j\eta_j=0
    \implies
    \sum_{j=1}^n\left[\gamma_{j,1}\eta_{j,1}+\gamma_{j,2}\eta_{j,2}\right]=0,
\end{equation}
and therefore the resulting code after binarization is a valid CSS qubit code. This result can be seen by expanding the left-hand side $\smash{\sum_{j=1}^n (\gamma_{j,1}\omega+\gamma_{j,2}\omega^2)(\eta_{j,1}\omega+\eta_{j,2}\omega^2)}=0$, taking the trace on both sides, and invoking the $\F_2$-linearity of the trace.

\medskip
\textit{Examples.} We illustrate an example on the following $\db{3,1,2}_4$ code, defined by stabilizer generator matrices
\begin{equation}
    H_\rmx
    = H_\rmz
    = \begin{pmatrix}1&\omega&\omega^2\\\end{pmatrix}.
\end{equation}

We remind that, since the code is $\GF(4)$-linear, each stabilizer multiplied by $\omega$ or $\omega^2$ is also a stabilizer. For each row, we binarize each $\GF(4)$ entry as described above. For example, $\smash{\begin{pmatrix}\omega&\omega^2&1\\\end{pmatrix}}$ becomes $\smash{\begin{pmatrix}10 & 01 & 11\end{pmatrix}}$, and $\smash{\begin{pmatrix}\omega^2&1&\omega\\\end{pmatrix}}$ becomes $\smash{\begin{pmatrix}01 & 11 & 10\end{pmatrix}}$. The resulting qubit code after binarization has parameters $\db{6,2,2}_2$ and are defined by stabilizer generator matrices
\begin{equation}
\label{eq:binarized_3-1-2}
    H_\rmx^{\rm{bin}}
    = H_\rmz^{\rm{bin}}
    = \begin{pmatrix}
        10 & 01 & 11 \\
        01 & 11 & 10
    \end{pmatrix},
\end{equation}
where we retained linearly independent rows. Likewise, we can binarize the following $\db{5,1,3}_4$ code,
\begin{equation}
    H_\rmx=\begin{pmatrix}1&1&1&1&0\\0&1&\omega&\omega^2&1\end{pmatrix},
    \qquad 
    H_\rmz=\begin{pmatrix}1&1&1&1&0\\0&1&\omega^2&\omega&1\end{pmatrix}.
\end{equation}
to obtain a $\db{10,2,3}_2$ code,
\begin{equation}
    H_\rmx^{\rm{bin}}=\begin{pmatrix}
        10 & 10 & 10 & 10 & 00\\
        01 & 01 & 01 & 01 & 00\\
        00 & 10 & 01 & 11 & 10\\
        00 & 01 & 11 & 10 & 01
    \end{pmatrix},
    \qquad 
    H_\rmz^{\rm{bin}}=\begin{pmatrix}
        10 & 10 & 10 & 10 & 00\\
        01 & 01 & 01 & 01 & 00\\
        00 & 10 & 11 & 01 & 10\\
        00 & 01 & 10 & 11 & 01
    \end{pmatrix}.
\end{equation}

These two examples will be reused throughout the following subsections.

\subsection{Concatenation with the \texorpdfstring{$\db{4,2,2}$}{[[4,2,2]]} code}
\label{app:binarize_concatenate/concatenation}

The binarized $\db{6,2,2}_2$ and $\db{10,2,3}_2$ codes are not phantom (the phantomness of a given code can be concretely checked by a SAT formulation, see \cref{app:enumeration}). To obtain phantom codes, we concatenate the binarized code with the $\db{4,2,2}_2$ code, by passing every qubit pair encoding one $\GF(4)$-qudit through the $\db{4,2,2}_2$ encoding circuit.
\begin{wrapfigure}{l}{0.22\textwidth}
\begin{quantikz}[row sep={0.4cm,between origins}, column sep={0.6cm,between origins}]
    \lstick{$|0\ra$}  & \targ{}    & & \targ{} &  \\
    \lstick[2]{inputs} & \ctrl{-1} & \targ{}    && \\
    & \targ{}         &           & \ctrl{-2}  &  \\
    \lstick{$|+\ra$}  & \ctrl{-1}  & \ctrl{-2} &&  
\end{quantikz}
\end{wrapfigure}
i.e., $\db{4,2,2}_2$ is used at the inner level to encode the two qubits representing a $\GF(4)$-qudit at the outer level. 
The binarization and concatenation taken as a whole perform the following mapping from a $\GF(4)$-qudit to four qubits: $\smash{X^\omega\mapsto XXII}$ and $\smash{Z^{\omega^2}\mapsto IIZZ}$; $\smash{X^{\omega^2}\mapsto XIXI}$ and $\smash{Z^\omega\mapsto IZIZ}$; and $\smash{X^1\mapsto IXXI}$ and $\smash{Z^1\mapsto IZZI}$, because $1=\omega^2+\omega$. As a standard consequence of code concatenation, the resulting qubit code also contains sets of weight-four $X$- and $Z$-type stabilizer generators arising from the stabilizers of the $\db{4,2,2}_2$ codeblocks used for each $\GF(4)$ qudit.

\begin{proposition}[Distance of binarization-and-concatenation codes]
\label{prop:B&C_distance}
    A $\GF(4)$ CSS quantum code of distance $d$ produces a CSS qubit code of distance at least $2d$ after binarization and concatenation with the $\db{4,2,2}_2$ code.
\end{proposition}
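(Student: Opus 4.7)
The plan is to exploit the block structure of the concatenated code, where each of the $n$ outer $\GF(4)$-qudits is encoded into a 4-qubit block $B_j$ via the $\db{4,2,2}_2$ inner code, and to lower-bound the physical weight of any nontrivial concatenated logical by twice the outer $\GF(4)$-weight of its induced action.

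First, I would identify the stabilizer group of the concatenated code as generated by (i) the inner stabilizer generators $X^{\otimes 4}_{B_j}$ and $Z^{\otimes 4}_{B_j}$ acting within each block, and (ii) the encoded images of the outer $\GF(4)$ stabilizers. The concatenation map sends each of the $15$ nonzero $\GF(4)$-Pauli symbols on a qudit to one of the $15$ nontrivial logical cosets of $\db{4,2,2}_2$ (e.g.~$X^{\omega}\mapsto \overline{X}_1 = XXII$, $Z^{\omega^2}\mapsto \overline{Z}_2 \equiv IIZZ$, with products combined accordingly). This correspondence is a bijection and therefore furnishes a canonical block-wise projection from Pauli operators on the concatenated code to $\GF(4)$-Paulis.

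Next, let $L$ be an arbitrary representative of a nontrivial logical class, written blockwise as $L=\bigotimes_{j=1}^n L_j$. Since $L$ commutes with every inner generator $X^{\otimes 4}_{B_j}, Z^{\otimes 4}_{B_j}$, each $L_j$ lies in the normalizer of the inner stabilizer and is therefore either in the trivial coset (i.e.~an inner stabilizer element, acting as logical identity on the inner code) or in one of the $15$ nontrivial logical cosets. Define $\tilde L \in \mathcal{P}_n(\GF(4))$ by setting $\tilde L_j$ to the $\GF(4)$-symbol in bijection with the coset of $L_j$, and $\tilde L_j = 0$ if $L_j$ is an inner stabilizer. Because the distance of $\db{4,2,2}_2$ is two, any $L_j$ with $\tilde L_j \neq 0$ satisfies $\wt(L_j) \geq 2$, so
\begin{equation}
\wt(L) \;=\; \sum_{j=1}^n \wt(L_j) \;\geq\; 2\,\bigl|\{j : \tilde L_j \neq 0\}\bigr| \;=\; 2\,\wt_{\GF(4)}(\tilde L).
\end{equation}

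The key remaining step is to show that $\tilde L$ is a nontrivial outer logical, so that $\wt_{\GF(4)}(\tilde L) \geq d$. If instead $\tilde L$ lay in the outer stabilizer group, then $L$ would equal the corresponding product of encoded outer stabilizers times a per-block element of the inner stabilizer group, placing $L$ in the concatenated stabilizer group and contradicting the assumption that $L$ is a nontrivial logical. Hence $\tilde L$ is a nontrivial outer logical and $\wt_{\GF(4)}(\tilde L) \geq d$, giving $\wt(L) \geq 2d$ for every representative and thus code distance $\geq 2d$. The main subtlety I expect is in rigorously establishing the bijection between inner logical cosets and $\GF(4)$-symbols in the mixed $X/Z$ sector (so that the projection $L \mapsto \tilde L$ is well-defined and respects multiplication by encoded outer stabilizers), but this follows from the explicit concatenation map once one checks that its action preserves the symplectic form on $\GF(4)$-Paulis, so the conclusion is clean.
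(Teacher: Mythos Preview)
Your proposal is correct and follows essentially the same approach as the paper: both arguments rest on the standard concatenated-code distance bound, namely that any nontrivial concatenated logical projects block-wise to a nontrivial outer logical of $\GF(4)$-weight at least $d$, and that on each nontrivially-acting block the inner $\db{4,2,2}_2$ code forces weight at least $2$. The paper's proof is a two-sentence compression of this (simply noting that each nontrivial $\GF(4)$-Pauli maps to a weight-two qubit operator under the explicit encoding, so a weight-$w$ outer logical yields a qubit logical of weight $\geq 2w$), whereas you have spelled out the block-wise projection $L \mapsto \tilde L$ and the argument that $\tilde L$ must be a nontrivial outer logical---this extra care is warranted and makes the proof self-contained.
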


\begin{proof}
    As seen above, the weight of any single nontrivial Pauli operator on a $\GF(4)$-qudit is doubled to weight-two on two qubits through the binarization and concatenation procedure. Therefore, if all logical operators on the $\GF(4)$-qudit code are of weight ${\geq} \, d$, then all logical operators on the resulting qubit code are of weight ${\geq} \, 2d$.
\end{proof}

The $\db{6,2,2}_2$ and $\db{10,2,3}_2$ codes yield $\db{12,2,4}_2$ and $\db{20,2,6}_2$ codes after concatenation, which are phantom, and are in fact independently discovered by our SAT-based code discovery effort (see \cref{app:code_discovery}). We delay the rigorous proof of phantomness for the concatenated codes after introducing the $\GF(4)$ quadratic residue code family in \cref{app:binarize_concatenate/small_cyclic}, in which the $\db{3,1,2}_4$ and $\db{5,1,3}_4$ serve as the smallest examples. 

For now, we give the intuition behind why the binarization-and-concatenation procedure works in creating phantom codes. A central fact we utilize is that the following two operations on a $\GF(4)$-qudit $\gamma$: 
\begin{enumerate}[noitemsep]
    \item $\gamma\mapsto\alpha\gamma$ where $\alpha\in\{1,\omega,\omega^2\}$, and
    \item the Frobenius transform $\gamma\mapsto \gamma^2$,
\end{enumerate}
together generate the invertible transforms $\GL(2,\mathbb{F}_2)$ on the two qubits implementing $\gamma$. On the logical space of the $\db{4,2,2}_2$ phantom code, $\GL(2,\mathbb{F}_2)$ is the space of $\overline{\mathrm{CNOT}}$ logical circuits and are implementable by physical qubit permutations. To be explicit, consider the transformation $\gamma \mapsto \omega^2\gamma$ as an example. In the self-dual normal basis $\{\omega, \omega^2\}$, this is expanded as $\gamma_1\omega+\gamma_2\omega^2\mapsto \gamma_1\omega^3+\gamma_2\omega^4=(\gamma_1+\gamma_2)\omega+\gamma_1\omega^2$ where $\gamma_1, \gamma_2 \in \F_2$; therefore binarizing the $\GF(4)$-qudit into two qubits, the transformation corresponds to $(\gamma_1,\gamma_2) \mapsto (\gamma_1+\gamma_2,\gamma_1)$, which is precisely the action of a $\overline{\mathrm{CNOT}}$ when the two qubits are encoded in the logical space of the $\db{4,2,2}_2$ code. Likewise, one finds that $\gamma \mapsto \omega\gamma$ corresponds to $(\gamma_1,\gamma_2) \mapsto (\gamma_2,\gamma_1+\gamma_2)$, and the Frobenius transform $\gamma_1\omega+\gamma_2\omega^2\mapsto \gamma_1\omega^2+\gamma_2\omega$ corresponds to a swap of qubits, $(\gamma_1,\gamma_2) \mapsto (\gamma_2,\gamma_1)$.

Let us consider an arbitrary $\db{n,1}_4$ code. In general, we can write the computational-basis logical states of a $k = 1$ $\GF(4)$ CSS code as $\smash{|\gamma\ra_\mathrm{L}=\sum_{s_\mathrm{x}\in\rs(H_\mathrm{x})}|\gamma l_\mathrm{x}+s_\mathrm{x}\ra}$, where $\gamma\in\GF(4)$ and $\bl_\mathrm{x}$ is the logical $X$ operator of the code represented as a $\GF(4)$ vector. For the resulting qubit code after binarization and concatenation to be phantom, it must be that $|\gamma\ra_\mathrm{L}=|\gamma_1\omega+\gamma_2\omega^2\ra_\mathrm{L}$ on the qubit code can be transformed into $|\gamma_1'\omega+\gamma_2'\omega^2\ra_\mathrm{L}$ using qubit permutations, where $\trans{(\begin{matrix} \gamma_1 & \gamma_2 \end{matrix})} \mapsto A \trans{(\begin{matrix} \gamma_1' & \gamma_2' \end{matrix})}$ for any invertible $A\in\GL(2,\mathbb{F}_2)$. This means that the qubit code can implement a complete set of individually addressable $\overline{\mathrm{CNOT}}$ gates via qubit permutations.

One sees that any $\db{n,1}_4$ code, after going through the B\&C process with $\db{4,2,2}$, can do the $|\gamma\ra_L\mapsto|\alpha\gamma\ra_L$ using just permutations: map each qudit $\beta$ to $\alpha\beta$ by permuting the $\db{4,2,2}$ code it is encoded in. By $\GF(4)$-linearity, the stabilizers are preserved. The logical $X$ operator $\gamma\;l_\rmx$ is mapped to $\alpha\gamma\; l_\rmx$.
Recall that $\gamma\mapsto\alpha\gamma$ where $\alpha\in\{1,\omega,\omega^2\}$, and $\gamma\mapsto \gamma^2$ (Frobenius transform) together generate the invertible transform $\GL(2,\mathbb{F}_2)$ on the two qubits implementing $\gamma$. Therefore, if we can find a way to implement $|\gamma\ra_L\to|\gamma^2\ra_L$ using permutation on the B\&C code, then the resulting B\&C code is phantom. 
For this purpose, we give the following sufficient condition on the $\db{n,1}_4$ code.

\begin{proposition}
\label{prop:phantom_B&C_condition}
Call $H_\rmx$ (resp. $H_\rmz$) the $X$ (resp. $Z$) stabilizer of the $\GF(4)$-linear CSS code $\db{n,1}_4$, and $l_\rmx\in\F_4^n$ an $X$ logical representative. Denote the coordinate-wise conjugation of the two matrices and vector $l_\rmx$ by $\overline{H_\rmx}$, $\overline{H_\rmz}$ and $\overline{l_\rmx}$. If there exists a permutation $\pi$ of the $n$ coordinates such that the stabilizers are preserved, i.e., $\text{rs}(\pi(\overline{H_\rmx}))=\text{rs}(H_\rmx)$ and $\text{rs}(\pi(\overline{H_\rmz}))=\text{rs}(H_\rmz)$, and $\pi(\overline{\gamma l_\rmx})$ equals $\gamma^2l_\rmx$ up to $X$ stabilizers, then the resulting B\&C code is phantom.
\end{proposition}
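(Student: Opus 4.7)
The plan is to exhibit two families of qubit permutations whose induced logical actions jointly generate $\GL(2,\F_2)$---the full group of $\overline{\mathrm{CNOT}}$ circuits on the two logical qubits of the B\&C code---whereupon phantomness follows from \cref{prop:phantom_code_symplectic_css}. The first family consists, for each $\alpha\in\{\omega,\omega^2\}$, of a within-block qubit permutation $M_\alpha$ realizing $\gamma\mapsto\alpha\gamma$ on each encoded $\GF(4)$ qudit. In the self-dual normal basis $\{\omega,\omega^2\}$ this map reads $(\gamma_1,\gamma_2)\mapsto(\gamma_2,\gamma_1+\gamma_2)$ for $\alpha=\omega$ and $(\gamma_1,\gamma_2)\mapsto(\gamma_1+\gamma_2,\gamma_1)$ for $\alpha=\omega^2$, each a $\overline{\mathrm{CNOT}}$-circuit action on the two logical qubits of $\db{4,2,2}_2$, hence implementable by a qubit permutation within each block because $\db{4,2,2}_2$ is itself phantom. $M_\alpha$ preserves the inner stabilizers as a $\db{4,2,2}_2$ automorphism in each block, and preserves the outer $X$- and $Z$-stabilizers because coordinate-wise $\alpha$-scaling sends rows of $H_\rmx,H_\rmz$ into their own row spans by $\GF(4)$-linearity.

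The second family is a single permutation $F_\pi$ combining the outer block permutation $\pi$ with the qubit swap in each $\db{4,2,2}_2$ block that implements $\overline{\mathrm{SWAP}}$ on the block---equivalently, Frobenius $\gamma\mapsto\gamma^2$ on each encoded qudit, since Frobenius reads $(\gamma_1,\gamma_2)\mapsto(\gamma_2,\gamma_1)$ in the normal basis. Inner stabilizers remain preserved block-by-block. The main technical step is verifying that $F_\pi$ preserves the outer stabilizers: in the Heisenberg picture, an outer $X$-stabilizer with $\GF(4)$-coefficient vector $\bs\in\rs(H_\rmx)$ is transported to the operator with coefficient vector $\pi(\overline{\bs})$ (Frobenius conjugates each entry; $\pi$ reshuffles them). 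This vector lies in $\rs(\pi(\overline{H_\rmx}))$, identified with $\rs(H_\rmx)$ by condition~(1); condition~(2) handles the $Z$-sector identically.

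Condition~(3) states precisely that $F_\pi$ maps $\gamma\bl_\rmx\mapsto\gamma^2\bl_\rmx$ modulo $X$-stabilizers, i.e.~it implements Frobenius on logical-$X$ operators. The matching Frobenius action on logical-$Z$ operators is then automatic: $F_\pi$ is Pauli-type-preserving (being a qubit permutation) and so induces a $\overline{\mathrm{CNOT}}$-circuit logical action of the form $\mathrm{diag}(A,A^{-\mathsf{T}})$ on $(\overline{X},\overline{Z})$-logicals, and with $A=$ swap one has $A^{-\mathsf{T}}=A$, so the $Z$-sector action is likewise the swap. In the normal basis, $M_\omega$, $M_{\omega^2}$, and $F_\pi$ therefore induce the $\GL(2,\F_2)$ matrices $\bigl(\begin{smallmatrix}0&1\\1&1\end{smallmatrix}\bigr)$, $\bigl(\begin{smallmatrix}1&1\\1&0\end{smallmatrix}\bigr)$, $\bigl(\begin{smallmatrix}0&1\\1&0\end{smallmatrix}\bigr)$; the products $F_\pi M_{\omega^2}=\bigl(\begin{smallmatrix}1&0\\1&1\end{smallmatrix}\bigr)$ and $F_\pi M_{\omega}=\bigl(\begin{smallmatrix}1&1\\0&1\end{smallmatrix}\bigr)$ are precisely the two elementary $\overline{\mathrm{CNOT}}$s on the two logical qubits, so the whole $k=2$ $\overline{\mathrm{CNOT}}$ gate set is realized by products of permutations. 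The hard part is the outer-stabilizer preservation by $F_\pi$---conditions~(1)--(2) are engineered precisely for this---together with the observation that the self-dual normal-basis choice $\{\omega,\omega^2\}$ is what intertwines qudit-level Frobenius with a single $\db{4,2,2}_2$ qubit swap, allowing a $\GF(4)$-code symmetry to be lifted to a bona fide qubit permutation on the concatenated code.
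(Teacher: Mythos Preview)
Your proof is correct and follows essentially the same approach as the paper: both use the within-block $\db{4,2,2}$ permutations $M_\alpha$ to realize $\gamma\mapsto\alpha\gamma$ (established in the text preceding the proposition) together with the composite permutation---coordinate-wise Frobenius swap followed by the block permutation $\pi$---to realize the logical Frobenius $\gamma\mapsto\gamma^2$, and then observe that these generate $\GL(2,\F_2)$. Your version is more explicit than the paper's (which delegates the $M_\alpha$ part to the preceding discussion and states the $F_\pi$ construction tersely), in particular spelling out the stabilizer-preservation checks and the final matrix products yielding the two elementary $\overline{\mathrm{CNOT}}$s.
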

\begin{proof}
We first do a coordinate-wise conjugation $c_i\mapsto c_i^2$, $i\in [n]$ using permutation on $\db{4,2,2}$, this operation transforms $H_\rmx$ to $\overline{H_\rmx}$, $H_\rmz\to \overline{H_\rmz}$ and $\gamma l_\rmx\to \overline{\gamma l_\rmx}=\gamma^2\overline{l_\rmx}$.

Next, we apply $\pi$ to each qudit coordinate. On the B\&C code, this is effectively permuting the $\db{4,2,2}$ blocks associated to each qudit coordinate.

\end{proof}

\subsection{Small cyclic codes}
\label{app:binarize_concatenate/small_cyclic}

We now present a concrete family of $\GF(4)$-linear codes that (i) yield good distances and (ii) satisfy the sufficient condition of \cref{prop:phantom_B&C_condition}, guaranteeing that the corresponding B\&C qubit codes are phantom. The two running examples from \cref{app:binarize_concatenate/binarization}, namely the $\db{3,1,2}_4$ and $\db{5,1,3}_4$ codes, are the smallest members of this family.

\textit{Cyclic codes over $\GF(4)$.} 
We briefly recall standard facts about cyclic codes following~\cite[Ch.~7]{macwilliams1977theory}. Cyclic codes can be defined over any finite field $F$; in the present context we take $F = \GF(4)$. A linear code $\mathscr{C}\subseteq F^n$ is \emph{cyclic} if it is closed under cyclic shifts: whenever $(c_0,c_1,\dots,c_{n-1})\in\mathscr{C}$, then $(c_{n-1},c_0,\dots,c_{n-2})\in\mathscr{C}$.
Identify $(c_0,\dots,c_{n-1})$ with the residue class of the polynomial
\[
c(x)=c_0+c_1x+\cdots+c_{n-1}x^{n-1}\in F[x]/(x^n-1).
\]
Then $\mathscr{C}$ is an ideal in $F[x]/(x^n-1)$, hence $\mathscr{C}=\langle g(x)\rangle$ for a unique monic \emph{generator polynomial} $g(x)$ dividing $x^n-1$.

Let $h(x):=(x^n-1)/g(x)$ be the \emph{parity-check polynomial}. 
Any $c(x)\in\mathscr{C}$ can be written as $c(x)=f(x)g(x)$ for some $f(x)\in F[x]/(x^n-1)$, and satisfies $h(x)c(x)=0$.

The Euclidean dual $\mathscr{C}^\perp$ is also cyclic, with generator polynomial
\[
g^\perp(x)=x^{\deg h(x)}h(x^{-1}).
\]
We will also use the \emph{conjugate} code $\overline{\mathscr{C}}$, obtained by applying the Frobenius automorphism entrywise:
\[
(c_0,\dots,c_{n-1})\in\mathscr{C}\quad\Longmapsto\quad(\overline{c}_0,\dots,\overline{c}_{n-1})=(c_0^2,\dots,c_{n-1}^2)\in\overline{\mathscr{C}}.
\]

The generator polynomial of a cyclic code of length $n$ over $F$ must be a factor of $x^{n}-1$. Since we are interested in the case $F=\GF(4)$, let $m$ be the smallest integer such that $n$ divides $4^m-1$. $\GF(4^m)$ is the splitting field of $x^n-1$, i.e., $x^n-1=\prod_{i=0}^{n-1}(x-\xi^i)$, where $\xi\in\GF(4^m)$ is a primitive $n^{th}$ root of unity, 

The \emph{cyclotomic coset} mod $n$ over $\GF(4)$ which contains $s$ is $C_s^{(4)}=\{s,4s,4^2s,\dots,4^{k(s)-1}s\}$, where $k(s)$ is the least integer such that $4^{k(s)}\equiv s\mod n$. Note that $m=k(1)$.

A polynomial $f(x)=\prod_{i\in K}(x-\xi^i)$ for $K$ a subset of $\{0,1,\dots,n-1\}$ has coefficients in $\GF(4)$ if and only if $k\in K\implies 4k\;(\text{mod}\; n)\in K$. Therefore, for the generator polynomial $g(x)=\prod_{i\in K}(x-\xi^i)$ of a cyclic code $\mathscr{C}$ to have coefficients in $\GF(4)$, $K$ is a union of cyclotomic cosets. The $n^{th}$ roots of unity $\{\xi^i:i\in K\}$ are called \emph{the zeros of the code}. $c(x)$ belongs to $\mathscr{C}$ iff $c(\xi^i)=0$ for all $i\in K$. 
\begin{lemma} 
\label{lemma:cyclic_code_zeros}
Given a cyclic code $\mathscr{C}$ with generator polynomial $g(x)=\prod_{i\in K}(x-\xi^i)$,\\
(a) the zeros of the dual code $\mathscr{C}^\perp$ are $\{\xi^i:-i\notin K\}=\{\xi^i:i\in -(\{0,1,\dots,n-1\}\backslash K)\mod n\}$;\\
(b) the zeros of the conjugate code $\overline{\mathscr{C}}$ are $\{\xi^{2i}:i\in K\}$.
\end{lemma}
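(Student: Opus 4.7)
The plan is to treat parts (a) and (b) separately, in both cases reducing the question to an elementary polynomial identity involving the primitive $n^{\text{th}}$ root of unity $\xi \in \GF(4^m)$. Both statements follow directly from unwinding the definitions of $\mathscr{C}^\perp$ and $\overline{\mathscr{C}}$ given in the excerpt and tracking what happens to the zero set at the level of the generator polynomial. I expect the only delicate step to be verifying that the auxiliary factor $x^{\deg h(x)}$ appearing in $g^\perp(x)$ does not introduce any spurious zeros among the $n^{\text{th}}$ roots of unity.

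For part (a), I would start from the formula $g^\perp(x) = x^{\deg h(x)} h(x^{-1})$ with $h(x) = (x^n-1)/g(x)$, so that the zero set of $h$ among $n^{\text{th}}$ roots of unity is exactly the complement $\{\xi^j : j \in \{0,1,\ldots,n-1\} \setminus K\}$. Substituting $x \mapsto x^{-1}$ sends each zero $\xi^j$ of $h$ to $\xi^{-j}$ as a zero of $h(x^{-1})$. Since $\xi \neq 0$, the prefactor $x^{\deg h(x)}$ only adds a zero at $x=0$ and does not touch the roots of unity. Reindexing with $i = -j \pmod n$ yields the equivalent descriptions $\{\xi^i : -i \notin K\} = \{\xi^i : i \in -(\{0,\ldots,n-1\} \setminus K) \bmod n\}$, as claimed.

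For part (b), the key computation is the identity $\bar c(\xi^{2i}) = c(\xi^i)^2$ for any $c(x) = \sum_j c_j x^j \in \GF(4)[x]/(x^n-1)$, where $\bar c(x) := \sum_j c_j^2 x^j$ is the Frobenius image entrywise. This holds because
\begin{equation*}
    \bar c(\xi^{2i}) = \sum_j c_j^2 \xi^{2ij} = \sum_j (c_j \xi^{ij})^2 = \Bigl(\sum_j c_j \xi^{ij}\Bigr)^2 = c(\xi^i)^2,
\end{equation*}
using that squaring is $\GF(4)$-linear (i.e.\ the Frobenius automorphism) in characteristic two. Thus $c \in \mathscr{C}$ iff $c(\xi^i)=0$ for all $i \in K$ iff $\bar c(\xi^{2i}) = 0$ for all $i \in K$, so the zero set of $\overline{\mathscr{C}}$ is precisely $\{\xi^{2i} : i \in K\}$.

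The main subtlety I anticipate is purely bookkeeping rather than mathematical: in (b), one should check that the index set $\{2i \bmod n : i \in K\}$ is again a union of cyclotomic cosets modulo $n$ over $\GF(4)$, as required for the resulting conjugate code to itself be $\GF(4)$-cyclic. This is automatic because $K$ is already a union of $C_s^{(4)}$'s (closed under multiplication by $4$), and doubling $K$ commutes with that closure property. Once these observations are in place, no further argument is required.
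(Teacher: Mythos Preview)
Your proof is correct and follows the same approach as the paper: part (a) by tracking zeros through $g^\perp(x) = x^{\deg h(x)} h(x^{-1})$, and part (b) via the Frobenius identity $\bar c(\xi^{2i}) = c(\xi^i)^2$, which is exactly the paper's (terser) argument. One small terminological slip: squaring is $\GF(2)$-linear (additive), not $\GF(4)$-linear, but your computation only uses additivity so no harm done.
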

\begin{proof}
For (a), the dual code $\mathscr{C}^\perp$ is cyclic and has generator polynomial $g^\perp(x)=x^{\deg h(x)}h(x^{-1})$.\\
To see (b), note that for $(c_0,c_1,\dots,c_{p-1})\in \mathscr{C}$, one has $i\in K \Rightarrow c(\xi^i)=\sum_{j=0}^{n-1}c_j \xi^{ij}=0$, then the conjugated codeword $(c_0^2,c_1^2,\dots,c_{n-1}^2)$ satisfies $\sum_{j=0}^{n-1}c_j^2\xi^{2ij}=(\sum_{j=0}^{n-1}c_j\xi^{ij})^2=0$ ($2$ times everything is zero in $\GF(4)$), so $\xi^{2i}$ is a zero of the conjugated code.
\end{proof}
The classical QR code $Q_p$ is the $[p, \frac{1}{2}(p+1)]$ cyclic code with generator polynomial $\prod_{i\in R}(x-\xi^i)$, where $p$ is an odd prime, and $R$ is the subset of $\{1,\dots,p-1\}$ that are quadratic residue modulo $p$, i.e., can be written as $j^2\mod p$ for some $j$. Note that we exclude $0$ from $R$. Denote by $N$ the nonresidues modulo $p$, then $R\bigsqcup N=\{1,\dots,p-1\}$.
The parity check generator of $Q_p$ is $(x-1)\prod_{i\in N}(x-\xi^i)$. Therefore, the generator polynomial of $Q_p^\perp$ is $(x-1)\prod_{i\in N}(x-\xi^{-i})=\prod_{i\in -N\cup\{0\}}(x-\xi^{i})$. 

Next, we use the classical QR codes to construct a $\GF(4)$ CSS $\db{p,1}_4$ quantum QR code. We choose the $X$ stabilizers to be $Q_p^\perp$. We show that, depending on $p\mod 8$, either taking $Q_p^\perp$ for the $Z$ stabilizers (self-orthogonal) or taking the Hermitian conjugate of $Q_p^\perp$ (hermitian self-orthogonal) satisfies the CSS constraint. This follows from~\cite{macwilliams1978self}, which we summarize below.

By the law of quadratic reciprocity: $\genfrac(){}{0}{2}{p}=(-1)^{\frac{p^2-1}{8}}$, $\genfrac(){}{0}{-1}{p}=(-1)^{\frac{p-1}{2}}$, which means $2$ is a quadratic residue mod $p$ if $p\equiv \pm 1\mod 8$, and $-1$ is a quadratic residue mod $p$ if $p\equiv 1,5\mod 8$.
If $a$ and $b$ are both nonresidues mod $p$, or $a,b$ are both residues, then $ab$ is a residue. If one of $a,b$ is a residue and the other is a nonresidue, then $ab$ is a nonresidue.
\begin{proposition}
\label{prop:GF4_QR_construction}
It is possible to construct a $\GF(4)$ CSS $\db{p,1}_4$ quantum quadratic residue code for a prime number $p$ that equals $3,5,7$ modulo $8$. Take $X$ stabilizers to be $Q_p^\perp$ generated by $\prod_{i\in -N\cup\{0\}}(x-\xi^{i})$.\\
(a) For $p=8k+3$, $Q_p^\perp$ is self-orthogonal, i.e., $Q_p^\perp\subset (Q_p^\perp)^\perp = Q_p$, and we can take $Z$ stabilizers to be $Q_p^\perp$ as well.\\
(b) For $p=8k-1$ or $p=8k-3$, $Q_p^\perp$ is hermitian self-orthogonal, i.e., $\overline{Q_p^\perp}\subset  Q_p$, and we can take $Z$ stabilizers to be $\overline{Q_p^\perp}$.\\
We call the constructed quantum code in (a) self-dual and in (b) hermitian self-dual. In both cases, an $X$-logical representative can be taken to be $\prod_{i\in -N}(x-\xi^i)$.
\end{proposition}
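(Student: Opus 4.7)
The plan is to reduce every claim to an inclusion among subsets of $\mathbb{F}_p^\times$ and then resolve each such inclusion by a residue-character computation. The translation is supplied by \cref{lemma:cyclic_code_zeros}: $Q_p$, $Q_p^\perp$, and $\overline{Q_p^\perp}$ have zero-indices $R$, $-N\cup\{0\}$, and $-2N\cup\{0\}$ respectively (using that multiplication by $2$ is a bijection modulo $p$), and a cyclic-code inclusion $\mathscr{C}_1 \subset \mathscr{C}_2$ is equivalent to the reverse inclusion of their zero-index sets. The CSS commutation requirement---that the $Z$-stabilizer code lies in the Euclidean dual of the $X$-stabilizer code---therefore becomes purely set-theoretic under this correspondence.

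I would first record, via quadratic reciprocity, that $-1 \in N$ iff $p \equiv 3\pmod 4$, $2 \in N$ iff $p \equiv \pm 3\pmod 8$, and therefore $-2 \in N$ iff $p \equiv 5,7\pmod 8$. For case~(a), self-orthogonality $Q_p^\perp \subset Q_p$ becomes $R \subset -N\cup\{0\}$, and since $0\notin R$ and $|R|=|N|=(p-1)/2$, the inclusion forces $R = -N$, which is equivalent to $-1 \in N$---true for $p\equiv 3\pmod 8$. For case~(b), the condition $\overline{Q_p^\perp}\subset Q_p$ reduces analogously to $R = -2N$, equivalent to $-2\in N$---true for $p\equiv 5,7\pmod 8$. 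The dimension of each cyclic code is $(p-1)/2$, so the CSS dimension count yields $k = p - 2(p-1)/2 = 1$, confirming the $\db{p,1}_4$ parameters.

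For the logical representative $l_x(x) = \prod_{i\in -N}(x-\xi^i)$, whose zero-set is $-N$, I will check it lies in the Euclidean dual of the $Z$-stabilizer code but not in the $X$-stabilizer code $Q_p^\perp$. In case~(a), the dual of the $Z$-stabilizer is $Q_p$, and $l_x\in Q_p$ iff $R\subset -N$, which is exactly the case~(a) identity. In case~(b), since conjugation commutes with Euclidean duality on cyclic codes, the dual of $\overline{Q_p^\perp}$ is $\overline{Q_p}$, whose zero-set is $2R$; so $l_x\in\overline{Q_p}$ iff $2R\subset -N$, which is the case~(b) identity. That $l_x\notin Q_p^\perp$ in both cases is immediate from $0\notin -N$, since every element of $Q_p^\perp$ vanishes at $\xi^0=1$ whereas $l_x(1)\neq 0$. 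The main difficulty I anticipate is purely notational: keeping straight how negation and multiplication by $2$ act on residues versus nonresidues; once a fixed shorthand for $R, N, -N, -2N$ is adopted, each remaining verification collapses to a one-line character identity.
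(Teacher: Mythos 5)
Your proposal is correct and follows essentially the same route as the paper's proof: both translate the code inclusions into containments of zero-index sets via \cref{lemma:cyclic_code_zeros} and resolve them by the quadratic character of $-1$ and $2$ (equivalently, $R=-N$ for $p\equiv 3\pmod 8$ and $R=-2N$ for $p\equiv 5,7\pmod 8$), with the same zero-set check for the logical representative $\prod_{i\in -N}(x-\xi^i)$. Your write-up is marginally more explicit than the paper's (the dimension count giving $k=1$ and the verification that $l_\rmx\notin Q_p^\perp$ via $l_\rmx(1)\neq 0$), but the underlying argument is identical.
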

\begin{proof}
To show that one cyclic code is contained in the other, we just need to show that its zero set contains the other.
For primes of the form $8k+3$, we have $-1$ is a nonresidue and hence $-N=R$. The zeros of $Q_p^\perp$ are $\xi^i$ where $i\in -N\cup\{0\}$. Recall that the zeros of $Q_p$ are $\xi^j$ where $j\in R$. Since $R=-N\subset (-N\cup\{0\})$, we have $Q_p^\perp\subset Q_p$. An $X$-logical representative belongs to $Q_p\backslash Q_p^\perp$ and one can see $\prod_{i\in -N}(x-\xi^i)$ is indeed a valid choice.
For $p=8k-1$, $-1$ is a nonresidue but $2$ is a residue. For $p=8k-3$, $-1$ is a residue but $2$ is a nonresidue. Therefore, $-2N=R$ in both cases.
Therefore, by~\cref{lemma:cyclic_code_zeros}(b), the zeros of $\overline{Q_p^\perp}$ are $\xi^i$ where $i\in -2N\cup\{0\}=R\cup\{0\}$, while the zeros of $Q_p$ are $\{\xi^j\;|\;j\in R\}$. An $X$-logical representative belongs $(\overline{Q_p^\perp})^\perp\backslash Q_p^\perp=\overline{Q_p}\backslash Q_p^\perp$ and the zeros of $\overline{Q_p}$ are $\{\xi^j\;|\;j\in-N\cup\{0\}\}$, so $\prod_{i\in -N}(x-\xi^i)$ again works.

\end{proof}
Note that for $p=8k-1$, we can always find a binary parity check matrix~\cite[Thm.~38]{macwilliams1978self} for $Q_p^\perp$. This is not very interesting because one has binary QR codes of the same parameter, and we can obtain a phantom code by encoding each qubit of the $\db{4,2,2}$ code using this binary QR code. In other words, B\&C does not give any advantage over the usual $K=1$ concatenation.

For primes of the form $8k+1$, the usual $R$ formed by residues will not work because both $-1$ and $2$ are residues.
However, it is sometimes possible to construct a hermitian self-dual quantum cyclic code of this length. For example, using the $[17,9,7]_4$ in~\cite[Table~VI]{macwilliams1978self} with $K=C_1^{(4)}\bigcup C_3^{(4)}$, one can construct a $\db{17,1,7}_4$ hermitian self-dual quantum cyclic code, and in general this is possible if $C_s^{(4)}\neq -2 C_s^{(4)}$ for all $1\le s\le p-1$~\cite[Cor.~34]{macwilliams1978self}. Our observations for logical gates in the next two subsections also apply to these hermitian self-dual cyclic codes.

The qudit distance of the $\GF(4)$ quantum QR codes we used to construct those B\&C codes in~\cref{tab:gate_for_phantom_codes} is lower bounded by the distance listed in \cite[Table~VI]{macwilliams1978self} minus one. Therefore, by~\cref{prop:B&C_distance}, the qubit distance of the B\&C code is at least twice that of the qudit code.
We numerically confirm that these are the actual code distances as listed in~\cref{tab:gate_for_phantom_codes}. 

\subsubsection{Proof of phantom property}
\label{sec:B&C_phantom_proof}
We are now ready to prove that the B\&C codes constructed from the above $\GF(4)$ cyclic codes of length $p$ (odd) are phantom. Recall from~\cref{prop:phantom_B&C_condition} that we only need to show that the logical level Frobenius transform $|\gamma\ra_L$ to $|\gamma^2\ra_L$ can be achieved through permutation only. Here we only show $|\gamma\ra_L\mapsto|\alpha\gamma^2\ra_L$ for some constant $\alpha\in\{1,\omega,\omega^2\}$ that depends on the code; this suffices because we can compose it with the $|\gamma\ra_L\mapsto|\alpha^{-1}\gamma\ra_L$ transform obtained from permutations inside each $\db{4,2,2}$ codeblock.

At the qudit level, we first do a coordinate-wise conjugation $c_j\mapsto c_j^2$, then we permute the coordinates as $j\mapsto 2j\mod p$.
At the qubit level of the B\&C code, the first operation can be realized through permutations inside each $\db{4,2,2}$ code, and the second can be achieved by permuting across the $\db{4,2,2}$ codeblocks. 

Going back to the qudit picture, let us first show that the $X$ and $Z$ stabilizers are preserved, again we just need to verify the zeros. Take an arbitrary stabilizer $(c_0,c_1,\dots,c_{p-1})$ and let $\xi^i$ be a zero of it, i.e., $\sum_{j=0}^{p-1} c_j\xi^{ij}=0$, then the two steps maps $c_j\mapsto c_j^2\mapsto c_{j/2}^2$, where $j/2=\frac{p+1}{2}j\mod p$. The polynomial of the transformed codeword becomes $c'(x)=\sum_{j=0}^{p-1}c_{j/2}^2 x^{j}$. One can see that $\xi^i$ is still a zero of this codeword, because $c'(\xi^i)=\sum_{j=0}^{p-1}c_{j/2}^2 \xi^{ij}=\sum_{j=0}^{p-1}c_j^2\xi^{2ij}=(\sum_{j=0}^{p-1}c_j\xi^{ij})^2=0$. Since $(\sum_{j=0}^{p-1}c_j\xi^{ij})^2=0$ implies $\sum_{j=0}^{p-1}c_j\xi^{ij}=0$, one can similarly show that if $\xi^i$ is a zero of $c'(x)$, then it is also a zero of $c(x)$. Consequently, the zeros of the common divisor, i.e., the generator polynomial, of the stabilizer codewords, are preserved under this mapping. Therefore, the stabilizers are preserved.

Next, we show that the $X$-logical $\gamma l_\rmx$ is mapped to $\alpha\gamma^2 l_\rmx$ up to stabilizers, for a constant $\alpha\in\{1,\omega,\omega^2\}$. Recall from~\cref{prop:GF4_QR_construction} that the $X$-stabilizers are generated by $\prod_{i\in -N\cup\{0\}}(x-\xi^i)$ and we can choose $l(x)=\prod_{i\in -N}(x-\xi^i):=\sum_{i=0}^{p-1}l_ix^i$ to represent the $X$ logical $l_\rmx$. By the codeword to polynomial correspondence, we can write $l_\rmx=(l_0,\cdots,l_{p-1})$, and $l_0,\cdots,l_{p-1}\in\GF(4)$. Moreover, $l(\xi^i)=\sum_{j=0}^{p-1}l_j \xi^{ij}=0$ for $i\in -N$. Next, we show that the constant $\alpha$ is $l(1)=\sum_{i=0}^{p-1}l_i$. Note that $l(1)\in\GF(4)$ because $l_0,\cdots,l_{p-1}\in\GF(4)$; $l(1)$ is not zero because $\xi^0=1$ is not a root of $l(x)$.

Under the two-step transform above, $\gamma l_\rmx$ transforms as $(\gamma l_0,\dots,\gamma l_{p-1})\mapsto (\gamma^2 l_0^2,\dots,\gamma^2 l_{p-1}^2)\mapsto (\gamma^2 l_{0/2}^2,\dots,\gamma^2 l_{(p-1)/2}^2)$. Again, the new generating polynomial $l'(x)=\sum_{j=0}^{p-1}\gamma^2 l_{j/2}^2 x^j$ has the zeros exactly at the same places as $l(x)$. Moreover, $\gamma l_\rmx$ is mapped to $l(1)\gamma^2 l_\rmx$, i.e., $\alpha=l(1)=\sum_{j=0}^{p-1}l_j\in \{1,\omega,\omega^2\}$. This is because $l'(x)$ and $l(1)\gamma^2 l_\rmx$ differs by an $X$-stabilizer: $\Delta l(x):=l'(x)-l(1)\gamma^2 l(x)=\gamma^2(\sum_{j=0}^{p-1}l_{j/2}^2 x^j - l(1) \sum_{j=0}^{p-1}l_j x^j)$ can be divided by $\prod_{i\in -N\cup\{0\}}(x-\xi^i)$. For $i\in -N$, one has $\Delta l(\xi^i)=\gamma^2(0-0)=0$ and $\Delta l(\xi^0)=\gamma^2 (\sum_{j=0}^{p-1} l_{j/2}^2 - l(1)\sum_{j=0}^{p-1} l_j)=\gamma^2(l(1)^2-l(1)^2)=0$.

\subsubsection{Logical gates}
\label{app:binarize_concatenate/gates}

Next, we establish other Clifford logical gates that the B\&C codes admit besides $\overline{\mathrm{CNOT}}$; they are by no means exhaustive. Denote by $H_\rmx$ (resp. $H_\rmz$) and $l_\rmx$ (resp. $l_\rmz$) the $X$ (resp. $Z$) type PCMs and logical of the $\db{p,1}_4$ $\GF(4)$ codes; entries are in $\GF(4)$.
For the code obtained after B\&C, denote by $H'_\rmx$ and $H'_\rmz$ its binary PCMs. 

The per qudit $\db{4,2,2}$ encoding gives extra stabilizers $XXXX$ and $ZZZZ$, and it maps the $X$ and $Z$ operators on each qudit to a Pauli string on four qubits as follows (up to these extra stabilizers). $X^\omega\mapsto XXII\equiv IIXX$ and $Z^{\omega^2}\mapsto IIZZ$; $X^{\omega^2}\mapsto XIXI\equiv IXIX$ and $Z^\omega\mapsto IZIZ$; $X^1\mapsto IXXI$ and $Z^1\mapsto IZZI$. 

For simplicity, denote the per qudit $\db{4,2,2}$ encoding as $\enc(\cdot)$ and rewrite the above as, e.g., $\enc(X^\omega)=1100\equiv 0011=\enc(Z^{\omega^2})$.
One can see that in the binary PCM, $X^\gamma$ is mapped to the equivalent thing as $Z^{\gamma^2}$, i.e., $\enc(X^\gamma)=\enc(Z^{\gamma^2})$, $\forall\gamma\in\GF(4)$.

The two logical pairs of the B\&C code can be written as $(\enc_\rmx(\omega l_\rmx),\;\enc_\rmz(\omega^2 l_\rmx))$ and $(\enc_\rmx(\omega l_\rmz),\;\enc_\rmz(\omega^2 l_\rmz))$. Here, $\enc_{\rmx/\rmz}(l)$ for $l=(l_0,\dots,l_{p-1})\in \F_4^p$ means $\enc_\rmx(l):=(\enc(X^{l_0}),\dots,\enc(X^{l_{p-1}}))\in \F_2^{4p}$ for $X$-type and $\enc_\rmz(l):=(\enc(Z^{l_0}),\dots,\enc(Z^{l_{p-1}}))\in \F_2^{4p}$ for $Z$-type. We can shorthand the relation $\enc(X^\gamma)=\enc(Z^{\gamma^2})$ as $\enc_\rmx(\gamma)=\enc_\rmz(\gamma^2)$.

\begin{proposition}
For the phantom B\&C codes constructed from the hermitian self-dual $\GF(4)$ codes (not necessarily cyclic), transversal Hadamard implements logical $HH$ up to $\mathrm{SWAP}_{12}$, and transversal $S$ implements logical CZ up to Pauli corrections.
\end{proposition}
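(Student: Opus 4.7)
The plan is to push the statement down from the qubit level to the $\GF(4)$-qudit level, using the fact that the transversal physical gates act \emph{block-by-block} on the inner $\db{4,2,2}$ codeblocks. On a single $\db{4,2,2}$ block, one can check directly from $\overline{X}_1=XXII$, $\overline{X}_2=XIXI$, $\overline{Z}_1=ZIZI$, $\overline{Z}_2=ZZII$ that $H^{\otimes 4}$ swaps $\overline{X}_1\leftrightarrow\overline{Z}_2$ and $\overline{X}_2\leftrightarrow\overline{Z}_1$ (i.e.\ $\overline{H}^{\otimes 2}\overline{\mathrm{SWAP}}$), and that $S^{\otimes 4}$ sends $\overline{X}_i\mapsto \pm\overline{X}_i\overline{Z}_{3-i}$ while fixing $\overline{Z}_i$ (i.e.\ $\overline{\mathrm{CZ}}$ up to signs). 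Reinterpreted through the normal basis $\gamma=a\omega+b\omega^2$, these two actions read, on each qudit,
\begin{equation*}
H^{\otimes 4}:\;\enc(X^\gamma)\;\longmapsto\;\enc(Z^{\gamma^2}), \qquad S^{\otimes 4}:\;\enc(X^\gamma)\;\longmapsto\;\pm\,\enc(X^\gamma)\,\enc(Z^{\gamma^2}),
\end{equation*}
with $\enc(Z^\gamma)$ fixed in both cases. These two identities are the only things one needs from the $\db{4,2,2}$ inner code; the rest is outer-code bookkeeping.

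For the transversal Hadamard claim, I would first verify preservation of the stabilizer group: the inner $XXXX\leftrightarrow ZZZZ$ pairs swap trivially, while any outer $X$-stabilizer $\enc_\rmx(\bh)$ with $\bh\in\rowspan(H_\rmx)$ is sent to $\enc_\rmz(\overline{\bh})$, which is a $Z$-stabilizer since hermitian self-duality gives $\rowspan(H_\rmz)=\overline{\rowspan(H_\rmx)}$; the reverse direction is symmetric. I would then choose $l_\rmz=\overline{l_\rmx}$ (a valid representative in the hermitian self-dual case, since $\overline{l_\rmx}\in \overline{Q_p}\setminus Q_p^\perp$), and apply the qudit-level rule directly to the four logicals of the preceding text:
\begin{equation*}
\overline{X}_1=\enc_\rmx(\omega l_\rmx)\;\longmapsto\;\enc_\rmz(\omega^{2}\overline{l_\rmx})=\enc_\rmz(\omega^2 l_\rmz)=\overline{Z}_2,
\end{equation*}
and analogously $\overline{X}_2\mapsto \overline{Z}_1$, $\overline{Z}_1\mapsto \overline{X}_2$, $\overline{Z}_2\mapsto \overline{X}_1$. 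This is exactly $\overline{H}^{\otimes 2}\overline{\mathrm{SWAP}}_{12}$, proving the first half.

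For the transversal $S$ claim, the same stabilizer-preservation check shows that $\enc_\rmx(\bh)\mapsto \pm\enc_\rmx(\bh)\enc_\rmz(\overline{\bh})$, with the second factor a $Z$-stabilizer by hermitian self-duality, so the codespace is preserved. Applying the qudit-level rule to the logicals gives
\begin{equation*}
\overline{X}_1=\enc_\rmx(\omega l_\rmx)\;\longmapsto\;\pm\,\enc_\rmx(\omega l_\rmx)\,\enc_\rmz(\omega^2\overline{l_\rmx})=\pm\,\overline{X}_1\overline{Z}_2,
\end{equation*}
and identically $\overline{X}_2\mapsto \pm\overline{X}_2\overline{Z}_1$, while $\overline{Z}_i$ are fixed (up to sign) because $S$ commutes with $Z$. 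This is the symplectic action of $\overline{\mathrm{CZ}}_{12}$, so transversal $S$ realizes $\overline{\mathrm{CZ}}$ modulo logical Pauli corrections.

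The main obstacle is bookkeeping the $\pm 1$ signs in the $S$ argument: they originate because $Y\otimes Y=-(X\otimes X)(Z\otimes Z)$, and they accumulate one factor per nontrivial qudit position of $l_\rmx$ or $l_\rmz$. Since the statement only claims correctness up to Pauli corrections, these signs merely determine \emph{which} logical Pauli byproduct is realized and do not threaten the Clifford type of the logical action; nonetheless, a careful proof should track them and express the net byproduct in terms of $|\supp(l_\rmx)|\bmod 4$ (and similarly for $l_\rmz$). A secondary subtlety is justifying the logical-basis choice $l_\rmz=\overline{l_\rmx}$: anticommutation reduces to $\sum_j l_{\rmx,j}\overline{l_{\rmx,j}}=|\supp(l_\rmx)|\bmod 2$, which is $1$ for typical odd-weight logicals in hermitian self-dual $\GF(4)$ codes; when it is $0$, one instead picks $l_\rmz$ in a different coset representative of $Q_p/Q_p^\perp$ and reruns the same argument with a cosmetic relabelling.
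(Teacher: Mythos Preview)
Your proposal is correct and follows essentially the same route as the paper: both reduce to the per-block identity $\enc_\rmx(\gamma)=\enc_\rmz(\gamma^2)$ on the inner $\db{4,2,2}$ code, use hermitian self-duality $H_\rmz=\overline{H_\rmx}$ to verify stabilizer preservation, and pick $l_\rmz=\overline{l_\rmx}$ to read off the logical action. The one place the paper is slightly sharper is for transversal $S$ on the stabilizers: you write ``$\pm$'' and then assert the codespace is preserved, whereas the paper actually computes the phase $i^{2|\supp(s_\rmx)|}$ and uses $s_\rmx\cdot\overline{s_\rmx}=0$ (which forces $|\supp(s_\rmx)|$ even) to show it is exactly $+1$---so no destabilizer correction is needed there, only logical Pauli byproducts remain.
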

\begin{proof}
By definition, $H_\rmz=\overline{H_\rmx}$, and by $\enc_\rmx(\gamma)=\enc_\rmz(\gamma^2)$, we have $\text{rs}(H'_\rmx)=\text{rs}(H'_\rmz)$. Therefore, stabilizers are preserved under transversal Hadamard. 
For the logical action, 
first notice that we can take $l_\rmz=\overline{l_\rmx}$ because $l_\rmz\in H_\rmx^\perp\backslash H_\rmz=H_\rmx^\perp\backslash\overline{H_\rmx}$ and $l_\rmx\in H_\rmz^\perp\backslash H_\rmx=\overline{H_\rmx}^\perp\backslash H_\rmx$. Then one can see that the $X$-type logical $\enc_\rmx(\omega l_\rmx)$ is mapped to $Z$-type logical $\enc_\rmx(\omega l_\rmx)=\enc_\rmz(\omega^2\overline{l_\rmx})=\enc_\rmz(\omega^2 l_\rmz)$ under transversal Hadamard. Similarly, the $X$-type logical $\enc_\rmx(\omega^2 l_\rmx)$ is mapped to the $Z$-type logical $\enc_\rmz(\omega l_\rmz)$. Therefore, the logical action of transversal Hadamard is $HH$ and $\mathrm{SWAP}_{12}$.

Since $\text{rs}(H'_\rmx)=\text{rs}(H'_\rmz)$, one can also see that stabilizers are preserved under transversal $S$ but possibly up to a phase. We show that the phase is $+1$ and so the codespace is preserved as follows. An $X$ stabilizer $s_\rmx$ of the $\GF(4)$ code is mapped to itself times a $Z$ stabilizer $\overline{s_\rmx}$ (on the qudit level) with a phase $i^{2\wt_\omega(s_\rmx)+2\wt_{\omega^2}(s_\rmx)+2\wt_{1}(s_\rmx)}$ (on the qubit level), where $\wt_{\omega}(s_\rmx),\wt_{\omega^2}(s_\rmx),\wt_1(s_\rmx)$ count the number of $\omega,\omega^2,1$ appearing in $s_\rmx$. Since $s_\rmx\cdot\overline{s_\rmx}=0$, we have $\wt_\omega(s_\rmx)+\wt_{\omega^2}(s_\rmx)+\wt_{1}(s_\rmx)$ even. 
The logical CZ action resulting from transversal $S$ can be verified similarly to the proof above for the transversal Hadamard.
\end{proof}
This proposition is relevant for $p=8k+5$ and $p=8k+1$ if a hermitian self-dual quantum code can be constructed. Since CZ, $HH$ and phantom CNOTs do not generate the full Clifford group, we want to find a way to do logical $SS$ via fold diagonal gates. However, although the solver in~\cite{webster2023transversal} says logical $SS$ from fold is indeed possible for $\db{20,2,6}$ constructed from $\db{5,1,3}_4$, we do not know how to generalize these findings to larger codes in this family. It is known that the automorphism group of the classical extended QR codes contains a subgroup isomorphic to $\text{PSL}(2,\F_p)$, and the base code\footnote{When puncturing the extended QR codes at $\infty$ to construct a $K=1$ quantum code, one of the generator $y\mapsto-\frac{1}{y}$ of $\text{PSL}(2,\F_p)$ becomes unavailable. The only possible qudit coordinate involution is $y\mapsto -y$ if $-1$ is a residue.} $[6,3,4]_4$ of $\db{20,2,6}$ is the only known case over $\GF(4)$ in which the group is bigger than this~\cite[Ch.~16]{macwilliams1977theory}. Therefore, it is also possible that $\db{20,2,6}$ is the only one in the family that allows fold diagonal gates.
\begin{proposition}
For the phantom B\&C codes constructed from the self-dual $\GF(4)$ QR codes of length $p=8k+3$, transversal Hadamard together with the qudit coordinate permutation $i\mapsto 2i\mod p$ implements logical $HH$ up to $\mathrm{SWAP}_{12}$.
Logical $SS$ can be achieved up to Pauli corrections through folding each $\db{4,2,2}$ codeblock using the involution $\tau_{SS}$: CZ between qubits $2$ and $3$ and $S$ on qubits $1$ and $4$. 
Logical CZ can also be achieved up to Pauli corrections through folding; $\tau_{\rm{CZ}}$ is the involution on the qudit coordinate $i \mapsto -i\mod p$.
\end{proposition}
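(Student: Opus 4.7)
The plan is to verify each of the three claims in turn by checking (i) that the physical operation preserves the B\&C stabilizer group and (ii) that the induced map on the chosen logical representatives matches the asserted gate, working primarily at the $\GF(4)$-qudit level and lifting to the qubit picture through $\enc_x(\gamma)=\enc_z(\gamma^2)$. Two tools will do most of the work: cyclotomic zero-set tracking for cyclic codes, and the evaluation isomorphism $Q_p/Q_p^\perp\cong\GF(4)$ given by $c\mapsto c(1)$.

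For transversal Hadamard composed with $\sigma\colon i\mapsto 2i$, I would first argue stabilizer preservation by zero-set tracking. Transversal Hadamard exchanges the per-block stabilizers $X^{\otimes 4}\leftrightarrow Z^{\otimes 4}$ and at the qudit level sends $X^\gamma\leftrightarrow Z^{\gamma^2}$, so $Q_p^\perp$ (zero set $R\cup\{0\}$) is carried to $\overline{Q_p^\perp}$ with zero set $2R\cup\{0\}=N\cup\{0\}$ using $2\in N$ for $p=8k+3$. Applying $\sigma$, which maps zero sets $K\mapsto K/2$, and noting $2^{-1}\in N$, restores $\tfrac12 N\cup\{0\}=R\cup\{0\}$, hence $Q_p^\perp$. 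For the logical action, the class of $\sigma(\overline{l_x})$ in $Q_p/Q_p^\perp$ is determined by evaluation at $1$; since both conjugation and $\sigma$ act as $\alpha\mapsto\alpha^2$ on that value (with $\alpha=l_x(1)$), $\enc_x(\omega l_x)$ is sent to the $Z$-type class $\omega^2\alpha\,l_x$ mod $Q_p^\perp$, and analogously for $\enc_x(\omega^2 l_x)$. Matching these against the Z-logical basis $\{\enc_z(\omega^2 l_z),\enc_z(\omega l_z)\}$ exchanges the two logical qubit indices, yielding $\overline{H}^{\otimes 2}\cdot\overline{\mathrm{SWAP}}_{12}$ up to logical Paulis.

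For fold-$\tau_{SS}=S_1S_4\mathrm{CZ}_{23}$, I would first verify its action on a single $\db{4,2,2}$ by direct conjugation: $\overline{X}_1=X_1X_2\mapsto Y_1X_2Z_3=i\overline{X}_1\overline{Z}_1=\overline{Y}_1$, and $\overline{X}_2=X_1X_3\mapsto Y_1Z_2X_3=i\overline{X}_2\overline{Z}_2=\overline{Y}_2$, with both $\overline{Z}_i$ fixed; also $X^{\otimes 4}\mapsto Y^{\otimes 4}=X^{\otimes 4}Z^{\otimes 4}$ and $Z^{\otimes 4}$ is invariant, so block stabilizers are preserved. Lifted to qudits this reads $X^\gamma\mapsto c_\gamma X^\gamma Z^\gamma$, $Z^\gamma\mapsto Z^\gamma$ for phases $c_\gamma\in\{\pm 1,\pm i\}$. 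Applied across all $p$ blocks, an outer $X$-stabilizer with qudit vector $s_x\in Q_p^\perp$ is sent to $\pm X^{s_x}Z^{s_x}$; the crucial point is that $Z^{s_x}$ is already in the Z-stabilizer group because self-duality gives $H_z=Q_p^\perp=H_x$. The same bookkeeping on $\enc_x(\omega l_x)$ and $\enc_x(\omega^2 l_x)$ produces $\overline{X}_i\mapsto\pm\overline{X}_i\overline{Z}_i=\pm\overline{Y}_i$, confirming $\overline{S}\otimes\overline{S}$ up to Pauli corrections.

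Fold-$\tau_{\mathrm{CZ}}$ with qudit involution $i\mapsto -i$ is the most delicate step and I expect it to be the main obstacle. The circuit combines the qudit permutation $\pi:i\mapsto -i$, a qudit $\mathrm{CZ}$ (bitwise $\mathrm{CZ}$'s) between each paired $(i,-i)$, and a $\tau_{SS}$ at the fixed point $i=0$. The bare permutation $\pi$ does \emph{not} preserve $Q_p^\perp$: for $p=8k+3$, $-1\in N$, so $R\cup\{0\}\mapsto N\cup\{0\}\neq R\cup\{0\}$. The fix is to incorporate a within-block bit-swap (physically, swapping qubits $2\leftrightarrow 3$ in each $\db{4,2,2}$), which realizes the qudit Frobenius $\gamma\leftrightarrow\gamma^2$ and is precisely the permutation used for phantomness in~\cref{sec:B&C_phantom_proof}; composing with $\pi$ traces zero sets $R\cup\{0\}\to N\cup\{0\}\to -N\cup\{0\}=R\cup\{0\}$ using $-N=R$, recovering $Q_p^\perp$. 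The qudit-$\mathrm{CZ}$ and fixed-point $\tau_{SS}$ contribute only Z-support whose coefficient vector lies in $Q_p^\perp$ by self-duality, so the full stabilizer group is preserved. For the logical action, the qudit-$\mathrm{CZ}$ sends $X^\gamma_i\mapsto X^\gamma_i Z^\gamma_{-i}$, the fixed-point contribution is the $\tau_{SS}$ analysis applied at $i=0$, and the coordinate relabelling by $\pi$ is absorbed modulo $Q_p^\perp$ via the evaluation map; evaluating on the representatives yields $\overline{X}_1\mapsto\overline{X}_1\overline{Z}_2$ and $\overline{X}_2\mapsto\overline{X}_2\overline{Z}_1$ up to logical Paulis, which is exactly $\overline{\mathrm{CZ}}_{12}$.
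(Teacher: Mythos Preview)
Your treatment of the first two claims follows the paper's approach closely (zero-set tracking for stabilizer preservation, the evaluation map $Q_p/Q_p^\perp\to\GF(4)$ for logicals), but both omit a step that the paper works out explicitly: the computation $l(1)=1$ for the fixed representative $l(x)=\prod_{i\in -N}(x-\xi^i)$. This matters. In the Hadamard claim you land at $\enc_x(\omega l_x)\mapsto \enc_z(\omega^2\alpha\, l_x)$ and then assert a clean $\overline{HH}\cdot\overline{\mathrm{SWAP}}$ ``up to logical Paulis''; but if $\alpha\in\{\omega,\omega^2\}$ the image is the $\overline Z_1\overline Z_2$ class, a $\overline{\mathrm{CNOT}}$-type discrepancy, not a Pauli. (A side slip: $\sigma$ acts as the identity on $c(1)$, not as squaring---only conjugation squares.) In the $\tau_{SS}$ claim the phase on $\enc_x(\omega l_x)$ is $i^{\wt_1(l_x)+\wt_\omega(l_x)+2\wt_{\omega^2}(l_x)}$, and $l(1)=1$ is exactly what forces $\wt_1(l_x)+\wt_\omega(l_x)$ odd, giving $\pm i$ rather than $\pm 1$. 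Your line ``$\pm\overline X_i\overline Z_i=\pm\overline Y_i$'' conflates these phases. The paper establishes $l(1)=1$ via $l(1)^2=\prod_{i=1}^{p-1}(1-\xi^i)=1$ together with $-N=R$.

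The fold-$\tau_{\mathrm{CZ}}$ argument has a structural gap. A fold circuit associated with an involution $\tau$ consists only of $\mathrm{CZ}$ on each orbit $\{j,\tau(j)\}$ and $S$ on fixed points---no permutation is applied. Here $\tau_{\mathrm{CZ}}$ lifts to the qubit involution $(i,a)\mapsto(-i,a)$, so the physical circuit is transversal $\mathrm{CZ}$ between the $\db{4,2,2}$ blocks at $i$ and $-i$, with $S^{\otimes 4}$ on block $0$. Under this, an $X$-type operator $\enc_x(s)$ is sent to itself times the $Z$-type $\enc_z(t)$ where $t_j=s_{-j}^{\,2}$; the point is that the composite ``coordinate-negate then conjugate'' sends the zero set $R\cup\{0\}$ to $-2R\cup\{0\}=R\cup\{0\}$ (since $-1,2\in N$ imply $-2\in R$), so $t\in Q_p^\perp$ whenever $s\in Q_p^\perp$, and $t$ is the paired $Z$-logical when $s$ is an $X$-logical. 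Your analysis instead places the permutation $\pi$ inside the circuit, notes that $\pi$ alone fails to preserve $Q_p^\perp$, and patches it with a within-block Frobenius. That patched circuit is not the fold of the proposition, and the subsequent $Z$-bookkeeping from the $\mathrm{CZ}$'s would be taken on the permuted $X$-support rather than the original one, so the argument as written does not establish the claim.
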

\begin{proof}
Since $\enc_\rmx(\gamma)=\enc_\rmz(\gamma^2)$, the qubit-level transversal Hadamard transforms the qudit PCM $H_\rmx$ into its conjugate on the qudit level. Since $H_\rmz=H_\rmx$, we need to counteract this conjugation through coordinate permutation. As we have seen before, $i\mapsto 2i\mod p$ effectively doubles the exponents of the zeros of a cyclic code and transforms the code into its conjugate. Therefore, the codespace is preserved under this SWAP-transversal Hadamard.
For the logical action, note that we can take $l_\rmz=l_\rmx$ for the self-dual codes.
The $X$-logical $\enc_\rmx(\omega l_\rmx)$ is mapped to the $Z$-type $\enc_\rmx(\omega l_\rmx)=\enc_\rmz(\omega^2\overline{l_\rmx})=\enc_\rmz(\omega^2\overline{l_\rmz})$ under transversal Hadamard, and the coordinate permutation further maps $\omega^2\overline{l_\rmz}$ to $l(1)\omega^2 l_\rmz=\omega^2 l_\rmz$ (up to some $Z$ stabilizers). The constant $l(1)=\sum_{i=0}^{p-1}l_i$, where $l(x)=\sum_{i\in-N}(x-\xi^i)=\sum_{i=0}^{p-1}l_i x^i$, appearing in \cref{sec:B&C_phantom_proof} for this coordinate permutation turns out to be $1$ for self-dual codes constructed from $p=8k+3$ QR codes. 

One can verify $l(1)=\omega+\omega^2+0=1$ for $p=3$ directly. For $p>3$, $l(1)=\prod_{i\in -N}(1-\xi^i)=\prod_{i\in -N} \xi^i(\xi^{-i}-1)=\xi^{-\sum_{i\in N}i} \prod_{i\in N}(1-\xi^i)=\prod_{i\in N}(1-\xi^i)$, where in the second equality we use $1=-1$ in $\GF(4)$ and in the last step $\sum_{i\in N}i\equiv\sum_{i\in R}i=\sum_{j=1}^{(p-1)/2}j^2=p(p^2-1)/24\equiv 0\mod{p}$. Since $-1$ is a nonresidue for $p\equiv 3\mod 4$, one has $-N=R$, and $l(1)^2=\prod_{i=1}^{p-1}(1-\xi^i)=\frac{\xi^{p}-1}{\xi-1}=1\Rightarrow l(1)=1$. 

To summarize the logical action of transversal $H$ together with the coordinate permutation $i\mapsto 2i\mod p$, the $X$-logical $\enc_\rmx(\omega l_\rmx)$ transforms to $Z$-logical $\enc_\rmz(\omega^2 l_\rmz)$. One can similarly show that the $X$-logical $\enc_\rmx(\omega^2 l_\rmx)$ transforms to $Z$-logical $\enc_\rmz(\omega l_\rmz)$.

Under the fold gate following involution $\tau_{SS}$, each $X$-type qudit $\gamma$ is mapped to$\tau(\enc_\rmx(\gamma))=\enc_\rmx(\gamma^2)=\enc_\rmz(\gamma)$. Therefore, the per-coordinate folding maps an $X$ stabilizer $\enc_\rmx(s_\rmx)$ to itself and the $Z$-type 
$\enc_\rmz(s_\rmx)$ and a phase $i^{\wt_\omega(s_\rmx)+\wt_{\omega^2}(s_\rmx)+2\wt_1(s_\rmx)}=\pm 1$ because $s_\rmx\cdot s_\rmx=0$; one can fixes the $-1$ phase by applying destabilizers~\cite{sayginel2025fault}. The $X$-logical $\enc_\rmx(\omega l_\rmx)$ is mapped to itself times its pairing $Z$-logical $\enc_\rmz(\omega l_\rmx)=\enc_\rmz(\omega l_\rmz)$ with a phase $i^{\wt_\omega(\omega l_\rmx)+\wt_{\omega^2}(\omega l_\rmx)+2\wt_1(\omega l_\rmx)}=i^{\wt_1(l_\rmx)+\wt_w(l_\rmx)+2\wt_{\omega^2}(l_\rmx)}=\pm i$. The last step follows because $l(1)=1\Rightarrow \wt_\omega(l_\rmx)\equiv\wt_{\omega^2}(l_\rmx)\not\equiv\wt_1(l_\rmx)\mod{2}$. We can apply a $Z$-logical operator to fix the phase if it is $-i$. One can similarly show that the other $X$-logical is mapped to itself times its pairing $Z$-logical with a $\pm i$ phase. The logical effect of this fold gate is thus $SS$ up to Pauli corrections.

Lastly, we show that the fold gate following the involution $\tau_{\rm{CZ}}$ implements logical CZ. Note that under $i\mapsto -i\mod p$, a zero of a polynomial is mapped to its inverse. Therefore, the generator polynomial for $X$/$Z$ stabilizers $\prod_{i\in-N\cup\{0\}}(x-\xi^i)$ is mapped to $\prod_{i\in N\cup\{0\}}(x-\xi^i)=\prod_{i\in-2N\cup\{0\}}(x-\xi^i)$, the last step $N=-2N$ is because both $-1$ and $2$ are nonresidues modulo $p=8k+3$.
Also, recall that coordinate-wise conjugation doubles the exponent of zeros. Therefore, by $\enc_\rmx(\gamma)=\enc_\rmz(\gamma^2)$, each $X$ stabilizer is mapped to itself times a $Z$ stabilizer. Since the only fixed point is the qudit at coordinate $0$, the phase accumulated is either $1$ or $-1$, and the latter case can be fixed by applying a destabilizer. 
The $X$-logical $\enc_\rmx(\omega l_\rmx)=(\enc_\rmx(\omega l_0),\dots,\enc_\rmx(\omega l_{p-1}))$ is mapped to itself times $Z$ type $(\enc_\rmx(\omega l_{-0}),\dots,\enc_\rmx(\omega l_{-(p-1)}))=(\enc_\rmz(\omega^2 l^2_{-0}),\dots,\enc_\rmz(\omega^2 l^2_{-(p-1)}))$. The polynomial of this new codeword $\omega^2\sum_{i=0}^2 l_{-i}^2 x^i$ has the same zero as $\omega^2l_\rmx=\omega^2l_\rmz=\omega^2\prod_{i\in -N}(x-\xi^i)$ (again because of both $-1$ and $2$ are nonresidues). Moreover, their difference is a $Z$-stabilizer, i.e., $(x-\xi^0)$ divides the difference, because $l(1)=1$. 
Therefore, the $X$-logical $\enc_\rmx(\omega l_\rmx)$ is mapped to itself and the $Z$-logical $\enc_\rmz(\omega^2 l_\rmz)$. Similar arguments hold for the other $X$-logical $\enc_{\rmx}(\omega^2 l_\rmx)$. The logical effect of the fold gate is thus CZ.
\end{proof}

\section{Other code constructions}
\label{app:other_constructions}

Here we discuss additional avenues for code construction. First, in \cref{app:other_constructions/hypergraph}, we describe the construction of CSS phantom codes from suitable pairs of classical linear codes via the hypergraph product. In \cref{app:other_constructions/gluing_422}, we examine a family of $d = 2$ phantom codes built by gluing $\db{4,2,2}$ codes. Lastly, in \cref{app:other_constructions/codes_from_phantom}, we examine codes built from phantom codes. These codes are not phantom as they support products rather than individually addressable $\overline{\mathrm{CNOT}}$ gates implemented by qubit permutations, but are related to prior literature.

\subsection{Hypergraph product phantom codes}
\label{app:other_constructions/hypergraph}

To start, we first review the notion of automorphisms on classical linear codes. 

\begin{definition}[Code automorphism of a classical linear code]
For an $[n,k]$ linear code over $\mathbb{F}_2$ with generator matrix $G \in \mathbb{F}_2^{k \times n}$,
a code automorphism is a coordinate permutation $\sigma \in \S_n$ such that $G \sigma = V G$ for some invertible matrix $V \in \GL(k, \mathbb{F}_2)$.
\end{definition}

In other words, permuting the coordinates of bits by $\sigma$ maps a codeword basis to another codeword basis, and the codespace spanned by these bases is preserved.

\begin{definition}[Tanner graph automorphism of a classical linear code~\cite{berthusen2025automorphism}]
For an $[n,k]$ linear code over $\mathbb{F}_2$ with parity-check matrix $H\in \mathbb{F}_2^{m \times n}$, a Tanner graph automorphism is a code automorphism such that (additionally) $H \sigma = \pi H$ where $\pi \in \S_m$.
\end{definition}

To illustrate the difference, consider the $[3,1]$ repetition code with generator and parity-check matrices
\begin{equation}
    G = \begin{bmatrix} 1 & 1 &  1 \end{bmatrix},
    \qquad
    H = \begin{bmatrix} 1 & 1 & 0 \\ 0 & 1 & 1 \end{bmatrix}.
\end{equation}

Any cyclic permutation of the three coordinates leaves $G$ unchanged and so is a code automorphism. However, a rightward cyclic shift of the coordinates maps $H$ to
\begin{equation}
    H' = \begin{bmatrix} 0 & 1 & 1 \\ 1 & 0 & 1 \end{bmatrix},
\end{equation}
which is not the same as $H$ up to row permutations, and therefore this permutation is not a Tanner graph automorphism of the code. In contrast, swapping the first and the last coordinates preserves $H$ up to swapping the two rows, and is a Tanner graph automorphism. This highlights the distinction: Tanner graph automorphisms are a subset of code automorphisms.

We construct CSS phantom codes using the hypergraph product of classical linear codes with Tanner graph automorphisms, as a corollary to Ref.~\cite[Thm.~4.2]{berthusen2025automorphism}.

\begin{theorem}[Phantom hypergraph product codes]
\label{thm:phantom_hypergraph_product_codes}
    Let $H_1$ be an $m_1 \times n_1$ parity-check matrix of an $[n_1, k, d_1]$ classical linear code whose Tanner graph automorphism group is $\GL(k, \mathbb{F}_2)$. Let $H_2$ be an $m_2 \times n_2$ parity-check matrix of an $[n_2, 1, d_2]$ classical linear code. Then, the hypergraph product of $H_1$ and $H_2$ yields a CSS phantom code with parameters
    \begin{equation}
        \db{n_1 n_2 + m_1 m_2, \, k, \, \min(d_1, d_2)}.
    \end{equation}
\end{theorem}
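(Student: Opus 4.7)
The plan is to combine the standard hypergraph product (HGP) construction with the lift-of-automorphisms result cited from Ref.~\cite[Thm.~4.2]{berthusen2025automorphism}. First I would verify the code parameters. The HGP places qubits on the two vertex sets $V_0 = [n_1]\times[n_2]$ and $V_1 = [m_1]\times[m_2]$, giving $n = n_1 n_2 + m_1 m_2$ physical qubits; its $X$- and $Z$-type parity-check matrices are $H_\mathrm{x} = [\,H_1\otimes \mathbb{I}_{n_2} \mid \mathbb{I}_{m_1}\otimes \trans{H_2}\,]$ and $H_\mathrm{z} = [\,\mathbb{I}_{n_1}\otimes H_2 \mid \trans{H_1}\otimes \mathbb{I}_{m_2}\,]$. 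The standard HGP parameter formulas~\cite{tillich2013quantum} give code dimension $k_1 k_2 + k_1^{\mathsf{T}} k_2^{\mathsf{T}}$ and distance $\min(d_1, d_2, d_1^{\mathsf{T}}, d_2^{\mathsf{T}})$, where transpose-code parameters are defined via $\ker(\trans{H_i})$ (with $d_i^{\mathsf{T}} = \infty$ when $k_i^{\mathsf{T}} = 0$). Reducing to $H_2$ of full row rank (without loss of generality) forces $k_2^{\mathsf{T}} = 0$, collapsing the formulas to $k_1\cdot 1 = k$ and $\min(d_1,d_2)$ as claimed.

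Next I would show that Tanner graph automorphisms of $H_1$ lift to qubit-permutation code automorphisms of the HGP code. A Tanner graph automorphism $\sigma\in \S_{n_1}$ of $H_1$ satisfies $H_1\sigma = \pi H_1$ for some $\pi\in \S_{m_1}$. I would then verify that the combined qubit permutation acting as $\sigma\otimes \mathbb{I}_{n_2}$ on $V_0$ and $\pi\otimes \mathbb{I}_{m_2}$ on $V_1$ preserves both $H_\mathrm{x}$ and $H_\mathrm{z}$ up to a relabelling of stabilizer generators. This is a short computation pushing $\sigma$ through each Kronecker block using $H_1\sigma = \pi H_1$ together with its transpose, and using the fact that the $H_2$ and identity blocks are unchanged because the permutation acts trivially on the second tensor factor.

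The central step is to identify the logical action of this lifted permutation. I would pick a canonical CSS logical basis in which each $X$-logical is supported in $V_0$ as a tensor product $\bv_j \otimes \bw$, where $\{\bv_j\}_{j=1}^k$ is a set of codeword representatives of the $[n_1,k,d_1]$ code and $\bw$ is the unique nontrivial codeword representative of the $[n_2,1,d_2]$ code. The permutation $\sigma\otimes\mathbb{I}$ maps $\bv_j\otimes\bw$ to $\bv_j\sigma\otimes\bw$, which by the code-automorphism relation $G_1\sigma = VG_1$ lies in the span of $\{\bv_i\otimes\bw\}$ modulo $V_0$-stabilizers, with the induced logical transformation being $V\in\GL(k,\mathbb{F}_2)$. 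An analogous computation on the paired $Z$-logicals (which are naturally supported via $\trans{H_1}$ on $V_1$ in the HGP) yields the dual action $\invtrans{V}$, confirming that the lift implements a pure $\overline{\mathrm{CNOT}}$ logical circuit in the CSS basis. Since by hypothesis the Tanner-graph automorphism group of $H_1$ realizes all of $\GL(k,\mathbb{F}_2)$, the elementary row-operation generators in particular realize every individually addressable $\overline{\mathrm{CNOT}}_{ab}$, and \cref{prop:phantom_code_symplectic_css} concludes phantomness.

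The main obstacle I anticipate is in this third step: making the identification of the lifted logical action precise. Concretely, one must (i) verify that the chosen $\bv_j\otimes\bw$ form a valid CSS logical basis satisfying $L_\mathrm{x}\trans{L_\mathrm{z}} = \mathbb{I}$, (ii) check that $\sigma\otimes\mathbb{I}$ acts as $V$ modulo stabilizers with no residual action on the dual sector, and (iii) handle the bookkeeping when $H_1$ is not full row rank, which would introduce redundant stabilizers that must be tracked through the Kronecker products. Once these items are pinned down, the claimed parameters and phantomness follow immediately from the HGP parameter formulas and \cref{prop:phantom_code_symplectic_css}.
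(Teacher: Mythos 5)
Your proposal follows essentially the same route as the paper: lift the Tanner-graph automorphism pair $(\sigma,\pi)$ of $H_1$ to the qubit permutation $(\sigma\otimes\mathbb{I}_{n_2})\oplus(\pi\otimes\mathbb{I}_{m_2})$, identify its logical action with $V\in\GL(k,\mathbb{F}_2)$ via $G_1\sigma=VG_1$, and invoke the hypothesis that the automorphism group is all of $\GL(k,\mathbb{F}_2)$ to obtain every addressable $\overline{\mathrm{CNOT}}$. One concrete correction to your third step: with $H_2$ of full row rank (your own reduction), there are \emph{no} nontrivial logical representatives supported purely on $V_1$, since a $Z$-operator $(0\mid w)$ commuting with the $\mathbb{I}_{m_1}\otimes\trans{H_2}$ checks forces $w=0$; both members of each logical pair live on $V_0$, namely $G_1\otimes\epsilon$ (rows of the generator matrix of the $[n_1,k,d_1]$ code, with $\epsilon\notin\rowspan(H_2)$) and $\{e_j\}\otimes g_2$ (unit-vector coset representatives of $\mathbb{F}_2^{n_1}/\rowspan(H_1)$), so your assignment of the code-$1$ codewords to the $X$-logicals and of the dual family to $V_1$ needs to be swapped and relocated. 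The paper also sidesteps computing the dual action $\invtrans{V}$ altogether by observing that qubit permutations preserve the symplectic pairing, so tracking $L_\mathrm{z}\mapsto VL_\mathrm{z}$ alone fixes the logical Clifford; adopting that shortcut removes your item (ii).
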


\begin{proof}
In the hypergraph product construction, the $X$- and $Z$-type stabilizer generator matrices take the forms
\begin{equation}
    H_\mathrm{x} = (H_1 \otimes I_{n_2} \;\;|\;\; I_{m_1} \otimes H_2^\top),
    \qquad
    H_\mathrm{z} = (I_{n_1} \otimes H_2 \;\;|\;\; H_1^\top \otimes I_{m_2}),
\end{equation}
where the vertical bar separates the ``left'' data sector with $n_1 n_2$ qubits from the ``right'' data sector with $m_1 m_2$ qubits. We adopt the canonical logical basis
\begin{equation}
    L_\mathrm{x} = (\{e_j\} \otimes g_2 \;\;|\;\; 0),
    \qquad
    L_\mathrm{z} = (G_1 \otimes \epsilon \;\;|\;\; 0), 
\end{equation}
where $G_1$ is a $k \times n_1$ generator matrix for the $[n_1,k, d_1]$ code,  $\epsilon$ is a length-$n_2$ row vector outside the row span of $H_2$,  $\{e_j\}$ denotes a $k \times n_1$ matrix of linearly independent unit vectors outside the row span of $H_1$, and $g_2$ is the only codeword of the $[n_2,1, d_2]$ code. 

Because the commutation relations of logical Pauli operators are preserved under physical qubit permutations, it suffices to track the action on $L_\mathrm{z}$ to verify that a permutation implements a desired logical Clifford. In particular, for a $\overline{\mathrm{CNOT}}$ from logical qubit $a$ to $b$, take
\begin{equation}
    V = \mathbb{I}_k + e_{ba},
\end{equation}
where $e_{ba}$ is the matrix with a $1$ in the $(b,a)$ position and zeros elsewhere. Note that $V \in {\rm GL}_k(\mathbb{F}_2)$. Since the $[n_1,k,d_1]$ code admits the full $\GL(k, \mathbb{F}_2)$ Tanner graph automorphism group, for such a $V$ there exist permutations $\sigma_V \in S_{n_1}$ and $\pi_V \in S_{m_1}$ such that
\begin{equation}
    G_1 \sigma_V = V G_1, 
    \qquad 
    H_1 \sigma_V = \pi_V H_1 .
\end{equation}

By Ref.~\cite[Thm.~4.2]{berthusen2025automorphism}, the qubit permutation
\begin{equation}
    \rho_V = (\sigma_V \otimes \mathbb{I}_{n_2}) \oplus (\pi_V \otimes \mathbb{I}_{m_2})
\end{equation}
preserves the stabilizer group and acts on the logical operators as
\begin{equation}
    L_\mathrm{z} \rho_V 
    = (G_1 \sigma_V \otimes \epsilon \, \mathbb{I}_{n_2} \ | \ 0) 
    = (V G_1 \otimes \epsilon \ | \ 0) 
    = V L_\mathrm{z},
\end{equation}
noting that $\epsilon$ is a single row. Thus, $\rho_V$ implements a $\overline{\mathrm{CNOT}}$ from qubit $a$ to $b$.
\end{proof}

As a result of \cref{thm:phantom_hypergraph_product_codes}, we can construct CSS phantom codes from the hypergraph product of a classical simplex code with a repetition code. Let $H_1$ be the parity-check matrix of the $[n_1 = 2^k - 1, k, 2^{k-1}]$ simplex code, where the $m_1 = n_1(n_1-1)/6$ rows correspond to all weight-$3$ checks (i.e.~all weight-3 codewords of the dual of simplex code, the Hamming code). Let $H_2$ be the parity-check matrix of the $[n_2 = 2^{k-1}, 1, 2^{k-1}]$ repetition code, with $m_2 = n_2 - 1$ rows. Then the hypergraph product of $H_1$ and $H_2$ yields a CSS phantom code with parameters
\begin{equation}
    \db{n_1 n_2 + m_1 m_2 = \mathcal{O}(2^{3k}), k, 2^{k-1}}.
\end{equation}

The smallest members of this family are the $\db{7,2,2}$ and $\db{49,3,4}$ codes.

\subsection{Glued \texorpdfstring{$\db{4,2,2}$}{[[4,2,2]]} codes as a family of \texorpdfstring{$d = 2$}{d=2} phantom codes} 
\label{app:other_constructions/gluing_422}

Detailed examination of our exhaustive code enumeration (see \cref{app:enumeration}) results revealed that a class of $n$-optimal CSS phantom codes for a given $d_\mathrm{z}$, when $k = 2$ and $d_\mathrm{x} = 2$, possess a common underlying structure and can be understood as gluings of the the $\db{4,2,2}$ code. The Hadamard-dual of these codes exchange $d_\mathrm{x}, d_\mathrm{z}$ distances (thereby fixing $d_\mathrm{z} = 2$ instead). We describe this construction below.

\begin{theorem}[Glued $\mathbf{\db{4,2,2}}$ phantom codes]
    There exist CSS phantom codes with parameters $\db{n=4m,k=2,d_\text{x}=2,d_\text{z}=2m}$ and $\db{n=4m-1,k=2,d_\text{x}=2,d_\text{z}=2m-1}$ for any integer $m \geq 1$.
\end{theorem}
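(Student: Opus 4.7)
The plan is to construct the $\db{4m,2,2,2m}$ code by ``gluing'' $m$ copies of the $\db{4,2,2}$ phantom code along a shared pair of logical qubits. Label physical qubits by $(i,j)$ with $i\in[m]$, $j\in[4]$, and on each copy $i$ use the canonical basis $\overline{X}_1^{(i)}=X_{i,1}X_{i,2}$, $\overline{X}_2^{(i)}=X_{i,1}X_{i,3}$, $\overline{Z}_1^{(i)}=Z_{i,1}Z_{i,3}$, $\overline{Z}_2^{(i)}=Z_{i,1}Z_{i,2}$ from \cref{fig:phantomgates/422}. Take the stabilizer group to be generated by the single-copy $X^{\otimes 4}$ and $Z^{\otimes 4}$ on every copy, together with the weight-four glue operators $\overline{X}_a^{(i)}\overline{X}_a^{(i+1)}$ for $a=1,2$ and $i=1,\ldots,m-1$. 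Mutual commutation is immediate from the single-copy (anti)commutation relations. Setting $\overline{X}_a:=\overline{X}_a^{(1)}$ (weight $2$) and $\overline{Z}_a:=\prod_{i=1}^{m}\overline{Z}_a^{(i)}$ (weight $2m$), one verifies that these are symplectic partners and commute with every stabilizer generator, so they form a valid logical basis; this simultaneously establishes $k=2$ (the $4m-2$ stabilizer generators are independent, since the $Z$-logical space, classified in paragraph two, has dimension exactly $2$) and the upper bounds $d_\mathrm{x}\le 2$, $d_\mathrm{z}\le 2m$.

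The matching lower bound $d_\mathrm{z}\ge 2m$ is the technical heart of the argument. I would decompose any $Z$-logical as $L=L_1 L_2\cdots L_m$ with $L_i$ supported on copy $i$. Commutation of $L$ with the single-copy $X^{\otimes 4}$ forces each $L_i$ to have even $Z$-weight, so $L_i$ lies in one of four cosets modulo $Z^{\otimes 4}$: $\{I,\overline{Z}_1^{(i)},\overline{Z}_2^{(i)},\overline{Z}_1^{(i)}\overline{Z}_2^{(i)}\}$. Imposing commutation of $L$ with each glue stabilizer $\overline{X}_a^{(i)}\overline{X}_a^{(i+1)}$ equates the $\overline{Z}_1$- and $\overline{Z}_2$-content of $L_i$ and $L_{i+1}$, so the glue operators enforce the rigidity that all $L_i$ lie in the same coset. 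A nontrivial $L$ therefore has every $L_i$ in a common nonidentity coset; each such coset has minimum weight $2$ on $\db{4,2,2}$ (both $\overline{Z}_a^{(i)}$ and $\overline{Z}_a^{(i)}\cdot Z^{\otimes 4}$ have weight $2$), so $\wt(L)\ge 2m$.

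Phantomness is then obtained by pulling back the $\db{4,2,2}$ permutations of \cref{fig:phantomgates/422}: simultaneously swapping qubits $1\leftrightarrow 3$ on every copy implements $\overline{\mathrm{CNOT}}_{12}$, and $1\leftrightarrow 2$ implements $\overline{\mathrm{CNOT}}_{21}$. Each $X^{\otimes 4}$ and $Z^{\otimes 4}$ is symmetric and therefore preserved. A short calculation using the single-copy identity $X_{i,3}X_{i,2}=\overline{X}_1^{(i)}\overline{X}_2^{(i)}$ (modulo $X^{\otimes 4}$) shows that the $1\leftrightarrow 3$ swap sends $\overline{X}_1^{(i)}\overline{X}_1^{(i+1)}$ to itself multiplied by the other glue stabilizer $\overline{X}_2^{(i)}\overline{X}_2^{(i+1)}$, so the glue stabilizers are preserved as well. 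The induced logical action follows from the per-copy action of the permutation on each $\overline{X}_a^{(i)}$ and $\overline{Z}_a^{(i)}$.

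For the $\db{4m-1,2,2,2m-1}$ family, puncture the qubit $(m,4)$ from the above code. The single-copy stabilizers on copy $m$ truncate to the weight-three symmetric operators $X_{m,1}X_{m,2}X_{m,3}$ and $Z_{m,1}Z_{m,2}Z_{m,3}$, while every other generator (including all glue stabilizers, none of which involve $(m,4)$) is untouched. On the truncated copy, each nontrivial $Z$-logical coset contains a weight-one representative, e.g.\ $\overline{Z}_1^{(m)}\cdot Z_{m,1}Z_{m,2}Z_{m,3}=Z_{m,2}$, so the rigidity argument of paragraph two now yields $d_\mathrm{z}=2(m-1)+1=2m-1$, while $d_\mathrm{x}=2$ persists from the untouched copy $1$. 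The swap permutations, restricted to the three surviving qubits on copy $m$, still preserve the truncated symmetric stabilizers and act correctly on the copy-$m$ logicals, so phantomness is maintained. I expect the main obstacle to be the distance lower bound: one must confirm that no combination of $Z^{\otimes 4}$ cancellations across copies, coupled with the glue-commutation constraints, can lower the per-copy minimum weight. Working modulo the single-copy $Z^{\otimes 4}$'s from the outset — so that the coset labels $L_i$ are the fundamental objects — decouples the per-copy weight accounting from the global commutation constraints and sidesteps this concern cleanly.
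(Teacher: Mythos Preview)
Your $4m$ construction is correct and coincides with the paper's: $m$ copies of $\db{4,2,2}$ glued by weight-four $X$-type operators, with the phantom $\overline{\mathrm{CNOT}}$s given by simultaneous per-copy swaps. Your coset-rigidity argument for $d_\mathrm{z}\ge 2m$ is a careful justification that the paper leaves implicit.

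The $4m-1$ case, however, has a genuine gap. After puncturing qubit $(m,4)$ you retain \emph{both} truncated single-copy operators $X_{m,1}X_{m,2}X_{m,3}$ and $Z_{m,1}Z_{m,2}Z_{m,3}$ as stabilizers. These overlap on three qubits and therefore anticommute, so the resulting generator set does not define a stabilizer code. The inconsistency surfaces again in your weight count: you multiply $\overline{Z}_1^{(m)}$ by the $Z$-triangle to obtain the weight-one operator $Z_{m,2}$, but $Z_{m,2}$ anticommutes with the $X$-triangle you also kept, so it cannot lie in the normalizer.

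The paper's resolution is to drop the truncated $X$-type generator entirely and keep only the $Z$-triangle on copy $m$. This leaves $4m-3$ independent commuting generators on $4m-1$ qubits, hence $k=2$. Your rigidity argument then still goes through: the glue stabilizers alone fix the anticommutation pattern of $L_m$ with $\overline{X}_1^{(m)},\overline{X}_2^{(m)}$ to match that of $L_{m-1}$, and on copy $m$ the four cosets modulo the $Z$-triangle have weight-one representatives $I,Z_{m,1},Z_{m,2},Z_{m,3}$, giving $d_\mathrm{z}=2(m-1)+1$. The per-copy swaps still preserve the (now single) symmetric stabilizer on copy $m$, so phantomness survives as you argued.
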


\begin{proof}
We illustrate the constructions in \cref{fig:gluing_422}. First we describe the $n=4m$ construction. We use $m$ copies of the $\db{4,2,2}$ code and glue them together with $X$-type stabilizers to form a ``tube''. Labelling the physical qubits as shown in \cref{fig:gluing_422/a}, a basis for the logical operators of the glued code is $\overline{X}_1 = X_0 X_1 \equiv \cdots \equiv X_{4i} X_{4i+1}$, $\overline{X}_2 = X_1 X_2 \equiv \cdots \equiv X_{4i+1} X_{4i+2}$ for all integer $0 \leq i < m$, where $\equiv$ denotes equivalence up to stabilizers, and $\overline{Z}_1 = Z_1 Z_2 Z_5 Z_6 \cdots Z_{4m-3} Z_{4m-2}$, $\overline{Z}_2 = Z_0 Z_1 Z_4 Z_5 \cdots Z_{4m-4} Z_{4m-3}$. The $\overline{\mathrm{CNOT}}_{12}$ gate can be implemented by swapping the qubits $(1,2), (5,6), \ldots, (4m-3,4m-2)$, and $\overline{\mathrm{CNOT}}_{21}$ by $(0,1), (4,5), \ldots, (4m-4,4m-3)$. From this construction, we can further remove a physical qubit, as shown in \cref{fig:gluing_422/b}. Consequently, an $X$-type stabilizer is removed and a $Z$-type stabilizer is reshaped to a triangle. This accounts for the $n = 4m - 1$ family of codes.
\end{proof}

\begin{figure}[!ht]
    \centering
    \includegraphics[
        scale=0.5,
        trim=0 5.8cm 0 0,
        clip
    ]{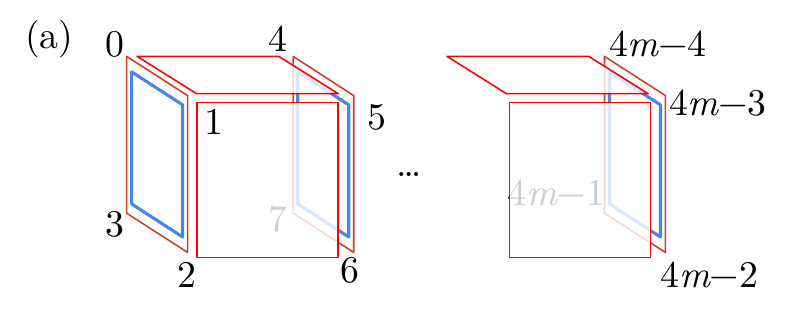}
    \qquad
    \includegraphics[
        scale=0.5,
        trim=0 0 0 5.8cm,
        clip
    ]{figures/gluing422.pdf}
    \phantomsubfloat{\label{fig:gluing_422/a}}
    \phantomsubfloat{\label{fig:gluing_422/b}}
    \caption{\textbf{Glued $\db{4,2,2}$ phantom codes.} Red and blue polygons denote $X$- and $Z$-type stabilizers, respectively.}
    \label{fig:gluing_422}
\end{figure}

\subsection{Codes built from phantom codes}
\label{app:other_constructions/codes_from_phantom}

As discussed in the main text, the $\smash{\db{2^{2r}, \binom{2r}{r}, 2^r}}$ quantum Reed--Muller codes~\cite{gong2024computation, reichardt2024demonstration} implement products of $\overline{\mathrm{CNOT}}$ gates across logical qubits via qubit permutations. However, they do not support individually addressable $\overline{\mathrm{CNOT}}$ gates, except in the special case of $r=1$, which reduces to the hypercube codes. These codes (for $r > 1$) therefore lie outside our definition of phantom codes. Here we comment on constructions for such kinds of ``phantom-like'' codes to connect with existing literature, starting from genuine phantom codes.

\subsubsection{Connecting a CSS code and its Hadamard-dual} 
\label{sec:code_and_its_dual}

We describe a construction that generates code families from phantom codes, though the resulting codes are not phantom. This scheme generalizes the well-known procedure that produces the $\db{16,6,4}$ tesseract code from the $\db{8,3,2}$ code and its dual.

\begin{proposition}[Connecting a CSS code and its Hadamard-dual]
    Consider connecting an $\db{n,k,(d_\mathrm{x},d_\mathrm{z})}$ CSS code (the primal code) and its Hadamard-dual, by performing transversal $\mathrm{CNOT}$ gates controlled on the primal and targeting the dual codeblock. The resulting code has distance $d' = d'_\mathrm{x} = d'_\mathrm{z}$ given by
    \begin{equation}
        \label{eq:distance_of_connected_code}
        d' = \min \left[ d_\mathrm{z}, \min_{\substack{
            l_\mathrm{x} \in \mathcal{L}_\mathrm{x} \\ 
            g_\mathrm{z} \in \rs(H_\mathrm{x})^\perp}}
            \left(2|l_\mathrm{x}|+|g_\mathrm{z}|-2|l_\mathrm{x} \cap g_\mathrm{z}|\right)
        \right],
    \end{equation}
    where the stabilizer generators of the primal code are denoted $H_\mathrm{x}, H_\mathrm{z}$ in symplectic representation, $\mathcal{L}_\mathrm{x} = \rs(H_\mathrm{z})^\perp \backslash \rs(H_\mathrm{x})$ is accordingly the vector space of $X$-type logical operators of the primal code, and $\rs(H_\mathrm{x})^\perp$ is the vector space of $Z$-type stabilizers and logical operators of the primal code. 
\end{proposition}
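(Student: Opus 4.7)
The plan is to track how the stabilizer and logical Pauli groups of the tensor-product code transform under the transversal $\mathrm{CNOT}$s, then parameterize the resulting $X$-type logicals of the connected code so that their physical weight takes the form displayed in~\cref{eq:distance_of_connected_code}. The transversal $\mathrm{CNOT}$ propagates $X$ from control to target and $Z$ from target to control; combined with the fact that the Hadamard-dual block has $X$-stabilizer matrix $H_\mathrm{z}$ and $Z$-stabilizer matrix $H_\mathrm{x}$, this lets me read off the stabilizers of the connected code in two-block symplectic notation as $X$-type $(H_\mathrm{x}|H_\mathrm{x})$ and $(0|H_\mathrm{z})$ and $Z$-type $(H_\mathrm{z}|0)$ and $(H_\mathrm{x}|H_\mathrm{x})$, with $X$-logical coset generators $(L_\mathrm{x}|L_\mathrm{x})$ (inherited from the primal) and $(0|L_\mathrm{z})$ (inherited from the dual).

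I would then compute $d'_\mathrm{x}$ by centralizer bookkeeping. A general $X$-type Pauli $(v_1|v_2)$ lies in the centralizer of the $Z$-stabilizers iff $v_1\in\rs(H_\mathrm{z})^\perp$ and $u:=v_1+v_2\in\rs(H_\mathrm{x})^\perp$, and is itself an $X$-stabilizer iff additionally $v_1\in\rs(H_\mathrm{x})$ and $u\in\rs(H_\mathrm{z})$. Eliminating $v_2$ in favour of $u$ gives the physical weight
\begin{equation*}
|v_1|+|v_2|\;=\;|v_1|+|v_1+u|\;=\;2|v_1|+|u|-2|v_1\cap u|.
\end{equation*}
Splitting on whether $v_1$ represents a nontrivial $X$-logical class of the primal produces two cases. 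When $v_1\in\rs(H_\mathrm{x})$ (trivial on primal), nontriviality of the joint operator forces $u\notin\rs(H_\mathrm{z})$, i.e.\ $u$ represents a nontrivial $Z$-logical of the primal; since $2|v_1|-2|v_1\cap u|=2|v_1\setminus u|\ge 0$, the minimum is attained at $v_1=0$ and equals $\min|u|=d_\mathrm{z}$. When $v_1\in\mathcal{L}_\mathrm{x}$ (nontrivial), $u$ may range freely over all of $\rs(H_\mathrm{x})^\perp$ (stabilizer or logical), and the minimum weight is by definition the second branch of~\cref{eq:distance_of_connected_code}. Taking the lesser of the two cases yields the claimed formula for $d'_\mathrm{x}$.

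Finally, I would establish $d'_\mathrm{z}=d'_\mathrm{x}$ via a symmetry of the connected code: applying a global $H^{\otimes 2n}$ exchanges the two blocks' $X$- and $Z$-sectors and reverses the direction of the transversal $\mathrm{CNOT}$, and composing with the permutation that swaps the two codeblocks maps the stabilizer group onto itself with $X$- and $Z$-sectors interchanged. Since Hadamard-duality preserves distance while swapping $d_\mathrm{x}\leftrightarrow d_\mathrm{z}$ and qubit permutations preserve distance outright, this forces $d'_\mathrm{x}=d'_\mathrm{z}$; alternatively, repeating the centralizer calculation with $Z$-type operators (putting $v_2$ and $w:=v_1+v_2$ in the roles of $v_1$ and $u$) yields an identical expression. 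The main bookkeeping pitfall is keeping straight which combinations of $v_1$ and $u$ represent stabilizers versus nontrivial logical cosets, so that Case (i) genuinely bottoms out at $d_\mathrm{z}$ and Case (ii) is taken jointly over every nontrivial representative $l_\mathrm{x}\in\mathcal{L}_\mathrm{x}$ together with \emph{all} $g_\mathrm{z}\in\rs(H_\mathrm{x})^\perp$, including pure $Z$-stabilizers.
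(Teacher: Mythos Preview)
Your proposal is correct and follows essentially the same approach as the paper: both track the stabilizer generators through the transversal $\mathrm{CNOT}$, parameterize the logical operators of the connected code via a pair $(v_1,u)$ with $u=v_1+v_2$, compute the weight as $2|v_1|+|u|-2|v_1\cap u|$, and split into the two cases $v_1\in\rs(H_\mathrm{x})$ versus $v_1\in\mathcal{L}_\mathrm{x}$. The only cosmetic difference is that the paper writes out $\mathcal{L}'_\mathrm{x}$ and $\mathcal{L}'_\mathrm{z}$ explicitly and observes the weight formula is literally the same for both sectors, whereas you argue $d'_\mathrm{x}=d'_\mathrm{z}$ via the Hadamard-plus-block-swap symmetry (and mention the direct recomputation as an alternative); both are fine.
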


\begin{proof}
    The stabilizer generator matrix of the primal code in the symplectic representation (see \cref{app:basics/symplectic}) has the structure $\diag(H_\mathrm{x} \, | \, H_\mathrm{z})$, where the $H_\mathrm{x}$ and $H_\mathrm{z}$ blocks describe the $X$- and $Z$-type stabilizer generators respectively. The stabilizer generator matrix of the dual code is accordingly $\diag(H_\mathrm{z} \, | \, H_\mathrm{x})$. Taking the tensor product of these codes (i.e.~placing them side-by-side) gives an $\db{2n,2k}$ code with the stabilizer generator matrix $\diag(H_\mathrm{x}, H_\mathrm{z} \, | \, H_\mathrm{z}, H_\mathrm{x})$. Applying transversal $\mathrm{CNOT}$ gates between these two codes, where the $\mathrm{CNOT}$ gates are controlled by qubits in the primal code and target corresponding qubits in dual code, yields an $\db{2n,2k}$ code with the stabilizer generator matrix
    \begin{equation}
        H = \left(\begin{array}{cc|cc}
            H_\mathrm{x} & H_\mathrm{x} & & \\
            & H_\mathrm{z} & & \\ \hline
            & & H_\mathrm{z} & \\
            & & H_\mathrm{x} & H_\mathrm{x}
        \end{array}\right).
    \end{equation}

    To observe that \cref{eq:distance_of_connected_code} describes the distance of the resulting code $H$, we write out explicitly the set of logical operators on the code. By definition, the $Z$-type logical operators for the code must commute with all $X$-type stabilizers but are not $Z$-type stabilizers themselves. Denoting the bitstring representing a $Z$-type logical operator as $(u,v)$, where both $u$ and $v$ are of length $n$, the former condition means $H_\mathrm{x} \trans{v} = 0$ and $H_\mathrm{x} (\trans{u} + \trans{v}) = 0$---i.e., $v \in \rs(H_\mathrm{z})^\perp$ and $u + v \in \rs(H_\mathrm{x})^\perp$. For simplicity, denote $v$ as $g_\mathrm{x}$ and $u + v$ as $g_\mathrm{z}$. Analogous conditions apply for the $X$-type logical operators of the code. The vector space of logical operators of the code $H$ can thereby be written:
    \begin{equation}\begin{split}
        \mathcal{L}_\mathrm{x}'
            &= \left\{ (g_\mathrm{x}, g_\mathrm{x} + g_\mathrm{z}) \;\big|\; 
                    g_\mathrm{x} \in \rs(H_\mathrm{z})^\perp, g_\mathrm{z}\in \rs(H_\mathrm{x})^\perp \right\}
                \backslash
                \left\{ (s_\mathrm{x}, s_\mathrm{x} + s_\mathrm{z}) \;\big|\; 
                    s_\mathrm{x} \in \rs(H_\mathrm{x}), s_\mathrm{z} \in \rs(H_\mathrm{z}) \right\}, \\
        \mathcal{L}_\mathrm{z}'
            &= \left\{ (g_\mathrm{x} + g_\mathrm{z},g_\mathrm{x}) \;\big|\; 
                    g_\mathrm{x} \in \rs(H_\mathrm{z})^\perp, g_\mathrm{z}\in \rs(H_\mathrm{x})^\perp \right\}
                \backslash
                \left\{ (s_\mathrm{z} + s_\mathrm{x},s_\mathrm{x}) \;\big|\; 
                    s_\mathrm{x} \in \rs(H_\mathrm{x}), s_\mathrm{z} \in \rs(H_\mathrm{z}) \right\}.
    \end{split}\end{equation}

    The weight of a logical operator is $w = |g_\mathrm{z}|+2|g_\mathrm{x}|-2|g_\mathrm{x} \cap g_\mathrm{z}|$ for both $X$- and $Z$-types. We consider the minimum weight by cases:
    \begin{itemize}
        \item In the case that $g_\mathrm{x} \coloneqq l_\mathrm{x} \in \rs(H_\mathrm{z})^\perp \backslash \rs(H_\mathrm{x}) = \mathcal{L}_\mathrm{x}$, that is, $g_\mathrm{x}$ is a nontrivial logical operator of the primal code, the weight $w$ reduces to the $(2|l_\mathrm{x}|+|g_\mathrm{z}|-2|l_\mathrm{x} \cap g_\mathrm{z})$ term in \cref{eq:distance_of_connected_code}.
        \item Otherwise $g_\mathrm{x} \coloneqq s_\mathrm{x} \in \rs(H_\mathrm{x})$. Then $w=|g_\mathrm{z}| + 2(|s_\mathrm{x}|-|s_\mathrm{x}\cap g_\mathrm{z}|) \ge |g_\mathrm{z}|$ where equality holds when $s_\mathrm{x}=0$, and moreover $g_\mathrm{z} \notin \rs(H_\mathrm{z})$ in order for $(g_\mathrm{x}, g_\mathrm{x} + g_\mathrm{z})$ or $(g_\mathrm{x} + g_\mathrm{z},g_\mathrm{x})$ to be a logical operator (otherwise the operator is simply a stabilizer). That is, $g_\mathrm{z}$ is a $Z$-type logical operator of the primal code, and therefore $\abs{g_\mathrm{z}} \geq d_\mathrm{z}$. This accounts for the $d_\mathrm{z}$ term in \cref{eq:distance_of_connected_code}.
    \end{itemize}

    In fact, given a set of logical Pauli operators for the primal code, $\{l_{\mathrm{x},i}\}_{i=1}^k$ and $\{l_{\mathrm{z},i}\}_{i=1}^k$, a logical basis for the resulting code can be explicitly written as:
    \begin{align}
        l'_{\mathrm{x},i} =
        \begin{cases}
            (l_{\mathrm{x},i},\, l_{\mathrm{x},i}), & i=1,\dots,k, \\
            (0,\, l_{\mathrm{z},i-k}), & i=k+1,\dots,2k,
        \end{cases}
        \qquad \qquad
        l'_{\mathrm{z},i} =
            \begin{cases}
            (l_{\mathrm{z},i},\, 0), & i=1,\dots,k, \\
            (l_{\mathrm{x},i-k},\, l_{\mathrm{x},i-k}), & i=k+1,\dots,2k.
        \end{cases}
    \end{align}
\end{proof}

As expressed in \cref{eq:distance_of_connected_code}, the distance $d'$ of the resulting code depends on the stabilizer structure of the primal code---specifically, on the overlaps between logical operators of one Pauli type and stabilizers and logical operators of the opposite type. Consequently, it is not possible to make (meaningful) guarantees on $d'$ knowing only the $\db{n,k,(d_\mathrm{x},d_\mathrm{z})}$ parameters of the primal code in general. We make further comments on particular cases below:

\begin{itemize}
    
    \item That $d_\mathrm{z}$ appears as a term in \cref{eq:distance_of_connected_code} implies that one must pick a primal code with $d_\mathrm{x} < d_\mathrm{z}$ for the construction to be possibly productive in distance. In the opposite case, where $d_\mathrm{x} \geq d_\mathrm{z}$, we have $d' \leq d_\mathrm{z} = d$ and the resulting code has distance at most that of the primal.
    

    \item As a concrete example, consider the $\db{8,3,(d_\mathrm{x}=2, d_\mathrm{z}=4)}$ hypercube code as the primal. Its $X$-type stabilizers are supported on the faces of a cube, while the single $Z$-type stabilizer acts on all qubits. One finds from the geometrical structure that $|L_\mathrm{x} \cap g_\mathrm{z}| \leq 2$, and the resulting distance is thereby $d' = \min(d_\mathrm{z}, 2 d_\mathrm{x}) = 4$. The resulting code is the $\db{16,6,4}$ tesseract code.

    \item A second illustrative example arises from the $\db{7,3,(d_\mathrm{x}=2,d_\mathrm{z}=3)}$ punctured hypercube code as the primal. Here, every $g_\mathrm{z}$ has weight at least three, and the stabilizer group contains only weight-four nontrivial operators. Consequently,
    \begin{equation}
        2|l_\mathrm{x}|+|g_\mathrm{z}|-2|l_\mathrm{x}\cap g_\mathrm{z}| 
        \ge |g_\mathrm{z}| + 2(|l_\mathrm{x}|-|l_\mathrm{x}\cap g_\mathrm{z}|)
        \ge |g_\mathrm{z}|
        \ge 3,
    \end{equation}
    with equality achieved when $l_\mathrm{x}$ lies entirely within $g_\mathrm{z}$ and $g_\mathrm{z}$ is a weight-three logical $\overline{Z}$ operator. As $d_\mathrm{z}=3$, the construction produces a $\db{14,6,(d_\mathrm{x}=3,d_\mathrm{z}=3)}$ code.  

\end{itemize}

Individually addressable permutation $\overline{\mathrm{CNOT}}$ gates on the primal code implies pairwise products of $\overline{\mathrm{CNOT}}$ gates implemented by qubit permutations on the resulting code. Specifically, let $\pi$ denote a qubit permutation implementing $\overline{\mathrm{CNOT}}_{ij}$ on the primal code. Then the qubit permutation $\pi \oplus \pi$---where the same permutation $\pi$ acts on the two sets of $n$ qubits arising from the primal and dual codes---implements $\smash{\overline{\mathrm{CNOT}}_{ij} \overline{\mathrm{CNOT}}_{(j+k)(i+k)}}$ on the resulting code.

\subsubsection{Two-sub-lattice CSS codes by doubling a non-CSS code}

A non-CSS phantom code can give rise to a CSS code by doubling the number of qubits, though the CSS code will not be phantom. This follows from the well-known doubling procedure:

\begin{theorem}[Non-CSS to CSS code doubling; Thm.~1 in Ref.~\cite{kovalev13ldpc}]
    Suppose an $\db{n,k,d}$ stabilizer code has parity check matrix $(A|B)$, then the CSS code with parity matrix 
    \begin{equation}
    \left(\begin{array}{cc|cc} 
    A & B & & \\
    & & B & A\end{array}\right)
    \end{equation}
    has parameter $\db{2n,2k,d'}$ where $d\le d'\le 2d$.
\end{theorem}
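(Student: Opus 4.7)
The plan is to translate between the symplectic representation of the original $n$-qubit non-CSS stabilizer code and the half-symplectic CSS representation of the doubled code on $2n$ qubits, and then relate the two resulting notions of weight. First I would verify the doubled generator matrix indeed defines a valid CSS code. The commutation between the upper $X$-row block, with $X$-support $(A\mid B)$ on $2n$ qubits, and the lower $Z$-row block, with $Z$-support $(B\mid A)$, reduces to $A\trans{B} + B\trans{A} = 0$, which is precisely the symplectic self-orthogonality already satisfied by the original parity-check matrix $(A\mid B)$. Row-independence of each block is inherited from $(A\mid B)$. This gives $2(n-k)$ independent stabilizer generators on $2n$ qubits, hence $2k$ logical qubits.

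Next I would establish a bijection between logical equivalence classes. A vector $(u, v) \in \mathbb{F}_2^{2n}$ realizes a doubled $X$-logical iff $u\trans{B} + v\trans{A} = 0$, which is identical to the symplectic commutation condition that $(u, v)$, viewed as the symplectic representation of an $n$-qubit Pauli operator, commutes with every original stabilizer; moreover $(u, v)$ is a doubled $X$-stabilizer iff it lies in the row span of $(A\mid B)$, iff it represents a stabilizer of the original code. Thus nontrivial $X$-logicals of the doubled code (modulo $X$-stabilizers) are in bijection with nontrivial logical operators of the original code (modulo stabilizers); a dimension count confirms both quotients have dimension $2k$. The mirrored map $(u', v') \mapsto (v', u')$ supplies the analogous bijection for $Z$-logicals of the doubled code.

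The distance bound is then a direct comparison of the two weight notions. A doubled $X$- or $Z$-logical $(u, v)$ has physical Hamming weight $|u| + |v|$, while the corresponding original logical has Pauli weight $|u \cup v| = |u| + |v| - |u \cap v|$. The elementary chain
\[
|u \cup v| \;\le\; |u| + |v| \;\le\; 2|u \cup v|
\]
immediately yields $d' \ge d$ when applied to any minimum-weight doubled logical, and $d' \le 2d$ when applied to a minimum-weight original logical (whose $2n$-bit lift is a doubled $X$-logical of weight at most $2d$, with saturation requiring the original minimum-weight logical to consist entirely of $Y$'s). The main subtle point I expect is the bookkeeping in the bijection step: one must check that $(u, v) \mapsto (v, u)$ intertwines the $X$- and $Z$-logical structures of the doubled code with the correct anticommutation pairing, so that the $2k$ doubled logical qubits are realized as $k$ genuine symplectic pairs rather than collapsing to fewer independent logicals. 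Once this is in place the distance estimate follows immediately from the weight comparison.
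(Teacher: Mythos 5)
Your proof is correct. The paper does not actually prove this statement (it is cited from Kovalev--Pryadko), but your argument is the standard one and is exactly consistent with the explicit logical basis the paper writes down immediately after the theorem: the $X$-logicals of the doubled code are the lifts $(u,v)$ of all elements of $N(S)\setminus S$ of the original code, and the $Z$-logicals are their mirrors $(v,u)$, with the symplectic pairing $u_i\cdot t_i + v_i\cdot s_i$ reproducing the original anticommutation relations. Two minor points of bookkeeping: (i) the phrase ``$k$ genuine symplectic pairs'' should read $2k$ pairs --- though the logical count $2k$ already follows from your rank computation $2n - 2(n-k)$ and does not actually require exhibiting the pairs; and (ii) in the lower bound $d'\ge d$ you implicitly use the standard fact that the distance of a CSS code is attained on a pure-$X$- or pure-$Z$-type logical (a mixed representative $P_\mathrm{x}P_\mathrm{z}$ in $N(S)\setminus S$ has one of its two parts already a nontrivial logical of no larger weight), which is worth stating but is routine. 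The weight comparison $|u\cup v|\le |u|+|v|\le 2|u\cup v|$ then gives both bounds exactly as you describe.
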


Given a logical basis of the original non-CSS code $\{l_{\mathrm{x},i} = (u_i|v_i)\}_{i=1}^k$ and $\{l_{\mathrm{z},i} = (s_i|t_i)\}_{i=1}^k$, a logical basis for the resulting CSS code can be written as
\begin{align}
    l'_{\mathrm{x},i} = (u_i,v_i|0,0),
    \qquad
    l'_{\mathrm{x},i+k} = (s_{i},t_{i}|0,0),
    \qquad
    l'_{\mathrm{z},i} = (0,0|t_i,s_i),
    \qquad
    l'_{\mathrm{z},i+k} = (0,0|v_{i},u_{i}),
\end{align}
for all $1 \le i \le k$. 

Likewise, individually addressable permutation $\overline{\mathrm{CNOT}}$ gates on the original non-CSS code implies pairwise products of $\overline{\mathrm{CNOT}}$ gates implemented by qubit permutations on the resulting CSS code. Specifically, let $\pi$ denote a qubit permutation implementing $\overline{\mathrm{CNOT}}_{ij}$ on the original code. Then the qubit permutation $\pi \oplus \pi$---where the same permutation $\pi$ acts on first $n$ and last $n$ qubits---implements $\smash{\overline{\mathrm{CNOT}}_{ij} \overline{\mathrm{CNOT}}_{(j+k)(i+k)}}$ on the resulting code. 

\section{Gate search beyond phantom}
\label{app:gates}

\subsection{Automorphism logical Clifford gates}
\label{app:gates/auto}

Here we detail our procedure to exhaustively identify all automorphism logical Clifford gates on the phantom codes generated---i.e.~logical gates implemented by qubit permutations and local (single-qubit) physical Clifford gates. Consider a stabilizer code on $n$ physical qubits defined by the stabilizer generator matrix $H \coloneqq [H^{(\mathrm{x})} \mid H^{(\mathrm{z})}]$ in symplectic form (see \cref{app:basics/symplectic}), where each row encodes a stabilizer generator. We extend this stabilizer generator matrix to a three-block representation~\cite{sayginel2025fault} by appending the entry-wise (modulo-two) sum $H^{(\mathrm{x})} \oplus H^{(\mathrm{z})}$, giving  
\begin{equation}
    H_\mathcal{E} \coloneqq
    \left[ H^{(\mathrm{x})} \mid H^{(\mathrm{z})} \mid H^{(\mathrm{x})} \oplus H^{(\mathrm{z})} \right].
\end{equation}

This form is convenient as both $H$ and $S$ physical gates act as coordinate permutations. The $H$ gate maps $X \leftrightarrow Z$ under conjugation and thus exchanges a column between $H^{(\mathrm{x})}$ and $H^{(\mathrm{z})}$, while the $S$ gate maps $X \mapsto Y$ and $Z \mapsto Z$, and therefore exchanges a column between $H^{(\mathrm{x})}$ and $H^{(\mathrm{x})} \oplus H^{(\mathrm{z})}$.

The row span $\langle H_\mathcal{E}\rangle$ defines a binary linear code. Its automorphism group $\mathrm{Aut}(\langle H_\mathcal{E}\rangle)$ consists of all column permutations that preserve this span. To ensure that these automorphisms correspond to valid physical Clifford circuits, we restrict attention to those that also lie in the automorphism group of  
\begin{equation}
    B \coloneqq [\mathbb{I} \mid \mathbb{I} \mid \mathbb{I}],
\end{equation}
which guarantees that each automorphism corresponds to a circuit composed only of $H$, $S$, and $\mathrm{SWAP}$ gates~\cite[Theorem 6]{sayginel2025fault}. The correspondence between column permutations and physical Clifford gates is straightforward:
\begin{itemize}[noitemsep]
    \item $(i,\,n+i)$, which affects qubit $i$ in blocks $H^{(\mathrm{x})}$ and $H^{(\mathrm{z})}$, implements $H_i$.
    \item $(n+i,\,2n+i)$, which affects qubit $i$ in blocks $H^{(\mathrm{z})}$ and $H^{(\mathrm{x})} \oplus H^{(\mathrm{z})}$, implements $S_i$ up to a Pauli correction.
    \item $(i,j)(n+i,n+j)(2n+i,2n+j)$, which affects qubits $i$ and $j$ in all three blocks, implements $\mathrm{SWAP}_{ij}$.
\end{itemize}

Thus, by computing
\begin{equation}
    \mathrm{Aut}(\langle H_\mathcal{E}\rangle) \cap \mathrm{Aut}(\langle B\rangle),
\end{equation}
we obtain the set of logical Clifford operators realizable as circuits built from $\{H,S,\mathrm{SWAP}\}$ that preserve the codespace. These circuits are precisely those generated by qubit permutations and local
physical Clifford gates. We find these automorphism groups using the computational algebra system \texttt{MAGMA}~\cite{bosma1997magma}.

A subtlety is that the logical gates found from $\mathrm{Aut}(\langle H_{\mathcal E}\rangle)\cap \mathrm{Aut}(\langle B\rangle)$ are exact only up to physical Pauli factors and global phase. Indeed, let $\overline{U}$ denote the Clifford unitary induced by a selected permutation in $\mathrm{Aut}(\langle H_{\mathcal E}\rangle)\cap \mathrm{Aut}(\langle B\rangle)$. Because binary symplectic data fixes a Clifford only up to Paulis and phases~\cite{gottesman1998theory}, $\overline{U}$ can send a stabilizer to a signed product of stabilizers, not necessarily positive. While stabilizer sign changes can in principle be frame-tracked in software, at least up till non-Clifford logical gates in the circuit that are sensitive to stabilizer signs, here we ensure an exact logical gate by tacking physical Pauli corrections that restores all stabilizers to be positive onto the gate implementation. These Pauli corrections are found through standard Clifford-tableau computation.

In general, the logical action associated with a physical circuit found in this procedure is dependent on the logical basis chosen on the code. On general stabilizer codes, this means that to answer the question of whether a desired logical gate $\overline{U}$ is available via qubit permutations and physical local Cliffords, one must enumerate over all possible logical bases of the code and list the automorphism gates in each---an expensive task. In the case of CSS logical bases, however, this issue does not arise as there is a notion of logical basis independence:

\begin{proposition}[CSS logical basis independence of automorphism logical gates]
    If a CSS phantom code admits any automorphism logical gate $\overline{O}$ in some CSS logical basis, then it admits that gate in every CSS logical basis. 
\end{proposition}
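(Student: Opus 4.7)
The plan is to reduce the claim to the fact that phantomness is itself independent of the CSS logical basis (Proposition~\ref{prop:phantomness_css_logical_basis_independence}), and then to transport the automorphism gate from one basis to another by conjugating with qubit permutations that implement the relevant basis change.

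Concretely, let $L$ and $L'$ be two CSS logical bases, and suppose $\overline{O}$ is implemented in basis $L$ by a physical circuit $P$ built out of qubit permutations and single-qubit physical Cliffords. As reviewed just before Proposition~\ref{prop:phantomness_css_logical_basis_independence}, any two CSS logical bases are related by a logical $\overline{\mathrm{CNOT}}$ circuit $\overline{R}$, so that if $P$ has logical action $\overline{O}$ in basis $L$, then the same physical circuit $P$ has logical action $\overline{R}^{-1}\overline{O}\,\overline{R}$ in basis $L'$. I would first make this reinterpretation precise using the symplectic formalism of \cref{app:basics/phantom_css}.

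Next, because the code is phantom in basis $L'$ as well (by Proposition~\ref{prop:phantomness_css_logical_basis_independence}), the $\overline{\mathrm{CNOT}}$ circuit $\overline{R}$ can be realized in basis $L'$ by a pure qubit permutation $\pi_R$, and its inverse by $\pi_R^{-1}$. The physical circuit
\[
P' \;=\; \pi_R \circ P \circ \pi_R^{-1}
\]
then has logical action $\overline{R}\,(\overline{R}^{-1}\overline{O}\,\overline{R})\,\overline{R}^{-1} \,=\, \overline{O}$ in basis $L'$. It remains to verify that $P'$ is itself an automorphism circuit, i.e.\ that it decomposes into qubit permutations and single-qubit Cliffords. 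This is immediate: conjugating a qubit permutation by $\pi_R$ yields another qubit permutation, and conjugating a single-qubit Clifford supported on qubit $i$ yields the same Clifford relocated to the permuted qubit $\pi_R(i)$. Hence $P'$ stays within the automorphism class and realizes $\overline{O}$ in basis $L'$.

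The main conceptual obstacle is simply to make the bookkeeping of ``logical action in a given basis'' unambiguous, so that the equality $\overline{R}\,(\overline{R}^{-1}\overline{O}\,\overline{R})\,\overline{R}^{-1} = \overline{O}$ is applied in a single, fixed basis ($L'$); beyond that, the argument is a direct application of already-established results. Since the choice of $L$ and $L'$ was arbitrary, this shows that the set of automorphism logical gates is invariant under change of CSS logical basis, as claimed.
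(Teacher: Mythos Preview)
Your proposal is correct and follows essentially the same approach as the paper: relate the two CSS logical bases by a logical $\overline{\mathrm{CNOT}}$ circuit, realize that circuit by qubit permutations (using phantomness), and sandwich the known automorphism implementation between those permutations. Your extra remark that conjugating by a permutation keeps single-qubit Cliffords as single-qubit Cliffords (just relocated) makes explicit what the paper leaves implicit.
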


\begin{proof}
Changes between any pair of CSS logical bases on a code correspond to (i.e.~can be effected by) a logical $\mathrm{CNOT}$ circuit. Suppose the code admits the automorphism logical gate $\overline{O}$ in a CSS logical basis $L_\mathrm{x}, L_\mathrm{z}$. Now consider an arbitrary CSS logical basis $L_\mathrm{x}', L_\mathrm{z}'$, which is related to the first by a $\mathrm{CNOT}$ circuit $C$ such that $L_\mathrm{x}' = F_{\mathrm{xx}}(C) L_\mathrm{x}$ and $L_\mathrm{z}' = F_{\mathrm{zz}}(C) L_\mathrm{z} = \invtrans{F_{\mathrm{xx}}(C)} L_\mathrm{z}$ (see \cref{app:basics/symplectic} for notation conventions). Then $\overline{O}$ can be performed in the $L_\mathrm{x}', L_\mathrm{z}'$ logical basis as follows: (1) transform the logical basis to $L_\mathrm{x}, L_\mathrm{z}$ by performing $C^\dag$, (2) perform $\overline{O}$ as known, and (3) transform the logical basis back by performing $C$. Because the code is phantom, the entire circuit $C$ can be implemented via qubit permutations. Therefore, overall, $\overline{O}$ is performed using only qubit permutations and single-qubit physical Clifford gates---an automorphism logical gate.
\end{proof}

Therefore, on the CSS phantom codes that we investigate, choosing an arbitrary CSS logical basis for analysis of their automorphism logical gate sets suffices. We use the standard form logical basis (see \cref{app:code_discovery/standard_form}) in our implementation. We also note that, on a phantom code, the existence of an automorphism gate $\overline{O}_{ij \cdots k}$ on some set of distinct logical qubits $\{i, j, \ldots, k\}$ immediately implies the existence of automorphism $\overline{O}_{i' j' \cdots k'}$ on all distinct $\{i', j', \ldots, k'\}$, as logical swaps between logical qubits can be implemented by qubit permutations. For example, automorphism gates $\overline{H}_i$, $\overline{S}_i$, $\overline{H}_i \overline{S}_i \overline{H}_i$ when present are always supported on all logical qubits $i$, and $\mathrm{CZ}_{ij}$, $\mathrm{CXX}_{ij} = \overline{H}_i \overline{H}_j \mathrm{CZ}_{ij} \overline{H}_i \overline{H}_j$ when present are always supported on all ordered pairs of distinct logical qubits $(i, j)$.

\subsection{Logical Clifford and magic diagonal gates}
\label{app:gates/diagonal}

\subsubsection{Review of phase polynomials}
\label{app:gates/diagonal/phase_polynomial}

We first give a short review of the formalism of phase polynomials~\cite{amy2014polynomial,campbell2017unified}, which is convenient for treating diagonal unitary gates and will be used repeatedly in the following subsections. To begin, we consider single-qubit gates. The following association between phase monomials in the formal binary variable $x \in \{0, 1\}$ and single-qubit diagonal gates in the Clifford hierarchy holds:
\begin{equation}\begin{split}
    x \, \bmod 2 & \rightarrow Z, \\
    x \, \bmod 4 & \rightarrow S, \\
    x \, \bmod 8 & \rightarrow T. \\
\end{split}\end{equation}

A gate on the $l^\text{th}$ level of the Clifford hierarchy is associated with a phase monomial taken modulo $N = 2^l$. Controlled gates acting on multiple qubits can be expressed as follows:
\begin{equation}\begin{split}
    2x_1 x_2 \, \bmod 4 & \rightarrow \mathrm{CS}_{12}, \\
    2x_1 x_2 \, \bmod 8 & \rightarrow \mathrm{CS}_{12}, \\
    4x_1 x_2 x_3 \, \bmod 8 & \rightarrow \mathrm{CCZ}_{123}, \\
    8x_1 x_2 x_3 x_4 \, \bmod 8 & \rightarrow \mathrm{CCCZ}_{1234},
\end{split}\end{equation}
where the subscripts on the formal binary variables $x_1, x_2, \ldots$ denote the qubits on which the gates act. Assembling phase monomials into polynomials corresponds to taking products of the diagonal gates, taken modulo $N = 2^l$ where $l$ is the highest level of the Clifford hierarchy the gates in the circuit belong to:
\begin{equation}\begin{split}
    x_1 + x_2 + 2x_3 \, \bmod 4 & \rightarrow S_1 S_2 Z_3, \\
    x_1 + 3 x_2 + 4 x_4 \, \bmod 8 & \rightarrow T_1 T_2^3 Z_4.
\end{split}\end{equation}

Diagonal gates commute with each other. To describe diagonal circuits at the third level of the Clifford hierarchy (e.g.~$T$, $\mathrm{CS}$, $\mathrm{CCZ}$), taking modulus $N = 8$ is sufficient; for the fourth level (e.g.~$T^{1/2}$, $\mathrm{CCCZ}$), taking modulus $N = 16$ is sufficient. Lastly, we remark that the binary addition (i.e.~XOR) of formal variables can be turned into products and vice versa, a common transformation used when manipulating phase polynomials:
\begin{equation}\begin{split}
    2xy \, \bmod 8 
        &= x + y + 7(x \oplus y) \, \bmod 8, \\
    4x_1x_2x_3 \, \bmod 8 
        &= x_1 + x_2 + 7(x_1\oplus x_2) + x_3 + 7(x_1 \oplus x_3)
            + 7(x_2 \oplus x_3) + (x_1 \oplus x_2 \oplus x_3) \, \bmod 8.
\label{eq:product_to_sum_examples}
\end{split}\end{equation}

\subsubsection{Logical diagonal gates via products of single-qubit \texorpdfstring{$Z$}{Z}-basis rotations}
\label{app:gates/diagonal/transversal}

A computational- ($Z$-) basis logical state of an $\db{n,k,d}$ CSS code can be written as 
\begin{equation}\begin{split}
    \ket{\overline{\vb{x}}} 
    = \sum_{\vb{y} \in \mathbb{F}_2^{r_\mathrm{x}}} \ket{\vb{x} L_\mathrm{x} \oplus \vb{y} H_\mathrm{x}},
\end{split}\end{equation}
where $\vb{x}$ is the length-$k$ binary row vector labelling the state and $H_\mathrm{x} \in \mathbb{F}_2^{r_\mathrm{x} \times n}$ and $L_\mathrm{x} \in \mathbb{F}_2^{k \times n}$ are the $X$-type stabilizer generator matrix and logical matrix of the code respectively (see \cref{app:basics/phantom_css} for introduction to this convention). We assume without loss of generality that $H_\mathrm{x}$ has full row rank. Then, acting on the state with a physical operator 
\begin{equation}\begin{split}
    \vb*{\Gamma} = \sum_{i=1}^n \Gamma_i x_i \, \bmod N,
\end{split}\end{equation}
which is a product of single-qubit $Z$-basis rotations, yields the state
\begin{equation}\begin{split}
    \vb*{\Gamma} \ket{\overline{\vb{x}}} 
    = \sum_{\vb{y} \in \mathbb{F}_2^{r_\mathrm{x}}} 
        \omega^{\sum_{i=1}^n \Gamma_i (\vb{x} L_\mathrm{x} \oplus \vb{y} H_\mathrm{x})_i \bmod N} 
        \ket{\vb{x} L_\mathrm{x} \oplus \vb{y} H_\mathrm{x}},
\end{split}\end{equation}
where $\omega = e^{2i\pi / N}$ and as before $N = 2^l$ for probing the $l^\text{th}$ level of the Clifford hierarchy. Defining
\begin{equation}\begin{split}
    {\wt}_{\vb*{\Gamma}} (\vb{v}) 
        \coloneqq \sum_{i = 1}^n \Gamma_i v_i \, \bmod N,
    \qquad 
    \vb{v} \in \mathbb{F}_2^n,
\end{split}\end{equation}
we write more succinctly
\begin{equation}\begin{split}
    \vb*{\Gamma} \ket{\overline{\vb{x}}} 
    = \sum_{\vb{y} \in \mathbb{F}_2^{r_\mathrm{x}}} \omega^{{\wt}_{\vb*{\Gamma}} (\vb{x} L_\mathrm{x} \oplus \vb{y} H_\mathrm{x})} \ket{\vb{x} L_\mathrm{x} \oplus \vb{y} H_\mathrm{x}}.
\end{split}\end{equation}

For $\vb*{\Gamma}$ to implement a valid logical diagonal gate, each logical state $\ket{\overline{\vb{x}}}$ should merely pick up a phase, which demands that all phase factors in the sum above coincide. Therefore, we require that ${\wt}_{\vb*{\Gamma}} (\vb{x} L_\mathrm{x} \oplus \vb{y} H_\mathrm{x})$ is invariant for all $\vb{y} \in \mathbb{F}_2^{r_\mathrm{x}}$. We now follow the treatment described in Ref.~\cite[Appendix~D]{campbell2017unified}. To make use of this invariance constraint, we first rewrite $\vb{x} L_\mathrm{x} \oplus \vb{y} H_\mathrm{x}$ as
\begin{equation}\begin{split}
    \vb{x} L_\mathrm{x} \oplus \vb{y} H_\mathrm{x}
    = x_1\vb{g}^1\oplus\dots \oplus x_k\vb{g}^k\oplus 
        y_1\vb{g}^{k+1}\oplus \dots\oplus y_{m-k}\vb{g}^{m} 
    \equiv \bigoplus_{j=1}^m z_j\vb{g}^j,
\end{split}\end{equation}
where $\vb{g}^i$ are the rows of the stacked matrix $G = \binom{L_\mathrm{x}}{H_\mathrm{x}}$, and $\bz \equiv (x_1, x_2, ..., x_k, y_1, y_2, ..., y_{r_\mathrm{x}})$ is a length-$m$ binary vector, $m \coloneqq k + r_\mathrm{x}$. Then
\begin{equation}
    {\wt}_{\vb*{\Gamma}} (\vb{x} L_\mathrm{x} \oplus \vb{y} H_\mathrm{x}) = \sum_{i=1}^n \Gamma_i \left(\bigoplus_{j=1}^m z_j g_i^j\right) \, \bmod N.
    \label{eq:weight_of_gamma}
\end{equation}

Now we make use of the property~\cite[Eq.~D7]{campbell2017unified}
\begin{equation}
    \bigoplus_j a_j=\sum_j a_j-2\sum_{j<l}a_j a_l+4\sum_{j<l<r}a_j a_l a_r-8\sum\cdots,
    \label{eq:sum_to_product}
\end{equation}
for binary variables $a_i$ (a proof can be found in Ref.~\cite{webster2022xp}). This allows us to expand the $\bigoplus$ in \cref{eq:weight_of_gamma}, after which terms with coefficients that are multiples of $N$ can be discarded as they are congruent to $0 \, \bmod N$. For example, to explore the Clifford hierarchy up to the third level ($N = 8$), only the first three terms of \cref{eq:sum_to_product} need to be retained,
\begin{equation}\begin{split}
    {\wt}_{\vb*{\Gamma}} (\vb{x} L_\mathrm{x} \oplus \vb{y} H_\mathrm{x}) 
        = \sum_{i = 1}^n \Gamma_i \left[ 
            \sum_{j = 1}^m z_j \vb{g}_i^j 
            - 2 \sum_{1 \leq j < l\leq m} z_j z_l \vb{g}_i^j \vb{g}_i^l
            + 4\sum_{1\leq j < l < r \leq m} z_j z_l z_r \vb{g}_i^j \vb{g}_i^l \vb{g}_i^r
            \right] \, \bmod 8
\end{split}\end{equation}
and employing the definition of ${\wt}_{\vb*{\Gamma}}(\cdot)$, we can write this as
\begin{equation}\begin{split}
    {\wt}_{\vb*{\Gamma}} (\vb{x} L_\mathrm{x} \oplus \vb{y} H_\mathrm{x}) 
    = \sum_{i = 1}^n {\wt}_{\vb*{\Gamma}}(\vb{g}^j)z_j 
        - \sum_{1 \leq j < l\leq m} 2{\wt}_{\vb*{\Gamma}} (\vb{g}^j \wedge \vb{g}^l) z_j z_l
        + \sum_{1 \leq j < l < r \leq m} 4{\wt}_{\vb*{\Gamma}} (\vb{g}^j \wedge \vb{g}^l \wedge \vb{g}^r) z_j z_l z_r
        \,\, \bmod 8,
\end{split}\end{equation}
where the symbol $\wedge$ denotes element-wise product of binary vectors, $(\vb{u} \wedge \vb{v})_i = u_i v_i$. Since we require this expression to be invariant with respect to $\vb{y}$, the rows of $G$ must satisfy:
\begin{itemize}
    \item ${\wt}_{\vb*{\Gamma}} (\vb{g}^j) \equiv 0 \, \bmod 8, \ \forall j \in \{k+1,...,m\}$;
    \item $2{\wt}_{\vb*{\Gamma}} (\vb{g}^j \wedge \vb{g}^l) \equiv 0 \, \bmod 8, \ \forall j, l \in \{1,...,m\}$, with $j \neq l$ and not both in $\{1,...,k \}$;
    \item $4{\wt}_{\vb*{\Gamma}} (\vb{g}^j \wedge \vb{g}^l \wedge \vb{g}^r) \equiv 0 \, \bmod 8, \ \forall j, l, r \in \{1,...,m\}$, with $j \neq l \neq r \neq j$ and not all three in $\{1,...,k \}$.
\end{itemize}

Finding solutions for $\vb*{\Gamma}$ that satisfy this condition, and which therefore implement valid logical diagonal gates, can be reduced to finding the kernel of a matrix $M$ whose entries in $\mathbb{Z}_8$ comprising the rows:
\begin{itemize}
    \item $\vb{g}^j$ for $j \in \{k+1,k+2,...,m \}$;
    \item $2(\vb{g}^j \wedge \vb{g}^l)$ for $j,l \in \{1,...,m \}$, $j\ne l$ not both in $\{1,...,k \}$;
    \item $4(\vb{g}^j \wedge \vb{g}^l \wedge \vb{g}^r)$ for $j,l,r \in \{1,...,m\}$, $j \ne l \ne r \ne j$ not all three in $\{1,...,k\}$.
\end{itemize}

The generators of the kernel of $M$ (modulo $8$) describe all possible ways of acting with a series of $T$ gates and their powers on every qubit such that the codespace of the code is preserved. Lastly, the logical action of any valid combination of $T$ gates amounts to a phase factor $\exp[i \pi f_T(x_1,x_2,...,x_k)/4]$ on the logical states---in general $\exp[2i\pi f(\vb{x})/N]$ for an arbitrary level of the Clifford hierarchy---which can be recovered by evaluating the polynomial
\begin{equation}
    f_T(\vb{x}) = \sum_{j = 1}^k {\wt}_{\vb*{\Gamma}} (\vb{g}^j)x_j + \sum_{1 \leq j < l\leq k} 2{\wt}_{\vb*{\Gamma}} (\vb{g}^j \wedge \vb{g}^l) x_j x_l + \sum_{1 \leq j < l < r \leq k}  4{\wt}_{\vb*{\Gamma}} (\vb{g}^j \wedge \vb{g}^l \wedge \vb{g}^r) x_j x_l x_r \,\, \bmod 8.
\end{equation}

This method can be used to exhaustively identify logical diagonal gates implemented by products of single-qubit $Z$-basis rotations at any level of the Clifford hierarchy. For example, at the second level of the Clifford hierarchy where the physical single-qubit rotations are powers of $S$, the matrix $M$ has entries in $\mathbb{Z}_4$ and comprises the rows:
\begin{itemize}
    \item $\vb{g}^j$ for $j \in \{k+1,k+2,...,m \}$;
    \item $2(\vb{g}^j \wedge \vb{g}^l)$ for $j,l \in \{1,...,m \}$, $j\ne l$ not both in $\{1,...,k \}$;
\end{itemize}
and the overall phase on logical states is now given by $\exp[i \pi f_S(x_1,x_2,...,x_k)/2]$ where
\begin{equation}
    f_S(\vb{x}) = \sum_{j = 1}^k {\wt}_{\vb*{\Gamma}} (\vb{g}^j)x_j 
        + \sum_{1 \leq j < l\leq k} 2{\wt}_{\vb*{\Gamma}} (\vb{g}^j \wedge \vb{g}^l) x_j x_l \,\, \bmod 4.
\end{equation}

\subsubsection{Logical diagonal fold gates}
\label{app:gates/diagonal/fold}

The above method of finding logical diagonal gates implemented by products of single-qubit $Z$-basis rotations (i.e.~transversally) can be combined with the embedded code technique~\cite{sayginel2025fault, webster2023transversal}, to more generally find logical diagonal gates implemented by products of physical single- and multi-qubit diagonal gates. In particular, this is useful for identifying gates that are implemented with physical $S$ and $\mathrm{CZ}$ in depth one. This type of logical gates is termed ``fold'' gates in the literature, and is distinguished from transversal gates as they involve physical multi-qubit operations within the codeblock and is therefore not a priori guaranteed to be distance-preserving. 

\begin{figure}[!ht]
    \centering
    \[
    \begin{quantikz}[row sep={0.5cm,between origins}, column sep={0.7cm,between origins}]
    \lstick{$q_1$} & \ctrl{2} & \qw       & \qw      & \qw       & \ctrl{2} & \qw \\
    \lstick{$q_2$} & \qw      & \ctrl{1}  & \qw      & \ctrl{1}  & \qw      & \qw \\
    \lstick{$\ket{0}$} & \targ{}   & \targ{}    & \gate{S} & \targ{}    & \targ{}   & \qw
    \end{quantikz}
    \;=\;
    \begin{quantikz}[row sep={0.7cm,between origins}, column sep={0.7cm,between origins}]
    \lstick{$q_1$} & \gate{S} & \ctrl{1}  & \qw \\
    \lstick{$q_2$} & \gate{S} & \ctrl{-1} & \qw
    \end{quantikz}
    \]
    \caption{\textbf{Circuit identity underlying code embedding for $\bm{\mathrm{CZ}}$ gates.} Here $\{q_1, q_2\}$ is an arbitrary pair of data qubits on the original code, and the third qubit initialized in $\ket{0}$ is an ancillary qubit that is introduced. Coupling the ancillary qubit to $\{q_1, q_2\}$ and performing an $S$ gate and a reset ($R$) on the ancilla is equivalent to performing $S$ and a $\mathrm{CZ}$ gate on $\{q_1, q_2\}$. This follows from the phase polynomial identity $q_1\oplus q_2 = q_1+q_2+2q_1q_2\mod 4$.}
    \label{fig:extended_code_S}
\end{figure}
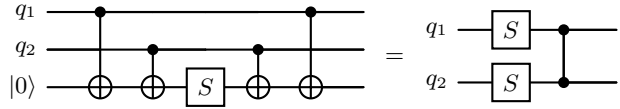

To introduce the embedded code approach, we establish the circuit identity in \cref{fig:extended_code_S}. The idea is that by suitably introducing an ancillary qubit for every pair of data qubits on the original code, single-qubit $S$ gates on the ancillary qubit can be made equivalent to $S$ and $\mathrm{CZ}$ gates on the data qubits. These ancillary qubits are entangled with data qubits on the original code via $\mathrm{CNOT}$s, effectively defining a larger ``embedded'' code---these $\mathrm{CNOT}$ operations extend the encoding circuit of the original code to produce the encoding circuit of the embedded code. The stabilizers of the embedded code are accordingly fixed by the stabilizers of the original code and the $Z$ stabilizers of the ancillary qubits, propagated through the $\mathrm{CNOT}$ operations. To be concrete, we outline explicitly the steps to construct the embedded code:
\begin{enumerate}
    \item Add an ancillary qubit for each pair of data qubits $\{q_i, q_j\}$ on the original code.
    \item Propagate each stabilizer of the original code through the $\mathrm{CNOT}$ gates with the ancilla qubits.
    \item For each ancillary qubit $x$ associated with data qubits $\{q_i, q_j\}$ of the original code, add a stabilizer $Z_x Z_i Z_j$.
\end{enumerate}

This procedure produces an $\db{n + \binom{n}{2},k,d'}$ embedded code for an original $\db{n,k,d}$ code. A logical basis on the embedded code can likewise be found by propagating a logical basis on the original code through the $\mathrm{CNOT}$ operations. Fold-diagonal logical Clifford gates on the original code can then be found by identifying diagonal logical Clifford gates on the embedded code implemented by products of single-qubit $Z$-basis rotations, through the method described in \cref{app:gates/diagonal/transversal}, and mapping the single-qubit operations on ancillary qubits where present back onto data qubits of the original code via the circuit identity of \cref{fig:extended_code_S}. 

This Z-basis rotations on embedded code approach is proposed in~\cite{webster2023transversal}, and gives a fold-type logical gate~\cite{kubica2015unfolding,breuckmann2024fold} when restricted to a depth-one physical implementation. A later work~\cite{sayginel2025fault} also allows one to do permutations on the embedded code whose action on the original code translates to CNOT gates. However, we only use the former approach because a logical phase gate with some $S$ action is enough to enable the full Clifford group for phantom codes, see~\cref{fig:qRM_addressability}.
The advantage of restricting to depth-one physical implementation is that the fault-distance of this gadget is at least half of the code distance (since two-qubit gates are allowed but with disjoint support).

However, although any element in the kernel of $M$ (see~\cref{app:gates/diagonal/transversal}) applied to the embedded code gives a valid diagonal logical operation, their physical implementation may not be depth-one when translated back to the original code.
Due to this enormous search space, it is almost intractable to identify all the fold diagonal gates; it is also inefficient to even find one for larger codes (unless one constrains which pairs to add the gadget).
Therefore, only partial search results for those $n\lesssim 20$ phantom codes are shown in~\cref{tab:gate_for_phantom_codes}.
The fold diagonal gates for the larger phantom codes are algebraically constructed (or unknown), see~\cref{app:qrm/folds} and~\cref{app:binarize_concatenate/gates}.

\section{Numerical benchmarking of logical performance}
\label{app:benchmarking}

\subsection{Noise model}
\label{app:benchmarking/noise_model}

We employ a circuit-level noise model with relative error rates calibrated to recent neutral-atom array experiments~\cite{bluvstein2025fault}, detailed in \cref{tab:noise_model_details}.

\begin{table}[!ht]
\begin{tabular}{p{4cm} p{4.5cm} l}
\toprule
Error mechanism    & Model                 & Relative rates \\ 
\midrule
Idle error         & 1-qubit depolarizing  & $p/300$        \\
1-qubit gate error & 1-qubit depolarizing  & $p/15$         \\
2-qubit gate error & 2-qubit depolarizing  & $p$            \\
Measurement error  & Classical bit-flip    & $5p/3$         \\
Reset error        & Qubit bit-/phase-flip & $p$            \\ 
\bottomrule
\end{tabular}
\caption{\textbf{Noise model used in benchmarking simulations.} Gate errors are applied after the physical gates, measurement errors are applied on the classical measurement outcomes, and reset errors are applied on physical qubits after reset and initialization.}
\label{tab:noise_model_details}
\end{table}

Depolarizing error channels of strength $p$ entail a probability $1 - p$ of leaving the input state unperturbed, and bit- and phase-flip error channels of strength $p$ impose $X$ and $Z$ errors, respectively, with probability $p$. 

We take $p = 3 \times 10^{-3}$ for error rates on current-generation hardware based on recent experiments (Ref.~\cite{bluvstein2025fault}, and e.g.~Refs.\cite{manetsch2025tweezer, muniz2025repeated, senoo2025high}), and $p = 1 \times 10^{-3}$ and $p = 5 \times 10^{-4}$ for near-term (within the next $1$--$2$ years) and future error rates, respectively, informed by ongoing and projected progress on neutral-atom platforms. Many alternative effective and microscopic error models are possible~\cite{bilokur2024thermodynamic, dasu2025breaking}; our choice here is a pragmatic choice given the clear compatibility of neutral atoms with phantom codes.

\subsection{Surface code implementation details}
\label{app:benchmarking/surface}

\subsubsection{Circuit structure and matching-based correlated decoding} 
\label{app:benchmarking/surface/decoding}

We employ a version of the recent matching-based correlated decoder~\cite{cain2025fast} for the surface code, which follows from a line of work developing correlated decoding and algorithmic fault tolerance for logical circuits containing transversal logical gates~\cite{cain2024correlated,zhou2025low}. Here, we use the term ``correlated decoding'' to mean decoding techniques that take into account error propagation through the physical circuit underlying the computation. For example, through a transversal $\overline{\mathrm{CNOT}}$ gate, an $X_i$ error on the $i^\text{th}$ data qubit of the control codeblock would propagate into $X_i$ errors on \emph{both} the control and target codeblocks. Correlated decoding enables drastic improvements in logical performance over simpler propagation-blind independent decoding of codeblocks (e.g.~a ${\sim} 2 \times$ enhancement in threshold of a $\overline{\mathrm{CNOT}}$ gate~\cite{cain2024correlated}). To suitably discuss decoding, we use the language of detectors~\cite{gidney2021stim,derks2025designing}, which is a space-time generalization of stabilizers and refer to products of Pauli operators that detect Pauli errors in a region of a circuit.

For our purpose, the relevant advance of Ref.~\cite{cain2025fast} is a framework to perform correlated decoding on surface codes through minimum-weight perfect matching (MWPM), which is fast in practice and highly accurate, in place of slower integer programming or less accurate clustering decoders used in earlier works~\cite{cain2024correlated,zhou2025low}. While surface codes are matchable in memory experiments with code-capacity or circuit-level noise, where decoding can be performed through standard MWPM, the same is not true for logical circuits with transversal gates performed between codeblocks. The difficulty is that logical gates between codeblocks in general introduce hyperedges in the decoding graph, where a single error mechanism (e.g.~a measurement error) flips more than two detectors (i.e.~checks). The decoding hypergraph, unlike simple graphs, cannot be treated efficiently with standard MWPM. 

The method proposed in Ref.~\cite{cain2025fast} is to split the decoding hypergraph into multiple graphs, each deriving from the lightcone of a reliable logical Pauli product on the logical circuit, such that each decoding subproblem can be treated via MWPM. The overall decoding predictions of logical observable (i.e.~$\overline{X}_1$, $\overline{X}_2$, $\overline{Z}_1 \overline{Z}_2$, etc.) flips are then assembled from the outcomes of these subproblems. There are different ways of performing this decomposition---for example, one can start from the logical observables measured at the end of the circuit and examine their backward lightcones, or the logical stabilizers defining the initial logical state and examine their forward lightcones~\cite[Sec.~2.2 and App.~B]{cain2025fast}. We employ the latter strategy. As is consistent with Ref.~\cite{cain2025fast}, we adopt the convention that a single syndrome extraction (SE) round is performed on the two codeblocks involved in every transversal $\overline{\mathrm{CNOT}}$ before and after the gate. An overview of the decoding procedure is as follows:

\begin{enumerate}
    
    \item On a logical circuit over $K$ logical qubits on $K$ codeblocks of the surface code, we 
    denote by $\mathcal{S}^0 = \{S_i^0\}_{i = 1}^K$ a basis of logical Pauli stabilizers of the initial logical state. For example, the initial state $\ket{\overline{+00}}$ is stabilized by $\smash{\{\overline{X}_1, \overline{Z}_2, \overline{Z}_3\}}$, where the subscripts label the codeblocks.

    \item The SE rounds on the logical circuit can be organized into layers. The $t = 0$ layer is the first round of SE, which initializes the surface code codeblocks. For example, to prepare a codeblock in the $\smash{\ket*{\overline{0}}}$ (resp.~$\smash{\ket*{\overline{+}}}$) state, the data qubits are initialized in $\smash{\ket{0}^{\otimes n}}$ (resp.~$\smash{\ket{+}^{\otimes n}}$) and the first SE round performed. Subsequent $t \geq 1$ SE rounds are performed during the logical computation. The logical circuit terminates with transversal measurement of all codeblocks, which readouts the logical Pauli observables of the codes and also serves as the last SE round.
    
    \item The decoding hypergraph comprises all error mechanisms that can occur on the circuit as defined by the noise model (e.g.~depolarizing noise, measurement and reset errors) as hyperedges and detectors as vertices. We breakdown the detectors into two types:
    
    \begin{itemize}
        \item \textit{Initialization.} The $t = 0$ SE rounds themselves present as detectors. On a codeblock initialized in $\smash{\ket*{\overline{0}}}$ (resp.~$\smash{\ket*{\overline{+}}}$), all $Z$-type (resp.~$X$-type) stabilizers measured at $t = 0$ are detectors.
        \item \textit{Computation.} For all $t \geq 1$, detectors are declared by propagating each stabilizer $\smash{s^t_i}$ of each codeblock measured in the SE round backward through the previous logical gate, which results in a product of stabilizers $\smash{\{s^{t - 1}_j\}}$ generically on multiple codeblocks measured in the previous SE round; the product of the set of measured stabilizers $\smash{\{s^{t - 1}_j\} + s^{t}_i}$ is declared as a detector.
    \end{itemize}

    Error mechanisms (hyperedges) are incident on detectors (vertices) that they flip. The logical gates present in our context are mostly transversal $\overline{\mathrm{CNOT}}$ gates between codeblocks.

    \item We produce $K$ decoding graphs from the hypergraph. For each $S_l^0 \in \mathcal{S}^0$, we filter the decoding hypergraph into a corresponding graph by keeping only detectors in the forward lightcone of $S_l^0$. Specifically, the initialization detectors on codeblocks in $\supp(S_l^0)$ are included. Moreover, propagating $S_l^0$ forward through logical gates to each SE layer $t \geq 1$ to obtain $\smash{S_l^t = \mu_{i_1} \mu_{i_2} \cdots}$ where $\smash{\mu_{i_\ell} \in \{X, Z\}}$, all detectors of the form $\smash{\{s^{t - 1}_j\} + s^{t}_{i_\ell}}$ where $\smash{i_\ell \in \supp(S_l^t)}$ and $\smash{s^{t}_{i_\ell}}$ is the same Pauli type as $\smash{\mu_{i_\ell}}$ are included. All other detectors are deleted from the hypergraph. This reduces all hyperedges into edges~\cite{cain2025fast}.

    \item Each simulated noisy shot of the circuit produces a vector of binary measurement outcomes and thereby binary detector outcomes (whether or not each is flipped). Each of the $K$ decoding graphs are decoded through standard MWPM with this same vector of detector outcomes. The decoding result of the decoding graph built from the forward lightcone of $S_l^0 \in \mathcal{S}$ reports on whether the logical Pauli product $S_l^{\mathrm{f}}$ is flipped at the end of the circuit, where $S_l^{\mathrm{f}}$ is $S_l^0$ forward-propagated through the entire logical circuit.
    
    Note that $\mathcal{S}^{\mathrm{f}} = \{S_i^{\mathrm{f}}\}_{i = 1}^K$ is a complete basis of logical stabilizers for the terminal logical state. Therefore, whether or not any deterministic Pauli logical observable (e.g.~$\overline{X}_1 \overline{Z}_2$) is flipped at the end of the circuit can be predicted from the decoding outcomes of the $K$ decoding graphs.
    
\end{enumerate}

We use the sparse blossom MWPM algorithm in \texttt{pymatching}~\cite{higgott2025sparseblossom} to decode the decoding graphs.

\subsection{Phantom code implementation details}
\label{app:benchmarking/phantom}

\subsubsection{Fault-tolerant state preparation protocol}
\label{app:benchmarking/phantom/state_prep}

Our state preparation protocols for the phantom $\db{64,4,8}$ code builds on Ref.~\cite{gong2024computation}, which is originally proposed for the Golay code in Ref.~\cite{paetznick2013golay}. The main idea is to employ a four-to-one logical state certification protocol for fault tolerance, which can be understood as a concatenated two-to-one (classical) state certification circuit with the two levels certifying the $X$ and $Z$ bases. There are two differences to Ref.~\cite{gong2024computation}. First, we abstract away AOD movements as considered in Ref.~\cite{gong2024computation} as we assume qubit permutations are to be compiled away as far as possible. Second, as our present context concerns $k > 1$ codes, we can take advantage of state automorphisms---i.e.~qubit permutations that preserve the logical $\ket*{\smash{\overline{0}}^k}$ or $\ket*{\smash{\overline{+}}^k}$ state---which are larger than the code automorphism group.

We illustrate an example for the $\db{64,4,8}$ phantom qRM code defined in \cref{thm:phantom_qRM}, where all extra logicals are set to $Z$-type stabilizers. This code is not the one we use for benchmarking (it has $22$ $X$-type and $38$ $Z$-type stabilizer generators), but it allows us to explain the concept of the $|\overline{0000}\ra$ state automorphism more easily. Whereas we use the top left $4 \times 4$ block of $A\in\GL(6,\F_2)$ manipulating $\{x_6,x_5,x_4,x_3\}$ [see~\cref{eq:RM_affine_automorphism}] for permutation $\overline{\mathrm{CNOT}}$s, in the $|\overline{0000}\ra$ state, the monomials $\{x_1x_2x_3,x_1x_2x_4,x_1x_2x_5,x_1x_2x_6\}$ take definite $+1$ values in the $Z$ basis and can therefore be treated as additional $Z$-type stabilizers. Therefore, all the degree $\le 3$ (resp.~$\le 2$) monomials are set to $Z$-type (resp.~$X$-type) stabilizers for this \emph{state} on the code. We can then use the full $\GA$ automorphism on the six variables $x_6,\dots,x_1$. In particular, $A$ does not have to be block-diagonal; any $A\in\GL(6,\F_2)$ is a $|\overline{0000}\ra$ state automorphism. We state without proof that $\left(\begin{smallmatrix} B & C \\ 0_{2 \times 4} & D \end{smallmatrix}\right) \in \GL(6,\F_2)$, where $B\in\F_2^{4\times 4},\; C\in\F_2^{4\times 2},\; D\in\F_2^{2\times 2}$, serves as state automorphisms for $|\overline{\text{++++}}\ra$.

For the balanced $\db{64,4,8}$ phantom qRM code (see \cref{def:balanced_64-4-8}) that we perform numerical benchmarking on, we search over the following state automorphisms for the $|\overline{\text{0000}}\ra$ and $|\overline{\text{++++}}\ra$ state, respectively ($B$ is a $4\times 4$ matrix):
\begin{equation}
    \left(\begin{array}{cc}
        B & \begin{smallmatrix}
            * & 0\\ * & 0 \\ * & 0 \\ * & 0
        \end{smallmatrix} \\
        \begin{smallmatrix}
            * & * & * & * \\ * & * & * & *
        \end{smallmatrix} & 
        \begin{smallmatrix}
            * & 0\\ * & *
        \end{smallmatrix}
    \end{array}\right)\in\GL(6,\F_2),
    \qquad
    \left(\begin{array}{cc}
        B & \begin{smallmatrix}
            * & 0\\ * & 0 \\ * & 0 \\ * & 0
        \end{smallmatrix} \\
        \begin{smallmatrix}
            0 & 0 & 0 & 0 \\ * & * & * & *
        \end{smallmatrix} & 
        \begin{smallmatrix}
            * & 0\\ * & *
        \end{smallmatrix}
    \end{array}\right)\in\GL(6,\F_2),
\end{equation}
where $B \in \F_2^{4 \times 4}$ and $*$ indicating free entries are both searched over.

For a concrete implementation of the state-preparation protocol, we choose four permutations for each to minimize the number of malignant faults. We call an order-$s$ fault (configuration) \emph{malignant} if it results in a strict preselection acceptance (i.e.~no syndromes measured on the ancillary codeblock), but yet leaves a weight ${>}s$ residual error on the output codeblock (here the weight of the error is the minimum allowing multiplication by any stabilizer of the code \emph{state}).
For our distance-8 code, full fault-tolerance requires that there are no order-${<}4$ malignant faults, so that the logical failure rate is guaranteed to scale at least as well as $p^4$ when varying the physical error rate $p$. 

We were unable to exactly satisfy this strict demand through our numerical search. 
Our state preparation protocol for $|\overline{\text{0000}}\ra$ has no order-two and ${<} 100$ order-three malignant faults, and that for $|\overline{\text{++++}}\ra$ has one order-two and ${\sim} 200$ order-three malignant faults. 
This difference in the number of malignant faults is expected, as the state automorphism space is larger for $|\overline{\text{0000}}\ra$ than $|\overline{\text{++++}}\ra$. 
Moreover, we did not find the $+\vb{b}$ in \cref{eq:RM_affine_automorphism}, where $\vb{b}\in\F_2^6$, to help with eliminating the order-two malignant faults.
Nevertheless, since the number of order-two/three malignant faults is small, we were able to observe $p^4$ scaling in logical failure rate in the $p = 5\times 10^{-4}$ to $p = 3 \times 10^{-3}$ regime in our numerical simulations.

\subsubsection{Interfacing fault-tolerant state-preparation with logical circuit simulations}
\label{app:benchmarking/phantom/interface}

To ensure an accurate simulation of noise on prepared codeblocks and their subsequent propagation into the logical circuit being executed, we performed (Monte Carlo) simulations of the state-preparation protocol described above (\cref{app:benchmarking/phantom/state_prep}) for $|\overline{\text{0000}}\ra$ and $|\overline{\text{++++}}\ra$ logical states and recorded large number of samples of the residual error on the output codeblocks under strict and relaxed preselection. Then, when running the numerical simulation of the logical circuit, we inject these errors on each (otherwise perfectly initialized) ancillary codeblock employed.

\subsubsection{Spatiotemporal sliding-window list and most-likely-error decoding}
\label{app:benchmarking/phantom/decoding}

Steane EC (with fault-tolerant state preparation) is known to be single-shot as the Steane syndrome extraction gadget involves only transversal operations. However, there are error correlations between Steane rounds that can be exploited in decoding. This notion is similar to \emph{correlated decoding} as introduced on surface codes (see \cref{app:benchmarking/surface/decoding}), with a structural difference being that transversal $\overline{\mathrm{CNOT}}$ gates are present in the Steane gadgets themselves.

We give a simple example to motivate this decoding consideration. Consider two consecutive rounds of bit-flip (i.e.~$X$-error-detecting) Steane EC on a single data codeblock. Before these two rounds, there is the residual $X$-type error $x\in\F_2^n$ on the codeblock. For simplicity, assume that only the transversal measurement in the Steane gadget is noisy (i.e.~the transversal $\overline{\mathrm{CNOT}}$s and ancillary codeblocks are perfectly noiseless), and errors $m_1,m_2\in\F_2^n$ occurred on the measurements. Then we observe $x\oplus m_1\oplus c_1$ and $x\oplus m_2 \oplus c_2$ in the two transversal measurements, where $c_1,c_2\in\F_2^n$ are random codewords. A naïve decoding of the first Steane round independent of the second would return the $X$-type correction $x \oplus m_1$ to be applied to the data codeblock; however, it is clearly more optimal to avoid performing the $m_1$ correction, as $m_1$ is a measurement error on the ancillary codeblock and does not correspond to qubit errors on the data codeblock. A better strategy is therefore to decode $x\oplus m_1\oplus c_1$ and $x\oplus m_2\oplus c_2$ jointly, to produce only $x$ as the correction. This simple observation---to decode across two Steane rounds---yielded two-orders-of-magnitude improvement in logical failure rate in our numerical tests. 

This explains the use of sliding-window decoding where each window spans two consecutive Steane EC rounds. To also take into account error propagation across transversal $\overline{\mathrm{CNOT}}$ gates on the logical circuit (à la correlated decoding), we set each window to span the control and target codeblock of each transversal $\overline{\mathrm{CNOT}}$ gate and their Steane EC rounds before (``leading'') and after (``trailing'') the gate. Each window is decoded by either list or most-likely-error (MLE) decoding:

\begin{itemize}
    
    \item \textit{MLE decoding.} We construct the circuit-level detector error model (i.e.~parity check matrix) for each window, and commit to faults that are inferred to have happened before the transversal $\overline{\mathrm{CNOT}}$. The corresponding corrections are propagated through the circuit (i.e~physical Pauli-frame tracking) to update measurement results in later Steane rounds. In this way the decoding for later windows depend on results of earlier windows.
    
    We implement the MLE decoding per window by integer programming~\cite{bluvstein2024logical}. In short, we construct a mixed-integer linear program that maximizes the probability (more specifically log-likelihood) of the inferred physical error given the noise model, subject to the constraint that the error generates syndromes consistent with those observed. These problems are solved with the high-performance integer programming solver \texttt{Gurobi}. 

    To simplify the detector error model so that decoding is faster, we chose to model the ancillary codeblocks for Steane EC as noiseless logical states (e.g.~through the hypercube encoding circuit) subject to single-qubit depolarizing noise on each qubit. The strength of this noise is set to $0.8 p$ for strict preselection and $2.5 p$ for relaxed preselection; these scales were chosen by probing the residual error weight distributions from the state preparation protocol. Moreover, we modelled the residual error on the data codeblock prior to the leading Steane EC round as single-qubit depolarizing noise of strength $2p$; this heuristic value was not specially tuned. We emphasize that the simplified noise modelling here is only for the purpose of decoding; the circuit simulation injects errors sampled from simulations of the state-preparation protocol (see \cref{app:benchmarking/phantom/interface}).

    \item \textit{List decoding.} We adapt the list decoder developed in Ref.~\cite{gong2024improved} for single-round code-capacity depolarizing-noise decoding to the setting of correlated decoding across two Steane EC rounds. We illustrate the central idea using the example described previously. The decoder takes as input $y_1 = x \oplus m_1 \oplus c_1$ and $y_2 =x \oplus m_2 \oplus c_2$. Let us only focus on the error on one qubit, so that $x,m_1,m_2\in \F_2$, and ignore the random codewords. Assuming that the errors are well-modelled by the binary symmetric channel, $x\sim\mathrm{BSC}(p)$ and $m_1,m_2\sim\mathrm{BSC}(p)$, the correlation between $y_1=x\oplus m_1$ and $y_2=x\oplus m_2$ is, in fact, structurally similar to $X$ and $Z$ error correlations in depolarizing noise. This similarity allows the code-capacity depolarizing-noise list decoder to be repurposed. 
    
    In particular, we associate depolarizing noise with a quaternary alphabet: $\mathbb{I}=(0,0)$, $X=(1,0)$, $Z=(0,1)$ and $Y=(1,1)$. In our present task of decoding across two Steane rounds, we associate $(1,0)$ with $m_1=1,\;m_2=0,\; x=0$; $(0,1)$ with $m_1=0,\;m_2=1,\;x=0$; and, most importantly, $(1,1)$ with $x=1$ and $m_1=m_2=0$. In the last case, it is also possible that $(1,1)$ arises from $x=0$ and $m_1=m_2=1$, but this arises from a higher-order fault process occurring with probability $p^2$ (compared to $p$) and so can be disregarded to good approximation. That is, only when the list decoder infers the fault $(1,1)$ given the inputs $y_1$ and $y_2$ (likewise interpreted in quaternary) do we attribute a correction to the residual error $x$.

    This correlated list decoder takes $\mathcal{O}(L n\log n)$ time per window, where $L$ is the list size. We set the list size heuristically to $16$ for the $\db{64,4,8}$ code. We used the \texttt{aff3ct}~\cite{Cassagne2019a} toolbox in the implementation of list decoding.

\end{itemize}

In practice, we observe that the list decoder is ${>} 100 \times$ faster, though ${\sim}$half as accurate, per window than MLE decoding in our benchmarking. The difference in decoding accuracy is expected, because MLE is able to draw on qualitatively more correlation information: MLE accesses both bit-flip and phase-flip Steane EC syndromes, on both the control and target codeblocks, together for decoding. The correlated list decoder, on the other hand, separates the decoding for $X$ and $Z$ errors. For $X$ errors, it (1) decodes across the two bit-flip EC rounds (before and after the transversal $\overline{\mathrm{CNOT}}$) of the control codeblock, (2) commits corrections to the leading EC round, (3) propagates the correction to update the trailing bit-flip syndromes of the target codeblock, and finally (4) decodes across the two bit-flip Steane EC rounds of the target codeblock; and similarly for $Z$ errors, where it decodes across the phase-flip EC rounds of the target codeblock first. In principle, with a larger alphabet (than quaternary), it is possible for the list decoder to be generalized to take these correlations---between $X$ and $Z$ error types and between control and target codeblocks) into account. We leave this possible improvement for future work.

\subsection{\texorpdfstring{$\overline{\mathrm{CNOT}}$}{CNOT} circuit benchmark}
\label{app:benchmarking/cx}

In \cref{sec:benchmarking/cx,fig:benchmark/cx} we benchmarked a $\overline{\mathrm{CNOT}}$ circuit of varying depth on four logical qubits, hosted on either four surface code codeblocks or a single $\db{64,4,8}$ phantom qRM codeblock. These circuits take the following form: (1) data codeblock(s) initialization in $\ket*{\overline{\text{0000}}}$ or $\ket*{\overline{\text{++++}}}$, (2) repeated $\overline{\mathrm{CNOT}}$s in the arrangement shown in \cref{fig:benchmark/cx}, (3) transversal measurement of deterministic $\{\overline{Z}_1, \overline{Z}_2, \overline{Z}_3, \overline{Z}_4\}$ for the $\ket*{\overline{\text{0000}}}$ initial state, or $\{\overline{X}_1, \overline{X}_2, \overline{X}_3, \overline{X}_4\}$ for the $\ket*{\overline{\text{++++}}}$ initial state, where the subscripts denote logical qubits. 

For the surface code, codeblock initializations are performed in the standard fashion with physical Pauli-frame tracking (i.e.~à la algorithmic fault tolerance)---see \cref{app:benchmarking/surface/decoding}---and all $\overline{\mathrm{CNOT}}$s are transversal between codeblocks. For the phantom code, codeblock initialization is via our fault-tolerant state preparation protocol (see \cref{app:benchmarking/phantom/state_prep}), and all $\overline{\mathrm{CNOT}}$s are via in-block qubit permutations that are compiled away by relabelling qubits. For both codes, transversal measurement of $Z$-type (resp.~$X$-type) logical operators entail measuring all qubits of the data codeblock(s) in the $Z$ (resp.~$X$) basis. The transversal measurement data also provides the last round of syndrome information.

We reported the logical failure rate per logical qubit ($\overline{\mathrm{LFR}}$) in \cref{fig:benchmark/cx}. This $\overline{\mathrm{LFR}}$ was computed as the average LFR across all logical qubits in both Pauli bases. On each circuit simulation shot starting from the $\ket*{\overline{\text{0000}}}$ (resp.~$\ket*{\overline{\text{++++}}}$) initial state, the $i^\text{th}$ logical qubit is deemed to have suffered a logical failure when decoding of its $\overline{Z}_i$ (resp.~$\overline{X}_i$) logical observable predicts an incorrect flip when compared to ground-truth (i.e.~the actual logical qubit state). On $M$ shots for each Pauli basis $\mu \in \{X, Z\}$, let $\smash{m^{(\mu)}_i}$ denote the number of logical failures for the $i^\text{th}$ logical qubit. Then 
\begin{equation}
    \overline{\mathrm{LFR}} 
    = \frac{1}{8} \sum_{i = 1}^4 \sum_{\mu \in \{X, Z\}} \frac{m^{(\mu)}_i}{M}.
\end{equation}

\subsection{Logical GHZ state preparation benchmark}
\label{app:benchmarking/ghz}

In \cref{sec:benchmarking/ghz,fig:benchmark/ghz} we investigated logical GHZ state preparation on up to $K = 64$ logical qubits. The logical circuit begins with the $\ket*{\overline{\text{+00}\cdots\text{0}}}$ initial state and unitarily prepares the GHZ state through a log-depth $\overline{\mathrm{CNOT}}$ circuit (which is, in fact, structurally identical to the encoding circuit for hypercube codes). All $\overline{\mathrm{CNOT}}$s on the surface code are transversal, while the first three are in-block and all remaining are transversal for the phantom code.

Preparation of mixed-basis logical states, in particular the $\ket*{\overline{\text{+000}}}$ state needed here, on $k > 1$ codes is generally nontrivial. Our approach is to fault-tolerantly prepare $\ket*{\overline{\text{0000}}}$ or $\ket*{\overline{\text{++++}}}$ on two codeblocks and teleport one logical qubit, as shown in \cref{fig:qRM_addressability/c}. The single $\overline{\mathrm{CNOT}}$ for teleportation requires two transversal $\overline{\mathrm{CNOT}}$s to implement (see \cref{lemma:phantom_css_interblock_cnot_circuits_two_codeblocks}), and we insert a round of Steane EC in between. The window for decoding encompasses this gadget and involves two $\overline{\mathrm{CNOT}}$s, but is only slightly larger than the usual single-$\overline{\mathrm{CNOT}}$ window (as described in \cref{app:benchmarking/phantom/decoding}). This is because there is no leading EC before the first $\overline{\mathrm{CNOT}}$, since both codeblocks are freshly prepared; also, there is no trailing EC after the second $\overline{\mathrm{CNOT}}$ on the $\ket*{\overline{\text{++++}}}$ codeblock as it is transversally measured. We additionally preselect the $\ket*{\overline{\text{+000}}}$ preparation conditioned on the last three logical qubits of the $\ket*{\overline{\text{++++}}}$ codeblock being decoded correctly to $\ket*{\overline{\text{+++}}}$; the rejection rate arising from this check is vastly subdominant at ${<}0.05\%$.

We use direct fidelity estimation, as proposed in e.g.~Ref.~\cite{flammia2011direct} and used in e.g.~Refs.~\cite{koh2025readout,baumer2023efficient}, applied at the logical level to measure the logical fidelity of the prepared GHZ state. The logical fidelity of the prepared state $\rho$ is by definition
\begin{equation}\begin{split}
    \overline{\mathcal{F}}
    = \bra{\overline{\mathrm{GHZ}}_K} \rho \ket{\overline{\mathrm{GHZ}}_K},
    \qquad
    \ket{\overline{\mathrm{GHZ}}_K} 
    = \frac{\ket{\smash{\overline{0}}^K} + \ket{\smash{\overline{1}}^K}}{\sqrt{2}},
\end{split}\end{equation}
where states and inner products between states are understood to be with respect to the logical Hilbert space. Suppose in each circuit shot, a basis $\mathcal{B}$ of logical stabilizer generators (e.g.~$\{\overline{Z}_1 \overline{Z}_2, \overline{Z}_2 \overline{Z}_3, \ldots, \overline{Z}_{K-1} \overline{Z}_K, \overline{X}_1 \overline{X}_2 \cdots \overline{X}_k\}$ of the logical GHZ state is measured on $\rho$; for noiseless $\rho$, all such stabilizers should yield outcomes $+1$, but in the presence of noise some will yield $-1$ from shot to shot. Across $M$ shots of the circuit, let the number of $-1$ logical stabilizer observations be $m$. Then the logical infidelity can be shown to be
\begin{equation}\begin{split}
    1 - \overline{\mathcal{F}}
    = \frac{m}{K M}.
\end{split}\end{equation}

Note that, unlike e.g.~Refs.~\cite{flammia2011direct,baumer2023efficient}, we measure a complete basis $\mathcal{B}$ of stabilizer generators of the state rather than sampling a subset of stabilizers. This removes the consideration of sampling uncertainty. 

To measure $\mathcal{B}$, which involves products of logical operators across codeblocks, on the surface and phantom codes, we (1) add a noiseless round of syndrome extraction (i.e.~bare-ancilla for surface and Steane for phantom code) to enable decoding and (2) perform measurements of all the logical Pauli products in $\mathcal{B}$. This structure is a subcircuit proxy---in actual applications the prepared logical GHZ state will be consumed in a larger circuit, presumably terminating in logical measurements, and QEC cycles and logical measurement data of the broader circuit will be decoded together with those of the GHZ state preparation. Algorithmic fault tolerance on surface codes does not generally work with open time-boundaries, hence necessitating the closure of our benchmark circuit with (1). We further comment that (2) is equivalent to noiselessly executing the inverse of the logical GHZ state preparation circuit followed by the unencoding circuit of the codeblocks, at which point the logical state can be read off from physical qubits. We expect a deterministic $\ket*{\overline{\text{+00}\cdots\text{0}}}$ in the absence of noise.

The closure of the decoding time-boundary with (1) in principle provides the decoder with more information than would have been available in reality on a larger circuit. To prevent this boundary effect from causing a drastic underestimation of logical infidelity on the phantom code, we exclude windows across the noiseless SE round and the preceding round when performing sliding-window decoding---i.e.~the noiseless SE round is decoded separately with the \emph{uncorrelated} list decoder. However, fault-tolerant correlated decoding on the surface code requires time-boundaries closed with transversal measurements, and we cannot separate the layers. Therefore, in actuality, the surface code is given an unfair decoding advantage relative to the phantom code in our benchmark.

\subsection{Trotterized quantum simulation circuit benchmark}

In \cref{sec:benchmarking/trotter,fig:benchmarking/trotter} we benchmarked a Trotterized quantum simulation circuit for a many-body Hamiltonian on up to $K = 64$ logical qubits. As described in the main text, for both the surface and phantom codes, we selected rotation angles $\theta = \phi = \pi/2$ in each Trotter step so that noisy circuit simulation is tractable. In an actual quantum simulation experiment, small angles would be used so that the Trotter error is controlled (or alternatively large angles tuned away from $\pi / 2$ to access classically intractable Floquet dynamics). Small-angle rotations can be performed in a resource-efficient way on both the surface and phantom codes through STAR injection---see e.g.~Ref.~\cite{akahoshi2024partially} for the surface code and \cref{app:qrm/STAR} for the phantom code.

Each Trotter block implementing $\smash{\exp(-i\theta Z^{\otimes 8})}$ comprises a forward and inverse $\overline{\mathrm{CNOT}}$ ladder on eight logical qubits, sandwiching a $\overline{Z}$ rotation. Similar to logical GHZ state preparation, these $\overline{\mathrm{CNOT}}$ ladders are performed in log-depth. On the surface code all $\overline{\mathrm{CNOT}}$s are transversal; on the phantom code only a single transversal $\overline{\mathrm{CNOT}}$ is needed, with all other $\overline{\mathrm{CNOT}}$s being in-block. The sliding-window decoding discussed in \cref{app:benchmarking/phantom/decoding} directly applies for the phantom code.

We start from the $\ket{\overline{\text{++++0000}\cdots\text{++++0000}}}$ logical state, and transversally measure a complete basis of logical stabilizers $\{\overline{X}_1, \overline{X}_2, \overline{X}_3, \overline{X}_4, \overline{Z}_5, \overline{Z}_6, \overline{Z}_7, \overline{Z}_8, \ldots, \overline{Z}_{K-3}, \overline{Z}_{K-2}, \overline{Z}_{K-1}, \overline{Z}_K\}$ at the end of the circuit. We likewise use direct fidelity estimation (see \cref{app:benchmarking/ghz}) to obtain the logical state infidelity.

\subsection{Benchmarking results at $p = 3 \times 10^{-3}$ physical error rate}
\label{app:benchmarking/current}

\begin{figure}[!ht]
    \centering
    \includegraphics[width=0.85\linewidth]{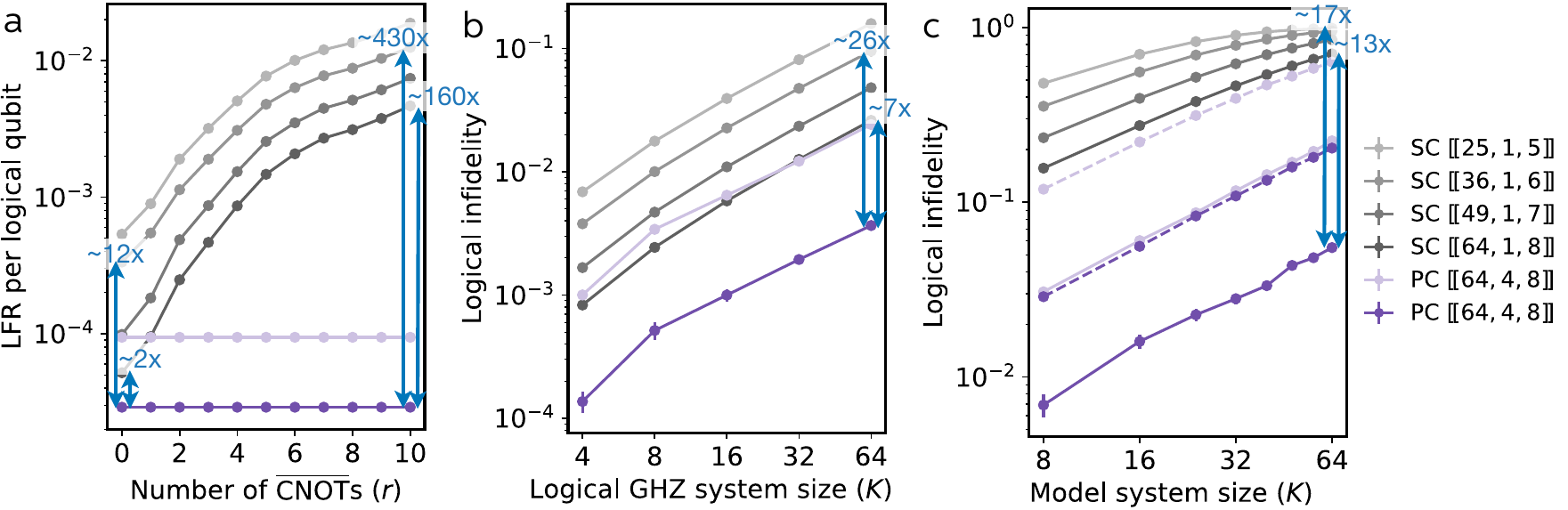}
    \phantomsubfloat{\label{fig:currenterr/repeated}}
    \phantomsubfloat{\label{fig:currenterr/ghz}}
    \phantomsubfloat{\label{fig:currenterr/trot}}
    \caption{\textbf{Additional numerical benchmarking results of phantom and surface codes at $\bm{p = 3 \times 10^{-3}}$ physical error rates.} \textbf{(a)} Repeated in-block $\overline{\mathrm{CNOT}}$ circuits on four logical qubits hosted on a single $\db{64,4,8}$ phantom codeblock or four surface codeblocks ($d = 5$--$8$). \textbf{(b)} Logical GHZ state preparation on $K = 4$--$64$ logical qubits, hosted on multiple $\db{64,4,8}$ phantom or surface codeblocks. \textbf{(c)} Trotterized many-body quantum simulation with eight Trotter steps. In (a)--(c), dark and pale purple denote strict and relaxed preselection, respectively; in (c), solid and dashed lines denote sliding-window MLE and correlated list decoding, respectively. Error bars are 98\% confidence intervals.}
    \label{fig:currenterr}
\end{figure}

We benchmark the phantom code against surface codes matched either by spatial footprint or by code distance; the matching considerations are detailed in \cref{sec:benchmarking}, and not repeated here. An important difference here is that the preselection rate for fault-tolerant preparation of codeblocks of the phantom code is $1.3\%$ (resp.~$5.1\%$) for the strict (resp.~relaxed) case.

We begin with logical state preparation alone, corresponding to the $r=0$ limit of \cref{fig:currenterr/repeated}. In this setting, the phantom code achieves a ${\sim}12\times$ lower logical failure rate than the $d=6$ surface code with comparable footprint. Even relative to the larger $d=8$ surface code—which requires roughly twice as many physical qubits—the phantom code retains a ${\sim}2\times$ advantage.

As in-block $\overline{\mathrm{CNOT}}$s are added, the contrast between the codes sharpens. For the surface code, the logical failure rate grows linearly with circuit depth, reflecting the accumulation of physical operations. In contrast, the phantom code’s logical failure rate remains essentially constant, since its $\overline{\mathrm{CNOT}}$s require no physical gates. This leads to improvements of up to ${\sim}430\times$ over the $d=6$ surface code and ${\sim}160\times$ over the $d=8$ surface code for the deepest (depth-10) circuits studied (\cref{fig:currenterr/repeated}). The logical gate time of the phantom code remains effectively zero after state preparation, while it increases linearly with circuit depth for the surface code.

We observe the same behaviour in our more structured circuits. For logical GHZ state preparation at $K=64$, the phantom code achieves a ${\sim}26\times$ reduction in logical infidelity relative to the $d=6$ surface code (\cref{fig:currenterr/ghz}). Even when compared to the larger $d=8$ surface code, the phantom code maintains a ${\sim}7\times$ infidelity advantage while using approximately $1.8\times$ fewer physical qubits. Notably, this advantage persists across all values of $K$, despite the increasing fraction of interblock $\overline{\mathrm{CNOT}}$s.

Finally, for Trotterized dynamics, spatiotemporal sliding-window MLE decoding yields a ${\sim}17\times$ (resp.~${\sim}13\times$) reduction in logical infidelity over the $d=6$ (resp.~$d=8$) surface code (\cref{fig:currenterr/trot}). The phantom code continues to outperform the $d=8$ surface code even under relaxed preselection. Although the faster sliding-window correlated list decoder trades decoding accuracy for speed, the phantom code still exceeds the $d=8$ surface code under strict preselection and the $d=7$ surface code under relaxed preselection in logical fidelity.

\end{document}